\documentclass[
  twoside,
  12pt,
  a4paper,
  openright,
  mathcomp, %
  english, %
  marginleft=1in,
  marginright=1in,
  margintop=1in,
  marginbottom=1in,
  bindingoffset=6mm
]{hdphdthesis}

\usepackage{afterpage}
\usepackage{lmodern}  %
\newcommand{\neighbors}{N}%

\usepackage{amsmath}
\newcommand{\AlgName}[1]{\ensuremath{\text{{\sf #1}}}}
\usepackage{afterpage}
\usepackage{float}
\usepackage{placeins}
\usepackage{graphicx}
\usepackage{balance}  %
\newif\ifFull
\Fullfalse

\usepackage{caption}
\usepackage{subcaption}
\setcounter{tocdepth}{2}
\setcounter{secnumdepth}{3}

\usepackage[utf8]{inputenc}
\usepackage{url}
\usepackage{array}
\usepackage{algorithm}
\usepackage[noend]{algpseudocode}
\usepackage{numprint}
\npdecimalsign{.} %
\usepackage{amsthm}
\newtheorem{theorem}{Theorem}[chapter]
\newtheorem{corollary}[theorem]{Corollary}
\newtheorem{lemma}[theorem]{Lemma}

\numberwithin{figure}{chapter}
\numberwithin{table}{chapter}
\numberwithin{algorithm}{chapter}

\DeclareMathOperator*{\argmax}{argmax}
\DeclareMathOperator*{\argmin}{argmin}

\usepackage{hyperref}
\usepackage{xspace}
\usepackage{wrapfig}
\usepackage{graphics}
\usepackage{todonotes}
\usepackage{booktabs}
\usepackage{microtype}%
\definecolor {infocolor} {rgb} {0.6,0.6,0.6}

\usepackage{rotating}
\usepackage{lscape} 
\usepackage{threeparttable}
\usepackage{hvfloat}

\usepackage{multirow}
\usepackage[numbers,square,sort&compress]{natbib}
\usepackage{doi}
\usepackage{multicol}%
\usepackage{tikz}
\usetikzlibrary{shapes.arrows,chains}
\usetikzlibrary{decorations.pathreplacing}
\usepackage{framed,enumitem} 

\makeatletter
\def\@listarabic{\@listarabic}
\makeatother

\DeclareRobustCommand{\frcshape}{\fontfamily{frc}\selectfont}
\DeclareTextFontCommand{\textfrc}{\frcshape}
\newcommand{\mathfrc}[1]{{ #1}}
\renewcommand{\mathcal}[1]{{ #1}}
\def\Is{\ensuremath{=}}

\def\MdR{\ensuremath{\mathbb{R}}}
\def\MdN{\ensuremath{\mathbb{N}}}

\newcommand{\Id}[1]{{\detokenize{#1}}}

\newcommand{\ie}{i.\,e.,\xspace}
\newcommand{\eg}{e.\,g.,\xspace}
\newcommand{\etal}{et~al.\xspace}

\usepackage{color}

\def\comment#1{}

\def\withcomments{
	\newcounter{mycommentcounter}
	\def\comment##1{\refstepcounter{mycommentcounter}%
		\ifhmode%
		\unskip%
		{\dimen1=\baselineskip \divide\dimen1 by 2 %
			\raise\dimen1\llap{\tiny\bfseries \textcolor{red}{-\themycommentcounter-}}}\fi%
		\marginpar[{\renewcommand{\baselinestretch}{0.8}%
			\hspace*{3em}\begin{minipage}{5em}\footnotesize [\themycommentcounter]: \raggedright ##1\end{minipage}}]{\renewcommand{\baselinestretch}{0.8}%
			\begin{minipage}{5em}\footnotesize [\themycommentcounter]: \raggedright ##1\end{minipage}}}
}

\newcommand{\Xcomment}[1]{}

\usepackage{amsfonts}

\newcommand{\set}[1]{\left\{ #1\right\}}
\newcommand{\gilt}{:}
\newcommand{\sodass}{\,:\,}
\newcommand{\setGilt}[2]{\left\{ #1\sodass #2\right\}}

\newcommand{\realrange}[2]{\left[#1, #2\right]}

\newcommand{\unitrange}[2]{\realrange{0}{1}}

\newcommand{\discussionsize}{\small}

\newcommand{ \scaleFactorSmall} {0.485}
\newcommand{ \imgScaleFactorSmall} {0.95}
\newcommand{ \capPositionSmall} {-.45cm}
\newcommand{ \afterCapSmall} {-.45cm}

\marginparpush2mm

\newcommand{\MyWhile}    {{\bf while\ }}

\newcommand{\Do}       {{\bf do\ }}

\newdimen\endofsize\endofsize=0.5em

\newcommand{ \scaleFactor} {0.62}
\newcommand{ \imgScaleFactor} {0.9}
\newcommand{ \capPosition} {-.45cm}
\newcommand{ \afterCap} {-.45cm}

\usepackage[
    type={CC},
    modifier={by-nc-nd},
    version={3.0},
]{doclicense}

\begin{document}
\pagenumbering{roman}

\captionsetup[figure]{labelfont={bf},name={Figure},labelsep=period,singlelinecheck=false}
\captionsetup[table]{labelfont={bf},name={Table},labelsep=period,singlelinecheck=false}

\author{Marcelo Fonseca Faraj}
\bornin{Belo Horizonte, Brasilien}
\examdate{}
\title{\textbf{Streaming, Local, and Multi-Level (Hyper)Graph Decomposition}}
\setreferees{
    Prof. Dr. Christian Schulz,
}
\maketitle

\begin{coverpage}{Abstract}
(Hyper)Graph decomposition is a family of problems that aim to break down large (hyper)graphs into smaller sub(hyper)graphs for easier analysis.
The importance of this lies in its ability to enable efficient computation on large and complex (hyper)graphs, such as social networks, chemical compounds, and computer networks.
This dissertation explores several types of (hyper)graph decomposition problems, including graph partitioning, hypergraph partitioning, local graph clustering, process mapping, and signed graph clustering.
Our main focus is on streaming algorithms, local algorithms and multilevel algorithms. 
In terms of streaming algorithms, we make contributions with highly efficient and effective algorithms for (hyper)graph partitioning and process mapping.
In terms of local algorithms, we propose sub-linear algorithms which are effective in detecting high-quality local communities around a given seed node in a graph based on the distribution of a given motif.
In terms of multilevel algorithms, we engineer high-quality multilevel algorithms for process mapping and signed graph clustering.
We provide a thorough discussion of each algorithm along with experimental results demonstrating their superiority over existing state-of-the-art techniques.
The results show that the proposed algorithms achieve improved performance and better solutions in various metrics, making them highly promising for practical applications.
Overall, this dissertation showcases the effectiveness of advanced combinatorial algorithmic techniques in solving challenging (hyper)graph decomposition problems.
\end{coverpage}

\begin{coverpage}{Zusammenfassung}
(Hyper-)Graphenzerlegung fasst mehrere Probleme zusammen, bei denen große Graphen oder Hypergraphen zur einfacheren Analyse in kleinere Subgraphen oder Subhypergraphen zerlegt werden.
Dies ist bedeutsam, denn es befähigt dazu Berechnungen auf großen und komplexen Graphen und Hypergraphen, wie soziale Netzwerke, chemische Verbindungen und Computernetzwerke, effizient zu machen.
Diese \linebreak\hfill Dissertation befasst sich mit der Untersuchung verschiedener Arten von \linebreak\hfill (Hyper-)Graphenzerlegungsproblemen, darunter Graphpartitionierung, Hypergraphpartitionierung, Prozesszuweisung, Signed-Graph-Clustering und Local-Graph-Clustering.
Unser Hauptaugenmerk liegt auf Streaming-Algorithmen,lokalen Algorithmen und Multi-Level-Algorithmen.
Im Bereich der Streaming-Algorithmen leisten \linebreak\hfill wir Beiträge mit hocheffizienten und effektiven Algorithmen für die \linebreak\hfill(Hyper-)Graphpartitionierung und Prozesszuweisung.
Bei den lokalen Algorithmen schlagen wir sub-lineare Algorithmen vor, die auf Grundlage der Verteilung eines bestimmten Motivs hochwertige lokale Cluster um einen gegebenen Startknoten in einem Graphen bestimmen.
Was die Multi-Level-Algorithmen betrifft, so entwickeln wir hochwertige Multi-Level-Algorithmen für Prozesszuweisung und Signed-Graph-Clustering.
Wir liefern zu jedem Algorithmus eine ausführliche Diskussion zusammen mit experimentellen Ergebnissen, die ihre Überlegenheit gegenüber bestehenden Stand der Technik zeigen.
Die Ergebnisse legen dar, dass die vorgeschlagenen Algorithmen eine bessere Laufzeit und bessere Lösungen in verschiedenen Metriken erzielen, womit sie für praktische Anwendungen sehr vielversprechend sind.
Insgesamt zeigt diese Dissertation die Effektivität fortschrittlicher kombinatorischer algorithmischer Techniken bei der Lösung anspruchsvoller (Hyper-)Graphenzerlegungsprobleme.
\end{coverpage}

\begin{coverpage}{Acknowledgements}

I feel incredibly grateful and indebted to my advisor, Prof. Dr. Christian Schulz.
Thanks to him, I was introduced to the fascinating world of algorithm engineering and (hyper)graph decomposition problems, which quickly became a passion.
His guidance and support throughout my Ph.D. journey were invaluable, and I couldn't have asked for a more nurturing mentor.
Under his supervision, I learned the habits and behaviors that make a scientist successful and a leader effective.
Working at the University of Vienna and Heidelberg University, thanks to him, allowed me to publish and present many papers at top conferences across the globe, an opportunity I cherish.
Most importantly, Christian is not just an excellent mentor but also a kind and friendly person who always motivated and inspired me to pursue my ambitions.
Additionally, I am grateful for the chance to meet and interact with some of the giants of Computer Science, such as Peter Sanders, Monika Henzinger, and Robert Tarjan, \hbox{all thanks to him}.

I cannot express enough gratitude towards my wife, Yara Dalilla Xavier Faraj, who has been my rock throughout this journey.
Yara's unwavering love and support have been instrumental in helping me navigate the challenges of pursuing a Ph.D. abroad.
She has been with me every step of the way, from being my girlfriend to becoming my wife and greatest partner in life.
Without her, those four years would have been much more arduous and less happy.
I am thankful for Yara's caring and nurturing presence, which provided me with a sense of family during my time away from home.
Yara's encouragement and unwavering support have helped me stay motivated and focused on my goals.
I feel blessed to have such a wonderful person in my life, and I am excited to continue our journey together as \hbox{we embark on new adventures}.

I would like to express my deep gratitude to two wonderful couples who made my time in Europe even more special.
Firstly, my cousin Alfred Lenzner and his wife Marilia Simão dos Santos, with whom Yara and I shared some great moments during our stay in Germany.
Secondly, I would like to thank Alexander Noe and Anique-Marie Cabardos, with whom we had the pleasure of spending unforgettable moments in both Vienna and Mannheim.
Their company, warmth, and positive energy made our experiences in Europe \hbox{truly unique and enriching}.

I want to express my gratitude to all the amazing people who have been a part of my academic journey.
First, I would like to thank my colleagues from Heidelberg University, Alexander Noe, Konrad von Kirchbach, Stefan Neumann, Gramoz Goranci, Alexander Svozil, Wolfgang Ost, and Ernestine Großmann, for being wonderful co-Ph.D. students and for making my time in Heidelberg so enjoyable.
I would also like to express my sincere thanks to Kathrin Hanauer, Rudolf Hürner, Sagar Kale, Ami Paz, Xiaowei Wu (Eric), Kamer Kaya, Ulrike Frolik-Steffan, Christina Licayan, Catherine Proux-Wieland, and Henrik Reinstädtler, for being such amazing colleagues and making my Ph.D. journey an enriching experience.
I also want to thank all my co-authors, including Ümit V. Çatalyürek, Karen D. Devine, Lars Gottesbüren, Tobias Heuer, Alexander van der Grinten, Henning Meyerhenke, Jesper Larsson Träff, Peter Sanders, Sebastian Schlag, Christian Schulz, Daniel Seemaier, Dorothea Wagner, Adil Chhabra, Felix Hausberger, and Kamal Eyubov, for collaborating \hbox{with me on joint papers}.

Furthermore, I would like to thank all the students I had the pleasure of supervising in their scientific practica or theses.
I would also like to express my appreciation to Bora Uçar for the wonderful opportunity to work together on various projects and share an office with him in Lyon and Heidelberg.
I would like to extend my thanks to Peter Bastian for the incredible opportunity to participate in the IGCM 2023 conference in Bangalore.
Additionally, I would like to thank Monika Henzinger for the opportunity to be a part of her research group in Vienna.
I cannot forget to thank Jesper Larsson Träff, who worked with me in Vienna and with whom I had some amazing philosophical discussions.
Finally, I want to thank my dear friend and former (co)advisor João Fernando Machry Sarubbi, who has been a source of inspiration and guidance since my Bachelor's thesis.
I am grateful for his continued support and advice as I navigate through \hbox{my academic and personal life}.

I want to express my deepest gratitude to my family.
First and foremost, I want to thank my wonderful parents, who have been an unwavering source of love and support throughout my life.
Even though they were miles away from me, they never failed to show their constant presence and care in a way that only Brazilian parents could do or even imagine.
I also want to thank my amazing siblings, Nara Carolina, Priscila, Sarah, Valéria, and Mário, along with their partners and children.
I am blessed to have such a loving and caring family, and I am grateful for all the lessons and values they have instilled in me.
Additionally, I want to express my appreciation to my aunts, uncles, cousins, and in-laws for their support, \hbox{encouragement, and inspiration}.

In conclusion, I want to express my deepest gratitude to God for His continuous blessings and guidance in my life.
As Aristotle said, "To know thyself is the beginning of all wisdom."
However, our knowledge of ourselves is limited, and it is only through God's infinite wisdom and love that we can truly know our purpose and potential.
I am grateful for all the opportunities and experiences that God has presented me with, and for the strength and wisdom He has granted me to face challenges \hbox{along the way}.

\end{coverpage}

\begin{coverpage}{Related Publications}
Several contributions in this thesis are already published or accepted for publication in conference and journal papers as well as technical reports. Here we list all these~publications. \\

\section*{Accepted Papers}       

\begin{enumerate}

\item[\refstepcounter{enumi}{[\theenumi]}]
Marcelo Fonseca Faraj, Alexander van der Grinten, Henning Meyerhenke, Jesper Larsson Träff, and Christian Schulz.
\newblock {High-Quality Hierarchical Process Mapping}.
\newblock In {\em Proceedings of the 18th Symposium on Experimental Algorithms (SEA)}, volume 160 of LIPIcs, pages 4:1--4:15, 2020.

\item[\refstepcounter{enumi}{[\theenumi]}]
Marcelo Fonseca Faraj and Christian Schulz.
\newblock {Buffered Streaming Graph Partitioning}.
\newblock {\em ACM Journal of Experimental Algorithmics}, Volume 27, pages 1.10:1--1.10:26, 2022.

\item[\refstepcounter{enumi}{[\theenumi]}]
Marcelo Fonseca Faraj and Christian Schulz.
\newblock {Recursive Multi-Section on the Fly: Shared-Memory Streaming Algorithms for Hierarchical Graph Partitioning and Process Mapping}.
\newblock In {\em IEEE International Conference on Cluster Computing (CLUSTER)}, volume 9411, pages 473--483, 2022.

\item[\refstepcounter{enumi}{[\theenumi]}]
{\"{U}}mit V. {\c{C}}ataly{\"{u}}rek, Karen D. Devine, Marcelo Fonseca Faraj, Lars Gottesb{\"{u}}ren, Tobias Heuer, Henning Meyerhenke, Peter Sanders, Sebastian Schlag, Christian Schulz, Daniel Seemaier, and Dorothea Wagner
\newblock {More Recent Advances in (Hyper)Graph Partitioning}.
\newblock {\em ACM Computing Surveys}, Volume 55, Issue 12, Article No 253, pages 1--38, 2022.

\item[\refstepcounter{enumi}{[\theenumi]}]
Adil Chhabra, Marcelo Fonseca Faraj, and Christian Schulz.
\newblock {Local Motif Clustering via (Hyper)Graph Partitioning (Extended Abstract)}.
\newblock In {\em Proceedings of the 15th International Symposium on Combinatorial Search (SoCS)}, pages 261--263, AAAI Press,~2022.

\item[\refstepcounter{enumi}{[\theenumi]}]
Adil Chhabra, Marcelo Fonseca Faraj, and Christian Schulz.
\newblock {Local Motif Clustering via (Hyper)Graph Partitioning}.
\newblock In {\em Proceedings of the Symposium on Algorithm Engineering and Experiments (ALENEX)}, pages 96--109. SIAM, 2023.

\item[\refstepcounter{enumi}{[\theenumi]}]
Felix Hausberger, Marcelo Fonseca Faraj, and Christian Schulz.
\newblock {A Distributed Multilevel Memetic Algorithm for Signed Graph Clustering (Short Paper)}.
\newblock In {\em Genetic and Evolutionary Computation Conference Companion (GECCO ’23 Companion)}, to appear, 2023.

\item[\refstepcounter{enumi}{[\theenumi]}]
Kamal Eyubov, Marcelo Fonseca Faraj, and Christian Schulz.
\newblock {FREIGHT: Fast Streaming Hypergraph Partitioning}.
\newblock In {\em Proceedings of the 21st Symposium on Experimental Algorithms (SEA)}, to appear, 2023.

\end{enumerate}

\section*{Papers under Peer Review}       

\begin{enumerate}

\item[\refstepcounter{enumi}{[\theenumi]}]
Felix Hausberger, Marcelo Fonseca Faraj, and Christian Schulz.
\newblock {A Distributed Multilevel Memetic Algorithm for Signed Graph Clustering}.
\newblock 2023.

\item[\refstepcounter{enumi}{[\theenumi]}]
Adil Chhabra, Marcelo Fonseca Faraj, and Christian Schulz.
\newblock {Faster Local Motif Clustering via Maximum Flows}.
\newblock 2023.

\end{enumerate}

\section*{Technical Reports}       

\begin{enumerate}

\item[\refstepcounter{enumi}{[\theenumi]}]
Marcelo Fonseca Faraj, Alexander van der Grinten, Henning Meyerhenke, Jesper Larsson Träff, and Christian Schulz.
\newblock {High-Quality Hierarchical Process Mapping}.
\newblock Technical Report, University of Vienna, Humboldt Universtät zu Berlin, and Technical University of Vienna, 2020. (arXiv:2001.07134v2)

\item[\refstepcounter{enumi}{[\theenumi]}]
Marcelo Fonseca Faraj and Christian Schulz.
\newblock {Buffered Streaming Graph Partitioning}.
\newblock Technical Report, Heidelberg University, 2021. (arXiv:2102.09384)

\item[\refstepcounter{enumi}{[\theenumi]}]
Marcelo Fonseca Faraj and Christian Schulz.
\newblock {Recursive Multi-Section on the Fly: Shared-Memory Streaming Algorithms for Hierarchical Graph Partitioning and Process Mapping}.
\newblock Technical Report, Heidelberg University, 2021. (arXiv:2202.00394)

\item[\refstepcounter{enumi}{[\theenumi]}]
{\"{U}}mit V. {\c{C}}ataly{\"{u}}rek, Karen D. Devine, Marcelo Fonseca Faraj, Lars Gottesb{\"{u}}ren, Tobias Heuer, Henning Meyerhenke, Peter Sanders, Sebastian Schlag, Christian Schulz, Daniel Seemaier, and Dorothea Wagner
\newblock {More Recent Advances in (Hyper)Graph Partitioning}.
\newblock Technical Report, 2022. (arXiv:2205.13202)

\item[\refstepcounter{enumi}{[\theenumi]}]
Adil Chhabra, Marcelo Fonseca Faraj, and Christian Schulz.
\newblock {Local Motif Clustering via (Hyper)Graph Partitioning}.
\newblock Technical Report, Heidelberg University, 2022. (arXiv:2205.06176)

\item[\refstepcounter{enumi}{[\theenumi]}]
Felix Hausberger, Marcelo Fonseca Faraj, and Christian Schulz.
\newblock {A Distributed Multilevel Memetic Algorithm for Signed Graph Clustering}.
\newblock Technical Report, Heidelberg University, 2022. (arXiv:2208.13618)

\item[\refstepcounter{enumi}{[\theenumi]}]
Kamal Eyubov, Marcelo Fonseca Faraj, and Christian Schulz.
\newblock {FREIGHT: Fast Streaming Hypergraph Partitioning}.
\newblock Technical Report, Heidelberg University, 2023. (arXiv:2302.06259)

\item[\refstepcounter{enumi}{[\theenumi]}]
Adil Chhabra, Marcelo Fonseca Faraj, and Christian Schulz.
\newblock {Faster Local Motif Clustering via Maximum Flows}.
\newblock Technical Report, Heidelberg University, 2023. (arXiv:2301.07145)

\end{enumerate}

\end{coverpage}

\cleardoublepage\phantomsection\addcontentsline{toc}{chapter}{\contentsname}\tableofcontents\newpage

\cleardoublepage
\pagenumbering{arabic}
\chapter{Introduction}
\label{chap:Introduction}

\section{Motivation}
\label{sec:Motivation}

(Hyper)graphs are incredibly versatile mathematical structures that have proven to be useful for modeling and analyzing a wide range of phenomena.
These can include anything from social networks and chemical compounds to computer networks and communication systems.
As these systems grow in size and complexity, it becomes increasingly difficult to process their underlying (hyper)graphs on a single computer.
This is where (hyper)graph decomposition comes in.
The idea behind (hyper)graph decomposition is to break down a large (hyper)graph into smaller subgraphs or partitions, which can be processed separately.
By doing this, we can enable more efficient computation on large and complex (hyper)graphs, which has applications in many areas of research.

In fact, (hyper)graph decomposition is becoming increasingly important in a wide range of fields.
Its potential applications are vast, including scientific simulations, routing, community detection, computational linear algebra, and more.
With the ability to efficiently process large and complex (hyper)graphs, we can gain deeper insights into the structure and behavior of the systems they model.
This can lead to breakthroughs in fields ranging from social science to engineering.
There are several types of (hyper)graph decomposition problems, each of which has its own constraints, objective functions, challenges, and applications.
In this thesis, we specifically explore the following problems: graph partitioning, hypergraph partitioning, local graph clustering, process mapping, and signed graph clustering.
We now explain the \hbox{problems in detail}.

The \emph{(hyper)graph partitioning} problem is a very important and well-studied problem.
It consists of partitioning the nodes of a given (hyper)graph into a predetermined number of blocks with roughly equal cardinality or weight in order to minimize some metric such as the number of cut (hyper)edges.
Specifically in the case of hypergraph partitioning, an additional optimization metric is the \emph{connectivity}, which measures the difference between the number of blocks that have non-empty intersection with cut hyperedges and the number of hyperedges.
A natural application for this problem is the distributed processing of (hyper)graphs, in which multiple processors operate on unique nodes of the (hyper)graph and communicate with one another using message-passing in case of (hyper)edges shared across processors.
(Hyper)graph partitioning is NP-hard \cite{Garey1974} and there can be no approximation algorithm with a constant ratio for general (hyper)graphs~\cite{BuiJ92}. 
Thus, heuristics are used in practice.
A current trend for partitioning huge (hyper)graphs quickly and using low computational resources are \hbox{streaming algorithms~\cite{tsourakakis2014fennel,awadelkarim2020prioritized,jafari2021fast,mayer2018adwise,hoang2019cusp,alistarh2015streaming,tacsyaran2021streaming,HeiStream,StreamMultiSection,freight_paper}}. 

The \emph{process mapping} problem can be seen as an application and a generalization of the graph partitioning problem.
In this problem, we are given a \emph{communication graph} and a \emph{topology} containing processing elements (PEs) alongside with their pair-wise distances. 
The goal of the problem consists of mapping the nodes of the communication graph onto the topology in such a way that roughly the same (weighted) number of nodes are mapped to each PE and the \emph{total communication cost} is minimized.
The total communication cost is usually defined as the sum of weights of the cut edges multiplied by the distance between the PEs containing their respective endpoints.
A natural application of process mapping consists of assigning interdependent tasks to high-performance (HPC) systems in order to minimize the total execution time.
It can also be useful in other areas, such as manufacturing, where optimizing the flow of materials and resources is critical.
All hardness results associated with graph partitioning also apply for process mapping, hence heuristics are used in practice.
A special case of topology, known as \emph{hierarchical topology}, has been the subject of widespread research. 
In this topology, PEs are arranged in a multi-layered hierarchy of modules and sub-modules, and the distance between any two PEs depends exclusively on their nearest \hbox{shared module within the topology}.

The \emph{local graph clustering} problem involves a graph and a seed node, where the objective is to obtain a well-characterized cluster that contains the seed node. 
Conceptually, a well-characterized cluster is a subgraph that consists of many internal edges and few external edges.
More specifically, the quality of a cluster can be quantified by metrics such as \emph{conductance}~\cite{kannan2004clusterings}. %
Applications of this problem include those that only require analyzing a small, localized portion of a graph rather than the entire graph.
This is the case for community-detection on Web~\cite{epasto2014reduce} and social~\cite{jeub2015think} networks as well as structure-discovery in bioinformatics~\cite{voevodski2009spectral} networks, among other real-world problems.
Since minimizing conductance is NP-hard~\cite{wagner1993between}, approximative and heuristic approaches are used in practice.
Furthermore, the nature and scale of this problem necessitates sub-linear methodologies, i.e., which entail time and memory utilization that is only dependent on the size of the discovered cluster, rather than the entire graph.
While traditional approaches to local clustering typically consider the edge distribution when evaluating the quality of a local community~\cite{andersen2006local,leskovec2009community,chung2013solving,li2015uncovering,fountoulakis2020flowbased}, novel methods~\cite{yin2017local,zhang2019local,meng2019local,murali2020online} have shifted focus to finding local communities based on the distribution of \emph{motifs}, higher-order structures within the graph.
Empirical evidence shows that this approach, which can be called \emph{local motif clustering}, is effective at detecting \hbox{high-quality local communities~\cite{yin2017local}}.

The \emph{signed graph clustering} problem involves a signed graph, i.e., a graph in which each edge is associated with a weight that can be either positive or negative. 
The goal is to partition the nodes of the graph into an unspecified number of \emph{well-characterized} clusters.
In the context of signed graphs, a well-characterized cluster is defined as being densely connected by edges with positive weight and sparsely connected by edges with negative weight.
Conversely, distinct clusters that are well-characterized exhibit dense inter-connections through edges with negative weight and sparse inter-connections through edges with positive weight.
Signed graph clustering necessitates distinct metrics and approaches as opposed to those for traditional, unsigned graph clustering.
On the one hand, traditional metrics such as \emph{conductance}~\cite{kannan2004clusterings} and \emph{modularity}~\cite{brandes2007modularity} are insufficient to address negative edge weights.
On the other hand, negative edge weights can make clustering structure more explicit.
As a consequence, metrics like edge-cut, which simply measures the sum of edge weights between clusters, can be used for evaluating signed clustering.
In signed graphs, the edge-cut is bounded by the sum of negative edge weights.
Hence, achieving this edge-cut value through clustering will separate all positive edges within clusters and cut all negative edges, fulfilling the purpose of the problem.
In many real-world applications, interactions between two entities can be accurately represented by \emph{signed graphs}, i.e., the sign associated with an edge can indicate when the nature of an interaction between nodes is positive (e.g., attraction, similarity, friendship) or negative (e.g., repulsion, difference, animosity).
With that being said, graph clustering has practical applications in areas such as criminology, public health, politics, and analysis of social networks~\cite{application_review}.
The problem of finding a clustering of signed graphs with minimum edge-cut is NP-hard~\cite{DBLP:phd/dnb/Wakabayashi86}, hence \hbox{heuristic algorithms are used in practice}.

\section{Main Contributions}
\label{sec:Main Contributions}

In this section we list our main algorithmic contributions in the context of (hyper)graph decomposition.
Our algorithms utilize various algorithmic techniques and data structures to achieve improved performance and better solutions compared to state-of-the-art methods. 
To make the exposition clear, we split our contributions in three groups of algorithms: streaming algorithms, local algorithms and multilevel algorithms. 
A summary of each contribution is presented along with a selection of experimental results, which illustrate their superiority in various metrics when \hbox{compared to the state-of-the-art}.

\subsection{Streaming Algorithms}
\label{subsec:Streaming Algorithms}

We develop three (buffered) streaming algorithms for (hyper)graph decomposition problems.
More specifically, we propose algorithms for (hyper)graph partitioning and \hbox{process mapping}.

We start by proposing a buffered streaming algorithm for graph partitioing.
Our algorithm loads a batch of nodes and then builds a model that represents the loaded subgraph as well as the already present partition structure. 
This model enables us to apply multilevel algorithms and in turn compute much higher quality solutions of huge graphs on cheap machines than previously possible. 
To partition the model, we develop a multilevel algorithm that optimizes an objective function that has previously shown to be effective for the streaming setting. 
Surprisingly, this also removes the dependency on the number of blocks from the running time compared to the previous state-of-the-art. 
Our algorithm computes considerably better solutions than the state-of-the-art using a very small buffer size. 
In addition, for large numbers of blocks, our algorithm becomes \hbox{faster than the state-of-the-art}. 

Our second streaming algorithm is a shared-memory parallel streaming algorithm for the process mapping problem.
It is designed to map a streamed communication graph onto a hierarchical topology by performing recursive multi-sections on the fly.
If a hierarchy is not specified as an input, our approach can also be used as a general tool to solve the graph partitioning problem.  
Our approach is the first streaming algorithm for the process mapping problem.
Furthermore, in the context of non-buffered streaming graph partitioning, it has a considerably lower running time complexity in comparison with state-of-the-art.
Our experiments indicate that our algorithm is both faster and produces better process mappings than competing streaming tools.  
In case of graph partitioning, our framework is up to two orders of magnitude faster at the cost of~$5\%$ more cut edges compared to \hbox{a state-of-the-art algorithm}.

Our third streaming algorithm solves the hypergraph partitioning problem by extending from a state-of-the-art streaming algorithm for graph partitioning.
By using an efficient data structure, we make the overall running of our algorithm linearly dependent on the pin-count of the hypergraph and the memory consumption linearly dependent on the numbers of nets and blocks.
The results of our extensive experimentation showcase the promising performance of our algorithm as a highly efficient and effective solution for streaming hypergraph partitioning. 
Our algorithm demonstrates competitive running time with the Hashing algorithm, with a difference of a maximum factor of four observed on three fourths of the instances.
Significantly, our findings highlight the superiority of our algorithm over all existing (buffered) streaming algorithms and even an in-memory algorithm HYPE, with respect to both weighted number of \hbox{cut hyperedges and connectivity measures}. 

\vfill

\subsection{Local Algorithms}
\label{subsec:Local Algorithms}

We develop two local algorithms for graph decomposition.
In particular, both algorithms are designed to solve the local motif clustering problem.
Our algorithms starts by building a (hyper)graph model which represents the motif-distribution around the seed node on the original graph.
While the graph model is exact for motifs of size at most three, the hypergraph model works for arbitrary motifs and is designed such that an optimal solution in the (hyper)graph model minimizes the motif conductance in \hbox{the original network}.

In our first algorithm, the (hyper)graph model is then partitioned using a powerful multi-level hypergraph or graph partitioner in order to directly minimize the motif conductance of the corresponding partition in the original graph.
Extensive experiments evaluate the trade-offs between the two different models. 
Moreover, when using the graph model for triangle motifs, our algorithm computes communities that have on average one third of the motif conductance value than communities computed by the state-of-the-art while being faster on average and removing the necessity of a preprocessing motif-enumeration \hbox{on the whole network}.

In our second algorithm, we transform the hypergraph model into a flow model based on the fast and effective algorithm \emph{max-flow quotient-cut improvement}~(\AlgName{MQI})~\cite{mqipaper2004}.
We show that a non-trivial maximum flow exists if and only if a superior solution exists, which is obtained automatically.
In our experiments with the triangle motif, our flow-based algorithm produces better communities than the state-of-the-art, while also being up to multiple \hbox{orders of magnitude faster}.

\subsection{Multilevel Algorithms}
\label{subsec:Multilevel Algorithms}

We develop two multilevel algorithms for graph decomposition.
Multilevel algorithms consist of three main phases: coarsening, construction, and uncoarsening.
In coarsening, the graph is recursively contracted into a sequence of smaller graphs which maintain some general structure.
In construction, an initial solution is computed on the smallest graph.
In uncoarsening, the contractions are recursively undone, and local search methods are used to refine the solution induced by each level of contraction.
We propose algorithms for process mapping and signed \hbox{graph clustering}.

We propose and engineer multiple setups of a multilevel algorithm for the process mapping problem.
Important ingredients of our algorithm include fast label propagation, more localized local search, initial partitioning, as well as a compressed data structure to compute processor distances without storing a distance matrix. 
Moreover, our algorithm is able to exploit a given hierarchical structure of the distributed system under consideration. 
Experiments indicate that our algorithm speeds up the overall mapping process and, due to the integrated multilevel approach, also finds much better solutions in practice.  
For example, one configuration of our algorithm yields similar solution quality as the previous state-of-the-art in terms of mapping quality for large numbers of partitions while being a factor~9.3 faster.  
Compared to the currently fastest iterated multilevel mapping algorithm Scotch, we obtain~16\% better solutions while investing \hbox{slightly more running time}.

Our last proposed algorithms are designed to solve the signed graph clustering problem by leveraging some of the most effective techniques from graph partitioning that minimize edge-cut.
We engineer all the details of a multilevel algorithm, which encompasses a coarsening-uncoarsening process and efficient local search methods.
We also introduce a memetic algorithm that utilizes our multilevel algorithm and further enhances it with natural multilevel recombination and mutation operations.
We also parallelize our approach using a scalable coarse-grained island-based strategy which has already shown to be scalable in practice. 
Experimental results demonstrate that our memetic algorithm outperforms the state-of-the-art with respect to edge-cut, producing \hbox{significantly better solutions}.

\section{Outline}
\label{sec:Outline}

This dissertation is organized as follows.
We begin in Chapter~\ref{chap:Preliminaries} by presenting preliminaries and basic concepts that are used throughout this thesis. 
We continue by elaborating related work in Chapter~\ref{chap:Related Work}.
Our algorithmic contributions are presented in Chapter~\ref{chap:Streaming Algorithms}, Chapter~\ref{chap:Local Algorithms}, and Chapter~\ref{chap:Multilevel Algorithms}, which correspond to streaming algorithms, local algorithms, and multilevel algorithms, respectively.
Specific conclusions are given in respective chapters dedicated to each family of algorithms, and a general conclusion is provided in Chapter~\ref{chap:Discussion}.

\chapter{Preliminaries}
\label{chap:Preliminaries}

In this chapter, we present the basic concepts used in this dissertation.

\section{Graphs and Hypergraphs}
\label{sec:Graphs and Hypergraphs}

Let $G=(V=\{0,\ldots, n-1\},E)$ be an \emph{undirected graph} with no multiple or self edges allowed, such that $n = |V|$ and $m = |E|$.
Let $c: V \to \MdR_{\geq 0}$ be a node-weight function, and let $\omega: E \to \MdR_{>0}$ be an edge-weight function.
We generalize $c$ and $\omega$ functions to sets, such that $c(V') = \sum_{v\in V'}c(v)$ and $\omega(E') = \sum_{e\in E'}\omega(e)$.
Let $N(v) = \setGilt{u}{\set{v,u}\in E}$ be the \emph{open neighborhood} of $v$, and let $N[v]=N(v) \cup \{v\}$ be the \emph{closed neighborhood} of $v$.
We generalize the notations $N(.)$ and $N[.]$ to sets, such that $N(V') = \cup_{v\in V'}N(v)$ and $N[V'] = \cup_{v\in V'}N[v]$.
A graph $G'=(V', E')$ is said to be a \emph{subgraph} of $G=(V, E)$ if $V' \subseteq V$ and $E' \subseteq E \cap (V' \times V')$. 
When $E' = E \cap (V' \times V')$, $G'$ is the subgraph \emph{induced} in $G$ by $V'$.
Let $\overline{V'} = V \setminus V'$ be the \emph{complement} of a set $V' \subseteq V$ of nodes. 
Let a \emph{motif} $\mu$ be a connected graph.
\emph{Enumerating} the motifs $\mu$ in a graph $G$ consists building the collection $M$ of all occurrences of $\mu$ as a subgraph of $G$.
Let $d(v)$ be the \emph{degree} of node $v$ and $\Delta$ be the maximum degree of $G$.
Let $d_\omega(v)$ be the \emph{weighted degree} of a node $v$ and $\Delta_\omega$ be the maximum weighted degree of $G$.
Let $d_\mu(v)$ be the \emph{motif~degree} of a node~$v$, i.e., the number of motifs $\mu \in M$ which contain $v$.
We generalize the notations $d(.)$, $d_\omega(.)$, and $d_\mu(.)$ to sets, such that the \emph{volume} of~$V'$ is~$d(V') = \sum_{v\in V'}d(v)$, the \emph{weighted~volume} of~$V'$ is~$d_\omega(V') = \sum_{v\in V'}d_\omega(v)$, and the \emph{motif~volume} of~$V'$ is~$d_\mu(V') = \sum_{v\in V'}d_\mu(v)$.
Let a \emph{spanning forest} of $G$ be an acyclic subgraph of $G$ containing all its nodes.
Let the \emph{arboricity} of $G$ be the minimum number of spanning forests of $G$ \hbox{necessary to cover all its edges}.

Let an \emph{undirected signed graph} be a graph $G=(V=\{0,\ldots, n-1\},E)$ which expands the above definition to permit edges with both positive and negative weights.
More specifically, let $\omega: E \to \MdR \setminus \{0\}$ be a \emph{signed} edge-weight function associated with a signed graph.
Let $E^-$ denote the set of edges with negative weight and $E^+$ denote the set of edges with positive weight, such that $E^- \cup E^+ = E$, $E^- \cap E^+ = \emptyset$, ${m}^- = |E^-|$, and ${m}^+ = |E^+|$.
Let $N^-(v) = N(v) \cap E^-$ denote the neighbors of $v$ {which} are connected to $v$ by an edge with negative weight.
Let $N^+(v) = N(v) \cap E^+$ denote the neighbors of $v$  {which} are connected to $v$ by an edge with positive weight.

Let $H=(\mathcal{V}=\{0,\ldots, \mathfrc{n}-1\},\mathcal{E})$ be an \emph{undirected hypergraph} with no multiple or self hyperedges allowed, with $\mathfrc{n} = |\mathcal{V}|$ \emph{nodes} and $\mathfrc{m} = |\mathcal{E}|$ \emph{hyperedges} (or \emph{nets}).
A net is defined as a subset of $\mathcal{V}$.
The nodes that compose a net are called its \emph{pins}.
Let $\mathfrc{c}: \mathcal{V} \to \MdR_{\geq 0}$ be a node-weight function, and let $\mathfrc{w}: \mathcal{E} \to \MdR_{>0}$ be a net-weight function.
We generalize $\mathfrc{c}$ and $\mathfrc{w}$ functions to sets, such that $\mathfrc{c}(\mathcal{V}') = \sum_{v\in \mathcal{V}'}\mathfrc{c}(v)$ and $\mathfrc{w}(\mathcal{E}') = \sum_{e\in \mathcal{E}'}\mathfrc{w}(e)$.
A node $v\in\mathcal{V}$ is \emph{incident} to a net $e\in\mathcal{E}$ if $v \in e$.
Let~$\mathcal{I}(v)$ be the set of incident nets of~$v$, let~$\mathfrc{d}(v) \Is |\mathcal{I}(v)|$ be the \emph{degree} of~$v$, and let $\mathfrc{d}_{\mathfrc{w}}(v) \Is \mathfrc{w}(\mathcal{I}(v))$ be the \emph{weighted degree} of $v$.
We generalize the notations $\mathfrc{d}(.)$ and  $\mathfrc{d}_{\mathfrc{w}}(.)$ to sets, such that the \emph{volume} of~$\mathcal{V}'$~is~$\mathfrc{d}(\mathcal{V}') = \sum_{v\in \mathcal{V}'}\mathfrc{d}(v)$ and the \emph{weighted~volume} of~$\mathcal{V}'$ is~$\mathfrc{d}_{\mathfrc{w}}(\mathcal{V}') = \sum_{v\in \mathcal{V}'}\mathfrc{d}_{\mathfrc{w}}(v)$.
Two nodes are \emph{adjacent} if both are incident to the same net.
Let the number of pins~$|e|$ in a net~$e$ be the \emph{size} of~$e$.
Given a cluster~$\mathcal{V}^\prime~\subseteq~\mathcal{V}$, the \emph{cut} or \emph{cut-net}~$cut(\mathcal{V}^\prime)$ of~$\mathcal{V}^\prime$ consists of the total weight of the nets crossing the cluster, i.e., $cut(\mathcal{V}^\prime) = \sum_{e \in \mathcal{E}^\prime}\mathfrc{w}(\mathcal{E}^\prime)$, in which $\mathcal{E}^\prime \Is $ $\big\{e \in \mathcal{E} : \exists i,j \mid e \cap \mathcal{V^\prime} \neq \emptyset, e \cap \overline{\mathcal{V^\prime}} \neq \emptyset , i\neq j\big\}$.

\section{Partitions and Clusterings}
\label{sec:Partitions and Clusterings}

The \emph{$k$-way (hyper)graph partitioning} problem consists of assigning each node of a (hyper)graph to exactly one of $k$ distinct \emph{blocks} respecting a balancing constraint in order to minimize the weight of the (hyper)edges running between the blocks, i.e., the edge-cut (resp. cut-net).
More precisely, it partitions $V$ into $k$ blocks $V_1$,\ldots,$V_k$ (i.e., $V_1\cup\cdots\cup V_k=V$ and $V_i\cap V_j=\emptyset$ for $i\neq j$), which is called a \emph{\mbox{$k$-partition}} of the (hyper)graph.
The \emph{edge-cut} (resp. \emph{cut-net}) of a $k$-partition consists of the total weight of the \emph{cut edges} (resp. \emph{cut nets}), i.e., edges (resp. nets) crossing blocks.
More formally, let the edge-cut (resp. cut-net) be $\sum_{i<j}\omega(E')$, in which $E' \Is $ $\big\{e\in E, \exists \set{u,v} \subseteq e : u\in V_i,v\in V_j, i \neq j\big\}$ is the~\emph{cut-set} (i.e.,~the set of all cut nets).
The \emph{balancing constraint} demands that the sum of node weights in each block does not exceed a threshold associated with some allowed \emph{imbalance}~$\epsilon$.
More specifically, $\forall i~\in~\{1,\ldots,k\} \gilt$ $c(V_i)\leq L_{\max}\Is \big\lceil(1+\epsilon) \frac{c(V)}{k} \big\rceil$.
For each net $e$ of a hypergraph, $\Lambda(e) := \{V_i~|~V_i \cap e \neq \emptyset\}$ denotes the \emph{connectivity set} of $e$.
The \emph{connectivity} $\lambda(e)$ of a net~$e$ is the cardinality of its connectivity set, i.e., $\lambda(e) := |\Lambda(e)|$.
The so-called \emph{connectivity} metric ($\lambda$-1) is computed as $\sum_{e\in E'} (\lambda(e) -1)~\omega(e)$, where $E'$ is the cut-set.

In the \emph{local graph clustering} problem, a graph $G=(V,E)$ and a seed node $u \in V$ are taken as input and the goal is to detect a \emph{well-characterized cluster} (or \emph{community}) $C \subset V$ containing~$u$.
A high-quality cluster $C$ usually contains nodes that are densely connected to one another and sparsely connected to $\overline{C}$.
There are many functions to quantify the quality of a cluster, such as \emph{modularity}~\cite{brandes2007modularity} and \emph{conductance}~\cite{kannan2004clusterings}.
The conductance metric is defined as $\phi(C)=|E'|/\min(d(C),d(\overline{C}))$, where $E'=E \cap (C \times \overline{C})$ is the set of edges shared by a cluster $C$ and its complement.
\emph{Local motif graph clustering} is a generalization of local graph clustering where a motif $\mu$ is taken as an additional input and the computed cluster optimizes a clustering metric based on $\mu$.
In particular, the \emph{motif conductance} $\phi_\mu(C)$ of a cluster $C$ is defined by 
\hbox{\citet{benson2016higher}} as a generalization of the conductance in the following way:
$\phi_\mu(C)=|M'|/min(d_\mu(C),d_\mu(\overline{C}))$, where $M'$ are all the motifs $\mu$ which contain at least one node in $C$ and one node in $\overline{C}$.
Note that, if the motif under consideration is simply an edge, then $|M'|$ is the \hbox{edge-cut and $\phi_\mu(C)=\phi(C)$}.

Let a \emph{clustering} of a graph $G=(V,E)$ be any partition of $V$, i.e., a set of \emph{blocks} or \emph{clusters} $V_1$,\ldots,$V_t \subset V$ such that $V_1\cup\cdots\cup V_t=V$ and $V_i\cap V_j=\emptyset$, where $t\in[1,n]$.
An abstract view of the clustering is a \emph{quotient graph} $\mathcal{Q}$, in which nodes represent clusters and edges are induced by the connectivity between clusters.
We call \emph{neighboring clusters} a pair of clusters that is connected by an edge in the quotient graph.
A node $v \in V_i$ that has a neighbor $w \in V_j, i\neq j$, is a boundary node.
The \emph{edge-cut} of a clustering consists of the total weight of the edges crossing clusters (also called \emph{cut edges}), i.e., $\sum_{i<j}\omega(E_{ij})$, where $E_{ij}\Is\setGilt{\set{u,v}\in E}{u\in V_i,v\in V_j}$.
Note that $\frac{1}{2}d^-_\omega(V)$ is an absolute lower bound for the edge-cut of any clustering.
Let a signed graph be \emph{balanced} if there exists a clustering of it with edge-cut equal to $\frac{1}{2}d^-_\omega(V)$, i.e., where all edges with positive weight are inside clusters and all edges with negative weight are cut.
If such clustering does not exist, e.g., a triangle with only one edge with negative weight, we say that the signed graph is \emph{unbalanced}.
Let the \emph{signed graph clustering} (SGC) consist of obtaining a clustering of an undirected signed graph $G$ \hbox{in order to minimize the edge-cut}.

\section{Process Mapping}
\label{sec:Process Mapping}

For process mapping applications of hierarchical partitions, assume that we have $n$ processes and a topology containing~$k$ PEs.
Let $\mathcal{C}\in \MdR^{n \times n}$ denote the communication matrix and let $\mathcal{D}\in \MdR^{k \times k}$ denote the (implicit) topology matrix or distance matrix.
In particular, $\mathcal{C}_{i,j}$ represents the required amount of communication between processes $i$ and $j$, while $\mathcal{D}_{x,y}$ represents the cost of each communication between PEs $x$ and $y$.
Hence, if processes $i$ and $j$ are respectively assigned to PEs $x$ and $y$, or vice-versa, the communication cost between $i$ and $j$ will be $\mathcal{C}_{i,j}\mathcal{D}_{x,y}$.
Throughout this thesis, we assume that $\mathcal{C}$ and $\mathcal{D}$ are symmetric -- otherwise one can create equivalent problems with symmetric inputs \cite{brandfass2013rank}.

In particular, for process mapping applications tackled in this paper, we assume that topologies are organized as homogeneous hierarchies. 
In this case $\mathcal{S}=a_1: a_2: ...:a_\ell$ is a
sequence describing the hierarchy of a supercomputer. The sequence
should be interpreted as each processor having $a_1$ cores, each node
$a_2$ processors, each rack $a_3$ nodes, and so~forth,
such that the total number of PEs is $k=\Pi_{i=1}^{\ell}a_i$.
Without loss of generality, we assume that $a_i \geq 2, \forall i \in \{1, \ldots, \ell\}$.
Let $D = d_1:d_2:\ldots:d_\ell$ be a sequence describing the distance between PEs within each hierarchy level, meaning that the distance between two cores in the same processor is $d_1$, the distance between two cores in the same node but in different processors is $d_2$, the distance between two cores in the same rack but in different nodes is $d_3$, and~so~forth.  
The \emph{process mapping} problem consists of assigning the nodes of a graph to PEs in a communication topology while respecting a balancing constraint in order to minimize the emph{total communication cost}.
Let \mbox{$\Pi: \{1, \ldots, n\} \mapsto \{1, \ldots, k\}$} be the function that maps a node onto its PE.
The  objective of process mapping is to minimize Equation~(\ref{eq:process_mapping_obj}). 

\begin{equation}
	J(\mathcal{C},\mathcal{D}, \Pi) := \sum_{i,j} \mathcal{C}_{i, j}\mathcal{D}_{\Pi(i),\Pi(j)}.
	\label{eq:process_mapping_obj}
\end{equation}

The exact total communication cost of a given mapping of processes onto PEs depends on a combination of \emph{bandwidth} and \emph{latency}, which depends on a multitude of factors in practice.
The total communication cost in latency-based topologies is also known as the \emph{Coco(.)} or \emph{hop-byte} objective function.
An alternative objective function to the total communication cost is the bandwidth-based metric \emph{maximum congestion}, which is defined as the maximum number of message exchanges through any link of the topology graph.
Another possible objective is the \emph{maximum dilation}, which is defined as the maximum communication cost directly associated with a pair of PEs \hbox{for a given mapping}.

\section{Flows}
\label{sec:Flows}

Let~$G_f=(V_f,E_f)$ be a flow graph.
A flow graph has one source node~$s \in V_f$, one sink node~$t \in V_f$, and a set of remaining nodes $V \setminus \{s,t\}$.
All edges~$e=(u,v)$ in a flow graph are directed and associated with a nonnegative capacity~$cap(u,v)$.
An \hbox{s-t flow} is a function \hbox{$f:V_f\times{V_f}\rightarrow\MdR_{>0}$} which satisfies a \emph{capacity} constraint, \ie $f(u,v) \leq cap(u,v)$, a \emph{symmetry} constraint, \ie $\forall{u,v}\in{V_f}: f(u,v)=-f(v,u)$, and a \emph{flow conservation} constraint, \ie \hbox{$\forall{u}\in{V_f}\setminus\{s,t\}:$} \hbox{$\sum_{v\in V_f}f(u,v)=0$}.
An edge~$(u,v)$ is called~\emph{saturated} if \hbox{$cap(u,v)=f(u,v)$};
The total amount of flow moved from~$s$~to~$t$ is defined as the \emph{value}~$|f|$ of~$f$ and is computed as follows: \hbox{$|f|=\sum_{u\in V_f}f(u,t)=\sum_{v\in V_f}f(s,v)$}.
A given \hbox{s-t~flow~$f$} in~$G_f$ is \emph{maximum} if, for any \hbox{s-t~flow}~$f'$ in~$G_f$,~$|f'|\leq|f|$.
Let~$G_r=(V_f,E_r)$ be the \emph{residual graph} associated with a given flow~$f$~on~$G_f$, such that $E_r=$ {$\{(u,v) \in V_f \times V_f: cap(u,v)-f(u,v)>0\}$}.
According to the Max-Flow Min-Cut Theorem~\cite{ford_fulkerson_1956}, the value~$|f|$ of a maximum s-t flow~$f$ on~$G_f$ equals the weight of a minimum s-t cut on~$G_f$, \ie a 2-way partition of~$G_f$ where edge weights equal edge capacities, $s$~and~$t$ are in distinct blocks, and the total weight of the cut edges is minimum.
To find the sink side of the minimum cut associated with a maximum flow in $G_f$, a reverse breadth-first search can be performed~on~$G_f$ \hbox{starting at the~sink~node~$t$}.

For each node~$u$ in a flow graph~${V_f}$, let~\hbox{$exc(u)=\sum_{v\in V_f}f(u,v)$} be its \emph{excess} value and~$d(u)$ be its potential.
A node~$u$ is called \emph{active} if \hbox{$exc(u)>0$}.
An edge~$(u,v)$ is called~\emph{admissible} if \hbox{$cap(u,v)-f(u,v)>0$} and~\hbox{$d(u)=d(v)+1$}.
The push-relabel~\cite{push_relabel88} algorithm builds a maximum flow by computing a succession of \emph{preflows}, \ie~flows where the flow conservation constraint is substituted by \hbox{$\forall{u}\in{V_f}\setminus\{s,t\}:$} \hbox{$exc(u)\geq0$}. 
In the initial preflow, all out-edges of~$s$ are saturated, $\forall{u}\in{V_f}\setminus\{s\}:d(u)=0$, and $d(s)=|V_f|$.
The initial preflow is evolved via operations \emph{push}, \ie sending as much flow as possible from an active node through an admissible edge, and \emph{relabel}, \ie increasing the potential of a node until it~becomes~active.
Preflows induce minimum sink-side cuts, so a maximum flow and a minimum cut are obtained once \hbox{no node is active.}

A common technique to solve flow and cut problems on hypergraphs consists of transforming them in directed graphs and then applying traditional graph-based techniques on~them.
Among the existing transformations~\cite{veldt2022hypergraph,lawler1973cutsets}, we highlight \emph{clique expansion}, \emph{star expansion}, and \emph{Lawler expansion}.
In the \emph{clique expansion}, each net is represented by a clique, i.e., a set of edges connecting each pair of its pins in both directions.
In this approach, the weight of each edge is equal to weight of the corresponding net~$e$ divided by~$|e|-1$ and parallel edges are substituted by a single edge whose weight is the sum of the weights of the removed edges.
In the \emph{star expansion}, each net is represented by an auxiliary artificial node connected to its pins by edges in both directions. 
In this expansion, the edges have the same weight as the corresponding net.
In the \emph{Lawler expansion}, each net~$e$is represented by two auxiliary artificial nodes~$w_1$~and~$w_1$ and a collection of edges.
In particular, there is a directed edge $(w_1,w_2)$ which has the same weight as the corresponding net.
Additionally, each pin of the corresponding net has an out-edge to~$w_1$ and an in-edge from~$w_2$, each of them with weight infinity.
The three transformation approaches are exemplified in Figure~\ref{fig:social_hyperedge_expansion}.

\begin{figure}[t]
	\centering
	\includegraphics[width=0.9\textwidth]{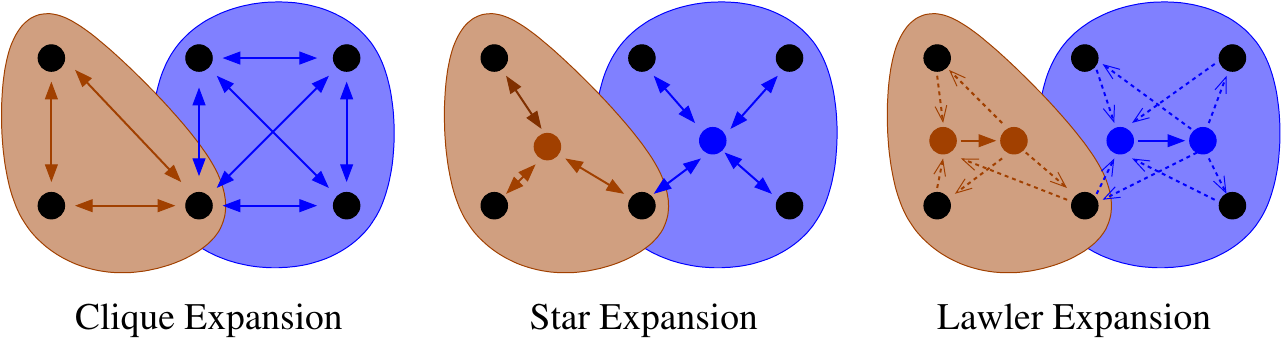}
	\caption{Three existing approaches to represent a hypergraph using a directed graph. Nodes and nets of the original hypergraph are respectively represented by black circles and colored areas around them. Auxiliary artificial nodes and edges are respectively represented by circles and arrows, both with the same color as the corresponding net. Bidirectional arrows represent a pair of edges in both directions. Solid edges have finite weight while dashed edges have \hbox{infinite weight}.}
	\label{fig:social_hyperedge_expansion}
\end{figure}

\section{Multilevel Scheme}
\label{sec:Multilevel Scheme}

In this section, we characterize the \emph{multilevel} scheme, which is a successful heuristic for clustering or partitioning large (hyper)graphs.
Before describing the multilevel scheme, we need to define the terms contraction and uncontraction. 
\emph{Contracting} an edge $e=\set{u,v}$ consists of replacing the nodes $u$ and $v$ by a new node $x$ connected to the former neighbors of $u$ and $v$ and setting $c(x)=c(u)+c(v)$.
If replacing edges of the form $\set{u,w}$, $\set{v,w}$ would generate two parallel edges $\set{x,w}$, a single edge with
$\omega(\set{x,w})=\omega(\set{u,w})+\omega(\set{v,w})$ is inserted.
\emph{Contracting} a cluster of nodes $C=\set{u_1, \ldots, u_{\ell}}$ involves replacing them with a new node~$v$ whose weight is the sum of the weights of the clustered nodes and is connected to all elements  $w \in \bigcup_{i=1}^{\ell} N(u_i)$, \hbox{$\omega(\set{v,w})=\sum_{i=1}^{\ell}\omega(\set{u_i,w})$}.
This ensures the transfer of partition from a coarser to a finer level maintains the edge-cut, as exemplified in Figure~\ref{fig:sgc_clustercontraction}.
We define the \emph{contraction} operator as~$\big/$ such that $G \big/ \mathcal{V}^\prime$, with $\mathcal{V}^\prime \subseteq \mathcal{V}$, is the (hyper)graph obtained by contracting the nodes from~$\mathcal{V}^\prime$ on~$G$.
The \emph{uncontraction} of a node \hbox{undoes the contraction}.

We describe the multilevel scheme within the scope of graph partitioning, although the basic idea is also extensible to other problems, such as process mapping, hypergraph partitioningm and graph clustering.
A \emph{multilevel approach} consists of three main phases.
In the \emph{contraction} (coarsening) phase, successive approximations of an original input graph are created. 
The contractions quickly reduce the size of the graph and stop as soon as it becomes sufficiently small to be partitioned by an expensive algorithm.
In the construction phase, an \emph{initial partitioning} is obtained by partitioning the coarsest graph.
Due to the way we define contraction, every partition of the coarsest level implies a corresponding partition of the input graph with equal edge-cut and balance.
In the \emph{local improvement} (or \emph{uncoarsening}) phase, we uncontract previously contracted nodes to go back through each level, from the coarsest approximation to the original graph. 
After each uncoarsening, local improvement algorithms move nodes between blocks in order to improve the objective function or balance. 
Local search moves nodes between blocks to reduce the objective, as \hbox{exemplified in Figure~\ref{fig:sgc_localsearch}}.  

\begin{figure}[t]
	\centering
	\includegraphics[width=0.75\textwidth]{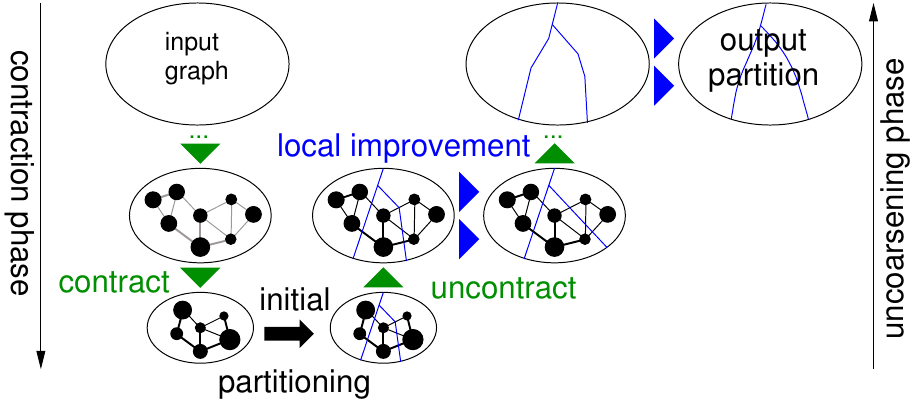}
	\caption{Multilevel scheme (adapted from~\cite{kaffpa}).}
	\label{fig:sgc_MSGC}
\end{figure}

\begin{figure}[t]
	\centering
	\includegraphics[width=.8\textwidth]{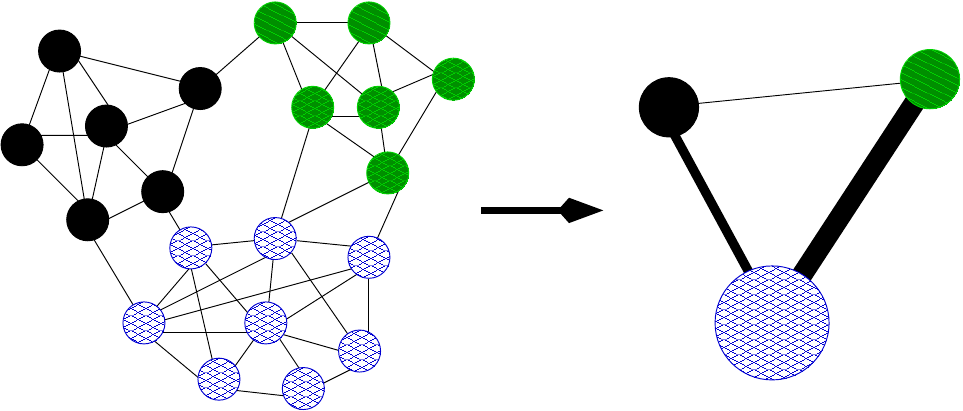}
	\caption{Contraction of a clustering~\cite{pcomplexnetworksviacluster}. Each cluster is represented by a different color on the left hand side graph. Each cluster on the left-hand side is contracted to a single node on the right hand side.}
	\label{fig:sgc_clustercontraction}
\end{figure}

\begin{figure}[t]
	\centering
	\includegraphics[width=.9\textwidth]{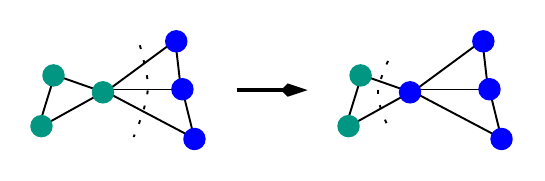}
	\caption{Typical step of a local search algorithm. Cluster assignments are indicated by colors. In this example, a single node is moved to another cluster in order to decrease the overall edge-cut.}
	\label{fig:sgc_localsearch}
\end{figure} 


\section{Evolutionary Algorithms}
\label{sec:Evolutionary Algorithms}

\emph{Evolutionary} or \emph{Memetic algorithms}~\cite{evolutionary_book} are population-based heuristics that mimic natural evolution to optimize a problem.
They use a convenient \emph{notation} to represent the decision variables, called \emph{individuals}, and evaluate their quality through a \emph{fitness} function (simulation or mathematical~\cite{sastry2014genetic}). 
A \emph{population} of individuals evolves during the algorithm. 
\emph{Recombination}~\cite{goldberg1989messy} (crossover) \emph{exploits} characteristics of previous individuals to create new and better solutions. 
\emph{Mutation}~\cite{Michalewicz:1996:GAD:229930} introduces random variations to \emph{explore} the search space and escape local optima. 
Evolutionary algorithms do not guarantee optimality, but are effective in \hbox{exploring and exploiting the solution space}.


\section{Computational Models}
\label{sec:Computational Models}

Streaming algorithms usually follow a load-compute-store logic shown in Figure~\ref{fig:streaming}.
The classic streaming model is the \emph{one-pass} model, in which the nodes are loaded one at a time alongside with their (hyper)edges, then some logic is applied to permanently assign them to blocks. %
This logic can be as simple as a \AlgName{Hashing}  function or as complex as scoring all blocks based on some objective and then assigning the node to the block with highest score.
When assignment decisions of an algorithm for the current node depend on the previous decisions, an algorithm in the model has to store the assignment of the previous loaded nodes and hence needs $\Omega(n)$ space.
An extended version of this model is called the \emph{buffered streaming} model.
More precisely, a $\delta$-sized \emph{buffer} or \emph{batch} of input nodes with their neighborhood is repeatedly loaded.
Partition/block assignment decisions have to be made after the whole buffer is loaded. 
While we investigate the dependence of our algorithm on this parameter, in practice the parameter  will depend on the amount of available memory on a machine. 
The parameter can be dynamically chosen such that the buffer is ``full'' if $\Theta(n)$ space has been loaded from the disk.
Hence the buffered streaming model asymptotically does not need more space than a one-pass streaming algorithm if this setting is used. 
This holds true even in the worst case: when a node has \hbox{degree close to $n$}.

\begin{figure}[t]
	\centering
	\includegraphics[width=1.\textwidth]{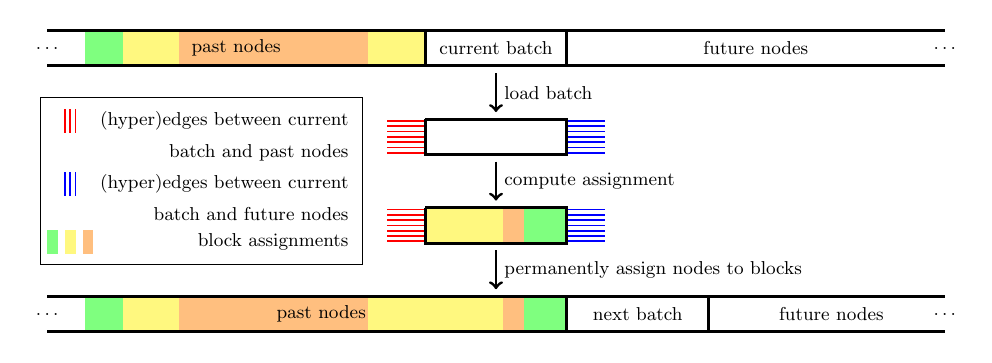}
	\caption{Typical layout of streaming algorithm for (hyper)graph partitioning.}
	\label{fig:streaming}
\end{figure}

\section{Instances}
\label{sec:Instances}

Throughout this thesis we present experiments on various kinds of (signed) (hyper)graphs. 
In this section, we summarize the main properties, the source, and the area of application of the graphs. 


\subsection{Graphs}
\label{subsec:Graphs}

\paragraph*{Experiments with Streaming Algorithms.}
In our experiments with streaming algorithms, we use graphs from various sources~\cite{snapnets,nr-aaai15,benchmarksfornetworksanalysis,kappa,funke2019communication}.
Most of the considered graphs were used for benchmark in previous works on graph partitioning.
The graphs wiki-Talk and web-Google, as well as most networks of co-purchasing, roads, social, web, autonomous systems, citations, circuits, similarity, meshes, and miscellaneous are publicly available either in~\cite{snapnets} or in \cite{nr-aaai15}.
We also use graphs such as eu-2005, in-2004, uk-2002, and uk-2007-05, which are available at the 10$^{th}$ DIMACS Implementation Challenge website~\cite{benchmarksfornetworksanalysis}. 
Finally, we include some artificial random graphs.
We use the name \Id{rggX} for \emph{random geometric graph} with
$2^{X}$ nodes where nodes represent random points in the unit square and edges connect nodes whose Euclidean distance is below $0.55 \sqrt{ \ln n / n }$. 
We use the name \Id{delX} for a graph based on a Delaunay triangulation of $2^{X}$ random points in the unit square~\cite{kappa}.
We use the name \Id{RHGX} for random hyperbolic graphs~\cite{funke2019communication,DBLP:journals/corr/abs-2003-00736} with $10^8$~nodes and $X\times 10^9$~edges.
Basic properties of the graphs under consideration can be found in Table~\ref{tab:heistream_graphs}.
For our experiments, we split the graphs in three disjoint sets.
A \emph{tuning} set for the parameter study experiments, a \emph{test} set for the comparisons against the state-of-the-art, and a set of \emph{huge graphs} for special larger~scale~tests. 
In any case, when streaming the graphs we use the natural given \hbox{order of the nodes}.

\begin{table}[t]
	\scriptsize
	\centering
	\setlength{\tabcolsep}{3pt}
	\begin{tabular}[t]{| l  r  r  r | }
		\hline
		Graph & $n$& $m$ & Type\\
		\hline  \hline
		\multicolumn{4}{|c|}{Tuning Set} \\
		\hline 
		coAuthorsCiteseer & \numprint{227320}    & \numprint{814134}  & Citations \\
		citationCiteseer & \numprint{268495}    & \numprint{1156647}  & Citations \\
		amazon0312 & \numprint{400727}  & \numprint{2349869} & Co-Purch. \\
		amazon0601 & \numprint{403364} & \numprint{2443311} & Co-Purch. \\
		amazon0505 & \numprint{410236} & \numprint{2439437} & Co-Purch. \\
		roadNet-PA & \numprint{1087562} & \numprint{1541514} & Roads \\
		com-Youtube & \numprint{1134890}  & \numprint{2987624} & Social \\
		soc-lastfm & \numprint{1191805} & \numprint{4519330}   & Social \\
		roadNet-TX & \numprint{1351137} & \numprint{1879201} & Roads \\
		in-2004 & \numprint{1382908}    & \numprint{13591473}  & Web \\
		G3\_circuit & \numprint{1585478}    & \numprint{3037674}  & Circuit \\
		soc-pokec & \numprint{1632803} & \numprint{22301964}   & Social \\
		as-Skitter & \numprint{1694616}    & \numprint{11094209}  & Aut.Syst. \\
		wiki-topcats & \numprint{1791489} & \numprint{28511807} & Social \\
		roadNet-CA & \numprint{1957027}  & \numprint{2760388} & Roads \\
		wiki-Talk & \numprint{2388953}     & \numprint{4656682}  & Web \\
		soc-flixster & \numprint{2523386} & \numprint{7918801}   & Social \\
		del22 & \numprint{4194304}  & \numprint{12582869} & Artificial \\
		rgg22 & \numprint{4194304} & \numprint{30359198} & Artificial \\
		del23 & \numprint{8388608}  & \numprint{25165784} & Artificial \\
		rgg23 & \numprint{8388608} & \numprint{63501393} & Artificial \\
		\hline
		\multicolumn{4}{|c|}{Huge Graphs} \\
		\hline 
		uk-2005 &   \numprint{39459923} & \numprint{783027125} & Web \\
		twitter7 & \numprint{41652230} & \numprint{1202513046} & Social \\
		sk-2005 & \numprint{50636154} & \numprint{1810063330} & Web \\
		soc-friendster & \numprint{65608366}  & \numprint{1806067135} & Social \\
		er-fact1.5s26 & \numprint{67108864}  & \numprint{907090182} & Artificial \\
		RHG1 & \numprint{100000000}  & \numprint{1000913106} & Artificial \\
		RHG2 & \numprint{100000000}  & \numprint{1999544833} & Artificial \\
		uk-2007-05 & \numprint{105896555} & \numprint{3301876564}   & Web \\
		\hline
	\end{tabular}
	\begin{tabular}[t]{| l  r  r  r | }
		\hline
		Graph & $n$& $m$ & Type\\
		\hline  \hline	
		\multicolumn{4}{|c|}{Test Set} \\
		\hline 
		Dubcova1 & \numprint{16129} & \numprint{118440} & Meshes \\
		hcircuit & \numprint{105676}  & \numprint{203734} & Circuit \\
		coAuthorsDBLP & \numprint{299067}     & \numprint{977676}  & Citations \\
		Web-NotreDame & \numprint{325729}     & \numprint{1090108}  & Web \\
		Dblp-2010 & \numprint{326186}     & \numprint{807700}  & Citations \\
		ML\_Laplace & \numprint{377002} & \numprint{13656485} & Meshes \\
		coPapersCiteseer & \numprint{434102}     & \numprint{16036720}  & Citations \\
		coPapersDBLP & \numprint{540486}     & \numprint{15245729}  & Citations \\
		Amazon-2008 & \numprint{735323}  & \numprint{3523472} & Similarity \\
		eu-2005 & \numprint{862664}    & \numprint{16138468}  & Web \\
		web-Google & \numprint{916428}    & \numprint{4322051}  & Web \\
		ca-hollywood-2009 & \numprint{1087562} & \numprint{1541514} & Roads \\
		Flan\_1565 & \numprint{1564794} & \numprint{57920625} & Meshes \\
		Ljournal-2008 & \numprint{1957027}  & \numprint{2760388} & Social \\
		HV15R & \numprint{2017169}  & \numprint{162357569} & Meshes \\
		Bump\_2911 & \numprint{2911419}  & \numprint{62409240} & Meshes \\
		del21 & \numprint{2097152}  & \numprint{6291408} & Artificial \\	
		rgg21 & \numprint{2097152} & \numprint{14487995} & Artificial \\
		FullChip & \numprint{2987012} & \numprint{11817567} & Circuit \\
		soc-orkut-dir & \numprint{3072441} & \numprint{117185083} & Social \\
		patents & \numprint{3750822}     & \numprint{14970766}  & Citations \\
		cit-Patents & \numprint{3774768}     & \numprint{16518947}  & Citations \\
		soc-LiveJournal1 & \numprint{4847571} & \numprint{42851237}   & Social \\
		circuit5M & \numprint{5558326} & \numprint{26983926} & Circuit \\
		italy-osm & \numprint{6686493}  & \numprint{7013978} & Roads \\
		great-britain-osm & \numprint{7733822} & \numprint{8156517} & Roads \\
		\hline
	\end{tabular}
	\caption{Graphs for experiments with streaming algorithms.}
	\label{tab:heistream_graphs}
\end{table}

\paragraph*{Experiments with Multilevel Process Mapping.}
\label{par:Process Mapping Paper}

The instances used in our experiments with multilevel process mapping come from various sources to test our algorithm.
We use the largest six graphs from Chris Walshaw's benchmark archive~\cite{soper2004combined}.
Graphs derived from sparse matrices have been taken from the SuiteSparse Matrix Collection~\cite{davis2011university}. 
We also use graphs from the 10th DIMACS Implementation Challenge~\cite{benchmarksfornetworksanalysis} website. 
Here, \Id{rggX} and \Id{delX} are defined as before.
The graphs \Id{af_shell9}, \Id{thermal2},  and \Id{nlr} are from the matrix and the numeric section of the DIMACS benchmark set.
The graphs \Id{eur} and \Id{deu} are large road networks of Europe and Germany taken from~\cite{DSSW09}. 
Basic properties of the graphs under consideration can be \hbox{found in Table~\ref{tab:test_instances_walshaw}}.

\begin{table}[t]
	\scriptsize
	\centering
	\setlength{\tabcolsep}{5pt}
	\begin{tabular}[t]{| l r r r | }
			\hline
			Graph & $n$& $m$ & Type \\
		 	\hline \hline
		 	 \multicolumn{4}{|c|}{Tuning Graphs}\\
			\hline
			  fe\_rotor                                    & \numprint{99617} & \numprint{662431} & Miscellaneous \\
			  598a                                         & \numprint{110971} & \numprint{741934} & Miscellaneous \\
			  ecology2                                     & \numprint{999999}  & \numprint{1997996} & Circuit \\
			  G3\_circuit                                  & \numprint{1585478} & \numprint{3037674} & Circuit \\
			  del22                                        & \numprint{4194304} & \numprint{12582869} & Artificial \\
			  rgg22                                        & \numprint{4194304} & \numprint{30359198} & Artificial \\
                          \hline
		 	 \multicolumn{4}{|c|}{UF Graphs}\\
			\hline
			  cop20k\_A                                    & \numprint{99843}  & \numprint{1262244} & Miscellaneous \\
			  2cubes\_sphere                               & \numprint{101492} & \numprint{772886} & Miscellaneous \\
			  thermomech\_TC                               & \numprint{102158} & \numprint{304700} & Miscellaneous \\
			  cfd2                                         & \numprint{123440} & \numprint{1482229} & Miscellaneous \\
			  boneS01                                      & \numprint{127224} & \numprint{3293964} & Miscellaneous \\
			  Dubcova3                                     & \numprint{146689} & \numprint{1744980} & Miscellaneous \\
			  bmwcra\_1                                    & \numprint{148770} & \numprint{5247616} & Numerical \\
			  G2\_circuit                                  & \numprint{150102} & \numprint{288286}  & Circuit \\
			  shipsec5                                     & \numprint{179860} & \numprint{4966618} & Miscellaneous \\
			  cont-300                                     & \numprint{180895} & \numprint{448799}   & Miscellaneous \\
                          \hline
	\end{tabular}
	\begin{tabular}[t]{| l r r r | }
			\hline
			Graph & $n$& $m$ & Type \\
		 	\hline \hline
		 	  \multicolumn{4}{|c|}{ Large Walshaw Graphs}  \\
                          \hline
			  598a                                           & \numprint{110971} & \numprint{741934}    & Meshes \\
			  fe\_ocean                                      & \numprint{143437} & \numprint{409593}    & Miscellaneous \\
			  144                                            & \numprint{144649} & \numprint{1074393}   & Meshes \\
			  wave                                           & \numprint{156317} & \numprint{1059331}  & Meshes \\
			  m14b                                           & \numprint{214765} & \numprint{1679018}   & Meshes \\
			  auto                                           & \numprint{448695} & \numprint{3314611}   & Meshes \\
                          \hline
		 	   \multicolumn{4}{|c|}{ Large Other Graphs}\\
                          \hline
			  af\_shell9                                   & \numprint{504855} & \numprint{8542010} & Miscellaneous \\
			  thermal2                                     & \numprint{1227087} & \numprint{3676134} & Miscellaneous \\
			  nlr                                          & \numprint{4163763} & \numprint{12487976} & Meshes \\
			  deu                                          & \numprint{4378446} & \numprint{5483587} & Roads \\
			  del23                                        & \numprint{8388608} & \numprint{25165784} & Artificial \\
			  rgg23                                        & \numprint{8388608} & \numprint{63501393} & Artificial \\
			  del24                                        & \numprint{16777216} & \numprint{50331601} & Artificial \\
			  rgg24                                        & \numprint{16777216} & \numprint{132557200} & Artificial \\
			  eur                                          & \numprint{18029721} & \numprint{22217686} & Roads \\
		 	\hline
	\end{tabular}
 	\caption{Graphs for multilevel process mapping experiments.}
 	\label{tab:test_instances_walshaw}
\end{table}

\begin{table}[t]
	\centering
	\scriptsize
	\setlength{\tabcolsep}{3pt}
	\begin{tabular}[t]{| l  r  r  r | }
		\hline
		Graph & $n$& $m$ & Triangles\\
		\hline  \hline
		\multicolumn{4}{|c|}{Tuning Set} \\
		\hline 
		citationCiteseer & \numprint{268495}    & \numprint{1156647}  & \numprint{847420} \\
		coAuthorsCiteseer & \numprint{227320}    & \numprint{814134}  & \numprint{2713298} \\
		amazon0312 & \numprint{400727}  & \numprint{2349869} & \numprint{3686467} \\
		amazon0505 & \numprint{410236} & \numprint{2439437} & \numprint{3951063} \\
		amazon0601 & \numprint{403364} & \numprint{2443311} & \numprint{3986507} \\
		del22 & \numprint{4194304}  & \numprint{12582869} & \numprint{8436672} \\
		del23 & \numprint{8388608}  & \numprint{25165784} & \numprint{16873359} \\
		soc-pokec & \numprint{1632803} & \numprint{22301964}   & \numprint{32557458} \\
		rgg22 & \numprint{4194304} & \numprint{30359198} & \numprint{85962754} \\
		rgg23 & \numprint{8388608} & \numprint{63501393} & \numprint{188022664} \\
		in-2004 & \numprint{1382908}    & \numprint{13591473}  & \numprint{464257245} \\
		\hline
	\end{tabular}
	\begin{tabular}[t]{| l  r  r  r | }
		\hline
		Graph & $n$& $m$ & Triangles\\
		\hline  \hline
		\multicolumn{4}{|c|}{Test Set} \\
		\hline 
		com-amazon & \numprint{334863}  & \numprint{925872} & \numprint{667129} \\
		com-dblp & \numprint{317080}  & \numprint{1049866} & \numprint{2224385} \\
		com-youtube & \numprint{1134890}  & \numprint{2987624} & \numprint{3056386} \\
		com-livejournal & \numprint{3997962}  & \numprint{34681189} & \numprint{177820130} \\
		com-orkut & \numprint{3072441}  & \numprint{117185083} & \numprint{627584181} \\
		com-friendster & \numprint{65608366} & \numprint{1806067135} & \numprint{4173724142} \\		
		\hline
	\end{tabular}
	\caption{Graphs for local motif clustering experiments.}
	\label{tab:lmcvhgp_graphs}
\end{table}

\paragraph*{Experiments with Local Motif Clustering.}
\label{par:Experiments with Local Motif Clustering}

In our experiments with local motif clustering, we use graphs from various sources~\cite{snapnets,nr-aaai15,benchmarksfornetworksanalysis}.
Most of the considered graphs were used for benchmark in previous works in the area.
Basic properties of the graphs under consideration can be found in Table~\ref{tab:lmcvhgp_graphs}.
For our experiments, we split the graphs in two disjoint sets: a \emph{tuning} set for the parameter study experiments and a \emph{test} set for the comparisons against the state-of-the-art. 
The graphs in the test set are exactly the graphs used in~\cite{yin2017local}.


\subsection{Hypergraphs}
\label{subsec:Hypergraphs}

In our experiments, we consider hypergraphs that have been used for benchmark in previous works on hypergraph partitioning.
We use the same benchmark as in~\cite{schlag2016k}.
This consists of 310~hypergraphs from three benchmark sets: 
18 hypergraphs from the ISPD98 Circuit Benchmark Suite~\cite{alpert1998ispd98}, 
192 hypergraphs based on the University of Florida Sparse Matrix Collection~\cite{davis2011university}, and 
100 instances from the international SAT Competition 2014~\cite{Belov2014}. 
The SAT instances were converted into hypergraphs by mapping each boolean variable and its complement to a node and each clause to a net. 
From the Sparse Matrix Collection, one matrix was selected for each application area that had between \numprint{10000} and \numprint{10000000} columns. 
The matrices were converted into hypergraphs using the row-net model, in which each row is treated as a net and \hbox{each column as a node}.

\subsection{Signed Graphs}
\label{subsec:Signed Graphs}

For experiments with signed graphs, we use the real-world signed graphs listed in Table~\ref{tab:sgc_graphs}.
They were all obtained from the public graph collections SNAP~\cite{snap} and KONECT~\cite{konect}.

\begin{table}[t]
	\centering
	\scriptsize
	\setlength{\tabcolsep}{5pt}
	\begin{tabular}{|lrrr|}
		\hline
		Graph          & $n$  & $m$ & Type   \\ \hline\hline
		{bitcoinalpha}   & \numprint{3783}   & \numprint{14081}   & Commerce \\
		{bitcoinotc}     & \numprint{5881}   & \numprint{21434}   & Commerce \\
		{elec}           & \numprint{7118}   & \numprint{100355}  & Miscellaneous \\
		{chess}          & \numprint{7301}   & \numprint{32650}   & Miscellaneous \\
		{slashdot081106} & \numprint{77357}  & \numprint{466666}  & Social \\
		{slashdot-zoo}   & \numprint{79116}  & \numprint{465840}  & Social \\
		{slashdot090216} & \numprint{81871}  & \numprint{495666}  & Social \\
		{slashdot090221} & \numprint{82144}  & \numprint{498532}  & Social \\
		{wikiconflict}   & \numprint{118100} & \numprint{1461058} & Social \\
		{epinions}       & \numprint{131828} & \numprint{708507}  & Social \\
		{wikisigned-k2}  & \numprint{138592} & \numprint{712337}  & Social \\
		\hline
	\end{tabular}
	\caption{Signed real-world graphs for signed graph clustering experiments.}
	\label{tab:sgc_graphs}
\end{table}

\section{Machines}
\label{sec:Machines}

We now describe the five machines that are used in the following chapters.
With hyperthreading, all used machine are capable of handling a number of threads equal to twice the number of their cores. 
\textbf{Machine A} has a two six-core Intel Xeon  E5-2630 processor running at $2.8$ GHz, $64$ GB of main memory, and $3$ MB of L2-Cache. 
It runs Ubuntu GNU/Linux 20.04.1 and Linux kernel version 5.4.0-48. %
\textbf{Machine B} has a four-core Intel Xeon E5420 processor running at $2.5$ GHz, $16$ GB of main memory, and $24$ MB of L2-Cache. 
The machine runs Ubuntu GNU/Linux 20.04.1 and Linux kernel version 5.4.0-65. %
\textbf{Machine C} has a sixteen-core Intel Xeon  Silver 4216 processor running at $2.1$ GHz, $100$ GB of main memory, $16$ MB of L2-Cache, and $22$ MB of L3-Cache running Ubuntu 20.04.1. The machine can handle 32 threads with hyperthreading. %
\textbf{Machine D} has a sixty-four-core AMD EPYC 7702P processor running at $2.0$ GHz, $1$ TB of main memory, $32$ MB of L2-Cache, and $256$ MB of L3-Cache. %
\textbf{Machine E} has four sixteen-core Intel Xeon Haswell-EX E7-8867 processors running at $2.5$ GHz, $1$ TB of main memory, and $32768$ KB of L2-Cache.
The machine runs Debian GNU/Linux 10 and Linux kernel version 4.19.67-2. %


\begin{figure*}[tb]
	\captionsetup[subfigure]{justification=centering}
	\centering
	\includegraphics[width=0.48\textwidth]{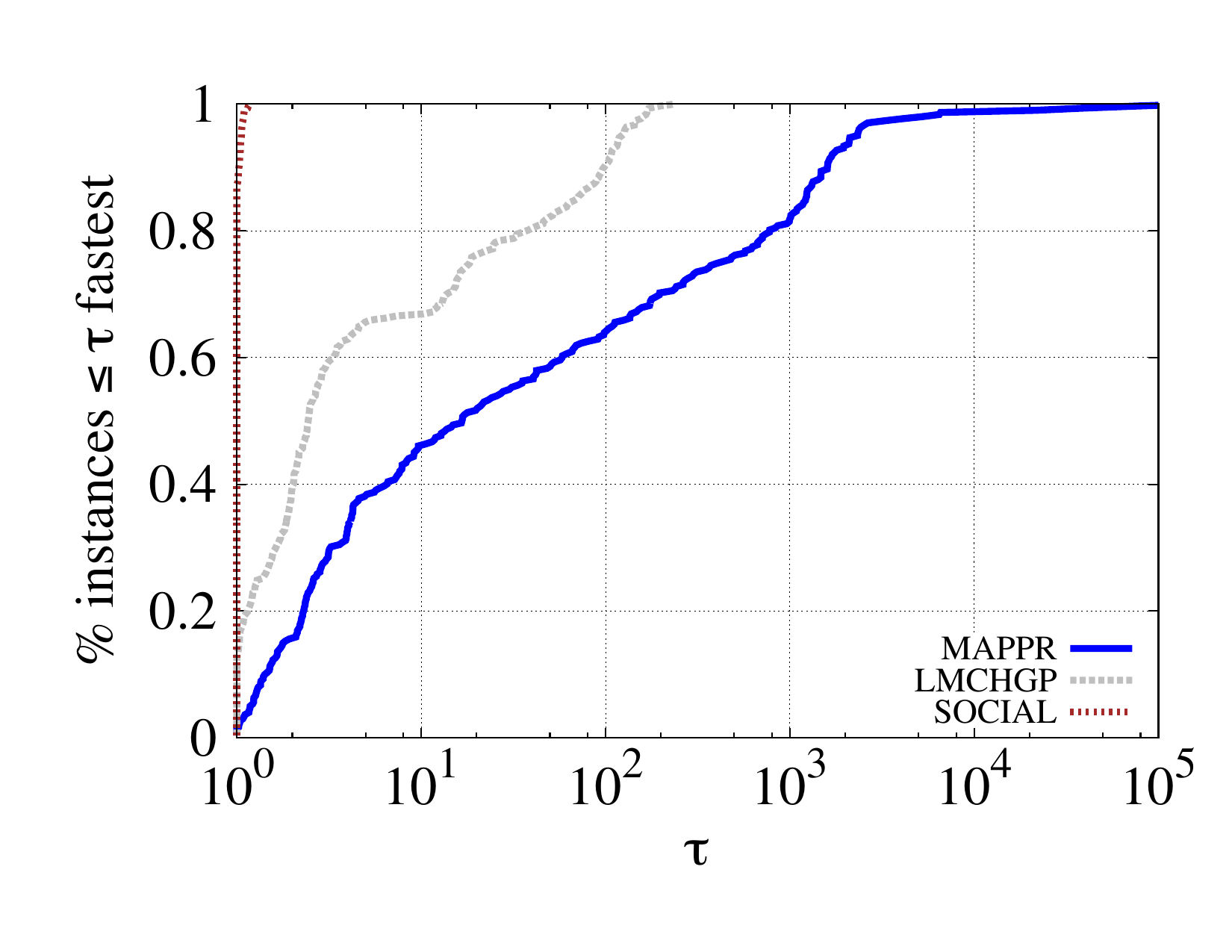}
	\vspace*{-.6cm}
	\caption{Performance profile comparing running times of three algorithms.}
	\label{fig:prel_SIMPLEstateoftheart_graph_pp}
\end{figure*}

\section{Methodology}
\label{sec:Methodology}

In this section, we describe our experimental methodology. 
The concepts and details presented here apply throughout the thesis, unless explicitly stated otherwise.
Depending on the focus of the experiment, we measure running time, memory consumption and/or solution quality.
Depending on the problem under study, solution quality is measured in terms of edge-cut, cut-net, connectivity, communication cost, and/or motif conductance.
In general, we perform ten repetitions per algorithm and instance using different random seeds for initialization, and we compute the arithmetic average of the computed objective functions and running time per instance.
When further averaging over multiple instances, we use the geometric mean in order to give every instance the same influence on the \textit{final score}.

For a solution generated by an algorithm $A$, we express its score $\sigma_A$ (which can measure running time or solution quality) using one or more of the following tools:
\emph{improvement} over an algorithm~$B$, computed as $\big(\frac{\sigma_B}{\sigma_A}-1\big)*100\%$;
\emph{ratio}, computed as $\big(\frac{\sigma_A}{\sigma_{max}}\big)$ with $\sigma_{max}$ being the maximum score for a given instance or for a given x-axis parameter among all competitors including $A$;
\emph{relative} value over an algorithm~$B$, computed as $\big(\frac{\sigma_A}{\sigma_{B}}\big)$.
Bar charts and boxplots are also employed to represent our findings.
We use bar charts to visualize the average value of an objective function in relation to an x-axis parameter, where each algorithm is represented by vertical bars of a given color with origin on the x-axis. 
The bars for every value of this parameter have a common origin and are arranged in terms of their height, allowing all heights to be visible. 
We use boxplots to give a clear picture of the dataset distribution by displaying the minimum, maximum, median, first and third quartiles, \hbox{while disregarding outliers}.

We also present \emph{performance profiles} which  
relate the running time (resp. solution quality) of a group of algorithms to the fastest (resp. best) one on a per-instance basis.
Their x-axis shows a factor~$\tau$ while their y-axis shows the percentage of instances for which A has up to~$\tau$~times the running time (resp. solution quality) of the fastest (resp. best)~algorithm.
Achieving higher fractions at smaller $\tau$ values is considered better.
As an example, Figure~\ref{fig:prel_SIMPLEstateoftheart_graph_pp} compares the running time of three different algorithms using a performance profile.
In this plot, the algorithm \AlgName{SOCIAL} is the fastest one for $87\%$ of the instances, while the algorithms \mbox{\AlgName{LMCHGP}} and \mbox{\AlgName{MAPPR}} are the fastest ones for $12\%$~and~$1\%$ of the instances, respectively (see $\tau = 10^0$).
Note also that the running time of \AlgName{SOCIAL} is within a factor $1.18$ of the running times of the fastest competitors for all instances (see y-axis $= 1$).
Furthermore, the running time of \mbox{\AlgName{LMCHGP}} and \mbox{\AlgName{MAPPR}} are within a factor $10$ of the running times of the fastest competitors for $67\%$ and $46\%$ of the instances, respectively (see $\tau = 10^1$).

\chapter{Related Work}
\label{chap:Related Work}

In this chapter, we give an overview of previous work that has been done on streaming (hyper)graph partitioning, local motif clustering, process mapping, and signed graph clustering.

\section{(Hyper)Graph Partitioning}
\label{sec:(Hyper)Graph Partitioning}

There is massive research on (hyper)graph partitioning in the literature.
The most prominent tools to partition (hyper)graphs in memory include
\AlgName{PaToH}~\cite{ccatalyurek2011patoh},  \AlgName{Metis}~\cite{parmetis-conference},  \AlgName{hMetis}~\cite{hMetis},  \AlgName{Scotch}~\cite{Pellegrini96experimentalanalysis}, \AlgName{HYPE}~\cite{HYPE2018},  \AlgName{KaHIP}~\cite{kabapeE},  \AlgName{KaMinPar}~\cite{gottesburen2021deep},  \AlgName{KaHyPar}~\cite{schlag2016k},\linebreak\hfill \AlgName{Mt-KaHyPar}~\cite{mt-kahypar-d}, and  \AlgName{mt-KaHIP}~\cite{DBLP:conf/europar/Akhremtsev0018}.
The readers are referred to~\cite{more_recent_advances_hgp,SPPGPOverviewPaper,DBLP:reference/bdt/0003S19} for extensive material and references.
Here, we focus on the results specifically related to the scope of this paper.
In particular, we provide an exhaustive account of the state-of-the-art solutions available for the (hyper)graph partitioning problem in the context of the (buffered) (re)streaming model.

\subsection{Streaming Graph Partitioning}
\label{subsec:Streaming Graph Partitioning}
Stanton and Kliot~\cite{stanton2012streaming}  propose heuristics to tackle the graph partitioning problem in the streaming model.
Among their most prominent heuristic is the one-pass method \emph{linear deterministic greedy}~(\AlgName{LDG}) which produces solutions with the best overall edge-cut.
In this algorithm, node assignments prioritize blocks containing more neighbors and use a penalty multiplier to control imbalance.
Particularly, a node $v$ is assigned to the block $V_i$ that maximizes
$|V_i \cap \neighbors(v)|\Phi(i)$ with $\Phi(i)$ being a multiplicative degrading factor defined as $(1-\frac{|V_i|}{L_\text{max}})$
The intuition is that the degrading factor avoids to overload blocks that are already very heavy.
In case of ties on the objective function, \AlgName{LDG} assigns the node to the block with fewer nodes.
Overall, \AlgName{LDG} partitions a graph in $O(m+nk)$ time.
Moreover, the authors also propose a simple one-pass methods based on \emph{hashing}, which has running time $O(n)$ and produces a poor edge-cut.


Later, Stanton~\cite{stanton2014streaming} studies the streaming graph partitioning problem from a more theoretical perspective.
The author proves that no algorithm can obtain an $o(n)$-approximation with a random or adversarial stream ordering.
Next, two variants of a randomized greedy algorithm are analyzed by using a novel coupling to finite Polya Urn~\cite{chung2003generalizations} processes, which intuitively explains the performance of the compared algorithms.

Tsourakakis~et~al.~\cite{tsourakakis2014fennel} propose \AlgName{Fennel},
a simple one-pass partitioning algorithm which adapts the widely-known clustering objective function \emph{modularity}~\cite{brandes2007modularity}.
Roughly speaking, \AlgName{Fennel} assigns a node $v$ to a block $V_i$ in order to maximize an expression of type $|V_i\cap \neighbors(v)|-f(|V_i|)$, where $f(|V_i|)$ is an additive degrading factor.
More specifically, the authors defined the \AlgName{Fennel} objective function by using
$f(|V_i|) = \alpha\gamma |V_i|^{\gamma-1}$, in which $\gamma$ is a free parameter and $\alpha = m \frac{k^{\gamma-1}}{n^{\gamma}}$.
After a parameter tuning made by the authors, \AlgName{Fennel} uses $\gamma=\frac{3}{2}$, which implies $\alpha=\sqrt{k}\frac{m}{n^{3/2}}$.
Although the objective function penalizes imbalanced partitions, the authors define the possibility of a hard constraint to enforce balancing.
In their experiments, \AlgName{Fennel} cuts fewer edges than \AlgName{LDG}~\cite{stanton2012streaming} and, for some instances, it cuts roughly the same number of edges as an offline partitioning algorithm.

Zhang~et~al.~\cite{zhang2018akin} propose \AlgName{AKIN}, a streaming graph partitioning algorithm for distributed graph storage systems.
\AlgName{AKIN} is able to partition
graphs where the number of nodes $n$ is not known in advance by allowing the migration of nodes between blocks over time.
The assignment decisions are mainly based on the \emph{similarity} between nodes, which is evaluated with the \emph{Jaccard similarity coefficient}~\cite{hamers1989similarity}.
Given the (partial) neighborhoods of two nodes, this coefficient is defined as the ratio of their intersection over their union.
Initially, \AlgName{AKIN} assumes a \emph{base block} for each node, which is given by a hash  function.
This base block is taken as a preliminary node assignment as well as a constant-time index for reaching information in the distributed graph storage.
More specifically, for each node, \AlgName{AKIN} stores in its base block a fixed-length list containing its loaded neighbors with largest degree.
This list is used for computing the similarity between nodes.
As soon as an edge is loaded, \AlgName{AKIN} assigns and migrates it and both of its endpoints to the block which maximizes a similarity-based heuristic. 
In the experimental evaluation, the version of \AlgName{AKIN} keeping up to 100 neighbors per node cuts fewer edges compared to \AlgName{Fennel} while maintaining equivalent imbalance and spending $10\%$ more running time.

\subsection{Restreaming Graph Partitioning}
\label{subsec:Restreaming Graph Partitioning}
Nishimura~and~Ugander~\cite{nishimura2013restreaming} introduce a restreaming approach to partition the nodes of a graph.
Their approach is motivated by scenarios where the same graph is
streamed multiple times.
In their model, a one-pass partitioning algorithm can pass multiple times through the entire input while the edge-cut is iteratively reduced.
The authors propose \AlgName{ReLDG} and \AlgName{ReFennel}, which are respective restreaming adaptations of linear deterministic greedy~\cite{stanton2012streaming} (\AlgName{LDG}) and \AlgName{Fennel}~\cite{tsourakakis2014fennel}.
On the one hand, \AlgName{ReLDG} modifies the objective of \AlgName{LDG} to account only for node assignments performed during the current pass when computing block weights.
On the other hand, \AlgName{ReFennel} uses the same objective as \AlgName{Fennel} during restreaming, but its additive balancing degrading factor is increased after each pass in order to enforce balance.
Additionally, the authors prove that \AlgName{ReFennel} converges after a finite number of restreams even without increasing the degrading factor.
Their experiments confirm that their restreaming methods can iteratively reduce edge-cut.

Awadelkarim and Ugander~\cite{awadelkarim2020prioritized} investigate how the order in which nodes are streamed influences one-pass graph partitioning.
The authors introduce the notion of \emph{prioritized streaming}, where (re)streamed nodes are statically or dynamically reordered based on some predefined priority.
Their approach, which is a prioritized version of \AlgName{ReLDG}, uses multiplicative weights of restreaming algorithms and adapts the ordering of the streaming process inspired by balanced label propagation.
In their experiments, the authors  consider a wide range of stream orderings.
The minimum overall edge-cut is obtained using a dynamic node ordering based on their own metric \emph{ambivalence}.
This approach is closely followed by a static ordering based on node degree.

\subsection{Buffered Streaming Graph Partitioning}
\label{subsec:Buffered Streaming Graph Partitioning}
Patwary et al.~\cite{patwary2019window} propose \AlgName{WStream}, a simple streaming graph partitioning algorithm that keeps a sliding window in memory.
The authors allow a few hundred nodes in the sliding window in order to obtain more information about a node before it is permanently assigned to a block based on a greedy function.
As soon as a node is allocated to a block, one more node is loaded from the input stream into the sliding window, which keeps the window size constant.
In their experiments, \AlgName{WStream} cuts fewer edges than \AlgName{LDG} and more edges than offline multilevel partitioning for most~tested~graphs.

Jafari et al.~\cite{jafari2021fast}
perform graph partitioning using a buffered streaming computational model.
The authors propose a shared-memory algorithm which repeatedly loads a batch of nodes from the stream input, partitions it using a multilevel scheme, and then permanently assigns the nodes to blocks.
Their multilevel 
scheme is based on a simplified structure where the one-pass algorithm \AlgName{LDG} is used for coarsening, computing an initial partition, and refining it.
They parallelize \AlgName{LDG} in a node-centric way by simply splitting nodes among processors, which yields a parallelization of the three steps of their multilevel scheme.
In their experiments, their algorithms cuts fewer edges than
\AlgName{LDG} while scaling better than \hbox{offline partitioning algorithms}.


\subsection{Streaming Hypergraph Partitioning}
\label{subsec:Streaming Hypergraph Partitioning}
Alistarh~et~al.~\cite{alistarh2015streaming} propose \AlgName{Min-Max}, a one-pass streaming
algorithm to assign the nodes of a hypergraph to blocks.
For each block, this algorithm keeps track of nets which contain pins in it.
This implies a memory consumption of $O(mk)$, which is more than the typical memory consumption of a streaming algorithm for graph partitioning.
When a node is loaded, \AlgName{Min-Max} allocates it to the block containing the largest intersection with its nets while respecting a hard constraint for load balance.
The authors theoretically prove that their algorithm is able to recover a hidden \emph{co-clustering}
with high probability, where a co-clustering is defined as a simultaneous clustering of nodes and hyperedges.
In the experimental evaluation, \AlgName{Min-Max} outperforms five intuitive streaming approaches with respect to load imbalance, while producing solutions up to five times more imbalanced than internal-memory \hbox{algorithms such as \AlgName{hMetis}}.

Ta{\c{s}}yaran~et~al.~\cite{tacsyaran2021streaming} propose improved versions of the algorithm  \AlgName{Min-Max}~\cite{alistarh2015streaming}.
The authors present \AlgName{Min-Max-N2P}, a modified version of \AlgName{Min-Max} that stores blocks containing each net's pins instead of storing nets per block, as done in \AlgName{Min-Max}.
In their experiments, \AlgName{Min-Max-N2P} is three orders of magnitude faster than \AlgName{Min-Max} while keeping the same cut-net. %
The authors also introduce three algorithms with reduced memory usage compared to \AlgName{Min-Max}: \AlgName{Min-Max-L$\ell$}, a modification of \AlgName{Min-Max-N2P} that employs an upper-bound $\ell$ to limit memory consumption per net, \AlgName{Min-Max-BF} which utilizes Bloom filters for membership queries, and \AlgName{Min-Max-MH} that uses hashing functions to replace the connectivity information between blocks and nets.
In their experiments, their three algorithms reduce the running time in comparison to  \AlgName{Min-Max}, especially \AlgName{Min-Max-L$\ell$} and \AlgName{Min-Max-MH}, which are up to four orders of magnitude faster.
On the other hand, the three algorithms generate solutions with worse cut-net than  \AlgName{Min-Max}, especially \AlgName{Min-Max-MH}, which increases the cut-net by up to an order of magnitude.
Moreover, the authors propose a technique to improve the partitioning decision in the streaming setting by including a buffer to store some nodes and their net sets. 
This approach operates similarly to \AlgName{Min-Max-N2P}, but with the added ability to revisit buffered nodes and adjust their partition assignment based on the connectivity metric.
The authors propose three algorithms using this buffered approach: \AlgName{REF} that buffers every incoming node but only reassigns those that may improve connectivity, \AlgName{REF\_RLX} that buffers all nodes and reassigns all nodes in the buffer, and \AlgName{REF\_RLX\_SV} that only buffers nodes with small net sets and reassigns all nodes in the buffer.
Their experimental results show that the use of buffered approaches leads to a $5$-$20\%$ improvement in partitioning quality compared to non-buffered approaches, but with a trade-off \hbox{of increased runtime}.


\section{Local Motif Clustering}
\label{sec:Local Motif Clustering}

Motif-based clustering has been widely studied in the literature, with works such as~\cite{benson2015tensor, yin2017local,klymko2014using,prvzulj2007biological,tsourakakis2017scalable} partitioning all the nodes of a graph into clusters based on motifs.
We also address the topic of clustering based on motifs, but our focus is on identifying clusters in the immediate vicinity of a specific seed node, rather than on the entire graph.
Several works~\cite{kloster2014heat,li2015uncovering,mahoney2012local,cui2014local,sozio2010community} propose local clustering algorithms, but they do not focus on optimizing for motif-based metrics like this thesis. 
Instead, they use metrics based on edges, like conductance and modularity.
In this section, we review previous work on local clustering based on motifs, which is the focus~of~this~thesis.

\citet{rohe2013blessing} propose a local clustering algorithm based on triangle motifs.
Their algorithm starts by initializing a cluster containing only the seed node, and iteratively grows this cluster.
Particularly, the algorithm greedily inserts nodes contained in at least a predefined number of cut triangles.
\hbox{\citet{huang2014querying}} recover local communities containing a seed node in online and dynamic setups based on higher-order graph structures named Trusses~\cite{cohen2008trusses}.
They define the $k$-truss of a graph as its largest subgraph whose edges are all contained in at least $(k-2)$ triangle motifs, hence trusses are a graph structure based on the frequency of triangles.
The authors use indexes to search for $k$-truss communities in time proportional to the size of the \hbox{recovered~community}. 

\citet{yin2017local}~propose \AlgName{MAPPR}, a local motif clustering algorithm based on the Approximate Personalized PageRank~(\AlgName{APPR}) method.
In a preprocessing phase, \AlgName{MAPPR} enumerates the motif of interest in the entire input graph and constructs a weighted graph~$W$, in which edges only exist between nodes that appear in at least one instance of the motif, and their edge weight is equal to the number of occurrences of the motif containing these two endpoints.
Afterward, \AlgName{MAPPR} uses an adapted version of the \AlgName{APPR} method to find local communities in the weighted graph constructed in the preprocessing phase.
\AlgName{MAPPR} is able to extract local communities from directed input graphs, something that cannot be \hbox{done using~\AlgName{APPR}~alone}.

\citet{zhang2019local} propose \AlgName{LCD-Motif}, an algorithm that addresses the local motif clustering problem using a modified version of the spectral method.
\AlgName{LCD-Motif} has two main differences in comparison to the traditional spectral motif clustering method.
First, instead of computing singular vectors, the algorithm performs random walks to identify potential members of the searched cluster.
They use the span of a few dimensions of vectors, obtained through random walks, as an approximation for the local motif spectra.
Second, Instead of using $k$-means for clustering, \AlgName{LCD-Motif} searches for the minimum 0-norm vector within the previously mentioned span, which must contain the seed nodes in its~support~vector.

\citet{meng2019local} propose \AlgName{FuzLhocd}, a local motif clustering algorithm that uses fuzzy arithmetic to optimize a modified version of modularity.
Given seed node, \AlgName{FuzLhocd} starts by detecting probable core nodes of the targeted local community using fuzzy membership.
After identifying the probable core nodes of the target local community using fuzzy membership, the algorithm expands these nodes using another fuzzy membership to form~a~cluster.

\citet{zhou2021high} propose \AlgName{HOSPLOC}, a local motif clustering algorithm that uses a motif-based random walk to compute a distribution vector, which is then truncated and used in a vector-based partitioning method.
The algorithm begins by approximately estimating the distribution vector through a motif-based random walk. 
To further refine the computation and focus on the local region, \AlgName{HOSPLOC} sets all small vector entries to 0. 
After this preprocessing step, the algorithm applies a vector-based partitioning method~\cite{spielman2013local} on the resulting distribution vector in order to \hbox{identify~a~local~cluster}.

\citet{shang2022local} propose \AlgName{HSEI}, a local motif clustering algorithm that uses motif and edge information to grow a cluster from a seed node.
The algorithm begins by creating an initial cluster consisting of only the seed node.
It then adds nodes to the cluster from the seed's neighborhood, selecting them based on their motif degree.
The cluster is expanded using a motif-based extension of the~modularity~function.

\section{Process Mapping}
\label{sec:process_mapping}

The algorithms for process mapping can be categorized in two groups.
On the one hand, the \emph{single-phase} or \emph{integrated} algorithms
combine process mapping with graph partitioning  \cite{DBLP:journals/fgcs/WalshawC01,DBLP:conf/hpcn/PellegriniR96}, such that the objective of the partitioning -- commonly edge-cut -- is typically replaced by a function that measures communication cost.
On the other hand, the two-phase algorithms decouple partitioning and mapping \cite{schulz2017better,brandfass2013rank,heider1972computationally,muller2013optimale,glantz2015algorithms}.
In a \emph{two-phase} or \emph{decoupled} algorithm, a default graph partitioning algorithm is used to partition a communication graph into $k$ blocks while typically minimizing edge-cut.
Afterwards, the quotient graph of the partitioned communication graph is mapped onto PEs in order to minimize the total communication cost $J$ (Equation~(\ref{eq:process_mapping_obj})) or other objective function.
This step is also known as one-to-one (process) mapping.
To the best of our knowledge there is no streaming algorithm specifically designed adapted to optimize for the process mapping problem.
We now describe the state-of-the-art for the process mapping problem.
The reader is referred to~\cite{GPOverviewBook,SPPGPOverviewPaper} for older works on process mapping.

\AlgName{Jostle} and \AlgName{Scotch} are integrated algorithms to solve the process mapping problem.
\AlgName{Jostle} integrates local search into a multilevel scheme to partition the model of
computation and communication. 
In this scheme, it solves the problem on the coarsest level and afterwards performs refinements based on the user-supplied network communication model. 
\AlgName{Scotch} performs dual recursive bipartitioning to compute a mapping.
More precisely, it starts the recursion considering all given processes and PEs. 
At each recursion level, it bipartitions the communication graph and also the distance graph with a graph bipartitioning algorithm. 
The first (resp., second) block of the communication graph is then assigned to the first (resp., second) block of the distance graph.

Müller-Merbach~\cite{muller2013optimale} propose a greedy construction method to obtain an initial permutation for the one-to-one process mapping problem.
The method roughly works as follows:
Initially compute the total communication volume for each process and also the sum of distances from each core to all the others.
Afterwards, the algorithm proceeds in rounds. In each round, the process with the largest communication volume is assigned to the core with the smallest total distance. 

A method to improve an already given solution for one-to-one process mapping was proposed in \cite{heider1972computationally}. 
The method repeatedly tries to perform swaps in the assignment in a pair-exchange neighborhood $N(\Pi)$ that contains all permutations that can be reached by swapping two elements in $\Pi$. 
Here, swapping two elements means that $\Pi^{-1}(i)$ will be assigned to processor $j$ and $\Pi^{-1}(j)$ will be assigned to processor $i$ after the swap is done. 
The algorithm then looks at the neighborhood in a cyclic manner. 
A swap is performed if it reduces the objective. 
To reduce the runtime, Brandfass~et~al.~\cite{brandfass2013rank} introduced a couple of modifications to speed up the algorithm, such as only considering pairs for swapping that can reduce the objective or partitioning the search space into $s$ consecutive blocks and only performing swaps inside those blocks.

Deveci~et~al.~\cite{DBLP:conf/ipps/DeveciKUC15} propose a greedy
construction algorithm and two refinement algorithms to map processes to PEs based on a topology graph.
The construction algorithm iteratively picks the process with highest connectivity to the already mapped processes.
Then, this process is assigned to the module which minimizes hop-bytes, which is found by performing a breadth-first search on the topology graph.
Both their refinement methods are based on process swapping, but one of them aims at minimizing hop-bytes while the other one aims at minimizing
congestion.
The authors experimentally compare their algorithms on a torus-based system against the default mapper of the system as well as \AlgName{Scotch}~\cite{Scotch} and LibTopoMap~\cite{Walshaw07}.
Their algorithms induce performance increases of $43\%$ and $23\%$ for a communication-only application and a sparse matrix vector multiplication, respectively, compared to the default mapper of the system, and
have the best overall results \hbox{among the competitors}.

Deveci~\etal~\cite{8666156} also propose a scheme that exploits geometric
partitioning to map processors to PEs.  The application data and
PEs are partitioned separately using the MultiJagged geometric partitioner~\cite{DBLP:journals/tpds/DeveciRDC16}.
Blocks from the data partition are then mapped to
corresponding blocks in the PE partition, effectively assigning interdependent
data to ``nearby'' PEs in the network.  The method is appropriate for
applications that have geometric coordinates as well as graph data.

Vogelstein~et~al.~\cite{vogelstein2015fast} solve an application-agnostic problem where there are two graphs with the same number of nodes which are bijectively
mapped onto one another in order to minimize the number of induced edge disagreements.
The authors call this problem \emph{graph matching problem} and formulate it as a quadratic assignment problem~(QAP).
They solve it with a non-linear approximation algorithm based on the gradient vector.
Their algorithm has complexity $O(k^3)$ and performs better than the previous state-of-the-art~\cite{umeyama1988eigendecomposition,singh2007pairwise,zaslavskiy2008path} regarding running time and objective function for over $80\%$ of the QAPLIB benchmark library~\cite{burkard1997qaplib}.

Glantz~et~al.~\cite{glantz2015algorithms}
propose two greedy algorithms to bijectively map blocks onto PEs assuming a communication graph.
Their most successful algorithm, \AlgName{GreedyAllC}, is an adaptation of the greedy mapping algorithm proposed
by M{\"u}ller-Merbach~\cite{muller2013optimale}.
Glantz~et~al.~\cite{glantz2015algorithms} modify this algorithm by scaling the distance with the amount of communication to be done.
This modification improves the overall mapping quality with respect to the quality measures maximum congestion and maximum dilation.

Glantz~et~al.~\cite{glantz2018topology} propose a local improvement algorithm for one-to-one process mapping in which the hardware topology is a partial cube, i.e., an isometric subgraph of a hypercube.
The authors exploit the regularity of these topologies to label PEs as well as processes with bit-strings along convex cuts.
These bit-strings permit fast computation of distances between PEs and the implementation of effective
hierarchical refinement methods to improve the mapping induced by the labels.
Their experimental results show that their algorithm reduces the total communication cost of the mappings produced by state-of-the-art mapping algorithms ~\cite{Scotch,glantz2015algorithms} in a range from $6\%$ to $34\%$ while the mapping time stays within the same order of magnitude.

Schulz~and~Träff~\cite{schulz2017better} solve the process mapping for a hierarchical topology using a two-phase approach.
Their algorithm uses the state-of-the-art partitioner \AlgName{KaHIP}~\cite{kaHIPHomePage} for the partitioning phase and then a multi-section for the one-to-one mapping phase.
The referred multi-section algorithm recursively partitions the quotient graph of the previously obtained partition throughout the layers of the communication topology in a top-down direction.
After this algorithm obtains a one-to-one mapping, a light-weight modification of
the swap-based refinement method proposed by Brandfass~et~al.~\cite{brandfass2013rank} is executed to further minimize the communication cost.
In particular, the authors experimentally show that $N^{10}_\mathcal{C}$, which restricts swapping to processes that have a distance smaller than $10$ in the communication graph, is an adequate choice to obtain good solutions with a moderate running time.
Their experiments also show that their approach produces mappings with lower total communication cost in comparison with alternative approaches which combine the same \AlgName{KaHIP}-based first phase with state-of-the-art algorithms for one-to-one mapping~\cite{muller2013optimale,Walshaw07,glantz2015algorithms}.
Kirchbach~et~al.~\cite{GlobalMultisection} further improve the two-pass approach by Schulz~and~Träff~\cite{schulz2017better}.
In particular, the authors specialize the partitioning phase to apply \AlgName{KaHIP} multiple times, namely throughout the hierarchical topology.
Kirchbach~et~al experimentally show that their best algorithm is faster while also decreasing the total communication cost in comparison to the algorithm from Schulz~and~Träff~\cite{schulz2017better}.

Predari~et~al.~\cite{PredariProcessMapping21} propose a single-phase distributed algorithm to solve the process mapping for a hierarchical topology.
The authors model the system hierarchy as an implicit labeled tree and label the nodes of the communication graph in order to implicitly induce the mapping.
The algorithm optimizes the mapping by using an
adapted version of parallel label propagation where the labeling scheme is used for quick gain computations.
In their experiments, the proposed algorithm has good scalability for up to thousands of PEs while achieving a total communication cost smaller than the distributed algorithms ParMetis~\cite{parmetis-conference} and ParHIP~\cite{DBLP:journals/tpds/MeyerhenkeSS17} and comparable cost to the state-of-the-art sequential mapping algorithm.%

Kirchbach~et~al.~\cite{von2020efficient}
address the process mapping problem assuming processes that communicate in a sparse stencil pattern and PEs organized in Cartesian grids.
First, the authors prove that the Cartesian mapping is already NP-hard for a
two-dimensional Cartesian grid and a one-dimensional stencil.
Then three fast construction algorithms which exploit the regularity of the problem are proposed.
Their algorithm with best overall results is the \emph{stencil strips} algorithm, which partitions the grid into strips of lengths close to the scaled length of an optimal bounding rectangle of the coordinates of the target stencil.
In their experimental evaluation using the MPI\_Neighbor\_alltoall routine of MPI,
their algorithms are up to two orders of magnitude faster than general purpose graph mapping tools such as \AlgName{VieM}~\cite{viemaWebsite} while resulting in a similar communication performance.
Moreover, their algorithms are up to three times faster than other Cartesian grid mapping algorithms while resulting in a much better communication performance.

\section{Signed Graph Clustering}
\label{sec:Signed Graph Clustering}

There is a huge body of research on signed graph clustering.
Roughly the same problem is solved under different names and terms in the literature such as \emph{community detection (or mining) in signed networks}~\cite{yang2007community,memeticsiggraphcommdetec2020,traag2009community}, \emph{correlation clustering}~\cite{correlationclustering_overall,demaine2006correlation}, and \emph{clique partitioning problem}~\cite{grotschel1989cutting,dinh2015toward,miyauchi2015redundant,miyauchi2018exact}.
Among the most frequent approach for solving the problem, we mention spectral clustering~\cite{kunegis2010spectral,gallier2016spectral,knyazev2018spectral,mercado2019spectral} and meta-heuristics~\cite{he2011ant,zhu2018novel,brusco2019partitioning,memeticsiggraphcommdetec2020,abdulrahman2020enhanced,chen2022iterated}.
The readers are referred to the surveys~\cite{gallier2016spectral,surveyTangCAL16,tomasso2021advances} for extensive material and references.
Here, we focus on the results specifically related to {the scope of this thesis}.


\subsection{Contraction-Based}
\label{subsec:Contraction-Based}
Keuper~et~al.~\cite{keuper2015efficient} propose an algorithm named Greedy Additive Edge Contraction (\AlgName{GAEC}).
\AlgName{GAEC} is an adaptation of the greedy agglomeration algorithm where the criterion to evaluate the strength of interactions between clusters is the summed weight of the edges shared by them. 
Bailoni~et~al.~\cite{bailoni2019generalized} propose \AlgName{GASP}, a framework for hierarchical agglomerative clustering on weighted signed graphs. 
The authors prove that their framework is a generalization to many existing clustering algorithms such as \AlgName{GAEC}~\cite{keuper2015efficient} and introduce new algorithms based on yet unexplored special instances of their framework, such as \AlgName{HCC-Sum}.
\AlgName{GASP} is a bottom-up approach where nodes are initially assigned to their own clusters, which are then iteratively merged in a pair-wise fashion. 
The framework should be configured to use a specific criterion to evaluate the interaction between a pair of clusters.
Some of these criteria are the (average) weighted edge-cut between the clusters and the weight of their shared edge with highest absolute weight.
The unconstrained variety of \AlgName{GASP} starts by merging clusters with the largest positive interaction and stops once the remaining clusters share only negative interactions. 
The constrained variety of \AlgName{GASP} introduces cannot-merge constraints between pairs of clusters and terminates when all the remaining clusters are constrained against their neighbors.
This approach greedily selects the pair of clusters with highest absolute interaction, which can be either positive or negative. 
When it is positive, the clusters are merged, otherwise a constraint is added to prohibit the clusters from being merged until the end of the algorithm.
Experimentally, the best algorithms contemplated by \AlgName{GASP} with respect to the minimization of edge-cut are \AlgName{GAEC}~\cite{keuper2015efficient} and \AlgName{HCC-Sum}.
Similarly to \AlgName{GAEC} and \AlgName{HCC-Sum}, our multilevel algorithm also constructs a clustering based on successive contractions of the graph.

Hua~et~al.~\cite{HuaRandomWalk2020} propose \AlgName{FCSG}, an algorithm for signed graph clustering which combines a random walk gap (RWG) mechanism with a greedy shrinking method.
The RWG mechanism makes a random walk on a version of the graph where negative edges are removed and another random walk on a version of the graph where negative edges are made positive. 
Using the two random walk graphs that are constructed, the gap between the two graphs is calculated to give information on the natural clustering structure of the signed graph. 
This information is then used to build a new signed graph whose edge weights more accurately reflect the natural clustering structure of the input signed graph.
A clustering is then computed on this new signed graph based on a greedy \hbox{shrinking approach}.

\subsection{Evolutionary}
\label{subsec:Evolutionary}
Che~et~al.~\cite{memeticsiggraphcommdetec2020} propose \AlgName{MACD-SN}, an evolutionary algorithm for signed graph clustering optimizing for \emph{signed modularity}. 
In their algorithm, an individual represents a clustering and the \emph{signed modularity}~\cite{gomez2009analysis} metric is used as fitness function. 
An individual is denoted by an $n$-sized string where each node is assigned to a cluster.
Individuals for the initial population are built in two steps.
First, randomly assign all nodes to clusters.
Second, go through nodes in order to assign each node $u$ to its neighboring cluster $V_i$ which minimizes the expression $NID(u,V_i) = |N^-\cap V_i| + |N^+\cap \overline{V_i}|$. 
Classical approaches are implemented for selection, recombination, and mutation. 
Namely, a tournament selection, a randomized two-way crossover recombination, and a uniform random mutation.
The authors also implement a mutation operator which finds the cluster whose nodes contain the highest average value of $NID(u,V_i)$ and then simply assign each of its nodes $u$ to the cluster which minimizes $NID(u,V_i)$.
Finally, a local search algorithm is executed on the best mutated offspring in each generation.
While the operators of \AlgName{MACD-SN} are rather simple and traditional, the operators of our evolutionary algorithm are based on a sophisticated multilevel algorithm which we propose in this thesis. 
However, the approach only scales to networks \hbox{with a few thousand nodes}.

\subsection{Integer Linear Programming}
\label{subsec:Integer Linear Programming}
Many integer linear  programming (ILP) formulations were proposed for solving the signed graph clustering problem in order to optimize for edge-cut, as we do here.
Gr{\"{o}}tschel~and~Wakabayashi~\cite{grotschel1989cutting} propose an ILP formulation consisting of $\Theta(n^2)$ binary variables representing pairs of nodes and $\Theta(n^3)$ transitivity constraints.
These constraints ensure that, if two nodes $u,v$ are clustered together, then any other node $w$ will be clustered together with $u$ if, and only if, it is also clustered together with $v$.
Dinh~and~Thai~\cite{dinh2015toward} remove a set of redundant constraints from the ILP by Gr{\"{o}}tschel~and~Wakabayashi~\cite{grotschel1989cutting} for a special case of the signed clustering problem and Miyauchi~and~Sukegawa~\cite{miyauchi2015redundant} extend the ILP by Dinh~and~Thai~\cite{dinh2015toward} to signed graph clustering in general.
The formulation proposed in~\cite{miyauchi2015redundant} removes transitivity constraints associated with two of more edges with negative weight, which reduces the number of constraints to $O(n(n^2-m^-))$ constraints.

Miyauchi~et~al.~\cite{miyauchi2018exact} obtain an improved number $O(nm^+)$ of constraints by running the ILP from~\cite{miyauchi2015redundant} on a new signed graph where artificial edges are introduced between pairs of unconnected nodes from the input graph.
In particular, these artificial edges have a negative weight which is small enough to ensure that every optimal solution in this new graph corresponds to an optimal solution in the input graph.
Recently Koshimura~et~al.~\cite{koshimura2022concise} proposed a further improvement for the ILP by Miyauchi~et~al.~\cite{miyauchi2018exact} which removes around $50\%$ of its constraints.
Although there have been many improved ILP formulations for signed graph clustering in the last years, this approach does not scale well for large instances (in particular for the size of the instances considered in this paper).
In particular, signed graphs with more than a few thousand nodes become intractable in practice, which is the reason why most instances in the works cited above have \hbox{no more than a few hundred nodes}.

\chapter{Streaming Algorithms}
\label{chap:Streaming Algorithms}

In this chapter, we present our contributions in the field of (buffered) streaming (hyper)graph decomposition.
Firstly, we introduce the algorithm \AlgName{HeiStream} that solves the graph partitioning problem using the buffered streaming model.
The algorithm operates by partitioning batches or buffers of nodes with a multilevel algorithm after building a model for each loaded buffer.
\AlgName{HeiStream} is shown to produce solutions of better quality than previous state-of-the-art algorithms while also being faster when partitioning into large numbers of blocks.
Secondly, we propose \AlgName{Online Recursive Multi-Section}, a streaming algorithm for the process mapping problem that uses recursive multi-sections on-the-fly.
Our approach is the first in the literature designed specifically for process mapping, and its versatility allows it to also solve the graph partitioning problem.
Experimental results show that our recursive multisection on-the-fly algorithm produces significantly better solutions in terms of total communication cost while being up to multiple orders of magnitude faster than the state-of-the-art.
Thirdly, we introduce \AlgName{FREIGHT}, a streaming algorithm for hypergraph partitioning that extends the graph-based algorithm \AlgName{Fennel}.
Our proposed implementation for \AlgName{FREIGHT} is highly efficient, leading to significantly faster computation times and better cut-net and connectivity measures than previous state-of-the-art algorithms.
Lastly, we present an experimental comparison of our three algorithms for the \hbox{graph partitioning problem}.

\paragraph*{References.}
This chapter is based on~\cite{HeiStream}~and~\cite{StreamMultiSection}, which is joint work with Christian Schulz, and on~\cite{freight_paper}, which is joint work with Kamal Eyubov and Christian Schulz.
The experimental comparison of our three streaming algorithms is~new.

\section{Buffered Streaming Graph Partitioning}
\label{chap:heistream_Buffered Streaming Graph Partitioning}
\label{sec:heistream_Buffered Streaming Graph Partitioning}

In this section, we start to fill the gap currently observed between in-memory and streaming algorithms for graph partitioning algorithms.
We propose \AlgName{HeiStream}, an algorithm that can compute significantly better partitions of huge graphs than the currently available streaming algorithms while using a single machine without a lot of memory.
We adopt the buffered streaming model which allows a buffer of nodes to be received and stored before making assignment decisions.
Our algorithm is carefully engineered to produce partitions of improved quality by using a sophisticated multilevel scheme on a compressed model of the buffer \emph{and} the already assigned nodes. 
Our multilevel algorithm optimizes for the same objective as the previous state-of-the-art \AlgName{Fennel}. 
However, due to the multilevel scheme used on the compressed model, our local search algorithms have a global view on the optimization problem and hence compute better solutions overall. 
Lastly, using the multilevel scheme reduces the time complexity from $O(nk+m)$ of \AlgName{Fennel} to $O(n+m)$, where $k$ is the number of blocks a graph has to be partitioned in.
To this end, experiments indicate that our algorithm can partition huge networks on machines with small memory while computing \emph{better} solutions than the previous state-of-the-art in the streaming setting. 
At the same time our algorithm is \emph{faster} than the previous state-of-the-art for larger values of blocks~$k$.

\subsection{HeiStream}
\label{sec:heistream_HeiStream}

\subsubsection{Overall Structure}
\label{subsec:heistream_Overall Structure}
We now explain the overall structure of \AlgName{HeiStream}.
We slide through the streamed graph $G$ by repeating the following successive operations until all the nodes of~$G$ are assigned to blocks.
First, we load a batch containing $\delta$ nodes alongside with their adjacency lists.
Second, we build a model $\mathcal{B}$ to be partitioned. This model represents the already partitioned nodes as well as the nodes of the current batch.
Third, we partition $\mathcal{B}$ with a multilevel partitioning algorithm to optimize for the \AlgName{Fennel} objective function.
And finally, we permanently assign the nodes from the current batch to blocks.
Algorithm~\ref{alg:heistream_overall_algorithm} summarizes the general structure of \AlgName{HeiStream} and Figure~\ref{fig:heistream_comprehensive_scheme} shows the detailed structure of \AlgName{HeiStream}.

\begin{figure}[t]
	\centering
	\includegraphics[width=0.9\textwidth]{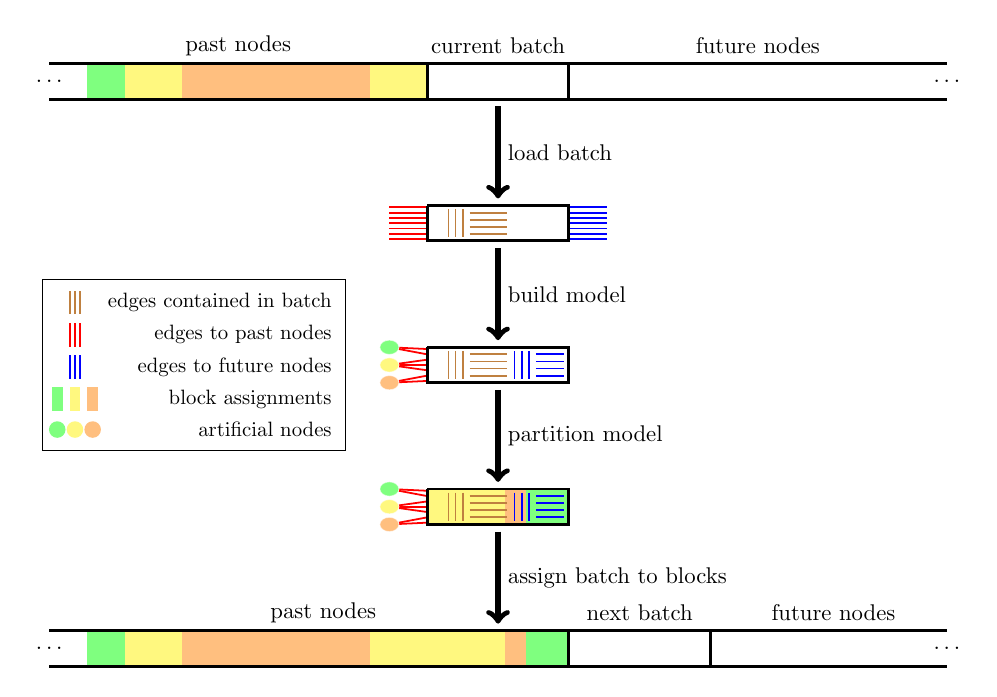}
	\caption{Detailed structure of \AlgName{HeiStream}. The algorithm starts by loading a batch of nodes alongside with their edges. Next, it builds a meaningful model based on the loaded nodes and edges. This model is then partitioned using a multilevel algorithm. Based on the partition of the model, the nodes of the current batch are permanently assigned to blocks. This whole process is repeated for the next batch until the whole graph is partitioned.}
	\label{fig:heistream_comprehensive_scheme}
\end{figure}

\begin{algorithm}[t] %
	\begin{algorithmic} %
		\While{$G$ has not been completely streamed}
			\State Load batch of nodes
			\State Build model $\mathcal{B}$
			\State Run multilevel partitioning on model $\mathcal{B}$
			\State Assign nodes of batch to permanent blocks
		\EndWhile
	\end{algorithmic}
	\caption{Structure of \AlgName{HeiStream}} %
	\label{alg:heistream_overall_algorithm} %
\end{algorithm}

\subsubsection{Model Construction}
\label{subsec:heistream_Graph Construction}
We build two different models, which yield a running time-quality trade-off. 
We start by describing the basic model and then extend this later.

When a batch is loaded, we build the \emph{basic model} $\mathcal{B}$ as follows.
We initialize $\mathcal{B}$ as the subgraph of $G$ induced by the nodes of the current batch.
If the current batch is not the first one, we add $k$ artificial nodes to the model.
These represent the $k$ preliminary blocks in their current state, i.e., filled with the nodes from the previous batches, which were already assigned.
The weight of an artificial node $i$ is set to the weight of block $V_i$.
A node of the current batch is connected to an artificial node $i$ if it has a neighbor from the previous batch that has been assigned to block $V_i$.
If this creates parallel edges, we replace them by a single edge with its weight set to the sum of the weight of the parallel edges. 
Note that the basic model does ignore edges towards nodes that will be streamed in future batches, \ie batches that have not been streamed yet.

Our \emph{extended model} incorporates edges towards nodes from future, not yet loaded, batches -- if the stream contains such edges.
We call edges towards nodes of future batches ghost edges and the corresponding endpoint in the future batch ghost node.
Ghost nodes and ghost edges provide partial information about parts of $G$ that have not yet been streamed.
Hence, representing them in the model $\mathcal{B}$ enhances the partitioning process.
Note though that simply inserting all ghost nodes and edges can overload memory in case there is an excessive number of them.
Thus our approach consists of randomly contracting the ghost nodes with one of their neighboring nodes from the current batch. 
Note that this contraction increments the weight of a node within our model and ensures that if there are more than one node from the current batch connected to the same future node, then there will be edges between those nodes in our model. 
Also note that the contraction ensures that the number of nodes in all models throughout the batched streaming process is constant. 
This prevents memory from being overloaded and makes it unnecessary to reallocate memory for $\mathcal{B}$ between successive batches.
In order to give a lower priority to ghost edges in comparison to the other edges, we divide the weight of each ghost edge by 2 in our~model.
The construction of the extended model is conceptually illustrated in Figure~\ref{fig:heistream_comprehensive_scheme}.

\subsubsection{Multilevel Weighted Fennel}
\label{subsec:heistream_Multilevel Fennel}
Our approach to partition the model $\mathcal{B}$ is a multilevel version of the algorithm \AlgName{Fennel}.
Recall that the multilevel scheme consists of three successive phases: coarsening, initial partitioning, and uncoarsening, as depicted in Figure~\ref{fig:sgc_MSGC}. %

It should be noted that the artificial nodes present in our model may become very heavy.
As an acknowledgement of their representation of permanent block assignments, any contraction or block change of artifical nodes is prohibited.
Consequently, these nodes require special handling in our multilevel algorithm.
Furthermore, note that the original formulation of \AlgName{Fennel} only works for unweighted graphs~\cite{tsourakakis2014fennel}.
However, our model $\mathcal{B}$ has nodes and edges that are weighted -- as a result of connections to artificial nodes and future nodes that may be contracted into the model, as well as due to the generation of a series of weighted graphs through the multilevel scheme. 
Hence, the \AlgName{Fennel} algorithm requires some adaptation to be utilized in \AlgName{HeiStream}.
We introduce a generalization of the \AlgName{Fennel} algorithm for weighted graphs that can be directly employed in a multilevel algorithm.
Within this section, we will present this generalized \AlgName{Fennel} objective and provide a detailed explanation of our multilevel algorithm for partitioning the model.

\paragraph*{Generalized Fennel.}
As a necessary feature, our generalization of the \AlgName{Fennel} gain function must ensure that the gain of a node at a coarser level equals the sum of the gains of the nodes that it represents at the finer levels. 
This way the algorithm gets a global view onto the optimization problem on the coarser levels and a very local view on the finer levels. 
Moreover, it is ensured that on each level of the hierarchy the algorithm works towards optimizing the given objective.

Our generalization of the gain function of \AlgName{Fennel} is as follows. %
Let $u$ be the node that should be assigned to a block. 
Our generalized \AlgName{Fennel} assigns $u$ to a block $i$ that maximizes Equation~(\ref{eq:gen_fennel}) such that $f(c(V_i)) =  \alpha * \gamma * c(V_i)^{\gamma-1}$.
Note that this is a direct generalization of the unweighted case. First, if the graph does not have edge weights, then the first part of the equation becomes $|V_i \cap N(u)|$ which is the first part of the \AlgName{Fennel} objective. Second, if the graph also does not have node weights, then the second part of the equation is the same as the second part of the equation in the \AlgName{Fennel} objective.
Moreover, observe that the penalty term $f(c(V_i))$ in our objective is multiplied by $c(u)$.
This is done to have the property stated above and formalized in Theorem~\ref{theo:generalized_fennel}.
Finally, we keep the original value of the parameters $\alpha$ and $\gamma$ in order to keep consistency. 

\begin{equation}
	\sum_{ v \in V_i \cap N(u)}{ \omega(u,v)} - c(u) f(c(V_i))
	\label{eq:gen_fennel}
\end{equation}

\begin{theorem}
	If a set of nodes $S \subseteq V$ is contracted into a node $w$, the generalized \AlgName{Fennel} gain function of $w$ is equal to the sum of the generalized \AlgName{Fennel} gain functions of the nodes in $S$.
	\label{theo:generalized_fennel}
\end{theorem}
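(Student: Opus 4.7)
The plan is to expand both sides of the claimed identity using the definition of the generalized \AlgName{Fennel} gain in Equation~(\ref{eq:gen_fennel}), and verify equality term-by-term by exploiting the bookkeeping rules of the contraction operator defined in Section~\ref{sec:Multilevel Scheme}. First I would pin down the state in which all gains are evaluated: since we compare the gain of assigning $w$ to $V_i$ against the sum of gains of assigning each $u_j \in S$ to $V_i$, I treat these as computed with respect to the same baseline partition, namely one in which no node of $S$ has yet been placed into $V_i$ (so $V_i \cap S = \emptyset$) and in which the reference block weight $c(V_i)$ is identical in every summand.

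Next I would recall the effect of contracting $S = \{u_1,\dots,u_\ell\}$ into $w$: the node weight satisfies $c(w) = \sum_{j=1}^\ell c(u_j)$; for every external neighbor $v \notin S$ the contracted edge weight is $\omega(w,v) = \sum_{u_j \in S : \{u_j,v\}\in E} \omega(u_j,v)$; and edges internal to $S$ are dropped. I would then split Equation~(\ref{eq:gen_fennel}) into its two additive parts and treat them separately.

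For the connectivity part I would swap the order of summation in $\sum_{u_j \in S} \sum_{v \in V_i \cap N(u_j)} \omega(u_j,v) = \sum_{v \in V_i} \sum_{u_j \in S : \{u_j,v\}\in E} \omega(u_j,v)$. Because $V_i \cap S = \emptyset$, the inner sum ranges only over external incidences of $S$ and therefore collapses to $\omega(w,v)$ when $v \in N(w)$ and to $0$ otherwise; this reproduces the connectivity term $\sum_{v \in V_i \cap N(w)} \omega(w,v)$ of the gain of $w$. For the penalty part, since $f(c(V_i)) = \alpha\gamma\, c(V_i)^{\gamma-1}$ is a constant with respect to the assigned node, linearity yields $\sum_{u_j \in S} c(u_j)\, f(c(V_i)) = f(c(V_i))\sum_{j} c(u_j) = c(w)\, f(c(V_i))$, matching the penalty term of the gain of $w$. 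Adding the two matched parts gives the claim.

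The main subtlety, rather than a technical obstacle, is the implicit assumption described above: one must be careful that the individual gains of the $u_j$ are evaluated against a common baseline in which no other element of $S$ has yet been placed into $V_i$, since otherwise the contributions of internal edges of $S$ would appear in the fine-level sum but not in the coarse-level term (internal edges being removed by contraction), and the penalty term $f(c(V_i))$ would differ across summands. Fixing the baseline as the pre-assignment state resolves both issues simultaneously, which is precisely why the generalized penalty term in Equation~(\ref{eq:gen_fennel}) is scaled by $c(u)$ rather than being independent of node weight as in the original \AlgName{Fennel} objective.
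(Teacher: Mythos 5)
Your proof is correct and follows the same decomposition as the paper's: expand the generalized \AlgName{Fennel} gain from Equation~(\ref{eq:gen_fennel}) into its connectivity and penalty terms, and verify each separately using the contraction bookkeeping $c(w)=\sum_{u\in S}c(u)$ and $\omega(w,v)=\sum_{u\in S:\{u,v\}\in E}\omega(u,v)$. Where you go beyond the published argument is in making explicit the baseline assumption $V_i \cap S = \emptyset$ under which the gains are evaluated: the paper asserts that equalities~(\ref{eq:equality1}) and~(\ref{eq:equality2}) are ``trivially true as a property of the contraction process,'' whereas you correctly observe that~(\ref{eq:equality1}) can fail without this assumption, since internal edges of $S$ would be counted in the fine-level double sum but vanish under contraction, and that the same assumption is what makes $f(c(V_i))$ a common constant across the summands; pinning that down is a genuine improvement in rigor rather than a change of route.
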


\begin{proof}
	On the one hand, the generalized \AlgName{Fennel} gain of assigning a node $w$ to a block~$i$ is computed as:
	\begin{equation}
		\sum\limits_{ v \in V_i \cap N(w)}{ \omega(w,v)} - c(w) f(c(V_i))
		\label{eq:gain_w}
	\end{equation}
	On the other hand, the sum of the generalized \AlgName{Fennel} gains of assigning all nodes in $S$ to a block~$i$ consists of:
	\begin{equation}
		\sum\limits_{u\in S}\sum\limits_{ v \in V_i \cap N(u)}{ \omega(u,v)} - \sum\limits_{u\in S}c(u) f(c(V_i))
		\label{eq:gain_S}
	\end{equation}
	Since the factor $f(c(V_i))$ is identical in~(\ref{eq:gain_w})~and~(\ref{eq:gain_S}), the two of them are equivalent if equalities~(\ref{eq:equality1})~and~(\ref{eq:equality2}) hold: 
	\begin{equation}
		\sum\limits_{ v \in V_i \cap N(w)}{ \omega(w,v)} := \sum\limits_{u\in S}\sum\limits_{ v \in V_i \cap N(u)}{ \omega(u,v)}  
		\label{eq:equality1}
	\end{equation}
	\begin{equation}
		c(w) := \sum\limits_{u\in S}c(u)
		\label{eq:equality2}
	\end{equation}
	But equalities~(\ref{eq:equality1})~and~(\ref{eq:equality2}) are trivially true as a property of the contraction process.
\end{proof}

\paragraph*{Multilevel Fennel.}
In principle, our model $\mathcal{B}$ could be partitioned by any multilevel partitioning algorithm which allows fixed nodes. 
Nevertheless, our preliminary tests using an adaptation of \AlgName{KaHIP}~\cite{kaffpa} generated poor edge-cut.
This is the case because internal memory algorithms such as \AlgName{KaHIP} and \AlgName{Metis}~\cite{karypis1998fast} are designed to minimize edge-cut as much as possible while ensuring a low balancing constraint.
This leads to two possibilities, both of which are bad for the overall edge-cut within the buffered streaming model:
(i)~using a large balancing constraint for the first buffers, the internal memory partitioner will tend to assign all their nodes to a single block,
(ii)~using a low balancing constraint for each batch, the overall partitioning problem will be aggressively over-restricted.
Hence, we design a multilevel algorithm based on the Fennel function, which is a function that successively deals with the trade-off between edge-cut and imbalance associated with graph streaming.

We now explain our multilevel algorithm to partition the model $\mathcal{B}$.
\label{subsec:heistream_Contraction}
Our \emph{coarsening} phase is based on an adapted version of the size-constraint label propagation approach \cite{pcomplexnetworksviacluster}. To be self-contained, we shortly outline the coarsening approach and then show how to modify it to be able to handle artificial nodes. To compute a graph hierarchy, the algorithm computes a size-constrained clustering on each level and contract that to obtain the next level.
The clustering is contracted by replacing each of its clusters by a single node (as exemplified in Figure~\ref{fig:sgc_clustercontraction}), and the process is repeated recursively until the graph becomes small enough.
This hierarchy is then used by the algorithm.
Due to the way we define contraction, it ensures that a partition of a coarse graph corresponds to a partition of all the finer graphs in the hierarchy with the same edge-cut and balance. 
Note that cluster contraction is an aggressive coarsening strategy. 
In contrast to matching-based approaches, it enables us to drastically shrink the size of irregular networks.
The intuition behind this technique is that a clustering of the graph (one hopes) contains many edges running inside the clusters and only a few edges running between clusters, which is favorable for the~edge~cut~objective.

The algorithm to compute clusters is based on \emph{label propagation}~\cite{labelpropagationclustering} and avoids large clusters by using a \emph{size constraint}, as described in~\cite{pcomplexnetworksviacluster}. 
For a graph with $n$ nodes and $m$ edges, one round of size-constrained label propagation can be implemented to run in $O(n+m)$ time.
Initially, each node is in its own cluster/block, \ie the initial block ID of a node is set to its node ID.
The algorithm then works in rounds. 
In each round, all the nodes of the graph are traversed. 
When a node $v$ is visited, it is \emph{moved} to the block that has the strongest connection to $v$, \ie it is moved to the cluster $V_i$ that maximizes $\omega(\{(v, u) \mid u \in N(v) \cap V_i \})$. 
Ties are broken randomly. 
We perform at most $L$ rounds, where $L$ is a tuning~parameter.

In \AlgName{HeiStream}, we have to ensure that two artificial nodes are not contracted together since each of them should remain in its previously assigned block. 
We achieve this by ignoring artificial nodes and artificial edges during the label propagation, \ie artificial nodes cannot not change their label and nodes from the batch can not change their label to become a label of an artificial node.  As a consequence, artificial nodes are not contracted during coarsening.
Overall, we repeat the process of computing a size-constrained clustering and contracting it, recursively. 
As soon as the graph is small enough, \ie it has fewer nodes than an $O(\max({|\mathcal{B}|/k,k}))$ threshold, it is initially partitioned by an initial partitioning~algorithm. 
More precisely, we use the threshold $\max({\frac{|\mathcal{B}|}{2xk},xk})$, in which $x$ is a tuning parameter.
Note that, for large enough buffer sizes, this threshold will~be~$O(|\mathcal{B}|/k)$.

\label{subsec:heistream_Initial Partition}

\label{subsec:heistream_Uncoarsening}

When the coarsening phase ends, we run an \emph{initial partitioning algorithm} to compute an initial $k$-partition for the coarsest version of $\mathcal{B}$.
That means that all nodes other than the artificial nodes, which are already assigned, will be assigned to blocks. 
To assign the nodes, we run our generalized \AlgName{Fennel} algorithm with explicit balancing constraint $L_{\max}$, \ie the weight of no block will exceed $L_{\max}$.
To be precise, a node $u$ will be assigned to a block $i$ that maximizes $\sum_{ v \in V_i \cap N(u)}{ \omega(u,v)} - c(u) f(c(V_i))$, such that $f(c(V_i)) =  \alpha * \gamma * c(V_i)^{\gamma-1}$ \emph{and} $c(V_i \cup u) \leq L_{\max}$. 
Note that the algorithm at this point considers all possible blocks $i \in \{1, \ldots, k\}$ and hence has complexity proportional to~$k$. 
However, as the coarsest graph has $O(|\mathcal{B}|/k)$ nodes, overall the initial partitioning needs time which is linear in the size of the~input~model.
When initial partitioning is done, we transfer the current solution to the next finer level by assigning each node of the finer level to the block of its coarse representative. 
At each level of the hierarchy, we apply a \emph{local search} algorithm.
Our local search algorithm is the same size-constraint label propagation algorithm we used in the contraction phase but with a different objective function.
Namely, when visiting a non-artificial node, we remove it from its current block and then we assign it to the neighboring block which maximizes the generalized \AlgName{Fennel} gain function defined in Equation~(\ref{eq:gen_fennel}).
Note that, in contrast to the initial partitioning, only blocks of adjacent nodes are considered here. 
Hence, one round of the algorithm can still be implemented to run in linear  time in the size the current level.
As in the coarsening phase, artificial nodes cannot be moved between blocks.
Differently though, we do not exclude the artificial nodes from the label propagation here.
This is the case because the artificial nodes and their edges are used to compute the generalized \AlgName{Fennel} gain function of the other nodes. 
As in the initial partitioning, we use the explicit size constraint $L_{\max}$~of~$G$. As a side note, we also tried to use high-quality offline algorithms as initial partitioning algorithms, however, in preliminary experiments this results in very unbalanced blocks (even with adaptively configured balance constraints) and overall in reduced quality throughout the process. Hence, we did not consider this~further.

Assuming geometrically shrinking graphs throughout the hierarchy and assuming that the buffer size $\delta$ is larger than the number of blocks $k$, then the overall running time to partition a batch is linear in the size of the batch.
This is due to the fact that the overall running time of coarsening and local search sums up to be linear in the size of the batch, while the overall running time of the initial partitioning depends linearly on the size of the input model. 
Summing this up over all batches yields overall linear running time $O(n+m)$.

\subsubsection{Restreaming}
\label{subsec:heistream_Restreaming}
We now extend \AlgName{HeiStream} to operate in a restreaming setting.
During restreaming, the overall structure of the algorithm is roughly the same.
Nevertheless we need to implement some adaptations which we explain in this section.
The first adaptations concern model construction.
Recall that the nodes from the current batch are already assigned to blocks during the previous pass of the input.
We explicitly assign these nodes to their respective blocks in $\mathcal{B}$.
Furthermore, ghost nodes and edges are not needed to construct $\mathcal{B}$.
This is the case since all nodes from future batches are already known and assigned to blocks, \ie these nodes will be represented in the artificial nodes.
More precisely, we adapt the artificial nodes to represent the nodes from all batches except the current one.
Since a partition of the graph is already given,  we do not allow the contraction of cut edges during restreaming in the coarsening phase of our multilevel scheme.
That means that clusters are only allowed to grow inside blocks.
As a consequence, we can directly use the partition computed in the previous pass as initial partitioning for $\mathcal{B}$ so we do not need to run an initial~partitioning~algorithm.

\subsubsection{Implementation Details}
\label{subsec:heistream_Implementation Details}
Our implementation of $\mathcal{B}$ is based on an adjacency array and consecutive node IDs.
We reserve the first $\delta$ IDs for the nodes from the current batch, which keep their global order.
This means that, when we process the $i^{th}$ batch, nodes IDs can be easily converted from our model $\mathcal{B}$ to $G$ and the other way around by respectively summing or subtracting $(i-1)*\delta$ on their ID.
Similarly, we reserve the last $k$ IDs of $\mathcal{B}$ for the artificial nodes and keep their relative order for all batches.
Note that this configuration separates \emph{mutable} nodes (nodes from current batch) and \emph{immutable} nodes (artificial nodes).
This allows us to efficiently control which nodes are allowed to move during coarsening, initial partitioning, and local search.
We keep an array of size $n$ store the permanent block assignment of the nodes of $G$.
To improve running time, we use approximate computation of powering in our \AlgName{Fennel} function.

\subsection{Experimental Evaluation}
\label{sec:heistream_Experimental Evaluation}

\paragraph*{Setup.}  
We performed the implementation of \AlgName{HeiStream} and competing algorithms inside the \AlgName{KaHIP} framework (using C++) and compiled them using gcc 9.3 with full optimization turned on (-O3 flag). 
Most of our experiments were run on a single core of Machine A.
The only exceptions are the experiments with huge graphs, which were run on a single core of Machine B. 
When using machine A, we stream the input directly from the internal memory, and when using machine $B$, that only has 16 GB of main memory, we stream the input from the~hard~disk.

\paragraph*{Baselines.}
\label{apdx:heistream_Baselines}
We compare \AlgName{HeiStream} against various state-of-the-art algorithms for streaming graph partitioning.
Since no official versions of the one-pass streaming and restreaming algorithms are available in public repositories, we implemented them in our framework.
Our implementations of these algorithms reproduce the results presented in the respective papers and are optimized for running time as much as possible. 
To this end, we implemented \AlgName{Hashing}, \AlgName{LDG}, \AlgName{Fennel}, and \AlgName{ReFennel}.
\AlgName{Multilevel~LDG}~\cite{jafari2021fast} is also not publicly available. 
We sent a message to the authors requesting an executable version of their algorithm for our tests but we have not receive any response.
Hence, we compare \AlgName{HeiStream} against \AlgName{Multilevel~LDG} based on the results explicitly reported in~\cite{jafari2021fast}.
We have used two machines: Machine~A and Machine~B.

\paragraph*{Instances.}
\label{apdx:heistream_instances}
In this section, we use three disjoint sets of graphs.
Basic properties of the graphs under consideration can be found in Table~\ref{tab:heistream_graphs}.
The \emph{tuning} set is used for the parameter study experiments, the \emph{test} set is used for the comparisons against the state-of-the-art, and the set of \emph{huge graphs} is used for special larger~scale~tests. 
In any case, when streaming the graphs we use the natural given order of the nodes.
We use $k \in \{2,3,\ldots,128\}$ for most experiments. 
We allow an imbalance of $3\%$ for all experiments (and all algorithms). All partitions computed by all algorithms have been balanced.

\paragraph*{Methodology.}  
Depending on the focus of the experiment, we measure running time and/or edge-cut.
Unless explicitly mentioned otherwise, we average all results of each algorithm grouped by $k$.

\subsubsection{Parameter Study}
\label{subsec:heistream_Algorithm Configuration}

\begin{figure}[p]
	\captionsetup[subfigure]{justification=centering}
	\centering
		\vspace{-.5cm}
	\begin{subfigure}[t]{0.495\textwidth}
		\includegraphics[angle=-0, width=0.9\textwidth]{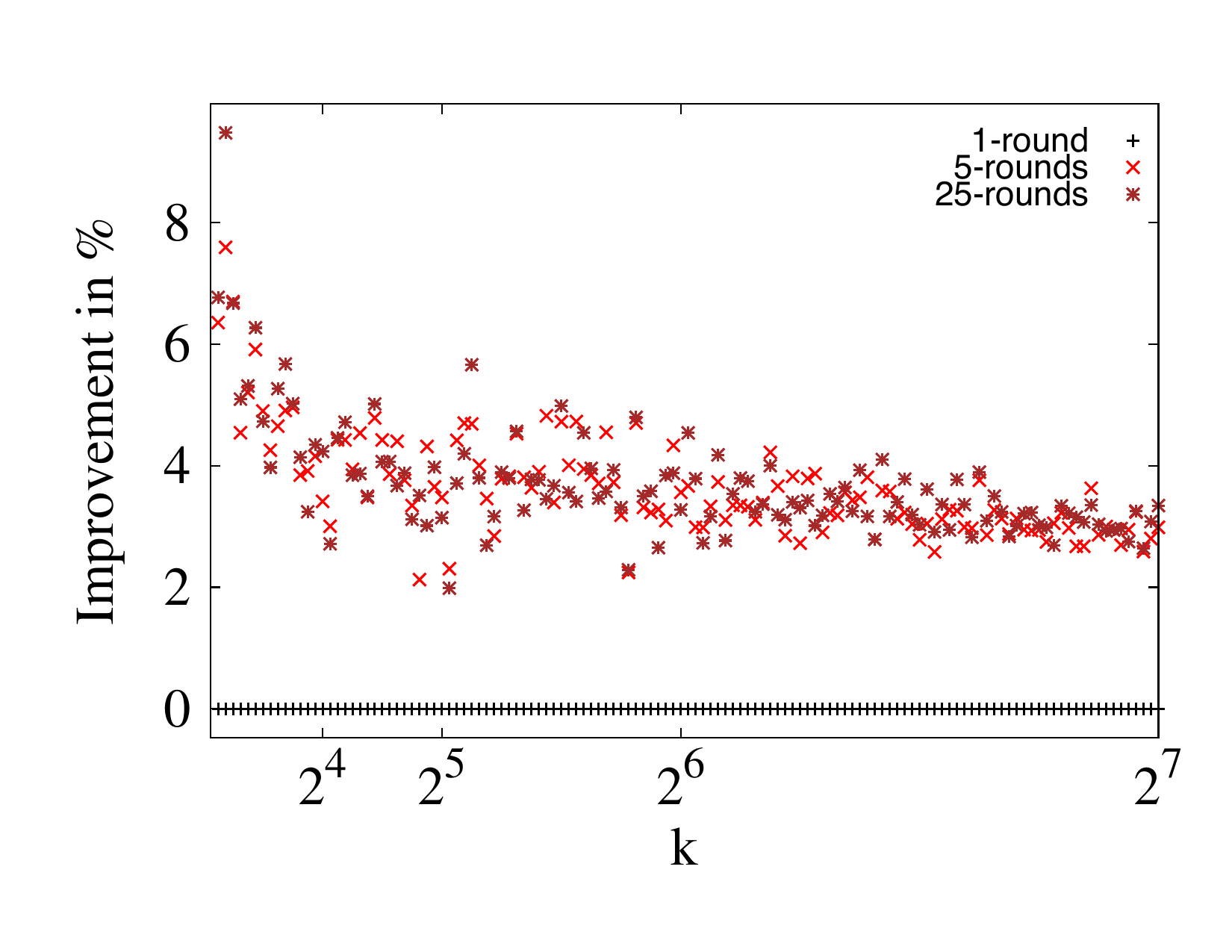}
		\caption{Quality improvement plot for label propagation rounds during uncoarsening.}
		\label{fig:heistream_res_rep_label}
	\end{subfigure}
	\begin{subfigure}[t]{0.495\textwidth}
		\includegraphics[angle=-0, width=0.9\textwidth]{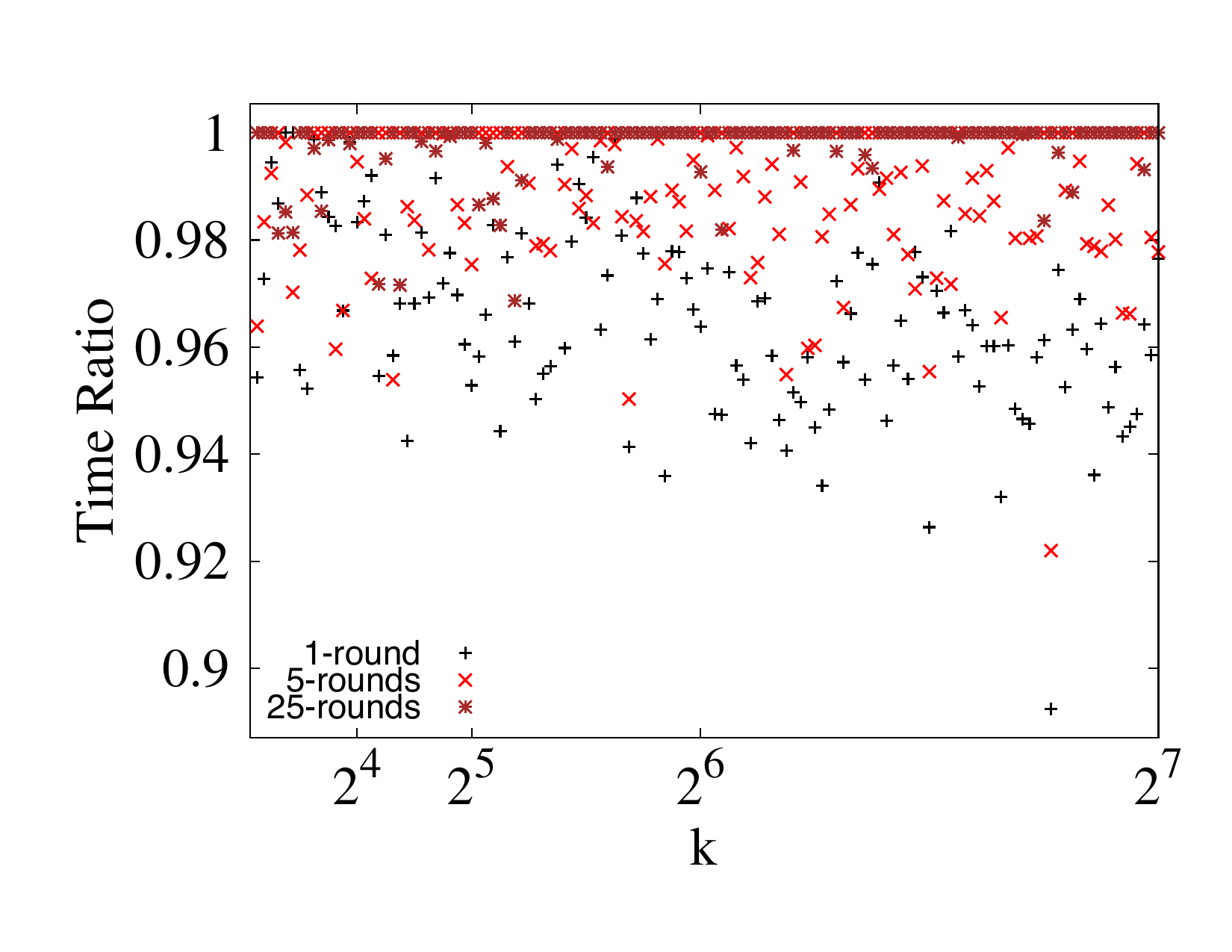}
		\caption{Running time ratio plot for label propagation rounds during uncoarsening.}
		\label{fig:heistream_tim_rep_label}
	\end{subfigure}
	\begin{subfigure}[t]{0.495\textwidth}
		\includegraphics[angle=-0, width=0.9\textwidth]{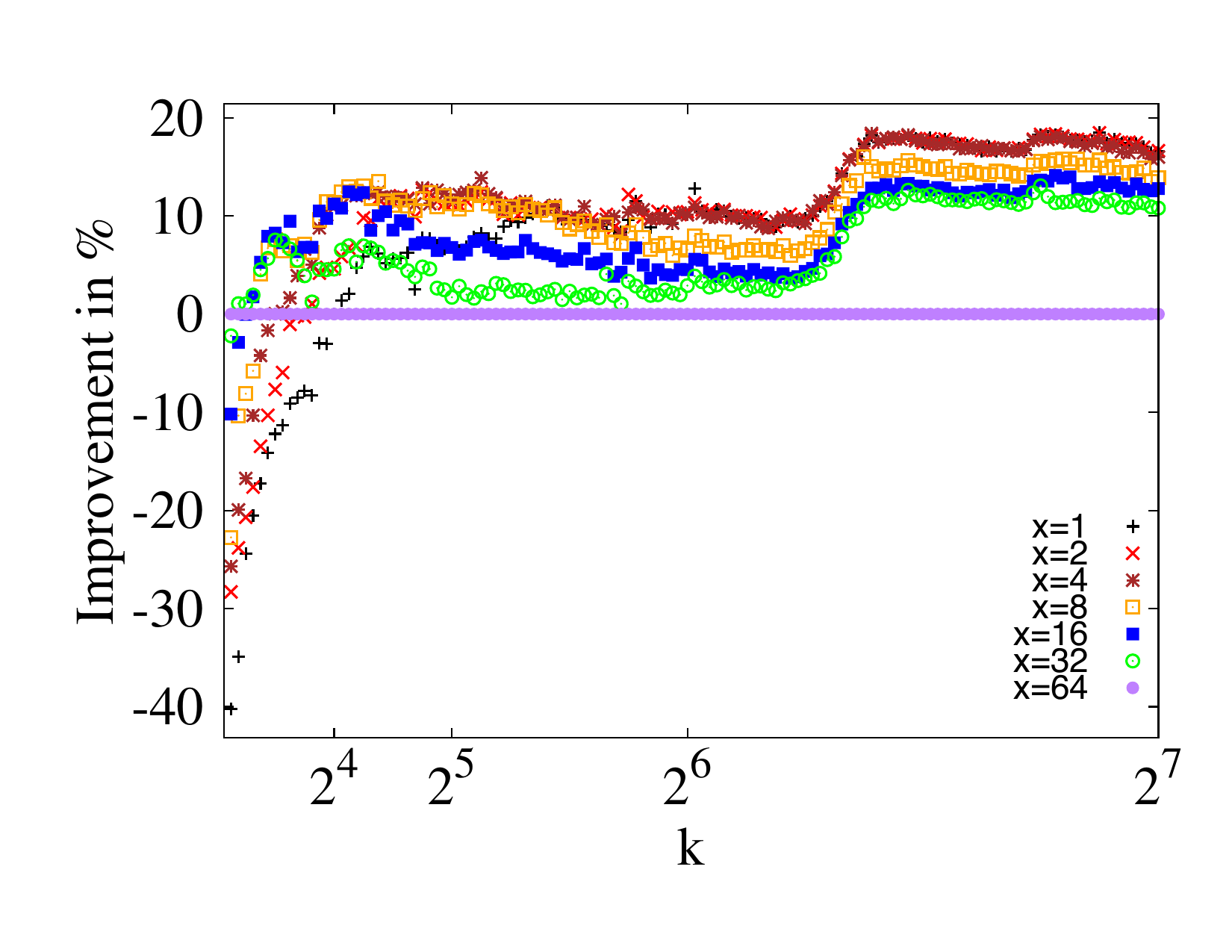}
		\caption{Quality improvement plot for parameter $x$ from expression $\max({|\mathcal{B}|/2xk,xk})$.}
		\label{fig:heistream_StopRule}
	\end{subfigure}
	\begin{subfigure}[t]{0.495\textwidth}
		\includegraphics[angle=-0, width=0.9\textwidth]{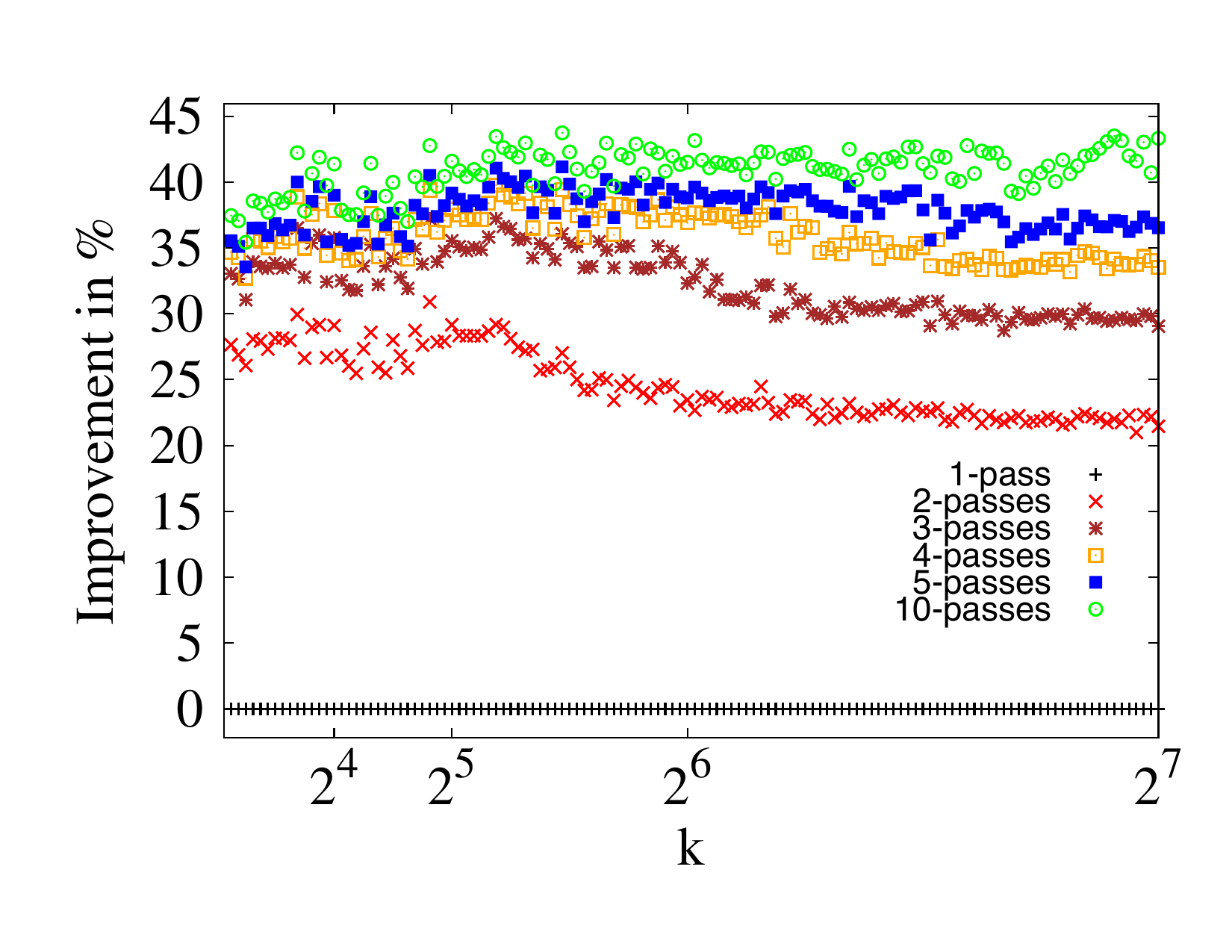}
		\caption{Quality improvement plot for restreaming.}
		\label{fig:heistream_res_restream}
	\end{subfigure}
	\begin{subfigure}[t]{0.495\textwidth}
		\includegraphics[angle=-0, width=0.9\textwidth]{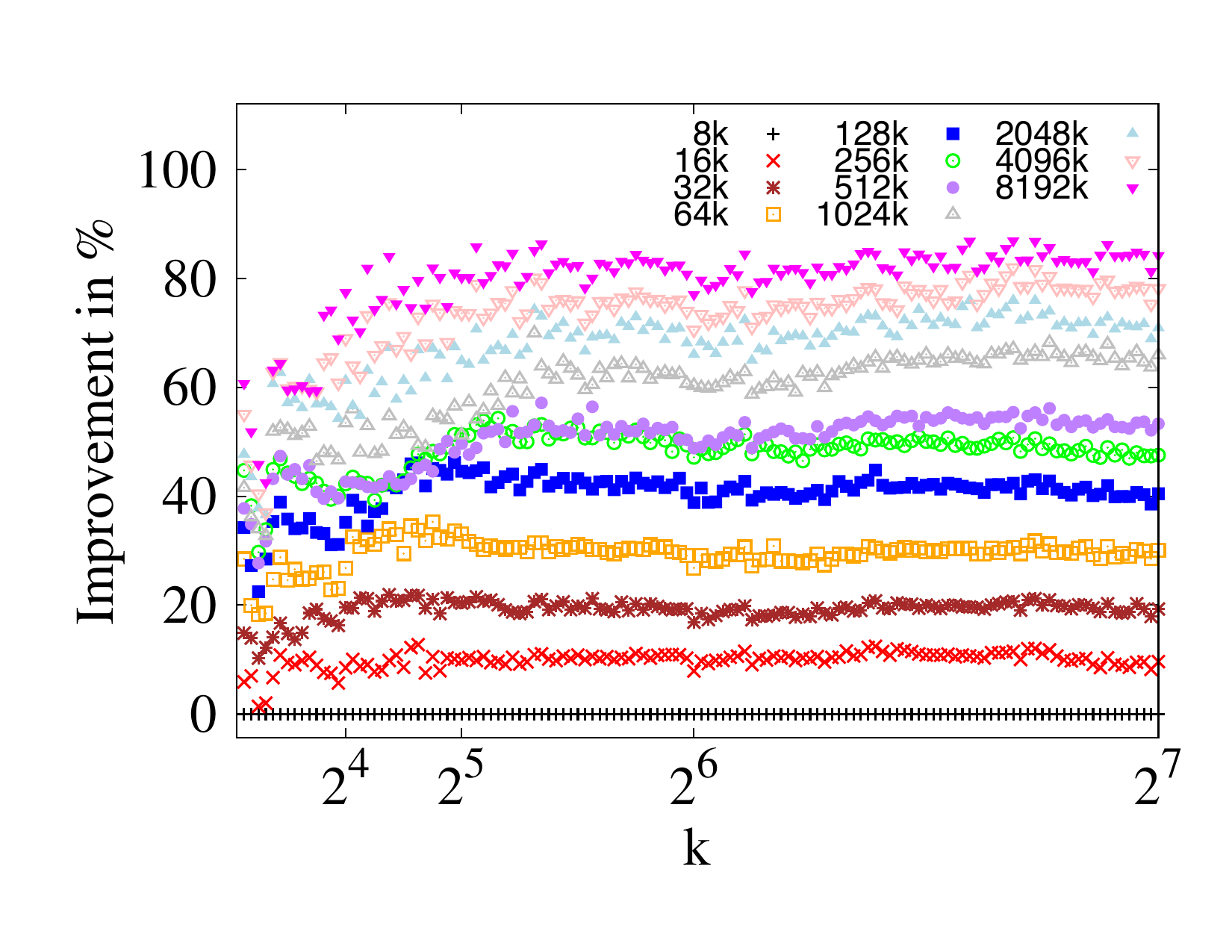}
		\caption{Quality improvement plot for buffer size.}
		\label{fig:heistream_res_batch_noFennel}
	\end{subfigure}
	\begin{subfigure}[t]{0.495\textwidth}
		\includegraphics[angle=-0, width=0.9\textwidth]{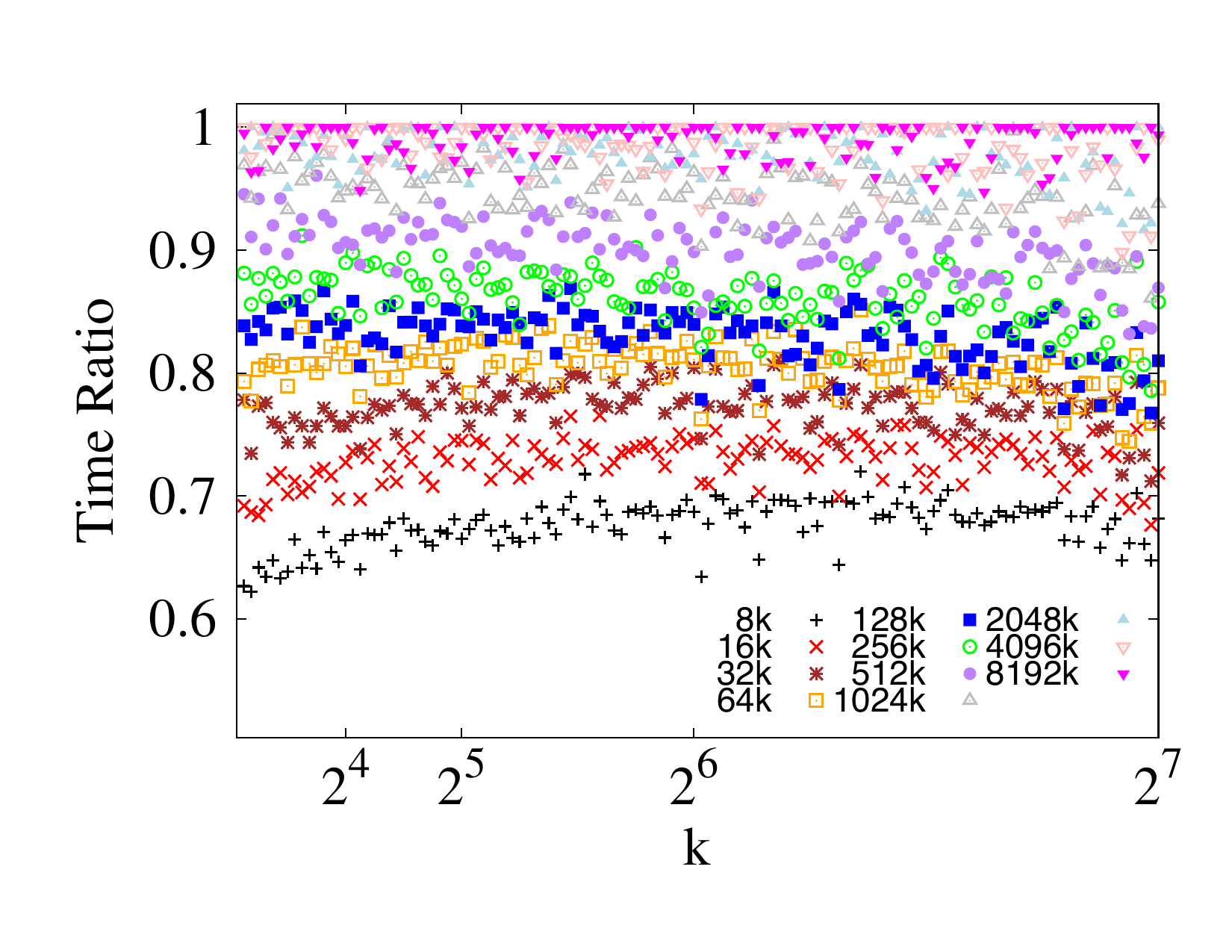}
		\caption{Running time ratio plot for buffer size.}
		\label{fig:heistream_tim_batch_noFennel}
	\end{subfigure}

	\caption{Results for tuning and exploration experiments. Higher is better for quality improvement plots. Lower is better for running time ratio plots.}
	\label{fig:heistream_tuning_plots2}
\end{figure}

\begin{figure*}[t]
	\captionsetup[subfigure]{justification=centering}
	\centering
	
	\begin{subfigure}[t]{0.495\textwidth}
		\centering
		\includegraphics[angle=-0, width=\imgScaleFactor\textwidth]{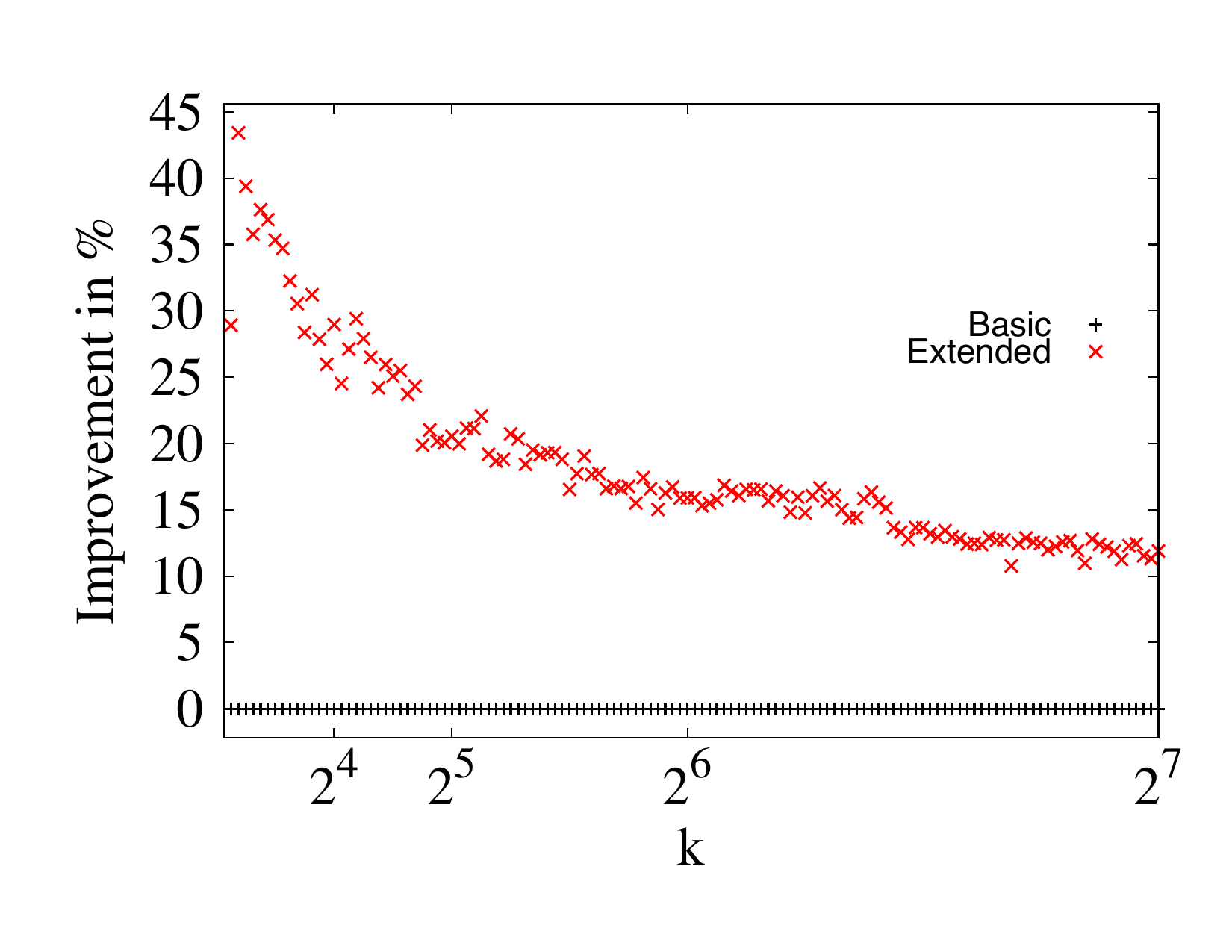}
		\vspace*{\capPosition}
		\caption{Quality improvement plot for model construction.}
		\label{fig:heistream_res_ghost}
	\end{subfigure}
	\begin{subfigure}[t]{0.495\textwidth}
		\centering
		\includegraphics[angle=-0, width=\imgScaleFactor\textwidth]{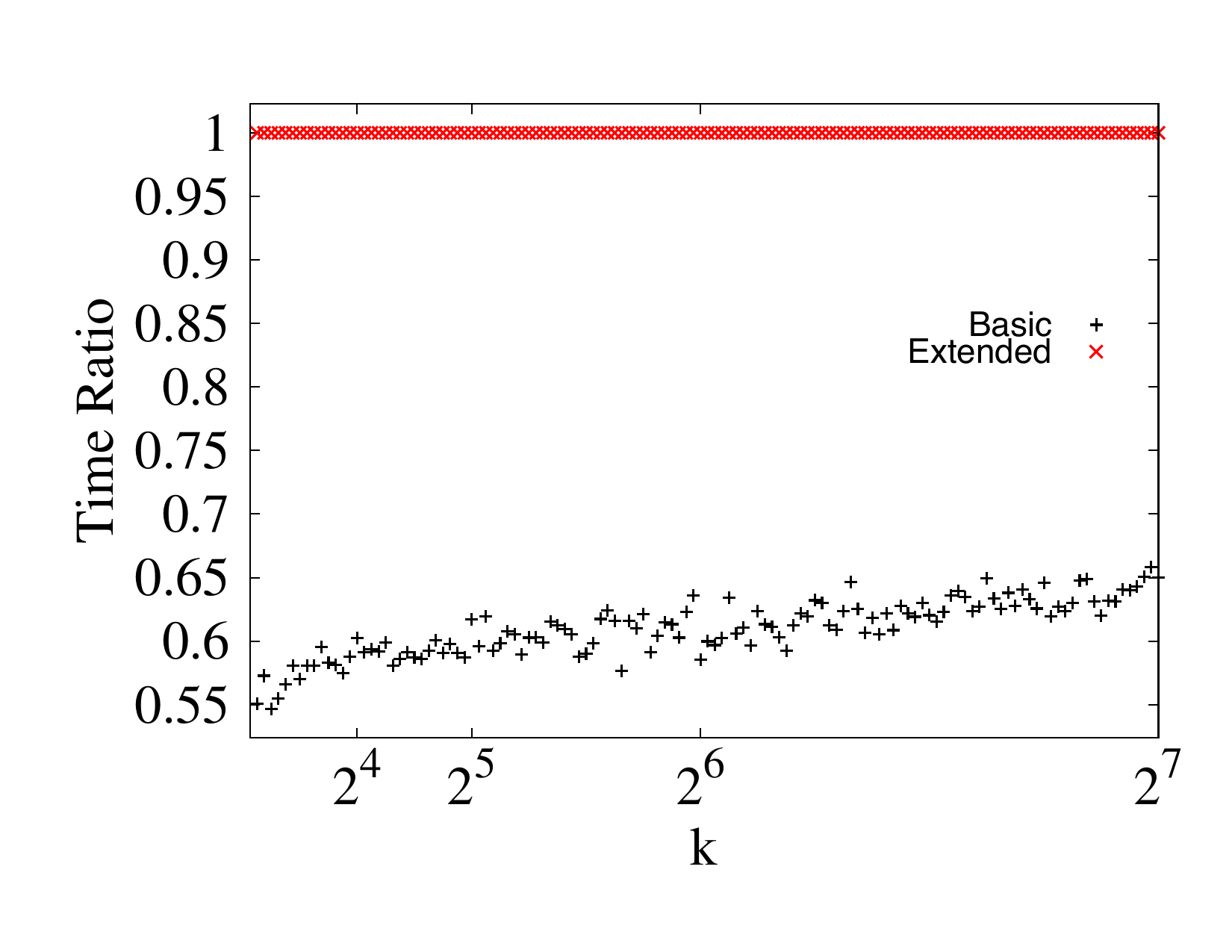}
		\vspace*{\capPosition}
		\caption{Running time ratio plot for model construction.}
		\label{fig:heistream_tim_ghost}
	\end{subfigure}

	\vspace*{.35cm}
	\caption{Results for tuning and exploration experiments. Higher is better for quality improvement plots. Lower is better for running time ratio plots.}
	\label{fig:heistream_tuning_plots3}

	\vspace*{-1cm}
\end{figure*}

We now present experiments to tune \AlgName{HeiStream} and explore its behavior.
We do this on the tuning set of our instance set.
In our strategy, each experiment focuses on a single parameter of the algorithm while all the other ones are kept invariable.
We start with a baseline configuration consisting of the following parameters:  $5$ rounds in the coarsening label propagation, $1$ round in the local search label propagation, and $x=64$ in the expression of the coarsest model size.
After each tuning experiment, we update the baseline to integrate the best found parameter.
Unless explicitly mentioned otherwise, we run the experiments of this section over all tuning graphs from Table~\ref{tab:heistream_graphs} based on the \emph{extended} model construction, \ie including ghost nodes and edges, for a buffer size of $\numprint{32768}$.

\paragraph*{Tuning.}
We begin by evaluating how the number of label propagation rounds during local search affects running time and solution quality.
In particular, we run configurations of \AlgName{HeiStream} with $1$, $5$, and $25$ rounds and report results in Figures~\ref{fig:heistream_res_rep_label}~and~\ref{fig:heistream_tim_rep_label}.
Observe that the results of the baseline have considerably lower solution quality than the other ones overall.
On the other hand, the results of the configurations with $5$ and $25$ rounds differ slightly to each other.
On average, they respectively improve solution quality $3.6\%$ and $3.7\%$ over the baseline.
Regarding running time, they respectively increase $2.2\%$ and $3.4\%$ on average over the baseline.
Since the variation of quality for these two configurations is not significant, we decided to integrate the fastest one among them in the algorithm, namely the $5$-round configuration.

Next we look at the parameter $x$ associated with the expression $\max\big({|\mathcal{B}|/2xk,xk}\big)$, which determines the size of the coarsest model.
We run experiments for $x=2^i$, with $i \in \{0,\ldots,6\}$, and report results in Figure~\ref{fig:heistream_StopRule}.
We omit running time charts for this experiment since the tested configurations present comparable behavior in this regard.
Figure~\ref{fig:heistream_StopRule} shows that the baseline presents the overall worst solution quality while $x=2$ and $x=4$ present the overall best solution quality.
Averaging over all instances, these two latter configurations produce results respectively $10.3\%$ and $11.2\%$ better than the baseline.
In light of that, we decide in favor of $x=4$ to compose \AlgName{HeiStream}.

\paragraph*{Exploration.}
We start the exploration of open parameters by investigating how the buffer size affects solution quality and running time.
We use as baseline a buffer of $\numprint{8192}$ nodes and successively double its capacity until any graph from the tuning set in Table~\ref{tab:heistream_graphs} fits in a single buffer.
We plot our results in Figures~\ref{fig:heistream_res_batch_noFennel}~and~\ref{fig:heistream_tim_batch_noFennel}.
Note that solution quality and running time increase regularly as the buffer size becomes larger.
This behavior occurs because larger buffers enable more comprehensive and complex graph structures to be exploited by our multilevel algorithm.
As a consequence, there is a trade-off between solution quality and resource consumption.
In other words, we can improve partitioning quality at the cost of considerable extra memory and slightly more running time.
Otherwise, we can save memory as much as possible and get a faster partitioning process at the cost of lowering solution quality.
In practice, it means that \AlgName{HeiStream} can be adjusted to produce partitions as refined as possible with the resources available in a specific system. 
For the extreme case of a single-node buffer, \AlgName{HeiStream} behaves exactly as \AlgName{Fennel}, while it behaves as an internal memory partitioning algorithm for the opposite extreme~case.

Next, we compare the effect of using the \emph{extended} model, which incorporates ghost nodes, over the \emph{basic} model, which ignores ghost nodes.
Figures~\ref{fig:heistream_res_ghost}~and~\ref{fig:heistream_tim_ghost} displays the results.
The results show that the extended model provides improved quality over the basic model, with an $18.3\%$ improvement on average.
This happens because the presence of ghost nodes and edges expands the perspective of the partitioning algorithm to future batches.
This has a similar effect to increasing the size of the buffer, but at no considerable extra memory cost.
Regarding running time, the results show that the extended model is consistently slower than the basic model for all values of $k$.
Averaging over all instances, the extended model costs $63.9\%$ more running time than the basic model.
This increase in running time is explained by the higher number of edges to be processed when ghost nodes are incorporated in the model.
As a practical conclusion from the experiment, the extended model can be used for better overall partitions with no significant extra memory but at the cost of extra running time.
Otherwise, the basic model can be used for a consistently faster execution at the cost of a lower solution quality.

Finally, we test to what extent solution quality can be improved by restreaming \AlgName{HeiStream} multiple times.
We investigate this by restreaming each input graph $10$ times.
We collect results after each pass and plot in Figure~\ref{fig:heistream_res_restream}.
The first restream generates a considerable quality jump, with an improvement over the baseline of $24.6\%$ on average.
Each following pass has a positive impact on solution quality, which converges to be a $40.9\%$-improvement on average over the baseline after the last pass.
On the other hand, the running time has a roughly linear increase for each pass over the graph.
In practice, this adds another degree of freedom to configure \AlgName{HeiStream} for the needs of real systems.

\subsubsection{State-of-the-Art}
\label{subsec:heistream_Scenarios}
\label{subsec:heistream_State-of-the-Art}

\begin{table}[p]
	\centering
	\scriptsize
	\setlength{\tabcolsep}{2.pt}
	\begin{tabular}{l@{\hskip 30pt}rr@{\hskip 30pt}rr@{\hskip 30pt}rrrr}
		\toprule	
		
		\multicolumn{1}{l}{}                        & \multicolumn{8}{c}{Cut Edges (\%)}                                 \\ 
		\multicolumn{1}{l}{\multirow{-2}{*}{ Graph}} & \AlgName{HS}(Int.) & \AlgName{MLDG}(Int.) & $2$-Re\AlgName{HS}(32k) & $2$-\AlgName{ReFennel} & \AlgName{HS}(32k) & \AlgName{Fennel} & \AlgName{LDG}                       & \AlgName{Hashing} 
		\\ 
                \midrule
		Dubcova1          & \textbf{13.68} & 14.26          & \textbf{12.92} &         29.19  & \textbf{13.68} & 33.99  &         33.96  & 95.62   \\ 
		hcircuit          & \textbf{ 2.73} & 17.75          & \textbf{ 2.04} &         21.86  &  \textbf{2.53} & 28.97  &         28.97  & 90.75   \\ 
		coAuthorsDBLP     & \textbf{15.99} & 24.82          & \textbf{16.90} &         24.28  & \textbf{17.80} & 27.12  &         27.12  & 94.80   \\ 
		Web-NotreDame     & \textbf{ 5.85} & 11.01          & \textbf{ 6.15} &         12.58  &  \textbf{9.20} & 19.52  &         19.56  & 95.97   \\
		Dblp-2010         & \textbf{11.31} & 18.52          & \textbf{12.20} &         22.93  & \textbf{13.42} & 28.82  &         28.80  & 92.49   \\ 
		ML\_Laplace       & \textbf{ 7.93} & 13.44          & \textbf{ 7.62} &          7.82  &          8.36  & 7.92   &  \textbf{5.77} & 96.37   \\ 
		coPapersCiteseer  & \textbf{ 8.23} & 11.22          & \textbf{ 8.35} &          9.63  & \textbf{10.29} & 12.88  &         12.27  & 96.52   \\ 
		coPapersDBLP      & \textbf{14.51} & 19.29          & \textbf{15.12} &         16.47  & \textbf{18.33} & 20.65  &         20.22  & 96.39   \\ 
		Amazon-2008       & \textbf{10.09} & 19.01          & \textbf{12.77} &         28.92  & \textbf{15.56} & 37.07  &         37.07  & 94.68   \\
		eu-2005           & \textbf{11.14} & 14.57          & \textbf{14.82} &         25.53  & \textbf{18.64} & 35.88  &         31.96  & 96.44   \\
		Web-Google        & \textbf{ 1.62} &  9.66          & \textbf{ 3.96} &         18.04  &  \textbf{9.48} & 30.64  &         30.64  & 96.87   \\
		ca-hollywood-2009 &         35.34  & \textbf{32.51} & \textbf{37.86} &         41.34  & \textbf{42.36} & 44.54  &         45.25  & 96.62   \\
		Flan\_1565        & \textbf{ 7.69} &  9.36          & \textbf{ 7.30} &         10.26  &          8.12  & 10.59  &  \textbf{6.70} & 96.61   \\ 
		Ljournal-2008     & \textbf{29.12} & 34.76          & \textbf{33.58} &         43.23  & \textbf{38.58} & 51.43  &         51.36  & 96.07   \\ 
		HV15R             &         14.09  & \textbf{11.48} &         15.38  & \textbf{15.05} &         17.48  & 16.39  & \textbf{15.49} & 96.84   \\ 
		Bump\_2911        & \textbf{ 8.66} & 11.73          & \textbf{ 7.65} &          8.61  &  \textbf{8.23} & 8.65   &         8.30   & 96.19   \\ 
		FullChip          & \textbf{38.20} & 48.16          & \textbf{42.24} &         57.39  & \textbf{45.71} & 61.93  &         64.23  & 95.06   \\ 
		patents           & \textbf{15.57} & 29.12          & \textbf{41.75} &         52.60  & \textbf{60.56} & 70.98  &         70.98  & 96.88   \\ 
		Cit-Patents       & \textbf{15.75} & 28.65          & \textbf{39.67} &         51.62  & \textbf{60.76} & 72.16  &         72.16  & 96.88   \\ 
		Soc-LiveJournal1  & \textbf{29.72} & 35.69          & \textbf{29.39} &         34.00  & \textbf{35.62} & 39.03  &         45.62  & 96.66   \\ 
		circuit5M         &         40.02  & \textbf{34.60} & \textbf{39.20} &         75.45  & \textbf{41.00} & 78.42  &         78.47  & 96.87   \\  
                \midrule
		del21             &  \textbf{1.38} & -              & \textbf{ 5.41} &         33.52  &  \textbf{8.53} & 40.21  &         40.21  & 93.39   \\ 
		rgg21             &  \textbf{1.53} & -              & \textbf{ 1.34} &          4.09  &  \textbf{1.52} & 5.02   &         4.88   & 96.89   \\ 
		soc-orkut-dir     & \textbf{37.86} & -              & \textbf{43.19} &         47.22  & \textbf{54.76} & 55.85  &         60.27  & 96.85   \\
		italy-osm         &  \textbf{0.13} & -              & \textbf{ 1.19} &          4.65  &  \textbf{1.34} & 4.80   &         4.80   & 78.11   \\ 
		great-britain-osm &  \textbf{0.16} & -              & \textbf{ 1.43} &          7.18  &  \textbf{1.63} & 7.34   &         7.34   & 79.94   \\ 
                \bottomrule
	\end{tabular}
	\caption{Edge-cut results against competitors for $k=32$. Internal memory algorithms are are on the 2 left columns and streaming algorithms are on the 5 right columns. We refer to setups of \AlgName{HeiStream} with specific buffer sizes as \AlgName{HS}($X$k), in which each buffer contains $X\times 1024$ nodes. \AlgName{HS}(Int.) uses a buffer size of $n$. We bold the best result for each graph for internal memory approaches and streaming approaches. The results for \AlgName{Multilevel~LDG} for the five bottom graphs are missing, as the graphs where not part of their benchmark set. Lower is better. }
	\label{tab:heistream_multiLDG}
\end{table}

\begin{figure}[t]
	\centering
	\includegraphics[width=\textwidth]{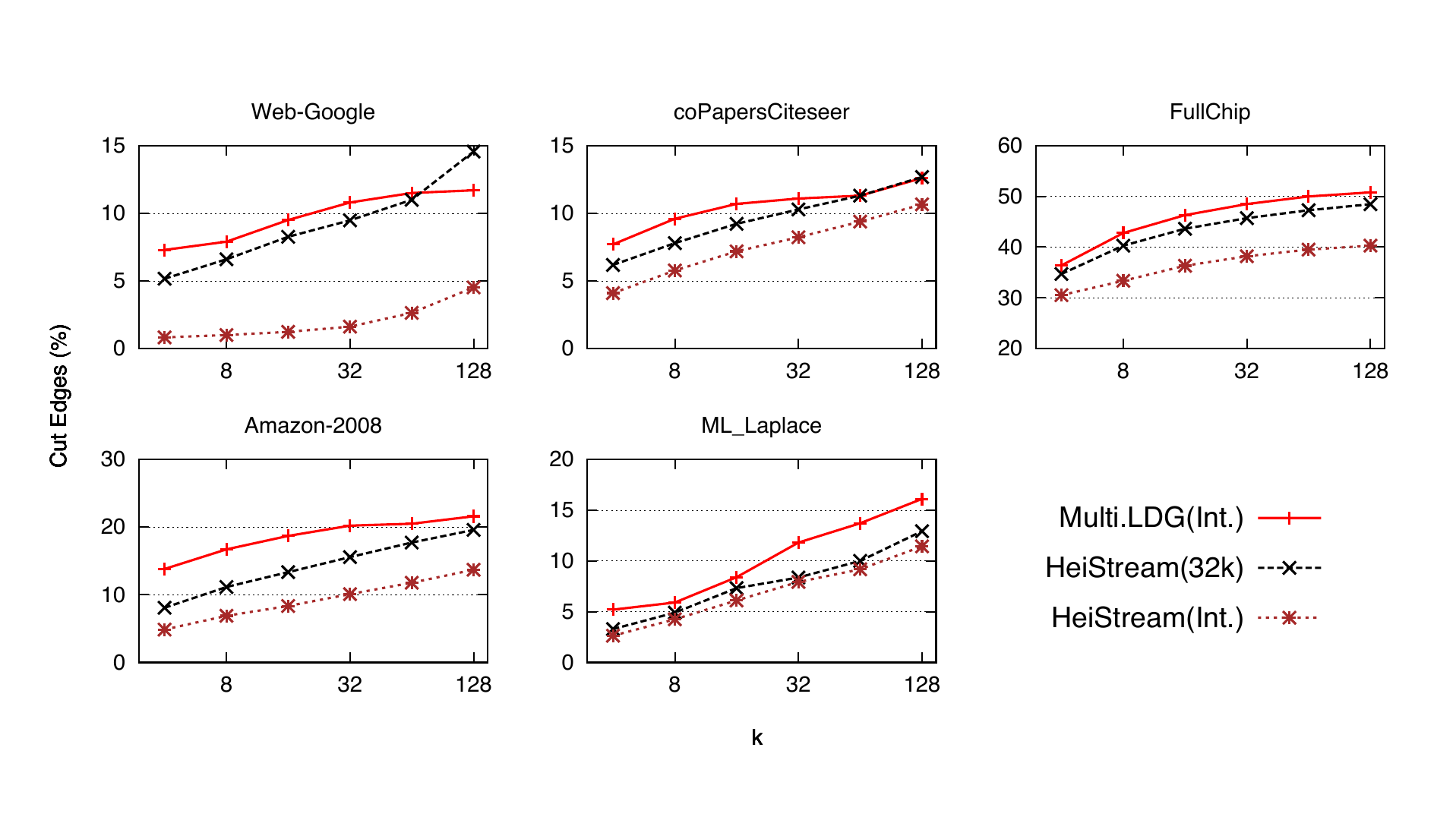}
	\caption{Comparison of \AlgName{HeiStream} against \AlgName{Multilevel~LDG} for buffer containing the whole graph.}
	\label{fig:heistream_multiLDG}
\end{figure}

\begin{figure}[htb]
	\vspace*{-.5cm}
	\captionsetup[subfigure]{justification=centering}
	\centering
	\begin{subfigure}[t]{0.495\textwidth}
		\centering
		\includegraphics[angle=-0, width=\imgScaleFactor\textwidth]{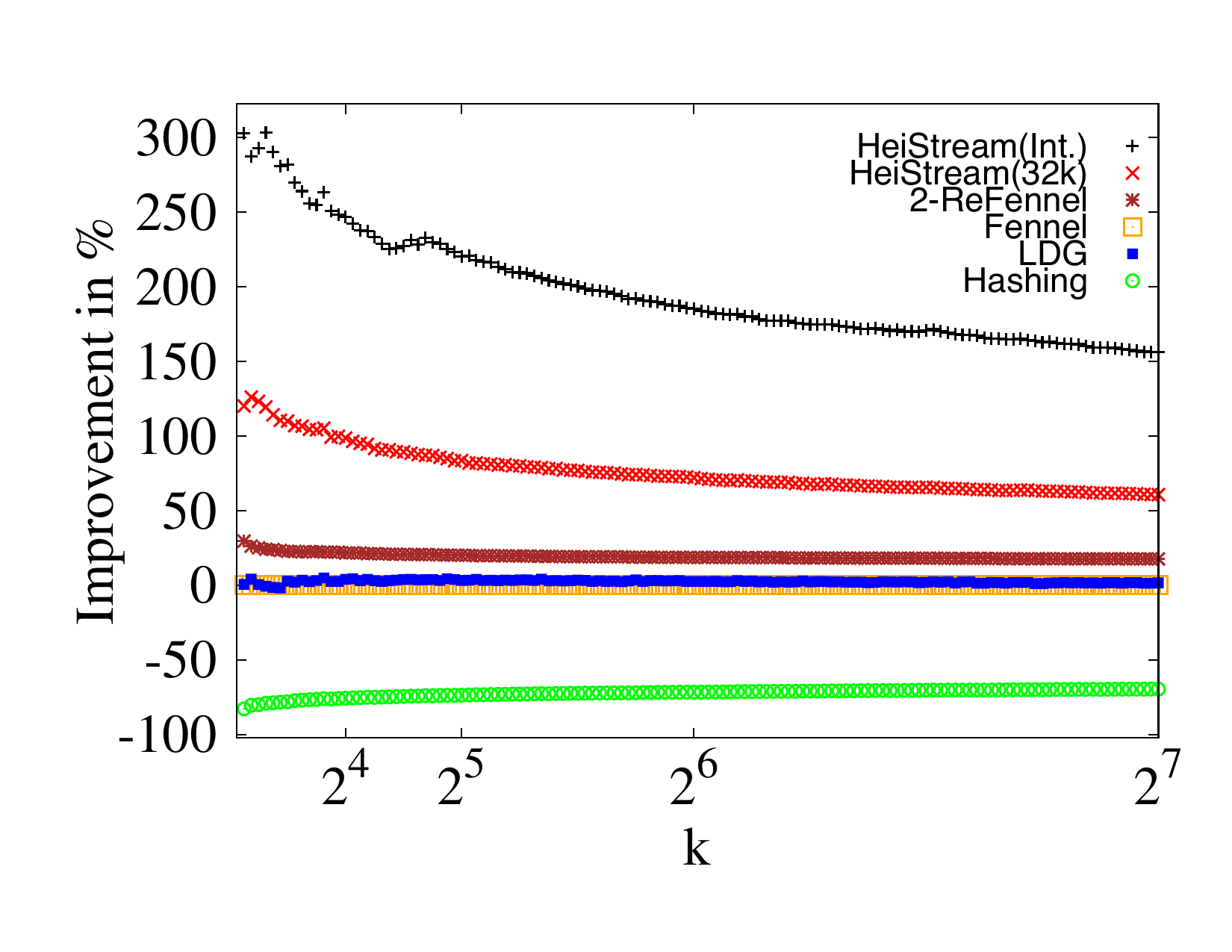}
		\caption{Quality improvement plot over \AlgName{Fennel}.}
		\label{fig:heistream_res_InitFennelOpt4}
	\end{subfigure}
	\begin{subfigure}[t]{0.495\textwidth}
		\centering
		\includegraphics[angle=-0, width=\imgScaleFactor\textwidth]{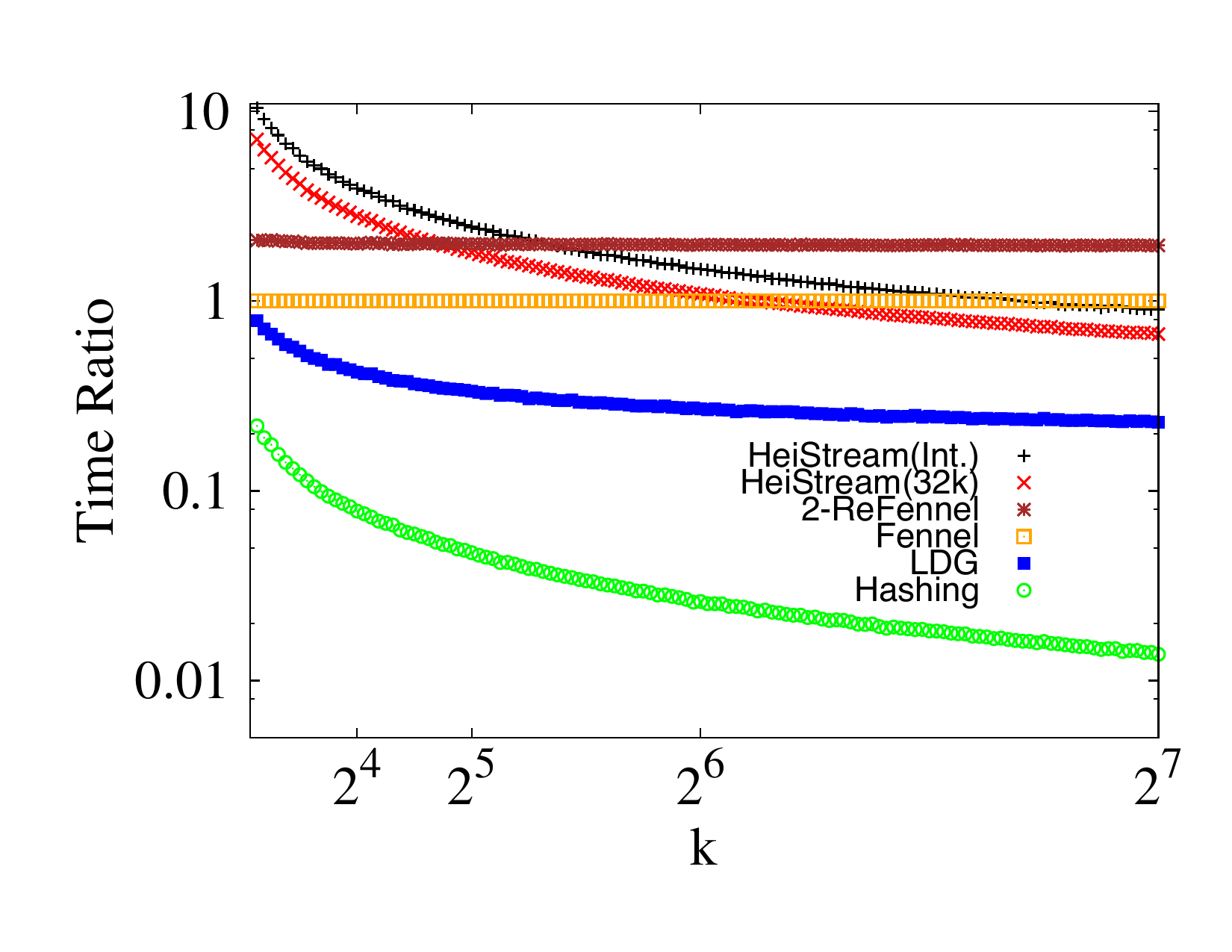}
		\caption{Relative running time plot over \AlgName{Fennel}.}
		\label{fig:heistream_tim_InitFennelOpt4}
	\end{subfigure}
	
	\begin{subfigure}[t]{0.495\textwidth}
		\includegraphics[angle=-0, width=\imgScaleFactor\textwidth]{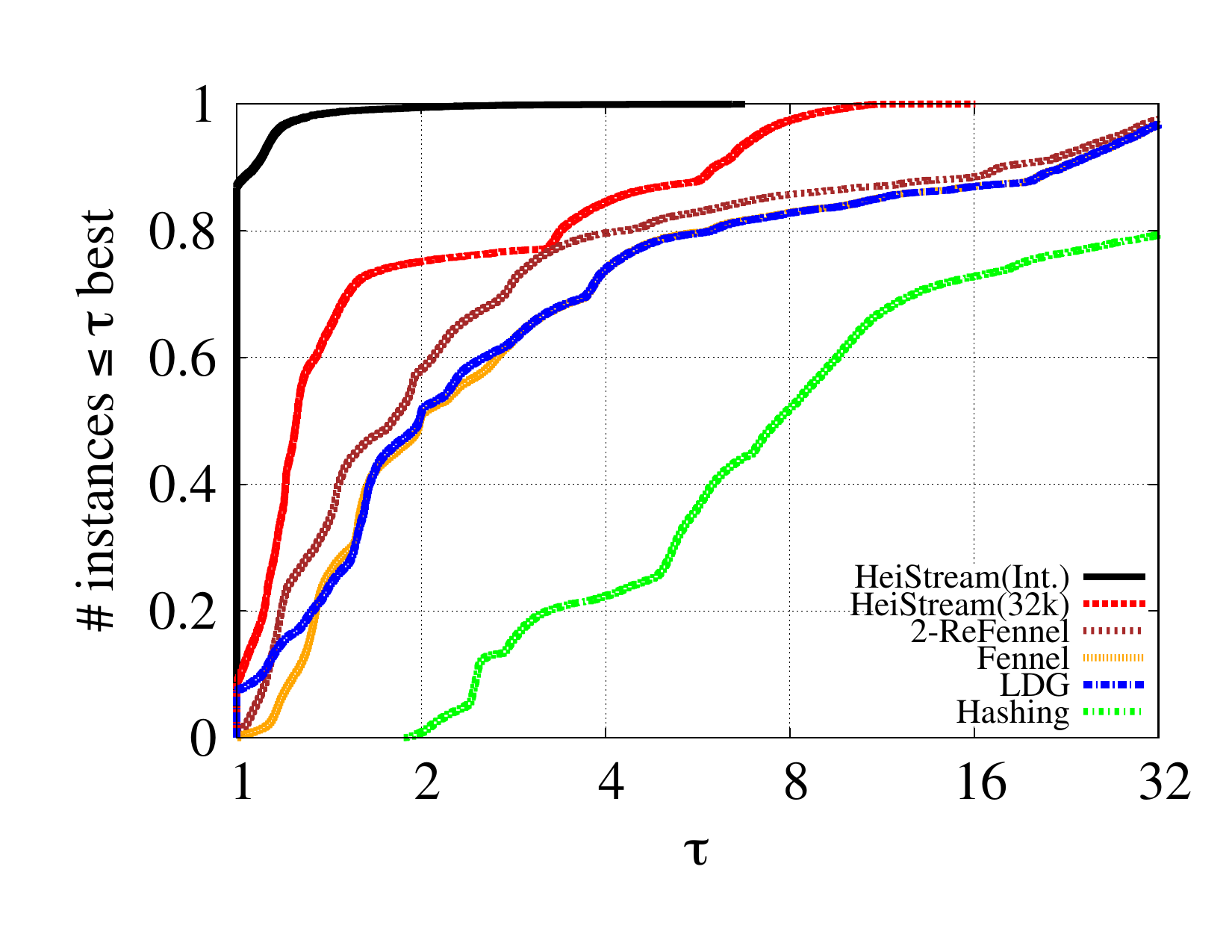}
		\caption{Quality performance profile.}
		\label{fig:heistream_pp_res_InitFennelOpt4}
	\end{subfigure}
	\begin{subfigure}[t]{0.495\textwidth}
		\includegraphics[angle=-0, width=\imgScaleFactor\textwidth]{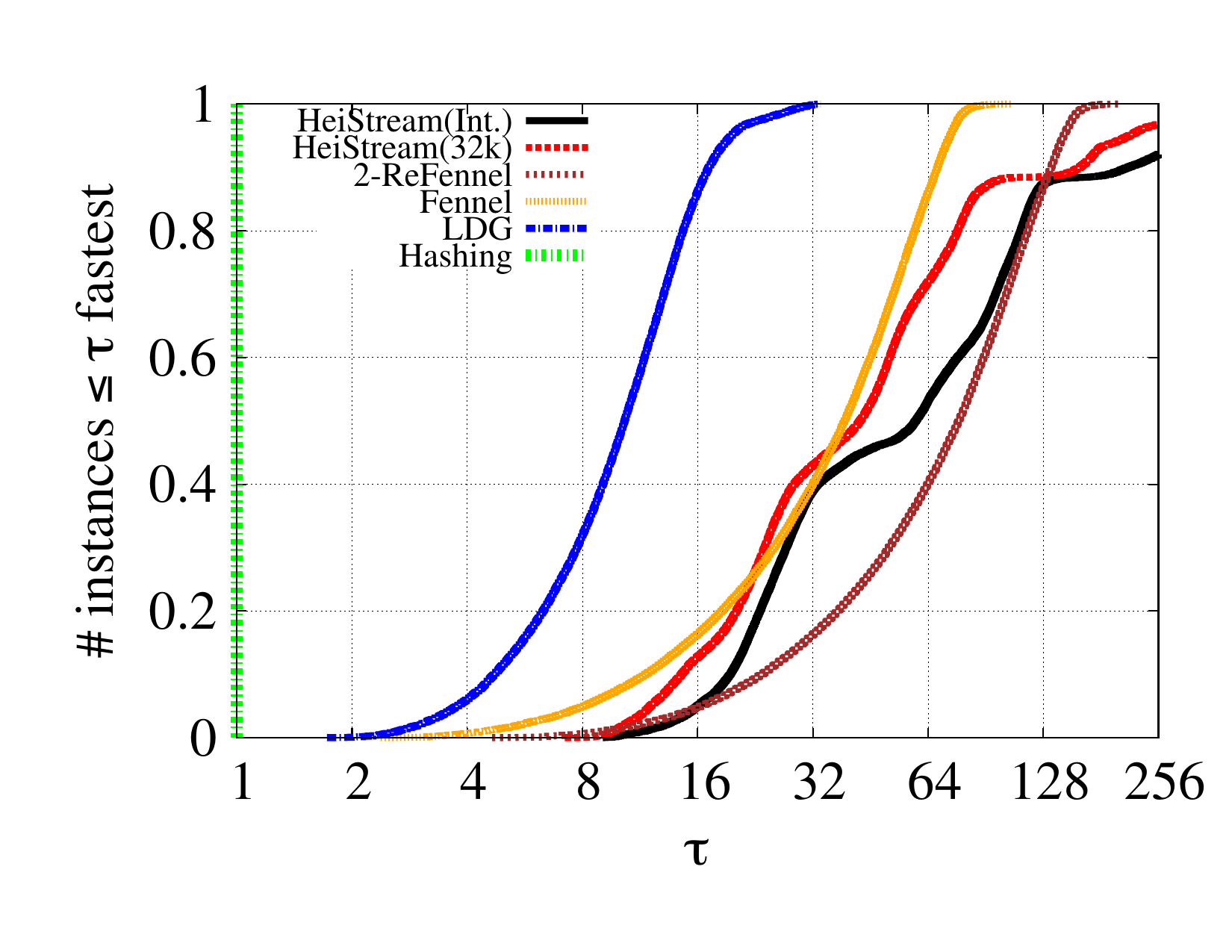}
		\caption{Running time performance profile.}
		\label{fig:heistream_pp_tim_InitFennelOpt4}
	\end{subfigure}
	\caption{Results for comparison against state-of-the-art one-pass (re)streaming algorithms using performance profiles. Higher is better for quality improvement plots. }
	\vspace*{-.5cm}
	\label{fig:heistream_one-pass2}
\end{figure}

\begin{figure*}[htb]
	\vspace*{-.5cm}
	\centering
	\begin{subfigure}{\scaleFactor\textwidth}
		\centering
		\includegraphics[width=\imgScaleFactor\textwidth]{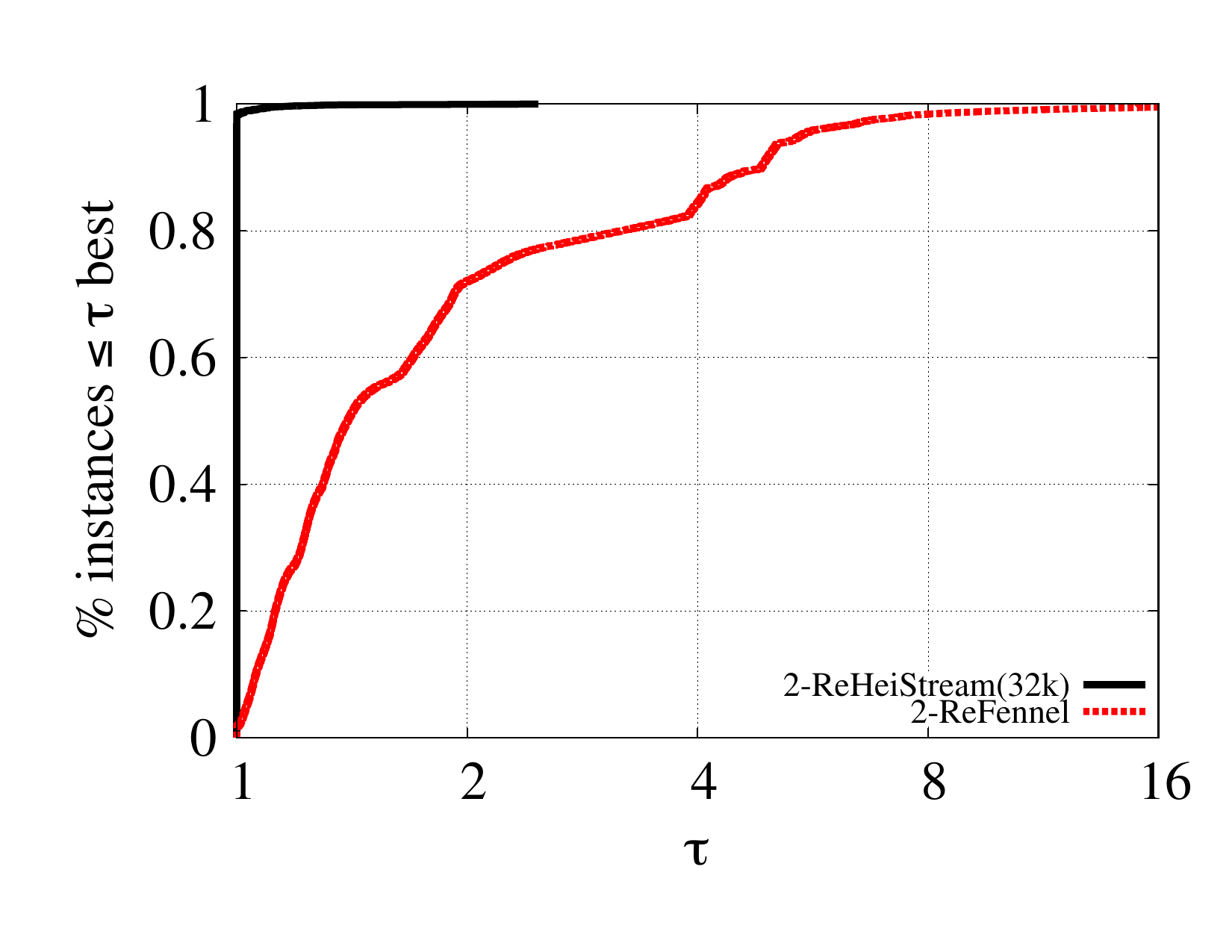}
		\vspace*{\capPosition}
	\end{subfigure}
	\caption{Quality performance profile of $2$-Re\AlgName{HeiStream}(32k) against $2$-\AlgName{ReFennel}.}
	\vspace*{-.5cm}
	\label{fig:heistream_restream}
\end{figure*}

In this section, we show experiments in which we compare \AlgName{HeiStream} against the current state-of-the-art algorithms.
Unless mentioned otherwise, these experiments involve all the graphs from the Test Set group in Table~\ref{tab:heistream_graphs} and focus on two particular configurations of \AlgName{HeiStream}, which we refer to as \AlgName{HeiStream}(32k) and \AlgName{HeiStream}(Int.).
The first configuration is based on batches of size \numprint{32768}, while the second one has enough batch capacity to operate as an internal memory algorithm -- both configurations perform a single pass over the input using the extended~model.
We also present results for the $2$-pass restreaming version of \AlgName{HeiStream}(32k), which we refer to as $2$-Re\AlgName{HeiStream}(32k).

Internal memory algorithms such as \AlgName{Metis}~\cite{karypis1998fast} and \AlgName{KaHIP}~\cite{kaffpa} are beyond the scope of this thesis since it is common knowledge that internal memory algorithms are better than streaming algorithms regarding partition quality if the instances fits in the memory of a machine.
For the sake of reproducibility, we ran \AlgName{Metis} and the fast social version of \AlgName{KaHIP} over our Test Set group of instances for $k \in \{2,4,8,16,32,64,128\}$.
On average, \AlgName{Fennel} cuts a factor $7.5$ more edges than \AlgName{Metis} and the fast social version of \AlgName{KaHIP}, while \AlgName{HeiStream}(Int.) cuts a factor $2.2$ more edges than both.
Furthermore, \AlgName{Fennel} is respectively a factor $2.2$ and a factor $6.0$ faster than \AlgName{Metis} and the fast social version of \AlgName{KaHIP}, while \AlgName{HeiStream}(Int.) is $56.3\%$ slower than \AlgName{Metis} and $72.9\%$ faster than the fast social version~of~\AlgName{KaHIP}.

\paragraph*{Results.}
We now present a detailed comparison of \AlgName{HeiStream} (HS) against the state-of-the-art.
In the results, we refer to the internal memory version of \AlgName{Multilevel~LDG} as \AlgName{MLDG}(Int.).
Moreover, we refer to the restreaming version of \AlgName{HeiStream} and \AlgName{Fennel} that passes over the graph $2$ times as $2$-ReHS and $2$-\AlgName{ReFennel} respectively. %
First, we focus on $k=32$ and later on choose a much wider range for the number of blocks.
Table~\ref{tab:heistream_multiLDG} shows the percentage of edges cut in the partitions generated by each algorithm for the graphs in the Test Set for $k=32$.
\AlgName{HeiStream}(Int.), $2$-Re\AlgName{HeiStream}(32k), and \AlgName{HeiStream}(32k) outperform all the other competitors for the majority of instances.
First, all of them outperform \AlgName{Hashing} for all graphs and \AlgName{LDG} for $23$ out of the $26$ graphs.
Next, they also outperform \AlgName{Fennel} in $24$ instances.
The algorithm $2$-\AlgName{ReFennel} is outperformed by \AlgName{HeiStream}(Int.), $2$-Re\AlgName{HeiStream}(32k), and \AlgName{HeiStream}(32k) in $24$, $25$, and $17$ instances, respectively.
Considering only the 21 instances for which there are results reported for \AlgName{Multi.LDG}(Int.) in literature, the algorithms \AlgName{HeiStream}(Int.), $2$-Re\AlgName{HeiStream}(32k), and \AlgName{HeiStream}(32k) compute better partitions for $18$, $15$,~and~$14$ instances respectively.

For a closer comparison against \AlgName{Multi.LDG}(Int.), we present Figure~\ref{fig:heistream_multiLDG}.
We plot edge-cut for \AlgName{Multi.LDG}(Int.) based on results graphically reported in~\cite{jafari2021fast}.
In this figure, we show of \AlgName{HeiStream}(Int.) and \AlgName{HeiStream}(32k) for $5$ particular graphs with $k=4,8,16,32,64,128$.
For all these instances, \AlgName{HeiStream}(Int.) outperforms \AlgName{Multi.LDG}(Int.) by a considerable margin.
Note that \AlgName{HeiStream}(32k) outperforms the \emph{internal memory} version of \AlgName{Multilevel~LDG} for the majority of instances.
We omit additional comparisons against buffered versions of \AlgName{Multilevel~LDG}, since they provide lower quality than the internal memory version.
We ran wider experiments over our whole Test Set for $127$ different values of $k$.
Figures~\ref{fig:heistream_res_InitFennelOpt4}~and~\ref{fig:heistream_tim_InitFennelOpt4} show a quality improvement plot over \AlgName{Fennel} and a relative running time plot, respectively.
Figures~\ref{fig:heistream_pp_res_InitFennelOpt4}~and~\ref{fig:heistream_pp_tim_InitFennelOpt4} show performance profiles for solution quality and running time, respectively.
Observe that \AlgName{HeiStream}(Int.) produces solutions with highest quality overall.
In particular, it produces partitions with smallest edge-cut for almost $86.9\%$ of the instances and improves solution quality over \AlgName{Fennel} $195.0\%$ on average. 
We now provide some results in which we exclude \AlgName{HeiStream}(Int.), since it has access to the whole graph.
The best algorithm is \AlgName{HeiStream}(32k), which produces the best solution quality for $63.3\%$ of the instances and improves solution quality over \AlgName{Fennel}  $75.9\%$ on average.
It is followed by $2$-\AlgName{ReFennel}, which is the best tested algorithm from the previous state-of-the-art. 
In particular, it computes the best partition for $26.8\%$ of the instances and improves on average $19.2\%$ over \AlgName{Fennel}.
\AlgName{LDG} and \AlgName{Fennel} come next.
Particularly, \AlgName{LDG} finds the best partition for $9.8\%$ of the instances and improves on average $2.4\%$ over \AlgName{Fennel}. 
\AlgName{Fennel} does not find the best partition for any instance.
Finally, \AlgName{Hashing} produces the worst solutions, with $72.5\%$ worse quality than \AlgName{Fennel} on average.
Regarding running time, \AlgName{Hashing} is the fastest one for all instances, which is expected since it is the only one with time complexity $O(n)$.
The second fastest one is \AlgName{LDG}, whose running time is a factor $9.6$ higher than \AlgName{Hashing} on average.
\AlgName{Fennel}, \AlgName{HeiStream}(32k) and \AlgName{HeiStream}(Int.) come next with factors $32.4$, $41.4$ and $56.2$ slower than \AlgName{Hashing} on average, respectively.
\AlgName{ReFennel} is the slowest algorithm of the test, being a factor $64.61$ slower than \AlgName{Hashing} on average.
In a direct comparison, \AlgName{HeiStream}(32k) and \AlgName{HeiStream}(Int.) are respectively $27.7\%$ and $73.2\%$ slower than \AlgName{Fennel} on average.
Note that both configurations of \AlgName{HeiStream} are faster than \AlgName{Fennel} for larger values of~$k$, which is consistent with the fact the running time of \AlgName{HeiStream} is $O(n+m)$ while the running time of \AlgName{Fennel}~is~$O(nk+m)$.

\paragraph*{Restreaming.}
Now we directly evaluate the restreaming version of \AlgName{HeiStream}.
As shown in Table~\ref{tab:heistream_multiLDG}, $2$-Re\AlgName{HeiStream}(32k) computes solutions with smaller edge-cut than $2$-\AlgName{ReFennel} for almost all our Test Set when $k=32$.
Figure~\ref{fig:heistream_restream} shows that this is the case for $k\in\{2, \ldots, 128\}$.
In particular, $2$-Re\AlgName{HeiStream}(32k) produces the best solution for $98.3\%$ of the instances and improves solution quality over $2$-\AlgName{ReFennel} by $79.6\%$ on average.
The average improvement of $2$-Re\AlgName{HeiStream}(32k) over $2$-\AlgName{ReFennel} is comparable to the average improvement of \AlgName{HeiStream}(32k) over \AlgName{Fennel}.
Nevertheless, the percentage of instances for which $2$-Re\AlgName{HeiStream}(32k) outperforms $2$-\AlgName{ReFennel} is almost $100\%$, which means that $2$-Re\AlgName{HeiStream}(32k) almost dominates $2$-\AlgName{ReFennel} with respect to edge-cut. 
Note that this is considerably higher than the percentage of instances for which \AlgName{HeiStream}(32k) outperforms \AlgName{Fennel} ($63.3\%$).

\begin{figure}[t]
	\centering
	\includegraphics[angle=-0, width=1.0\textwidth]{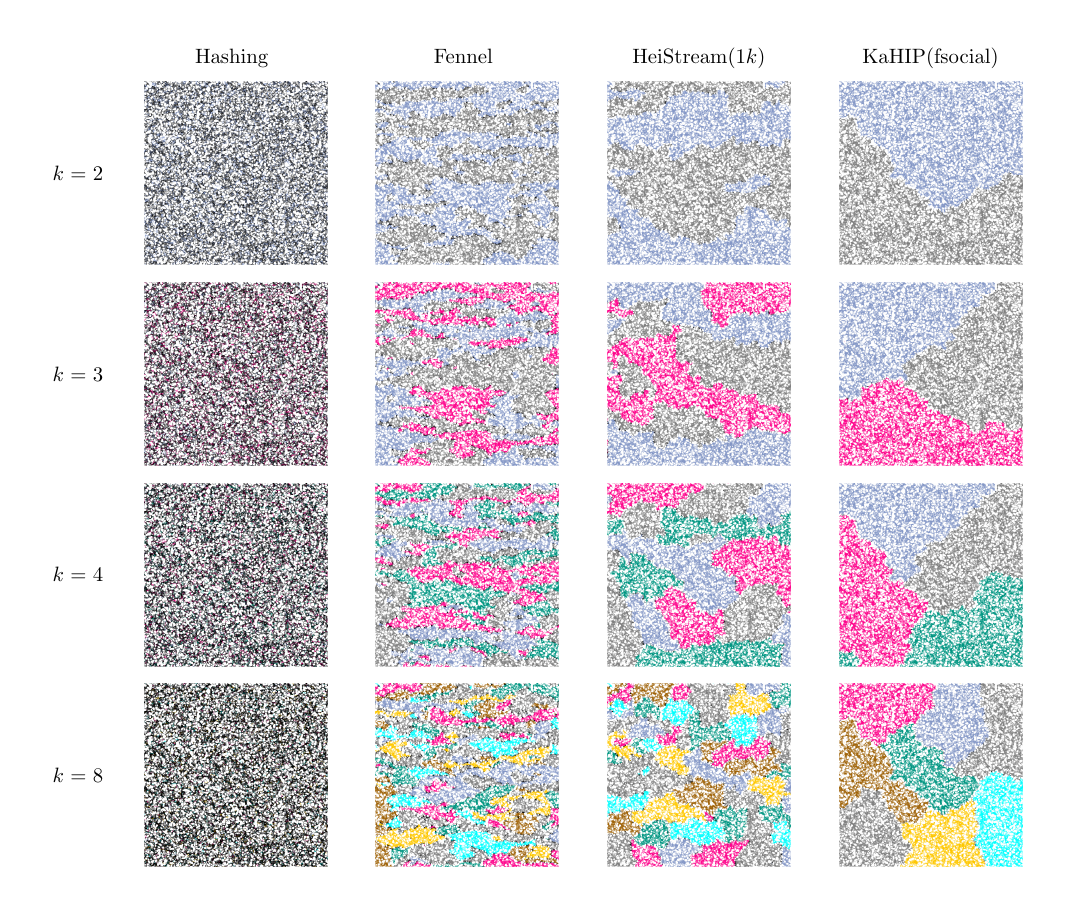}
	\caption{Visualization of partitions generated by the algorithms \AlgName{Hashing}, \AlgName{Fennel}, \AlgName{HeiStream($1k$)} and the fast social version of \AlgName{KaHIP} for the graph rgg15, which has \numprint{32768} nodes and \numprint{160240} edges.}
	\label{fig:heistream_visualization}
\end{figure}

\paragraph*{Visualization.}
As shown, the edge-cut of partitions produced by \AlgName{HeiStream} is on average lower than the edge-cut of partitions produced by its competitor streaming algorithms.
We shortly look at some visualizations in order to concretely understand why this happens.
In Figure~\ref{fig:heistream_visualization}, we show a visual comparison of some partition layouts generated by \AlgName{Hashing}, \AlgName{Fennel}, \AlgName{HeiStream} and the fast social version of \AlgName{KaHIP} for the graph rgg15.
Since this graph has only \numprint{32768} nodes, we use a buffer size of \numprint{1024} nodes for \AlgName{HeiStream} in order to partition the graph over multiple successive batches.
There is a leap of partitioning quality from \AlgName{Hashing} to \AlgName{Fennel}, \ie well-delimited clusters associated to a same block can be identified in the partitions generated by \AlgName{Fennel} but not in partitions generated by \AlgName{Hashing}.
Similarly there is a leap of partitioning quality from \AlgName{Fennel} to \AlgName{KaHIP}, \ie a block generated by \AlgName{Fennel} consists of multiple small clusters that are not mutually connected while a block generated by \AlgName{KaHIP} usually consists of a single connected cluster.
Note that the partitions produced by \AlgName{HeiStream} have intermediary characteristics between the partitions generated by \AlgName{Fennel} and \AlgName{KaHIP}.
More specifically, a block generated by \AlgName{HeiStream} consists of fewer and larger clusters than a block generated by \AlgName{Fennel} but not as few and as large clusters as those generated by \AlgName{KaHIP}.
This behavior is a direct consequence of the more or less global view provided by the distinct computational models used by these three algorithms.

\subsubsection{Huge Graphs}
\label{sec:heistream_huge_graphs}

\begin{table}[p]
	\centering
	\scriptsize
	\setlength{\tabcolsep}{2.pt}
	\begin{tabular}{lr@{\hskip 30pt}rrr@{\hskip 30pt}rr@{\hskip 30pt}rr@{\hskip 30pt}rr}
		\toprule	
		\multicolumn{1}{l}{\multirow{2}{*}{Graph}} & \multirow{2}{*}{k} & \multicolumn{3}{l}{\AlgName{HeiStream}(Xk)}                                               & \multicolumn{2}{l}{\AlgName{Fennel}}                               & \multicolumn{2}{l}{\AlgName{LDG}}                                  & \multicolumn{2}{l}{\AlgName{Hashing}}                              \\ 
		\multicolumn{1}{l}{}                       &                    & \multicolumn{1}{l}{X} & \multicolumn{1}{l}{CE(\%)} & \multicolumn{1}{l}{RT(s)} & \multicolumn{1}{l}{CE(\%)} & \multicolumn{1}{l}{RT(s)} & \multicolumn{1}{l}{CE(\%)} & \multicolumn{1}{l}{RT(s)} & \multicolumn{1}{l}{CE(\%)} & \multicolumn{1}{l}{RT(s)} \\ 
		
                \midrule
		\multirow{6}{*}{uk-2005}                   %
		& 8                  & 1024                  & \textbf{4.03}              & 290.23                      & 19.93                      & 37.19                       & 19.97                      & 19.36                       & 73.70                      & 3.28                         \\
		& 16                 & 1024                  & \textbf{6.01}              & 300.04                      & 22.76                      & 58.05                      & 22.72                      & 22.58                       & 78.86                      & 3.38                         \\
		& 32                 & 1024                  & \textbf{7.65}              & 310.72                      & 25.24                      & 98.78                       & 25.19                      & 29.19                       & 81.39                      & 3.30                         \\
		& 64                 & 1024                  & \textbf{8.99}              & 322.14                      & 26.88                      & 183.15                      & 26.81                      & 42.90                       & 82.60                      & 3.28                         \\
		& 128                & 1024                  & \textbf{9.94}              & 346.73                      & 27.89                      & 342.74                      & 27.76                      & 61.87                       & 83.18                      & 3.27                         \\	
		& 256                & 1024                  & \textbf{10.68}              & 386.64                      & 28.78          &  666.22           &            28.65           &   109.20                      &  83.46                     &   3.31                       \\ 
                \midrule
		\multirow{6}{*}{twitter7}                  %
		& 8                  & 512                   & \textbf{41.64}             & 1727.13                      &      45.18                 & 184.17                       & 56.11                      & 180.85                       & 71.66                      & 3.46                         \\
		& 16                 & 512                   & \textbf{47.04}             & 1774.92                      & 53.49                      & 213.17                       & 61.73                      & 186.27                       & 76.78                      & 3.57                        \\
		& 32                 & 512                   & \textbf{52.59}             & 1884.16                      & 58.15                      & 244.18                       & 66.84                      & 184.90                       & 79.34                      & 3.49                        \\
		& 64                 & 512                   & \textbf{57.53}             & 1988.11                      & 62.95                      & 330.46                      & 68.68                      & 197.86                       & 80.62                      & 3.50                        \\
		& 128                & 512                   & \textbf{61.87}             & 2113.34                      & 66.68                      & 504.53                      & 69.94                      & 219.65                       & 81.26                      & 3.93                         \\
		& 256                & 512                   & \textbf{65.47}             & 2357.92                      & 78.32                      & 846.20                     & 71.26                      & 280.99                    & 81.57                     & 3.51                         \\ 
                \midrule
		\multirow{6}{*}{sk-2005}                   %
		& 8                  & 1024                  & \textbf{3.23}              & 634.79                      & 21.95                      & 55.39                       & 21.13                      & 30.98                       & 81.50                      & 4.17                        \\
		& 16                 & 1024                  & \textbf{4.11}              & 648.48                      & 26.36                      & 82.41                       & 25.33                      & 35.44                       & 87.26                      & 4.20                        \\
		& 32                 & 1024                  & \textbf{5.32}              & 667.84                      & 29.59                      & 137.42                      & 27.97                      & 43.76                       & 90.11                      & 4.23                        \\
		& 64                 & 1024                  & \textbf{7.55}              & 695.60                      & 32.52                      & 238.54                      & 30.18                      & 59.19                       & 91.50                      & 4.21                        \\
		& 128                & 1024                  & \textbf{8.95}              & 733.05                      & 35.87                      & 449.19                      & 32.44                      & 91.36                       & 92.19                      & 4.20                        \\ 
		& 256                & 1024                  & \textbf{12.02}              & 798.73                     & 40.06                     & 857.76                    &  35.69                    &        150.64               & 92.55                     & 4.26                     \\ 
                \midrule
		\multirow{6}{*}{soc-friendster}           %
		& 8                  & 1024                  & \textbf{27.36}             & 4099.35                     & 30.57                      & 405.68                       & 45.60                      & 381.78                       & 87.53                      & 5.45                        \\
		& 16                 & 1024                  & \textbf{34.50}             & 4202.04                     & 45.74                      & 440.11                       & 58.98                      & 361.54                       & 93.77                      & 5.62                        \\
		& 32                 & 1024                  & \textbf{39.52}             & 4345.96                     & 54.87                      & 503.90                      & 61.00                      & 408.56                       & 96.89                      & 5.49                        \\
		& 64                 & 1024                  & \textbf{46.35}             & 4546.98                     & 59.27                      & 649.34                      & 64.02                      & 422.87                       & 98.45                      & 5.44                        \\
		& 128                & 1024                  & \textbf{52.41}             & 4796.56                     & 60.82                      & 888.14                      & 68.17                      & 475.16                       & 99.22                      & 5.72                        \\ 
		& 256                & 1024                  & \textbf{57.79}             & 5323.08                     & 64.25                      & 1426.16            & 71.90                & 523.66               &        99.61               &  5.53                       \\ 
                \midrule
		\multirow{6}{*}{er-fact1.5s26}           %
		& 8                  & 1024                  & \textbf{73.27}             & 2216.99                      & 73.44                      & 259.98                       & 73.44                      & 208.81                       & 87.50                      & 5.57                        \\
		& 16                 & 1024                  & \textbf{80.18}             & 2292.12                      & 80.40                      & 288.90                       & 80.40                      & 226.01                       & 93.75                      & 5.57                        \\
		& 32                 & 1024                  & \textbf{84.36}             & 2400.35                     & 84.63                      & 357.21                      & 84.63                      & 255.60                       & 96.87                      & 5.60                        \\
		& 64                 & 1024                  & \textbf{86.99}             & 2534.09                     & 87.31                      & 506.23                      & 87.31                      & 270.58                       & 98.44                      & 5.58                        \\
		& 128                & 1024                  & \textbf{88.72}             & 2725.81                     & 89.10                      & 769.61                      & 89.10                      & 407.59                       & 99.22                      & 5.57                        \\ 
		& 256                & 1024                  & \textbf{89.99}             & 2913.95                     & 90.45                      & 1329.57                 & 90.45                      &      408.35                 & 99.61                      & 5.65                        \\ 
                \midrule
		\multirow{6}{*}{RHG1}                     %
		& 8                  & 1024                  & \textbf{0.04}              & 380.04                      & 2.02                       & 86.95                       & 2.02                       & 44.42                       & 91.91                      & 8.37                        \\
		& 16                 & 1024                  & \textbf{0.06}              & 391.63                      & 2.12                       & 143.47                      & 2.12                       & 52.11                       & 97.39                      & 8.57                        \\
		& 32                 & 1024                  & \textbf{0.09}              & 406.56                      & 2.16                       & 252.15                      & 2.16                       & 65.65                       & 99.19                      & 8.38                        \\
		& 64                 & 1024                  & \textbf{0.15}              & 435.71                      & 2.17                       & 450.73                      & 2.17                       & 95.22                       & 99.74                      & 8.33                        \\
		& 128                & 1024                  & \textbf{0.22}              & 482.06                      & 2.18                       & 877.69                      & 2.18                       & 147.22                      & 99.90                      & 8.31                        \\
		& 256                & 1024                  & \textbf{0.34}              & 569.77                     &   2.19                    &   1708.33                 &   2.18                     &   273.76                   &   99.95                    &   8.45                   \\ 
                \midrule
		\multirow{6}{*}{RHG2}                      %
		& 8                  & 1024                  & 0.09                       & 621.56                      & 0.05                       & 103.83                       & \textbf{0.04}              & 56.23                       & 89.71                      & 8.31                        \\
		& 16                 & 1024                  & 0.13                       & 632.61                      & 0.07                       & 153.29                      & \textbf{0.04}              &    60.14                         & 96.15                      & 8.57                        \\
		& 32                 & 1024                  & 0.19                       & 648.68                      & 0.12                       & 262.91                      & \textbf{0.05}              & 77.34                       & 98.73                      & 9.85                        \\
		& 64                 & 1024                  & 0.29                       & 674.36                      & 0.18                       & 468.04                      & \textbf{0.05}              & 108.39                       & 99.57                      & 8.32                        \\
		& 128                & 1024                  & 0.44                       & 727.66                      & 0.28                       & 872.68                      & \textbf{0.07}              & 157.75                      & 99.85                      & 8.31                        \\
		& 256                & 1024                  & 0.68                       & 816.60                      & 0.44                       & 1686.18                   & \textbf{0.09}              & 278.51                     & 99.92                      & 8.47                      \\ 
                \midrule
		\multirow{6}{*}{uk-2007-05}                %
		& 8                  & 1024                  & \textbf{0.54}              & 1024.26                      & 25.23                      & 107.46                       & 25.21                      & 58.19                       & 87.91                      & 8.80                        \\
		& 16                 & 1024                  & \textbf{0.60}              & 1045.36                      & 28.02                      & 166.06                      & 28.19                      & 70.72                       & 94.08                      & 8.79                        \\
		& 32                 & 1024                  & \textbf{0.70}              & 1058.73                      & 29.40                      & 278.38                       & 29.32                      & 85.42                       & 97.12                      & 8.99                        \\
		& 64                 & 1024                  & \textbf{0.92}              & 1099.64                      & 29.94                      & 517.20                      & 29.85                      & 115.18                       & 98.61                      & 9.32                        \\
		& 128                & 1024                  & \textbf{1.31}              & 1163.45                      & 30.73                      & 935.98                      & 30.18                      & 175.01                      & 99.33                      & 8.79                        \\
		& 256                & 1024                  & \textbf{1.95}              & 1280.48                      & 31.65                      & 1808.52                    & 30.70                      & 324.71                    & 99.68                      & 8.92                      \\ 
                \bottomrule
	\end{tabular}
	\vspace*{-.25cm}
	\caption{Experiments with huge graphs. CE and RT denote cut edges and running time. 
	}
	\label{tab:heistream_hugeResults}
\end{table}

We now switch to the main use case of streaming algorithms: computing high-quality partitions for huge graphs on small machines. 
The experiments in this section are based on the \emph{huge graphs} listed in Table~\ref{tab:heistream_graphs} and are run on the relatively small Machine~B.
Namely, we ran experiments for $k=\{8,16,32,64,128,256\}$ %
and we did not repeat each test multiple times with different seeds as in previous experiments. We also ran \AlgName{Metis} and \AlgName{KaHIP} on those graphs on this machine, but they fail on all instances since they require more memory than the machine has.
For all instances, \AlgName{HeiStream} performs a single pass over the input based on the extended model construction.
We refer to setups of \AlgName{HeiStream} with specific buffer sizes as \AlgName{HeiStream}($X$k), in which a buffer contains $X\times 1024$ nodes. %
In~Table~\ref{tab:heistream_hugeResults}, we report detailed per-instance results with large buffer sizes able to run on Machine~B.
We exclude from Table~\ref{tab:heistream_hugeResults} the IO~delay to load the input graph from the disk, since it depends on the disk and is roughly the same independently of the used partitioning algorithm.
For completeness, we report this delay (in seconds) for the huge graphs listed in Table~\ref{tab:heistream_graphs} following their respective order: 131.3, 203.2, 313.2, 294.0, 164.9, 186.1, 340.7, 551.5.

The results show that \AlgName{HeiStream} outperforms all the competitor algorithms regarding solution quality for most instances.
Notably, \AlgName{HeiStream} computes partitions with considerably lower edge-cut in comparison to the one-pass algorithms for 4 out of the tested graphs: uk-2005, sk-2005, uk-2007-05, and RHG1.
For the social networks soc-friendster and twitter7, \AlgName{HeiStream} is the best for all instances, %
but the improvement over \AlgName{Fennel} and \AlgName{LDG} is not so large as in the other instances.
One outlier can be seen on the network RHG2.
While \AlgName{HeiStream} produces fairly small edge-cut values, which are all below $0.7\%$, \AlgName{Fennel} does outperform it and \AlgName{LDG} improves solution quality even further on this instance.
Furthermore, note that the running time of \AlgName{Fennel} increases with increasing $k$ to the point in which it becomes higher than the running time of \AlgName{HeiStream} for 5 out of the 8 huge graphs tested.

\vfill

\paragraph*{Memory Consumption.} We now shortly review the amount of memory needed by the streaming algorithms under consideration. First of all note that the memory of \AlgName{HeiStream} depends on the size of the buffer that is used. If the buffer only contains one node, then the memory requirements match those of \AlgName{Fennel} and \AlgName{LDG}. Here, we measure the memory consumption of \AlgName{HeiStream} for various buffer sizes and compare it to \AlgName{Fennel} and \AlgName{LDG}.  To do that, we measured the memory consumption of \AlgName{HeiStream}(1024k), \AlgName{HeiStream}(32k), \AlgName{Fennel} and \AlgName{LDG} on the three largest graphs (RHG1, RHG2, and uk-2007-05). On average, \AlgName{HeiStream}(1024k) consumes respectively $2.5$GB, $4.1$GB, and $9.9$GB  of memory to partition these graphs. while \AlgName{HeiStream}(32k) consumes $472$MB, $521$MB, and $3.7$GB, and \AlgName{Fennel} and \AlgName{LDG} use $399$MB, $401$MB, and $445$MB. Note that the increased amount of used memory of \AlgName{HeiStream}(1024k) is expected, since we use a fairly large buffer. Given the size of the graphs, we believe those required amounts of memory are more than reasonable.

\section{Streaming Process Mapping}
\label{sec:recmultisec_Streaming Process Mapping}
\label{chap:recmultisec_Streaming Process Mapping}

In this section, we develop \AlgName{Online Recursive Multi-Section}, a shared-memory parallel streaming algorithm for process mapping.
Our algorithm performs recursive multi-sections on the fly, which allows us to perform multi-sectioning along a hierarchical topology using only one pass over the communication graph.
To the best of our knowledge, ours is the first streaming algorithm designed specifically for the process mapping problem.
On average, our algorithm computes $41\%$ better process mappings and is 55 times faster than \AlgName{Fennel} which ignores the given hierarchy.
In scalability tests, our algorithm is only $3$ times slower than \AlgName{Hashing} when running on 32 threads.
If no hierarchy is specified as an input, our approach can also be used as a tool to solve the standard graph partitioning problem.
Our approach has a significantly lower running time complexity compared to state-of-the-art non-buffered one-pass partitioning algorithms.
Our experiments show that on average, our algorithm is $134.4$ times faster than \AlgName{Fennel} at the cost of $5\%$ more cut edges.

\subsection{Online Recursive Multi-Section}
\label{sec:recmultisec_Online Recursive Multi-Section}
\label{sec:recmultisec_OnlineRecursiveMulti-Section}

\subsubsection{Overall Scheme}
\label{subsec:recmultisec_Overall Scheme Map}

A successful offline algorithm to partition and map the nodes of a communication graph onto PEs is the recursive multi-section~\cite{schulz2017better,GlobalMultisection}.
This approach specializes the partitioning process for the case in which the communication cost between two processes (nodes) highly depends on the hierarchy level shared by the PEs (blocks) in which they are allocated.
Recall that a hierarchical topology can represented by a string $\mathcal{S}=a_1: a_2: ...:a_\ell$ which in the process mapping application means that each processor has $a_1$ cores, each node has $a_2$ processors, each rack has $a_3$ nodes, and so forth.
The offline recursive multi-section works as follows.
First, the whole graph is partitioned in $a_\ell$ blocks. 
Then, the subgraph induced by each block of an $a_i$-partition is recursively partitioned in $a_{i-1}$ sub-blocks until the whole graph is partitioned in $k=\prod_{i=1}^{\ell}{a_i}$ blocks.
This recursive approach have been developed for process mapping and exploits the fact that the communication between PEs is cheaper through lower layers of the communication hierarchy.
It creates a hierarchy of partitioning subproblems that directly reflects the hierarchical topology of the system, which yields an improved process mapping in practice~\cite{GlobalMultisection}.

Intuitively, for any given hierarchy $S$ (independent of the application), recursive multi-section can be implemented in the streaming model using $\ell$ successive passes of any one-pass partitioning algorithm over the input graph, i.e., by restreaming the graph $\ell$ times and subpartitioning the corresponding blocks in each pass.
In our algorithm, which we call \AlgName{Online Recursive Multi-Section}, we compress all steps performed during the $\ell$ passes into a single pass, as follows.
After a node is loaded, assign it to one of the $a_\ell$ blocks $\{V^\ell_1,\ldots,V^\ell_{a_\ell}\}$ in layer~$\ell$.
Then, for each layer $i<\ell$, assign the node to one of the $a_{i}$ sub-blocks of the block chosen in the previous step.
After going through all layers, the node is directly assigned to a final block, which makes this approach feasible for online execution.
Algorithm~\ref{alg:recmultisec_One-pass Process Mapping} summarizes the structure of our \AlgName{Online Recursive Multi-Section} and Figure~\ref{fig:recmultisec_hierarchy_problems} exemplifies it. Here, score depends on the algorithm logic used, e.g. \AlgName{Fennel} or \AlgName{LDG}, as well as other parameters that are specific to our multi-section algorithm. We give more details in Section \ref{subsec:recmultisec_subproblems}.
Note that it produces exactly the same result as the version with $\ell$ passes over the input since it does not violate any~data~dependency.

\begin{algorithm}[t] %
	\begin{algorithmic}[1] %
		\For{$u \in V(G)$}
			\State Load $N(u)$
			\State $X \leftarrow \{V^\ell_1,\ldots,V^\ell_{a_\ell}\}$, $V^* \leftarrow \emptyset$
			\For{$i \in \{\ell,\ldots,1\}$}
				\State \textbf{if} $i \neq \ell$ \textbf{then} $X \leftarrow $ sub-blocks of $V^*$
				\State $V^* \leftarrow$ $\argmax\limits_{W\in X}\big\{score(W)\big\}$
				\State Assign node to sub-block $V^*$
			\EndFor
		\EndFor
	\end{algorithmic}
	\caption{\AlgName{Online Recursive Multi-Section}} %
	\label{alg:recmultisec_One-pass Process Mapping} %
\end{algorithm}

\begin{figure}[t!]
	\centering
	\includegraphics[angle=0, width=0.6\linewidth]{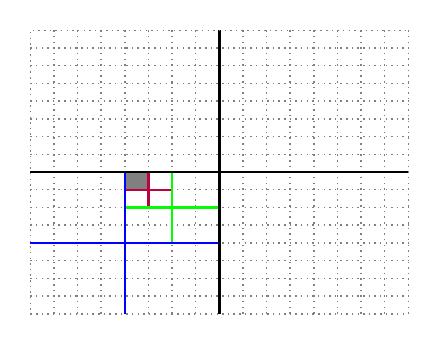}
	\caption{Assigning a node with \AlgName{Online Recursive Multi-Section} over a grid composed of ${256}$ blocks that are contained within a hierarchy $S=4:4:4:4$. In the first ${4}$-partition (black), the node is assigned to the lower left block. In the following ${4}$-partitions, it is assigned to the upper right sub-block (blue), its upper left sub-block (green), and finally its upper left sub-block (purple), which is a block from the original partitioning problem.}
	\label{fig:recmultisec_hierarchy_problems}
\end{figure}

\AlgName{Online Recursive Multi-Section} can be applied to compute any hierarchical partitioning and is not limited to the process mapping application. 
However, in case of the process mapping application, as the offline recursive multi-section our streaming algorithm exploits the inherent structure of the problem in two ways:
(i) its layout reproduces the hierarchical communication topology;
(ii) the top-down order in which the nodes are assigned to sub-blocks of previously assigned blocks reflects the order in which the communication costs decrease in a communication topology.
In other words, this top-down order reflects the need for primarily avoiding cut edges among modules of higher layers in the communication hierarchy.
We give the time and space complexity of our~algorithm.

\begin{lemma}
	\label{theo:recmultisec_num_blocks_msec}
	\AlgName{Online Recursive Multi-Section} needs $O(k)$ space to store block weights.
\end{lemma}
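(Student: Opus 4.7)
The plan is to count, layer by layer, the total number of sub-blocks whose weights must be maintained, and then sum this count across all $\ell$ layers of the hierarchy. Since each weight is a single scalar, the total space needed is exactly proportional to this count.

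First I would observe that at layer $\ell$ the algorithm maintains $a_\ell$ block weights (one per top-level block $V^\ell_1,\ldots,V^\ell_{a_\ell}$). At layer $\ell-1$, every one of those top-level blocks is subdivided into $a_{\ell-1}$ sub-blocks, so the algorithm stores $a_\ell \cdot a_{\ell-1}$ weights at that layer. Proceeding recursively, at layer $i$ the number of sub-blocks is $\prod_{j=i}^{\ell} a_j$, and at the bottom layer ($i=1$) this equals $\prod_{j=1}^{\ell} a_j = k$, matching the total number of final blocks.

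Next I would sum the contributions to obtain a total space of $\sum_{i=1}^{\ell} \prod_{j=i}^{\ell} a_j$. To bound this by $O(k)$, I would factor out $k$ and rewrite each term as $k / \prod_{j=1}^{i-1} a_j$. Using the standing assumption from the section that $a_i \geq 2$ for all $i \in \{1,\ldots,\ell\}$, each denominator $\prod_{j=1}^{i-1} a_j$ is at least $2^{i-1}$, so the sum is at most $k \sum_{i=1}^{\ell} 2^{-(i-1)} < 2k$. Hence the total number of stored block weights is $O(k)$.

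The argument is essentially a straightforward geometric-series calculation; the only subtlety worth flagging is the use of $a_i \geq 2$, without which the bound would degenerate (e.g., if many $a_i = 1$, the number of layers could be arbitrarily large compared to $k$). Since this assumption is made explicitly in the preliminaries on process mapping, no further obstacle arises, and the claim follows directly.
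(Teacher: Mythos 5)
Your proof is correct and follows essentially the same route as the paper's: count $\prod_{j=i}^{\ell} a_j$ block weights at layer $i$, rewrite each term as $k$ divided by a product of the lower-layer factors, apply $a_i \geq 2$ to bound the sum by a geometric series summing to less than $2k$. No meaningful divergence.
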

\begin{proof}
	By definition, the multi-section consists of $\ell$~layers such that a layer $i \in \{1,\ldots,\ell\}$ contains exactly $\prod_{r=i}^{\ell}{a_r}$~blocks whose weight we need to keep track of.
	As we define $a_r \geq 2 \ \forall r$, we can write $\prod_{r=i}^{\ell}{a_r} \leq (1/{2^{i-1}})\prod_{r=1}^{\ell}{a_r} = (1/{2^{i-1}})k$.
	Hence, the total number of block weights that we need to keep track of is $\sum_{i=1}^{\ell}{\prod_{r=i}^{\ell}{a_r}} \leq \sum_{i=1}^{\ell}{(1/{2^{i-1}})k} \leq 2k$.
\end{proof}

\begin{theorem}
	\label{theo:recmultisec_mem_usage_msec}
	\AlgName{Online Recursive Multi-Section} coupled with \AlgName{Fennel} or \AlgName{LDG} needs $O(n+k)$ memory.
\end{theorem}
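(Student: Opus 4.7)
The plan is to decompose the algorithm's memory footprint into three disjoint components and bound each one separately: (i) the persistent node-to-block assignment table, (ii) the persistent block-weight table across all hierarchy layers, and (iii) the transient workspace that \AlgName{Fennel} or \AlgName{LDG} needs while scoring the currently streamed node.

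First I would argue that (i) requires only $O(n)$ space. Since \AlgName{Online Recursive Multi-Section} maps every node deterministically to a \emph{single} leaf block of the hierarchy (layer $1$), storing one integer per node suffices; moreover, the sub-block of a node at any intermediate layer $i$ can be recovered on the fly from its leaf index by decoding the tuple $(p_1,\dots,p_\ell)$ induced by the fixed hierarchy $\mathcal{S}=a_1{:}a_2{:}\ldots{:}a_\ell$, so no per-layer assignment array has to be materialised. Component (ii) is $O(k)$ as an immediate consequence of Lemma~\ref{theo:recmultisec_num_blocks_msec}, which already bounds the total number of block weights across all $\ell$ layers by $2k$.

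Next I would address component (iii), which is the main obstacle and where the choice of base scorer actually matters. When node $u$ arrives, its adjacency list $N(u)$ is loaded, occupying $O(d(u))$ space, which is absorbed by $O(n)$ in the worst case and released once $u$ has been fully assigned. For the scoring step at layer~$i$, \AlgName{Fennel} and \AlgName{LDG} both evaluate expressions of the form $|V\cap N(u)|$ (weighted by a block-size penalty) over the at most $a_i$ candidate sub-blocks of the current parent block. Both quantities can be computed by a single sweep over the already-loaded $N(u)$: for each already-assigned neighbor $v$, look up its leaf block in the assignment table from (i), derive the layer-$i$ sub-block index in constant time, and increment a counter. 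The counters need a scratch array whose size is bounded by the number of candidate sub-blocks visited during the multi-section of one node, which by Lemma~\ref{theo:recmultisec_num_blocks_msec} is $O(k)$. This scratch array, together with the block-weight penalties read from (ii), is all the extra state either base scorer requires.

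Summing the three components gives $O(n)+O(k)+O(d(u)+k)=O(n+k)$, since the transient term is reused across nodes and $d(u)\le n$. The subtle point to double-check in the writeup is that no additional permanent structure (such as per-layer adjacency summaries or per-block neighbor lists) is introduced by coupling the multi-section with \AlgName{Fennel}/\AlgName{LDG}; both scorers are stateless beyond the block weights and the assignment table, so the bound is tight up to constants.
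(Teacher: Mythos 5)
Your proof is correct and rests on the same two facts the paper uses: that a single leaf-block assignment per node suffices to recover all superblocks (giving $O(n)$), and that Lemma~\ref{theo:recmultisec_num_blocks_msec} bounds the block-weight storage by $O(k)$. You spell out the transient scoring workspace explicitly, which the paper leaves implicit, but the argument is the same.
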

\begin{proof}
	Due to the hierarchical structure of multi-section, \AlgName{Fennel} and \AlgName{LDG} may keep track of a single block assignment per node, which is enough to infer all its superblocks.
	Hence, the space complexity $O(n+k)$ directly follows from Lemma \ref{theo:recmultisec_num_blocks_msec}.
\end{proof}
\begin{theorem}
	\label{theo:recmultisec_running_time_msec}
	\AlgName{Online Recursive Multi-Section} coupled with \AlgName{Fennel} or \AlgName{LDG} has time complexity $O(m\ell + n\sum_{i=1}^{\ell}{a_i})$.
\end{theorem}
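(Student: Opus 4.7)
The plan is a direct accounting argument: the algorithm executes the outer loop exactly $n$ times, and for each node we charge the work at each of the $\ell$ hierarchy layers separately, distinguishing the cost that scales with the neighborhood of $u$ from the cost that scales with the branching factor $a_i$. Summing over all nodes will yield the two additive terms of the claimed bound.

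First, I would fix a node $u$ and analyze one iteration of the outer loop. Loading $N(u)$ costs $O(d(u))$. At layer $i$, the algorithm must select the best among the (at most) $a_i$ candidate sub-blocks $X$ of the currently chosen super-block. For both the \AlgName{Fennel} and the \AlgName{LDG} score, the only term of the score that depends on $u$ is a sum over $N(u) \cap W$ for each candidate $W$; the balancing/penalty term depends only on $c(W)$ (resp.\ $|W|$), which is maintained incrementally in $O(1)$ per update using the $O(k)$ bookkeeping from Theorem/Lemma on block weights. Hence the scores for \emph{all} $a_i$ candidates at layer $i$ can be computed simultaneously in a single sweep over $N(u)$: for each neighbor $v \in N(u)$, look up the block of $v$, check whether it lies in $X$, and if so update the corresponding partial sum. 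This sweep costs $O(d(u))$, after which finishing the scores and taking the $\argmax$ over the $a_i$ candidates costs $O(a_i)$.

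Summing over the $\ell$ layers, processing one node $u$ therefore costs
\[
O\!\left(d(u) + \sum_{i=1}^{\ell}\bigl(d(u) + a_i\bigr)\right)
\;=\; O\!\left(\ell\, d(u) + \sum_{i=1}^{\ell} a_i\right).
\]
Summing over all $n$ nodes and using $\sum_{u\in V} d(u) = 2m$ gives
\[
O\!\left(\ell \sum_{u \in V} d(u) + n \sum_{i=1}^{\ell} a_i\right)
\;=\; O\!\left(m\ell + n\sum_{i=1}^{\ell} a_i\right),
\]
as claimed.

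The only subtle point, and therefore the main obstacle I would be careful with, is justifying that the $a_i$ candidate-block scores at layer $i$ really can be aggregated in a single $O(d(u))$ pass rather than a separate $O(d(u))$ pass per candidate (which would blow up the first term to $O(m \sum_i a_i)$). This relies on two properties: (i) the sub-blocks considered at layer $i$ are disjoint, so each neighbor contributes to at most one candidate's intersection count; and (ii) the balancing penalty is a function of a single scalar per block that can be maintained in $O(1)$ time per node assignment, using the $O(k)$ storage from Lemma~\ref{theo:recmultisec_num_blocks_msec}. With these two observations established, the accounting above is immediate.
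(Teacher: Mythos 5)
Your proof is correct and follows essentially the same decomposition as the paper's: you bound the per-node, per-layer cost by $O(d(u)+a_i)$ and sum over layers and nodes to obtain $O(m\ell + n\sum_i a_i)$. The one thing you add is an explicit justification (disjointness of the candidate sub-blocks, $O(1)$ maintenance of block weights) for why all $a_i$ scores at a layer cost $O(d(u)+a_i)$ rather than $O(a_i\, d(u))$; the paper asserts this per-layer bound without elaboration, so your extra care is a welcome fill-in of a step the paper glosses over rather than a different argument.
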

\begin{proof}
	\AlgName{Online Recursive Multi-Section} assigns each node~$u$ over $\ell$ layers.
	Using \AlgName{Fennel} or \AlgName{LDG}, the running time to assign~$u$ in a given layer~$i$ is $O(|N(u)|+a_i)$.
	Accounting for all layers and nodes, this sums up to $O(m\ell + n\sum_{i=1}^{\ell}{a_i})$.
\end{proof}
\begin{corollary}
	\label{theo:recmultisec_expected_running_time_msec}
	\AlgName{Online Recursive Multi-Section} coupled with \AlgName{Fennel} or \AlgName{LDG} has time complexity $O\big((m + n)\log{k}\big)$ if $a_i = b$, $\forall i \in \{1,\ldots,\ell\}$ for any constant $b \geq 2$.
\end{corollary}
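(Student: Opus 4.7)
The plan is to substitute the hypothesis $a_i = b$ (with $b \geq 2$ constant) directly into the running time bound from Theorem~\ref{theo:recmultisec_running_time_msec}, namely $O(m\ell + n\sum_{i=1}^{\ell}{a_i})$, and then rewrite the resulting expression in terms of $k$.

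First I would use the definition $k = \prod_{i=1}^{\ell} a_i$, which under the hypothesis becomes $k = b^{\ell}$, so that $\ell = \log_b k$. Next I would compute the summation: since every term equals $b$, we have $\sum_{i=1}^{\ell} a_i = \ell b = b \log_b k$. Substituting both into the bound from Theorem~\ref{theo:recmultisec_running_time_msec} yields
\[
O\!\left(m \log_b k + n \cdot b \log_b k\right) = O\!\left((m+n)\, b \log_b k\right).
\]
Finally, since $b \geq 2$ is a constant, $b$ is absorbed into the $O(\cdot)$, and a change of logarithm base gives $\log_b k = \Theta(\log k)$, so the overall bound simplifies to $O((m+n)\log k)$, as claimed.

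There is essentially no obstacle: the corollary is a direct instantiation of the previous theorem, and the only non-trivial observation is that $b$ being a constant (independent of $n$, $m$, and $k$) is exactly what allows both the factor $b$ and the base of the logarithm to be hidden inside the asymptotic notation.
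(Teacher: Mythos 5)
Your proposal is correct and follows essentially the same route as the paper: both substitute $a_i = b$ into the bound $O(m\ell + n\sum_{i=1}^{\ell} a_i)$ from Theorem~\ref{theo:recmultisec_running_time_msec}, derive $\ell = \log_b k$ from $k = b^\ell$, compute $\sum_{i=1}^{\ell} a_i = b\ell = b\log_b k$, and absorb the constant $b$ and the logarithm base into the $O(\cdot)$. The only cosmetic difference is that the paper phrases the two substitutions as separate $O(\log k)$ claims rather than combining them into a single inequality, but the content is identical.
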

\begin{proof}
	Based on the assumption, we derive the claimed bound from Theorem~\ref{theo:recmultisec_running_time_msec} by proving $\ell=O(\log{k})$ and $\sum_{i=1}^{\ell}{a_i} = O(\log{k})$.
	The first part trivially holds since $k=\prod_{i=1}^{\ell}{a_i}=b^\ell \Rightarrow \ell = \log_{b}{k}$.
	To prove the second part, notice that $\sum_{i=1}^{\ell}{a_i} = b\ell$.
	Since $\ell = \log_{b}{k}$, it follows that $k=\prod_{i=1}^{\ell}{a_i} = b\log_{b}{k} = O(\log{k})$.
\end{proof}

\subsubsection{Partitioning Subproblems}
\label{subsec:recmultisec_Process Mapping}
\label{subsec:recmultisec_subproblems}

Our \AlgName{Online Recursive Multi-Section} algorithm implies a hierarchy of one-pass partitioning subproblems.
As self-contained, one-pass partitioning problems, these subproblems can be solved using any one-pass partitioning algorithm in literature.
The focus of this section is to analyze the subproblems in relation to the parameters of a hierarchical partitioning problem.

Consider all the partitioning subproblems contained in some layer $i$ of our algorithm.
The initial point of consideration is the uniformity of these subproblems, which implies that they are supplied with an induced subgraph containing a similar quantity of nodes and edges, and subsequently partition it among $a_i$~blocks.
More specifically, a subproblem in layer $i$ partitions among $k_i = a_{i}$ blocks and receives as input a graph containing roughly $n_i = n/{\prod}_{r=i+1}^{\ell}{a_r}$ nodes and $m_i = m/{\prod}_{r=i+1}^{\ell}{a_r}$ edges.
Consequently, the size constraint $L_i$ for a block from the subproblems in layer $i$ is determined by the equation $L_i = \lceil(1+\epsilon) n_i/k_i \rceil \simeq L_{\max} \prod_{r=1}^{i-1}{a_r}$, which simply denotes the total capacity of all blocks from the original problem that are encompassed within it.
The variations observed in the dimensions subproblems at different layers within our algorithm have further implications based on the choice of streaming partitioning algorithm employed to address them, as we show next.

\paragraph*{Fennel Mapping.}
Using \AlgName{Fennel} within \AlgName{Online Recursive Multi-Section} requires attention to its constant $\alpha$.
Recall that it is defined as $\alpha = \sqrt{k} m/ n^{3/2}$ for partitioning the whole graph into $k$ blocks with vanilla \AlgName{Fennel}.
Using this value of $\alpha$ for all partitioning subproblems contained in our algorithm is not a natural choice since we intend to apply \AlgName{Fennel} independently for each subproblem.
Independently applying the definition of \AlgName{Fennel} for each partitioning subproblem contained in our algorithm implies~$\ell$ different parameters $\alpha_i$, $i \in \{1, \ldots, \ell\}$, for all multi-section layers.
We derive the value of~$\alpha_i$ by applying the \AlgName{Fennel} definition $\alpha_i = \sqrt{k_i} \frac{m_i}{ n_i^{3/2}}$ and substituting the values of $k_i$, $m_i$, and $n_i$ which we have already discussed.
It follows  that 
$\alpha_i = \frac{\alpha }{ \sqrt{\prod_{r=1}^{i-1}{a_r}}}$.

\paragraph*{LDG Mapping.}
Combining \AlgName{LDG} with \AlgName{Online Recursive Multi-Section} is straightforward, since it directly uses the remaining capacity of each block as a multiplicative~penalty.
Hence, we can configure \AlgName{LDG} for a subproblem in layer $i$ by simply computing this penalty based on the block capacity $L_i$, whose value we have already discussed.

\subsubsection{General Partitioning}
\label{subsec:recmultisec_General Partitioning}

In this section, we show how to partition a streamed graph into an arbitrary number of blocks using \AlgName{Online Recursive Multi-Section} when no explicit hierarchy is given. 
We do this by creating an artificial hierarchy.

The \emph{recursive bisection} is a successful offline approach to partition graphs into an arbitrary number $k$ of blocks~\cite{more_recent_advances_hgp}. %
If~$k$ is a power of $2$, the algorithm works as a recursive multi-section with $\log_2{k}$ layers of $2$-way partitioning subproblems. 
Otherwise, it is irregular and cannot be represented by a string~$S$.
In analogy to the offline algorithm, we define an online recursive bisection to partition a graph on the fly when no hierarchy is given.
Recall that the whole hierarchy of blocks and sub-blocks has to be kept in memory throughout the execution of \AlgName{Online Recursive Multi-Section}, hence the same requirement applies here.
We build this hierarchy, which we call \emph{multi-section tree}, as a preliminary step for the streaming partitioning process.
In Algorithm~\ref{alg:recmultisec_Recursively Define Subproblems}, we define the procedure $\textsc{BuildHierarchy}$ which recursively builds this multi-section tree for any value~of~$k$.
This procedure receives as input a parent block $P$ for the multi-section tree as well as the endpoints $k_L$ and $k_R$ of the range of blocks to be covered by the multi-section tree.
In line~2, it terminates the recursion when $P$ is a leaf of the multi-section tree, which is true when $k_L=k_R$.
Otherwise, it creates two sub-blocks for $P$ and inserts them as sons of $P$ in the multi-section tree (line~3). 
Then, it splits the range $\{k_L,\ldots,k_R\}$ in roughly equal parts and performs 2 recursive calls to itself. 

\begin{algorithm}[b!] %
	\raggedright
	\textbf{Input} 
	$P$: Parent block in hierarchy, 
	$k_L$: Left endpoint of blocks covered by hierarchy.
	$k_R$: Right endpoint of blocks covered by hierarchy \\
	\begin{algorithmic}[1] %
		\Procedure{BuildHierarchy}{$P, k_L, k_R$}
		
		\If{$k_L = k_R$}
		\Return \EndIf
		\State $P_L,P_R \leftarrow$ Create sub-blocks for $P$
		\State $\textsc{BuildHierarchy}\Big(P_L, k_L, \big\lfloor \frac{k_L+k_R}{2} \big\rfloor\Big)$
		\State $\textsc{BuildHierarchy}\Big(P_R, \big\lfloor \frac{k_L+k_R}{2} \big\rfloor+1, k_R\Big)$
		\EndProcedure
	\end{algorithmic}
	\caption{Create Blocks for Multi-Section Tree} 
	\label{alg:recmultisec_Recursively Define Subproblems} %
\end{algorithm}
 
We further generalize the recursive bisection to \emph{recursive $b$-section} for a \emph{base}~$b$.
Given a base $b \geq 2$, a recursive $b$-section is a recursive multi-section associated with a multi-section tree in which blocks have up to $b$ sub-blocks.
Algorithm~\ref{alg:recmultisec_Recursively Define Subproblems} can be adapted to deal with~$b$-section by creating $\min\{b,k_R-k_L+1\}$ sub-blocks in line~3 and, afterwards, making the same number of recursive calls with proper parameters.
We create the multi-section tree by calling the command $\textsc{BuildHierarchy}(\emptyset,1,k)$ at the cost of $O(k)$.
Given a multi-section tree, we solve it by using Algorithm~\ref{alg:recmultisec_One-pass Process Mapping}.
Analogously to Theorem~\ref{theo:recmultisec_mem_usage_msec}, it is possible to prove that the online recursive $b$-section respectively stores $O(k)$ blocks and needs $O(n+k)$ memory when coupled with \AlgName{Fennel} or \AlgName{LDG}.
Theorem~\ref{theo:recmultisec_expected_running_time_bisec} provides a running time bound.

\begin{theorem}
	\label{theo:recmultisec_expected_running_time_bisec}
	Online recursive $b$-section coupled with \AlgName{Fennel} or \AlgName{LDG} has time complexity $O\big( (m+nb) \log_b{k}\big)$.
\end{theorem}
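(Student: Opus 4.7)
The plan is to mirror the argument already used for Theorem~\ref{theo:recmultisec_running_time_msec} and its Corollary~\ref{theo:recmultisec_expected_running_time_msec}, with the one genuine new ingredient being a depth bound on the (possibly irregular) multi-section tree produced by Algorithm~\ref{alg:recmultisec_Recursively Define Subproblems} in its $b$-ary variant.

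First, I would analyze the multi-section tree. When the $b$-ary version of \textsc{BuildHierarchy} is invoked on a range of size $s = k_R - k_L + 1$, it creates $\min\{b, s\}$ children and splits the range into pieces of size at most $\lceil s/b \rceil$. Starting from $s = k$, after $d$ levels of recursion every surviving range has size at most $\lceil k / b^{d} \rceil$; the recursion bottoms out once this quantity equals $1$, giving a depth of $\ell \le \lceil \log_b k \rceil = O(\log_b k)$. Crucially, at every internal node the branching factor is at most $b$, so any partitioning subproblem encountered along a root-to-leaf path has at most $b$ candidate sub-blocks.

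Next I would bound the per-node work. By Algorithm~\ref{alg:recmultisec_One-pass Process Mapping}, every streamed node $u$ descends the tree exactly once, visiting one subproblem at each of the $O(\log_b k)$ levels. At each such visit, both \AlgName{Fennel} and \AlgName{LDG} compute a score for each of the at most $b$ sibling sub-blocks; this requires scanning $N(u)$ once to aggregate the weighted connection to each candidate (cost $O(|N(u)|)$) plus an $O(b)$ pass over the candidates to read off block weights and pick the $\argmax$. Hence a single level costs $O(|N(u)| + b)$ per node, and summed over nodes it costs $O(m + nb)$ per level. Multiplying by the $O(\log_b k)$ depth yields the claimed $O\bigl((m + nb)\log_b k\bigr)$ bound.

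The only mildly delicate step is the depth bound, since the tree is not a perfect $b$-ary tree when $k$ is not a power of $b$; I would handle this by the straightforward induction on range size sketched above (each recursive call reduces the range size by a factor of at least $b$ when $s \ge b$, and the depth contributed once $s < b$ is at most one additional level because the recursion then produces only leaves). Everything else is a direct specialization of the earlier multi-section analysis with $a_i$ replaced by a uniform upper bound $b$, and with the summation $\sum_{i=1}^{\ell} a_i$ replaced by $b \cdot \ell$, which is where the $nb\log_b k$ term comes from.
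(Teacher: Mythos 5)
Your proof follows the same route as the paper's own: bound the depth of the multi-section tree by $O(\log_b k)$, charge $O(|N(u)|+b)$ per level per streamed node for the \AlgName{Fennel}/\AlgName{LDG} score computation, and sum over nodes and levels. The only difference is that you supply an explicit inductive justification for the depth bound (tracking range sizes under repeated ceiling division by $b$), which the paper simply asserts; the argument is otherwise identical and correct.
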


\begin{proof}
	The number of layers in the multi-section tree is up to $\lceil\log_b{k}\rceil$.
	In other words, each node should be assigned through up to $1+\log_b{k}$ layers.
	Since all subproblems partition among up to $b$~blocks, then the running time to assign a node $u$ over a layer is $|N(u)|+b$.
	Accounting for all layers and nodes, this sums up to $(2m + nb)(\log_b{k}+1) = O\big( (m+nb) \log_b{k}\big)$.
\end{proof}

\paragraph*{Heterogeneous Partitioning.}
\label{subsec:recmultisec_Heterogeneous Partitioning}

When $k$ is not a power of $b$, the recursive $b$-section hierarchy may contain some partitioning subproblems with heterogeneous blocks.
We deal with this by computing the size constraint of each block in the multi-section tree individually.
For simplicity, we explain how to do this when $b=2$ (recursive bisection), but this can be easily extended to an arbitrary~$b$.
For example when $k=5$, the two blocks in the first 2-way partitioning subproblem respectively cover $2$ and $3$ of the blocks from the original $5$-way partitioning.
Hence, these two blocks shall respectively have capacities $2L_{\max}$ and $3L_{\max}$, where $L_{\max}$ is the size constraint of a block in the original $k$-way partitioning.
Putting it in general terms, each block from the multi-section tree created in line~$3$ of Algorithm~\ref{alg:recmultisec_Recursively Define Subproblems} covers $t = k_R - k_L + 1$ blocks of the original $k$-way partitioning.
For simplicity, we use $t$ to refer to this number covered by a given block, and we use $t_1$ and $t_2$ to refer to the numbers covered by the two blocks of a partitioning subproblem.
The size constraint of a block is~$t\times L_{\max}$.

When a subproblem has blocks with heterogeneous size constraints, the used partitioning algorithm has to cope with it.
We adapt \emph{\AlgName{Fennel}} to address this issue by increasing (decreasing) the constant $\alpha$ used to compute the score of a specific block when its size constraint is lower (higher) than the other block from the same subproblem.
Recall that $\alpha$ depends on the numbers of nodes, edges and blocks for a specific subproblem.
A subproblem receives as input an induced subgraph with roughly $\frac{t_1+t_2}{k}$ of the nodes and edges from the original $k$-way partitioning.
We redefine the number of blocks as $\frac{t_1+t_2}{t_1}$ for the first block and $\frac{t_1+t_2}{t_2}$ for the second block of a subproblem.
This value equals $2$ for both blocks when $t_1=t_2$. 
Nevertheless, if $t_1 \neq t_2$, this value is larger (smaller) than $2$ for the block with smaller (larger) size constraint.
Summing up, the value of $\alpha$ for a given block will be $\sqrt{t}$ times smaller than the value $\alpha$ from the original $k$-way partitioning problem.
Consequently, the imbalance penalty function for \AlgName{Fennel} will be more pronounced for blocks with lower capacities, thereby addressing the issue of heterogeneous balancing..
For \emph{\AlgName{LDG}}, a natural adaptation for heterogeneous blocks arises from its very definition, since it directly uses the remaining capacity of each block as a multiplicative~penalty.

\ifFull
\paragraph*{Hybrid  Meta-Partitioning.}
It is also possible to solve distinct subproblems with different partitioning algorithms.
	This possibility opens a door to a trade-off when we mix a high-quality algorithm such as \AlgName{Fennel} with a speedy algorithm such as \AlgName{Hashing}.
	In particular, we can use \AlgName{Fennel} to solve top-layer subproblems (whose communication is more expensive) and \AlgName{Hashing} to solve bottom-layer subproblems (whose communication is cheaper).
	If we do this by solving the whole $h$ ($1 \leq h <\ell$) bottom layers of subproblems with \AlgName{Hashing}, we come to an overall time complexity $O(m + nh + n\sum_{i=\ell}^{h+1}{a_i})$.
	Observe that this hybridization is faster than coupling our scheme with \AlgName{Fennel} only and slower than coupling it with \AlgName{Hashing} only.
\fi{}

\subsubsection{Shared-Memory Parallelization}
\label{subsec:recmultisec_Shared-Memory Parallelization}

Since \AlgName{Online Recursive Multi-Section} is a node-centric algorithm, we can parallelize it by independently splitting the nodes of the graph among threads.
More specifically, it can be achieved with OpenMP by parallelizing the \emph{for} loop in line 1 of Algorithm~\ref{alg:recmultisec_One-pass Process Mapping}.
This parallelization requires the nodes from the input graph to be concurrently loaded by distinct threads alongside with their neighborhoods, which is a reasonable assumption in many practical environments. 
Regarding data consistency, the only source of concern are the block weights, whose values can be concurrently read and incremented by multiple threads.
This is important because an inconsistency could compromise the load balance between blocks.
We ensure writing consistency by making the incrementation an atomic operation.
Potentially, a block can still be overloaded if multiple threads decide to assign a node to it at the same time.
Since this is very unlikely, we do not use any synchronization to keep it from~happening.

\subsection{Experimental Evaluation}
\label{sec:recmultisec_Experimental Evaluation}

\paragraph*{Setup.} 
We performed our implementations using C++ and compiled them using gcc 9.3 with full optimization turned on (-O3 flag). 
The majority of our experiments were executed on a single core of Machine~C, with the exception of the scalability experiments, which employed up to 32 threads using hyperthreading.
Unless otherwise mentioned we stream the input directly from the internal memory to obtain clear running time comparisons.
However, note that the streaming algorithma could also be run streaming the graph from hard~disk.

\paragraph*{Baselines.} 
We identify \AlgName{Fennel} and \AlgName{LDG} as the state-of-the-art of non-buffered one-pass stream partitioning algorithms which aim at minimizing edge-cut.
Since \AlgName{Fennel} generates better solutions on average than \AlgName{LDG}~\cite{tsourakakis2014fennel}, we focus our experiments on \AlgName{Fennel} without loss of generality.
We also include \AlgName{Hashing} as a competitor, since it is the fastest algorithm for streaming partitioning.
To the best of our knowledge, there are no streaming partitioning algorithms specifically designed for process mapping.
Since no official versions of \AlgName{Fennel}, \AlgName{LDG}, and \AlgName{Hashing} are available in public repositories, we implemented them in our framework.
For comparison purposes, we ran experiments with internal-memory tools, i.e. we compare against the fastest version of the integrated multi-level algorithm proposed in~Chapter~\ref{chap:intmap_Multilevel Process Mapping} of this dissertation, which we refer to as~\emph{\AlgName{IntMap}}.
In addition, we compare our results against \AlgName{KaMinPar}~\cite{gottesburen2021deep}, a very fast internal-memory algorithm that is orders of magnitudes faster than \AlgName{mt-Metis} in terms of running time and produces comparable cuts while also enforcing balance (in contrast to \AlgName{mt-Metis}). In particular, the purpose of running \AlgName{IntMap} and \AlgName{KaMinPar} is to provide a reference of streaming algorithms in comparison to internal memory algorithms. 
We set a timeout of $30$ minutes for an algorithm to partition a graph.
Only \AlgName{IntMap} exceeded this time limit for some instances.
Hence, we exclude this algorithm from the plots.
Our implementations of these algorithms reproduce the results presented in the respective papers and are optimized for running time as much as possible.

\paragraph*{Instances.}
In this section, we use two disjoint sets of graphs:
the \emph{tuning} set is used for the parameter study experiments, while the \emph{test} set is used for the comparisons against the state-of-the-art. 
Basic properties of the graphs under consideration can be found in Table~\ref{tab:heistream_graphs}.
In any case, when streaming the graphs we use the natural given order of the nodes.
Unless otherwise mentioned, we use the following configurations for process mapping experiments: $D=1:10:100$, $\mathcal{S}=4:16:r$, with $r \in \{1,2,3,\ldots,128\}$. Hence, $k=64 r$.
This is the same configuration used in other studies~\cite{schulz2017better,GlobalMultisection}.
Analogously, we use $k = 64s$, $s \in \{1,2,3,\ldots,128\}$ for graph partitioning experiments unless mentioned otherwise.
We allow a fixed imbalance of $3\%$ for all experiments (and all algorithms) since this is a frequently used value in the partitioning literature. 
All partitions computed by all algorithms were~balanced.

\paragraph*{Methodology.} 
We perform two types of experiments: experiments for the process mapping objective (with given hierarchies as specified below) and standard graph partitioning in which we create implicit hierarchies as described above.
Depending on the focus of the experiment, we measure running time, edge-cut, and/or the process mapping communication cost defined in Equation~(\ref{eq:process_mapping_obj}).

\ifFull	

\begin{figure}[t]
	\captionsetup[subfigure]{justification=centering}
	\centering
	\begin{subfigure}[t]{\scaleFactorSmall\textwidth}
		\centering
		\includegraphics[angle=-0, width=\imgScaleFactor\textwidth]{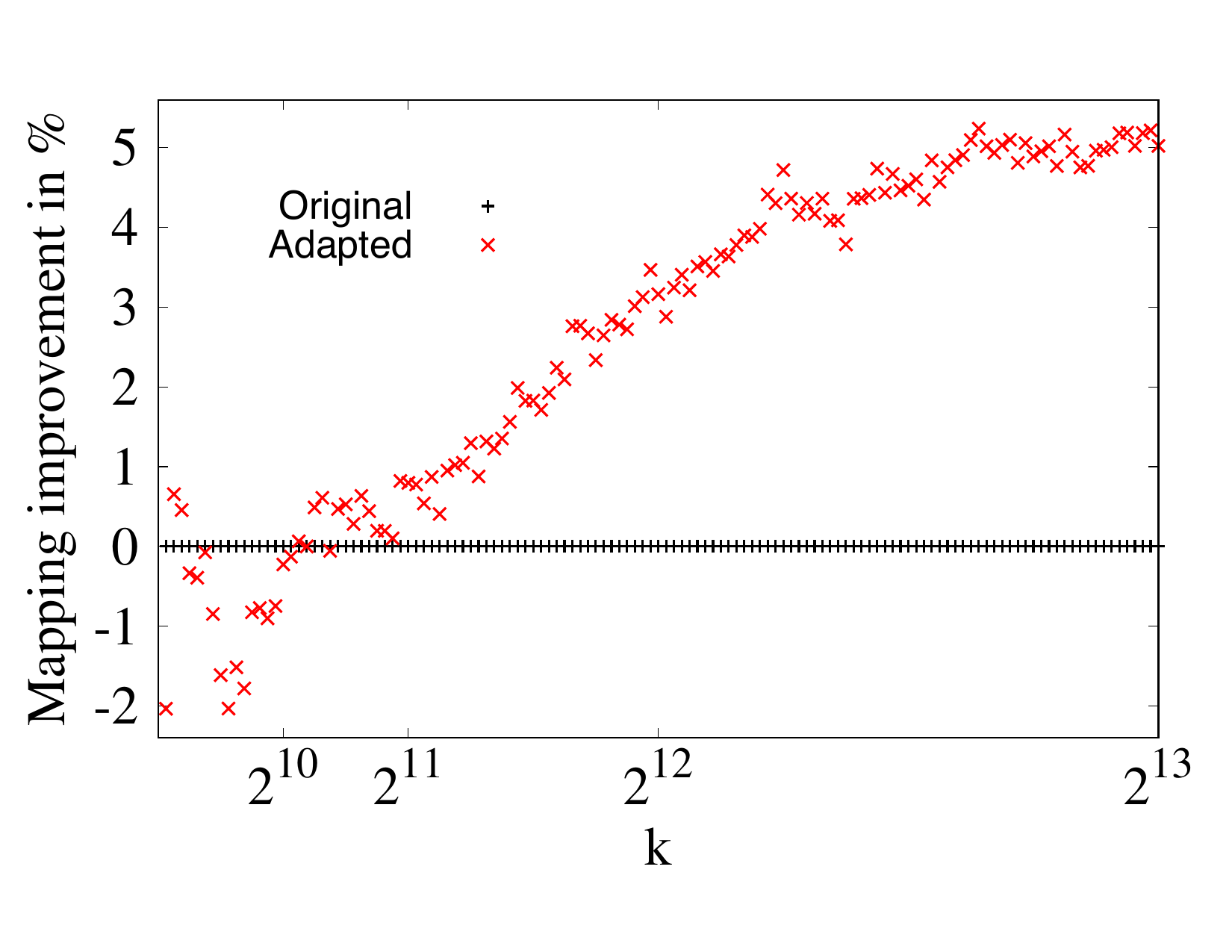}
		\vspace*{\capPosition}
		\caption{Process mapping improvement plot for parameter $\alpha$.}
		\label{fig:recmultisec_alphaMap_res}
	\end{subfigure}\hspace{5mm}%
	\begin{subfigure}[t]{\scaleFactorSmall\textwidth}
		\centering
		\includegraphics[angle=-0, width=\imgScaleFactor\textwidth]{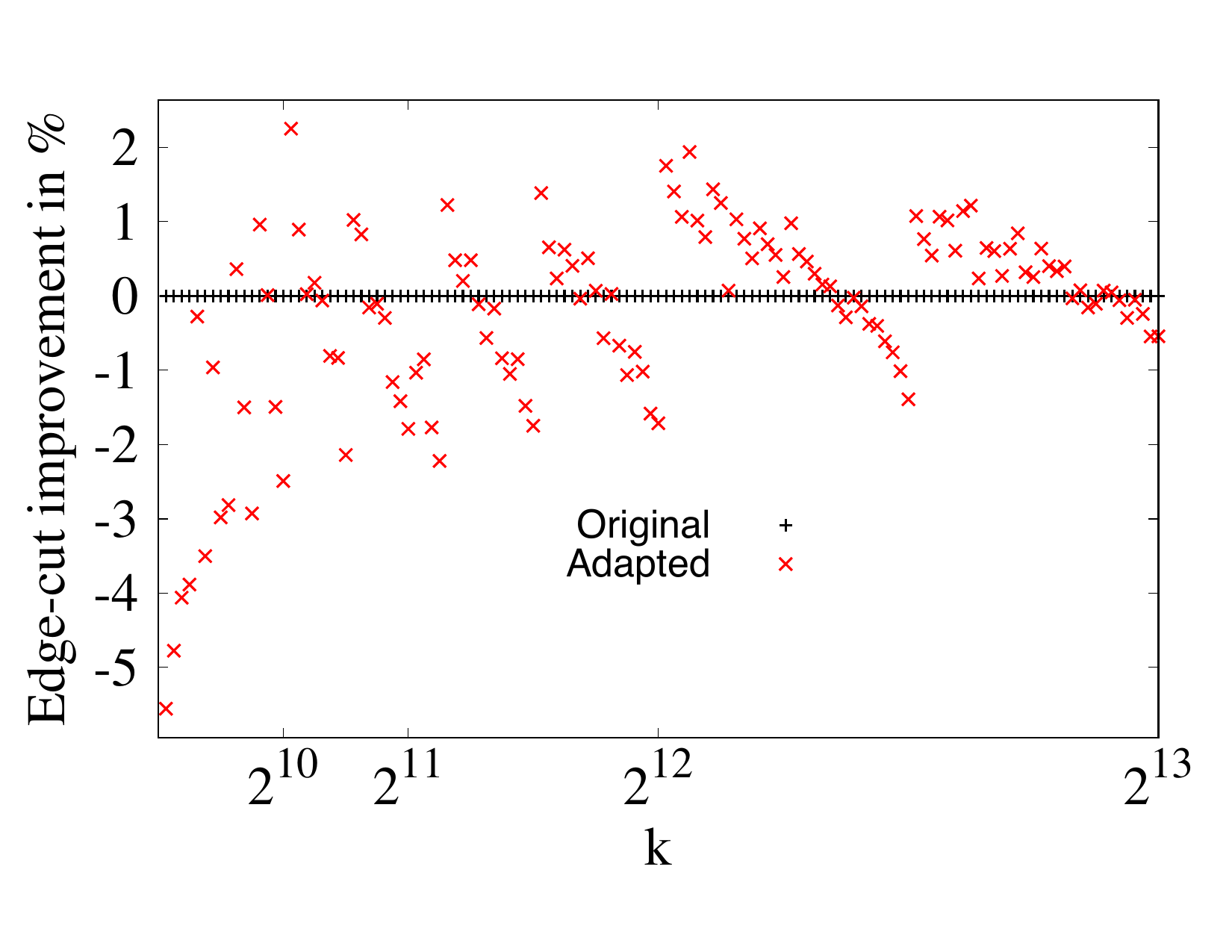}
		\vspace*{\capPosition}
		\caption{Edge-cut improvement plot for parameter $\alpha$.}
		\label{fig:recmultisec_alphaPar_res}
	\end{subfigure}
	\vspace*{\afterCap}
	\begin{subfigure}[t]{\scaleFactorSmall\textwidth}
		\centering
		\includegraphics[angle=-0, width=\imgScaleFactor\textwidth]{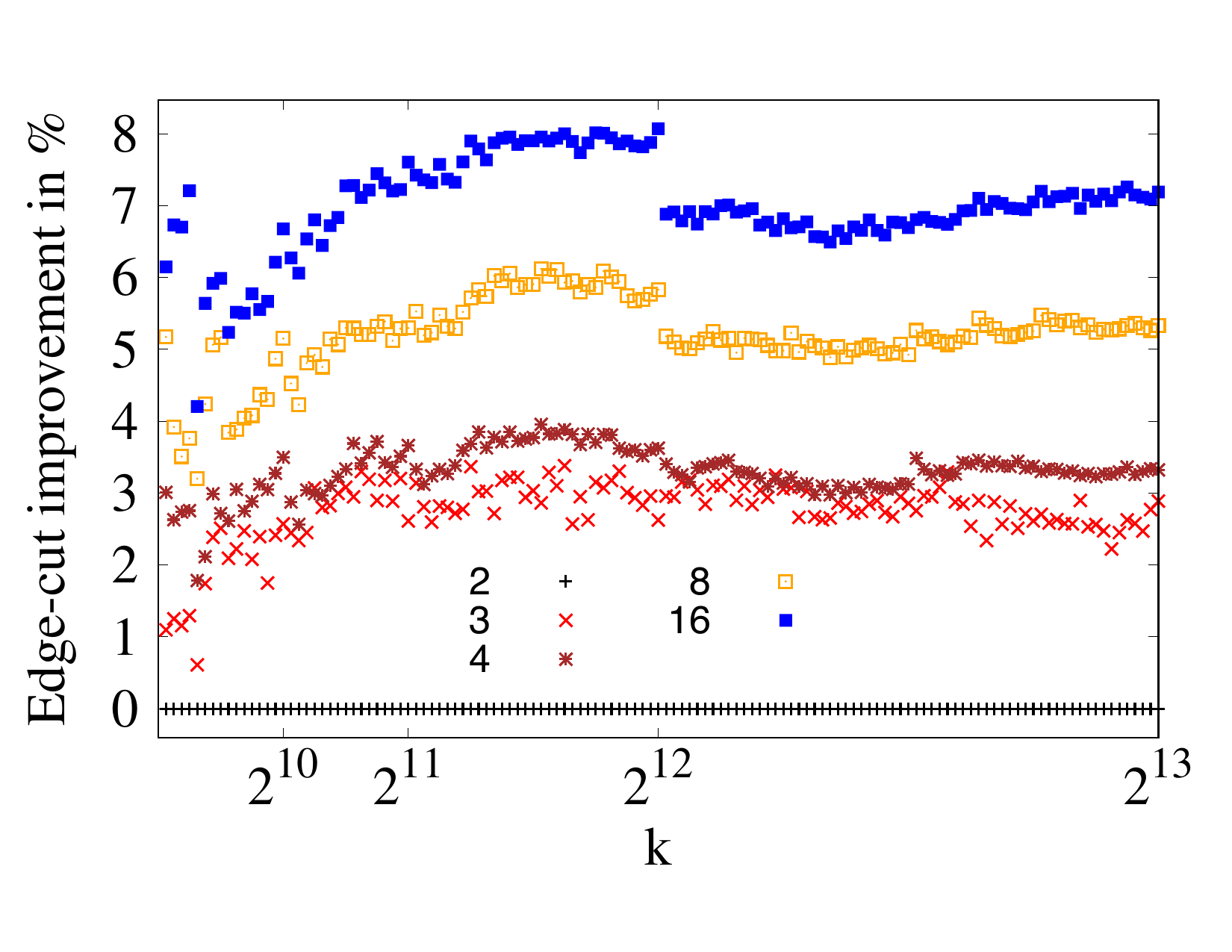}
		\vspace*{\capPosition}
		\caption{Edge-cut improvement plot for different base sizes.}
		\label{fig:recmultisec_basePar_res}
	\end{subfigure}\hspace{5mm}%
	\begin{subfigure}[t]{\scaleFactorSmall\textwidth}
		\centering
		\includegraphics[angle=-0, width=\imgScaleFactor\textwidth]{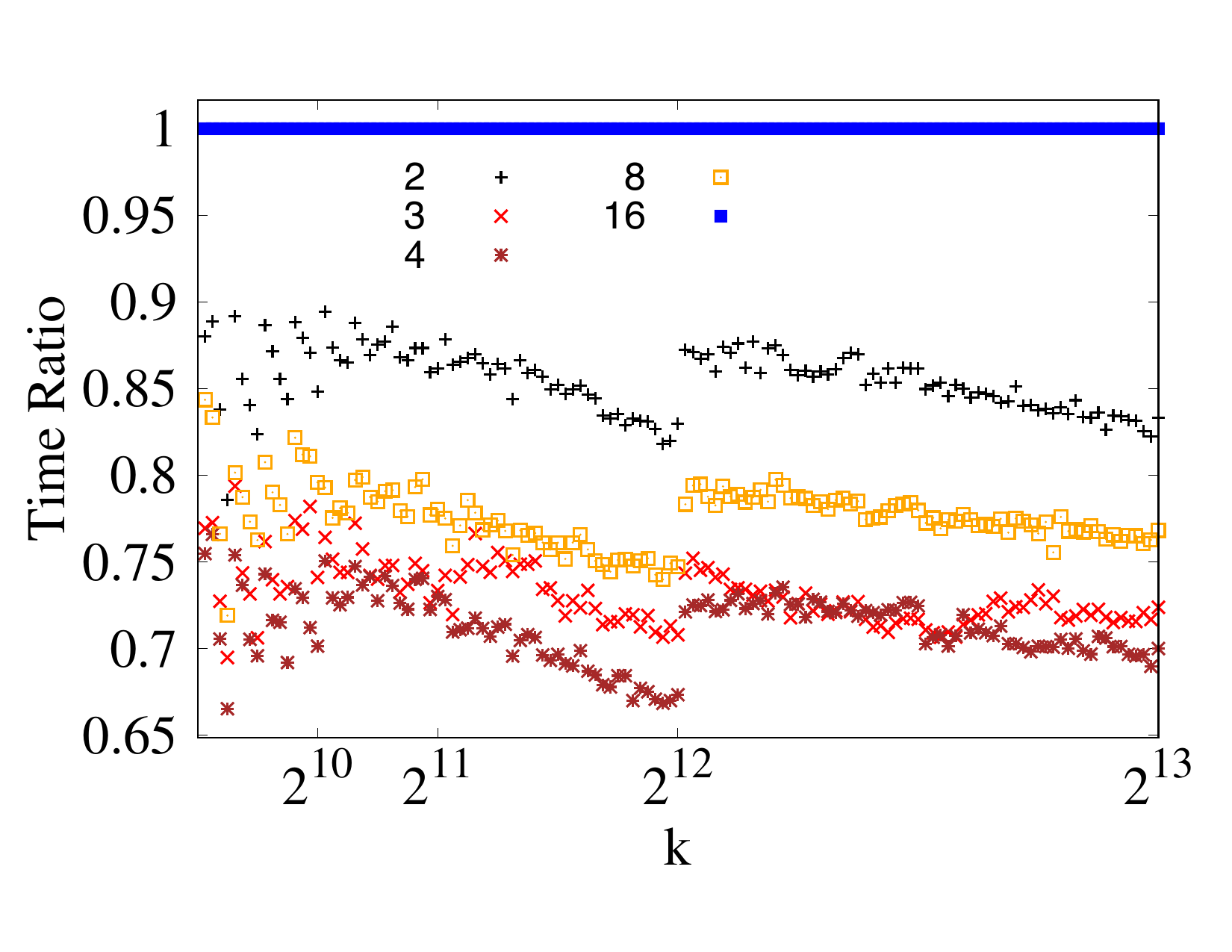}
		\vspace*{\capPosition}
		\caption{Running time ratio plot for different base sizes.}
		\label{fig:recmultisec_basePar_tim}
	\end{subfigure}
	\caption{Results for tuning experiments. Higher is better for process mapping / edge-cut improvement plots. Lower is better for running time ratio plots.}
	\label{fig:recmultisec_tuning_plots1}
\end{figure} 
\fi{}

\ifFull	
\begin{figure}[t]
	\captionsetup[subfigure]{justification=centering}
	\centering
	\vspace*{-.75cm}
	\begin{subfigure}[t]{\scaleFactorSmall\textwidth}
		\centering
		\includegraphics[angle=-0, width=\imgScaleFactor\textwidth]{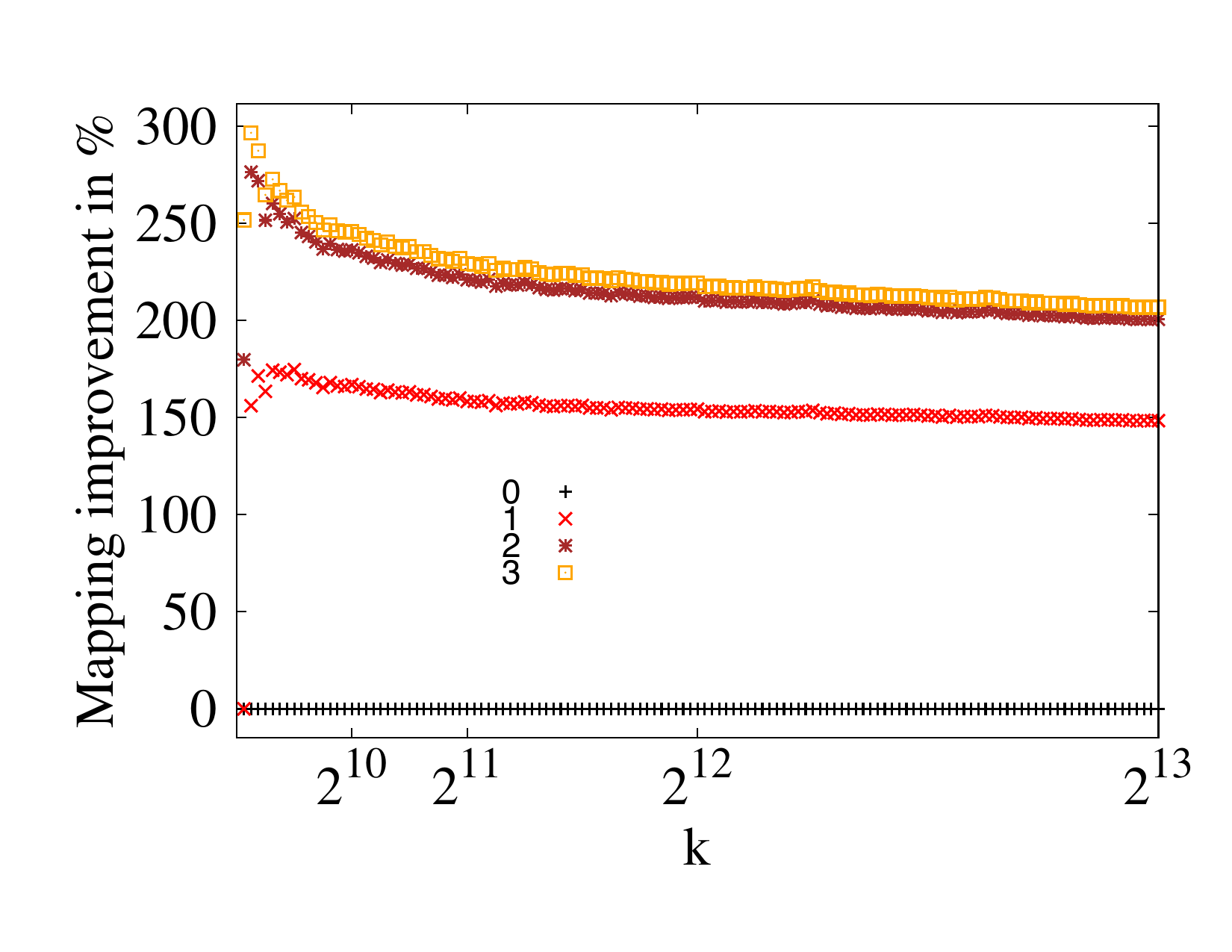}
		\vspace*{\capPosition}
		\caption{Process map improvement plot for number of top layers solved by \AlgName{Fennel} while remaining ones are solved by \AlgName{Hashing}.}
		\label{fig:recmultisec_hybridMap_res}
	\end{subfigure}\hspace{5mm}%
	\begin{subfigure}[t]{\scaleFactorSmall\textwidth}
		\centering
		\includegraphics[angle=-0, width=\imgScaleFactor\textwidth]{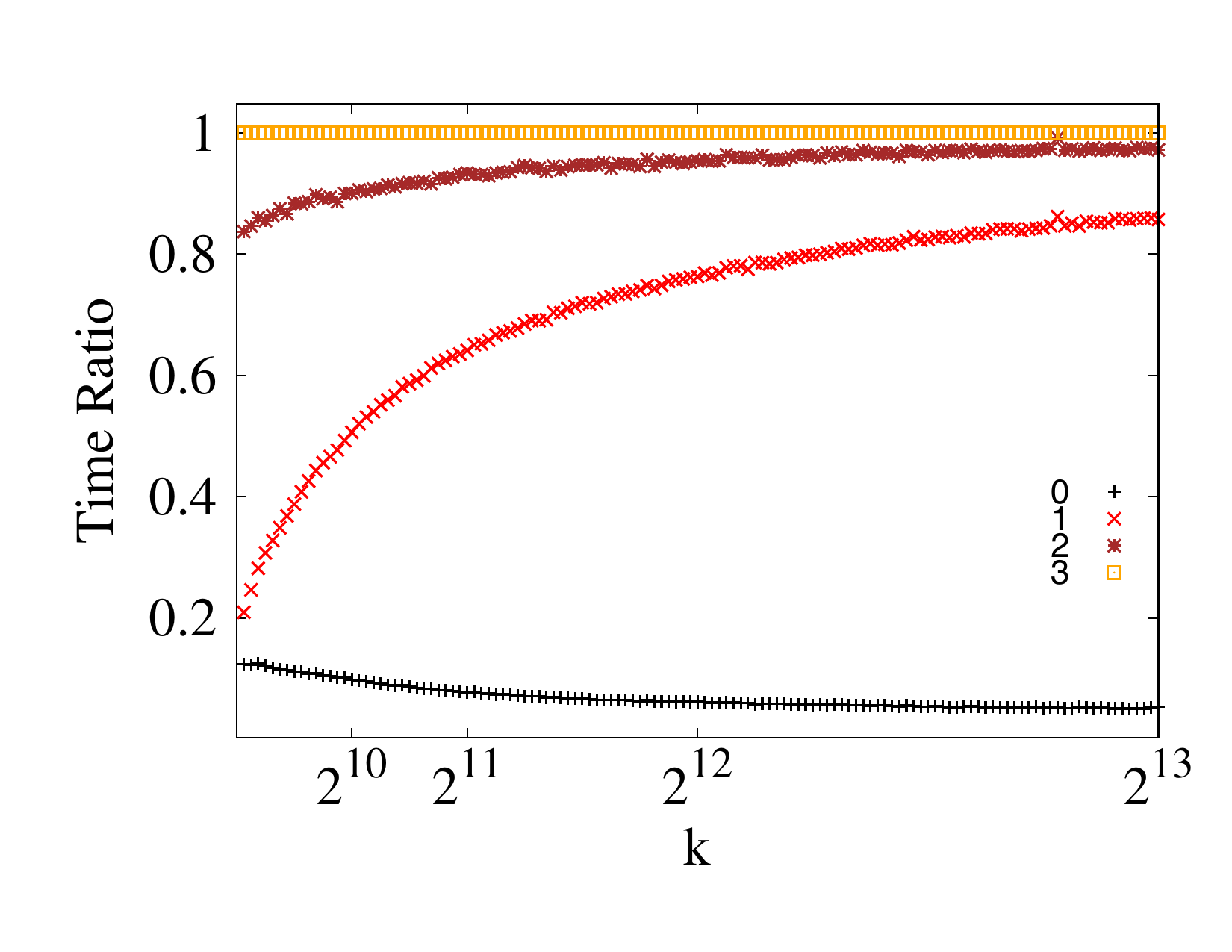}
		\vspace*{\capPosition}
		\caption{Running time ratio plot for number of top layers solved by \AlgName{Fennel} while remaining ones are solved by \AlgName{Hashing}.}
		\label{fig:recmultisec_hybridMap_tim}
	\end{subfigure}
	\begin{subfigure}[t]{\scaleFactorSmall\textwidth}
		\centering
		\includegraphics[angle=-0, width=\imgScaleFactor\textwidth]{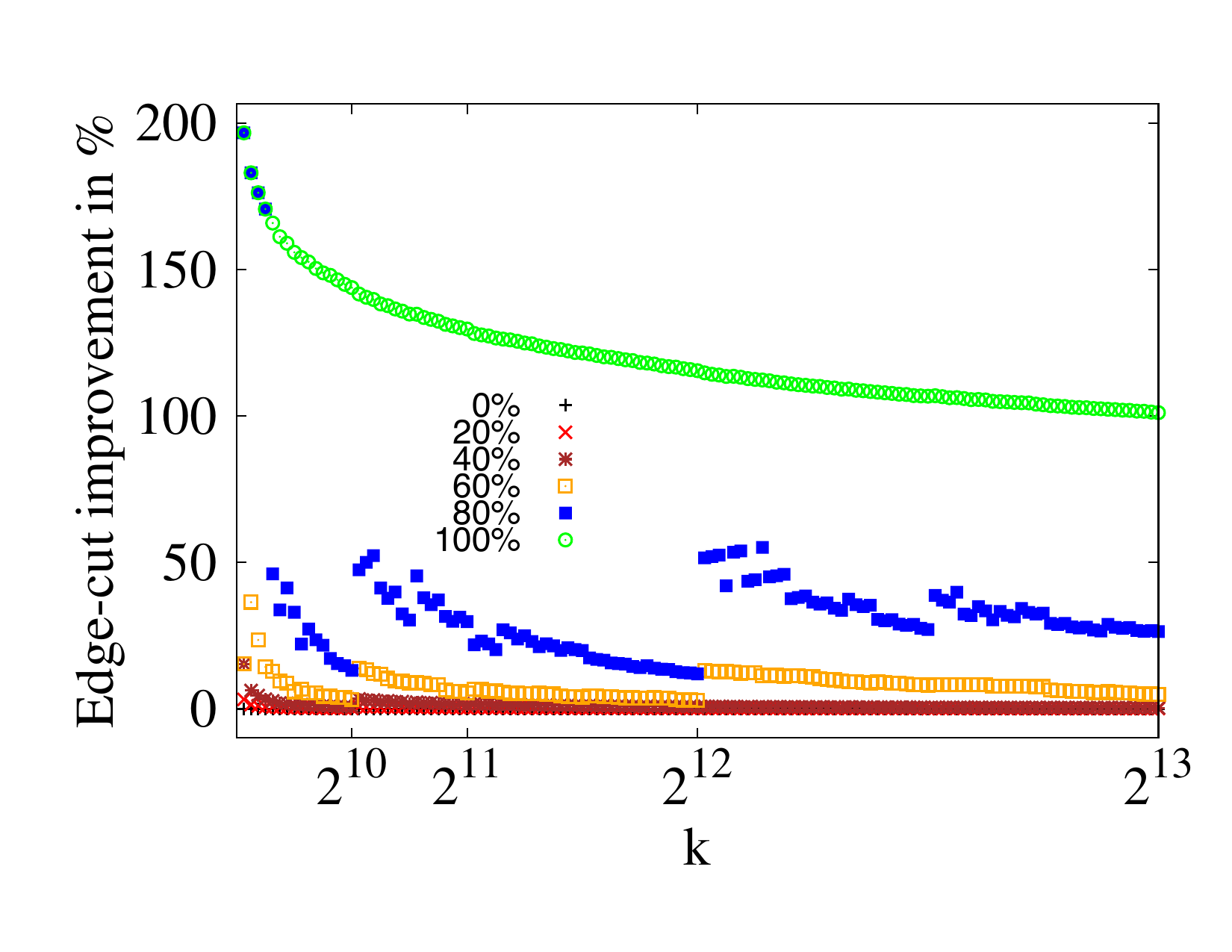}
		\vspace*{\capPosition}
		\caption{Edge-cut improvement plot for percentage of top layers solved by \AlgName{Fennel} while remaining ones are solved by \AlgName{Hashing}.}
		\label{fig:recmultisec_hybridPar_res}
	\end{subfigure}\hspace{5mm}%
	\begin{subfigure}[t]{\scaleFactorSmall\textwidth}
		\centering
		\includegraphics[angle=-0, width=\imgScaleFactor\textwidth]{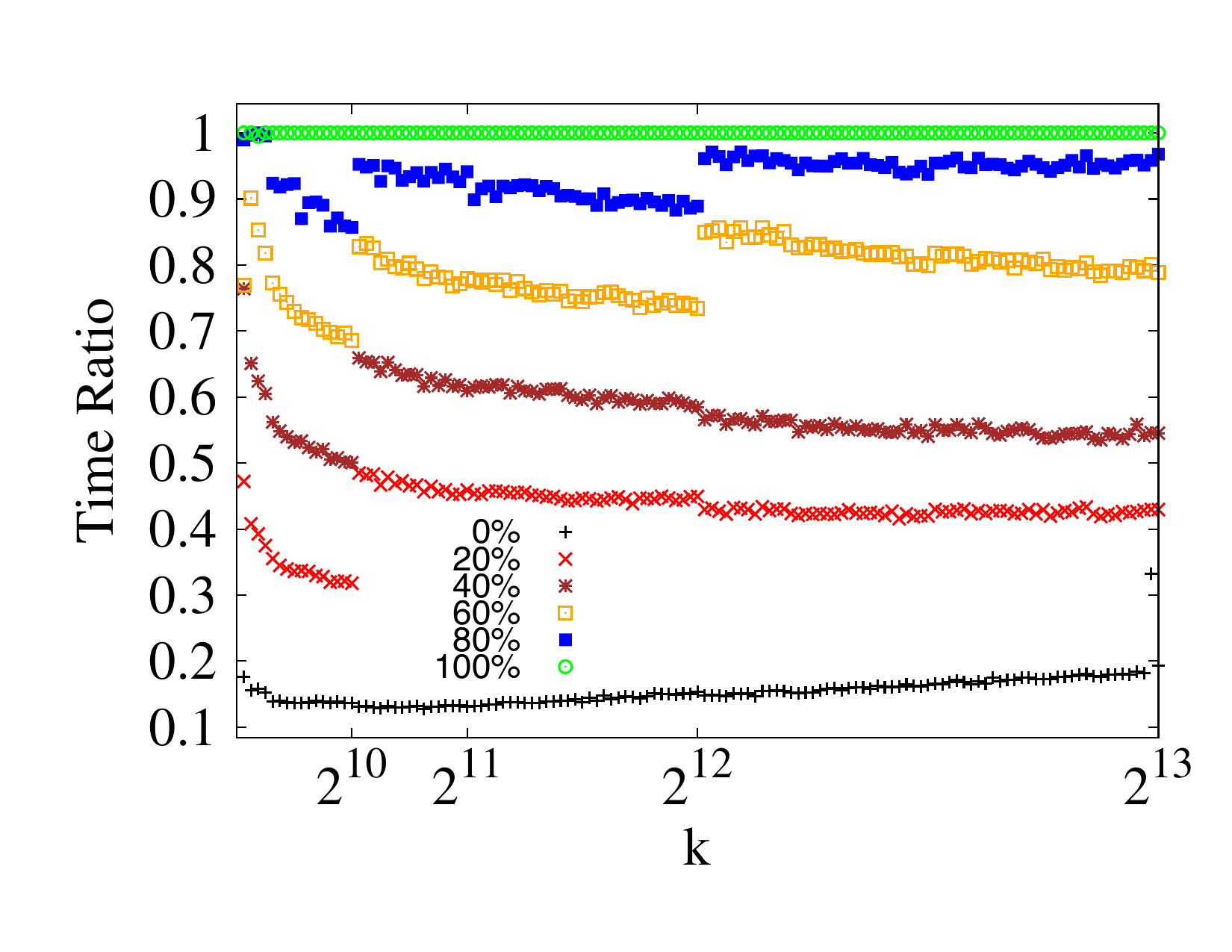}
		\vspace*{\capPosition}
		\caption{Running time ratio plot for percentage of top layers solved by \AlgName{Fennel} while remaining ones are solved by \AlgName{Hashing}.}
		\label{fig:recmultisec_hybridPar_tim}
	\end{subfigure}

	\caption{Results for tuning and exploration experiments. Higher is better for process map / edge-cut improvement plots. Lower is better for running time ratio plots.}
	\label{fig:recmultisec_tuning_plots2}
\end{figure} 

\fi{}

\ifFull
\subsubsection{Parameter Study.}
\label{subsec:recmultisec_Parameter Study}

Since \AlgName{Fennel} is a state-of-the-art streaming algorithm for minimizing edge-cut, we combine it with our multi-section algorithm throughout all experiments.
For completeness, we mention that \AlgName{Online Recursive Multi-Section} produces on average $3.89\%$ better process mapping and $0.19\%$ better edge-cut when coupled with \AlgName{Fennel} than when coupled with \AlgName{LDG}.
While running \AlgName{Fennel} within \AlgName{Online Recursive Multi-Section}, we numerically approximate the required square root computations by a very fast operation in order to make the computations even faster.
We now present experiments to tune \AlgName{Online Recursive Multi-Section} and explore some of its parameters.
In our tuning experiments, we start with a baseline configuration consisting of the following parameters: 
(i) the \AlgName{Fennel} parameter $\alpha$ is computed independently for each subproblem;
(ii) no layers of the multi-section are solved with \AlgName{Hashing}; and 
(iii) a single thread is used.
Then, each experiment focuses on a single parameter of the algorithm while all the other parameters are kept invariable.
After each tuning experiment, we update the baseline to integrate the best found parameter.
In the \emph{exploration} experiments, we do not make decisions about the algorithm, but just evaluate the different possibilities offered by its free parameters.
Unless mentioned otherwise, we run the experiments of this section over all tuning graphs from Table~\ref{tab:heistream_graphs}.

\paragraph*{Tuning.}

We begin by investigating how the size of the \emph{base} $b$ affects edge-cut and running time.
In Figures~\ref{fig:recmultisec_basePar_res}~and~\ref{fig:recmultisec_basePar_tim}, we present results for $b\in\{2,3,4,8,16\}$.
As Figure~\ref{fig:recmultisec_basePar_res} shows, the larger the base the better the edge-cut.
This happens because the larger the base, the closer the assignment decisions are to the decisions that vanilla \AlgName{Fennel} would make.
At~the limit, a base equal to~$k$ makes \AlgName{Online Recursive Multi-Section} operate exactly as vanilla \AlgName{Fennel}.
In Figure~\ref{fig:recmultisec_basePar_tim}, note that the running time is not a monotonic function of the base value as is the edge-cut.
In particular, the approaches using $b=3,4$ present the overall smallest running times among all configurations.
This behavior indicates that the running time of \AlgName{Online Recursive Multi-Section} is dominated by the operation of computing scores for blocks and sub-blocks.
Assuming a regular multi-section tree, \AlgName{Online Recursive Multi-Section} scores $b\log_{b}{k}$ blocks per loaded node.
For integer constants~$k,b \geq 2$, we have $\argmin_{b} {b\log_{b}{k}} = 3$.
A complementary reason is needed to explain why $b=4$ presented even better a running time than did $b=3$.
This happens because the blocks of each partitioning subproblem are stored in a contiguous array, hence the larger base are favorable for cache efficiency.
The same reason clarifies why the approach with $b=8$ is faster than the approach with $b=2$ even though $2\log_{2}{k} < 8\log_{8}{k}$.
In light of the discussed results, we decide for $b=4$, which is the fastest approach and still produces a better edge-cut than the configurations with~$b~<~4$.

We now look at the parameter $\alpha$ associated with the \AlgName{Fennel} objective function.
The baseline approach consists of simply applying the \emph{original}~$\alpha$ value associated with the $k$-way partitioning problem (\ie $\alpha = \sqrt{k} \frac{m}{ n^{3/2}}$) while computing block scores in \AlgName{Online Recursive Multi-Section}.
Recall that we showed in Sections~\ref{subsec:recmultisec_subproblems}~and~\ref{subsec:recmultisec_General Partitioning} an \emph{adapted} way to compute~$\alpha$ independently for each partitioning subproblem of the multi-section algorithm.
We compare these two approaches in Figures~\ref{fig:recmultisec_alphaMap_res}~and~\ref{fig:recmultisec_alphaPar_res}.
Figure~\ref{fig:recmultisec_alphaMap_res} shows that the adapted approach is the best one for the process mapping objective.
This happens because each partitioning subproblem in the multi-section represents an actual subproblem of the process mapping.
Hence, computing~$\alpha$ independently for each subproblem directly improves the overall objective function.
This can be also seen from another perspective.
The higher the layer of a partitioning subproblem in the multi-section, the more its edge-cut impacts the communication cost.
Note that the \emph{adapted}~$\alpha$ is favorable for this phenomenon:
The higher the layer of the multi-section, the smaller~$\alpha$ becomes, which indirectly increases the weight of the edge-cut on the \AlgName{Fennel} objective function.
On the other hand, Figure~\ref{fig:recmultisec_alphaPar_res} shows that both approaches compute comparable edge-cut values on average. 
This happens because, when there is no given communication hierarchy, the subproblems contained in the multi-section do not represent real subproblems of the faced $k$-way partitioning.
As a consequence, computing a specific value of~$\alpha$ for each subproblem does not directly improve overall edge-cut, which is the objective of partitioning.
Summing up the results, we decide for the \emph{adapted}~$\alpha$.

\paragraph*{Exploration.}
Next, we evaluate the effect of solving lower layers of the multi-section with \emph{\AlgName{Hashing}}.
For a given communication hierarchy, we let different numbers $h \in \{0,1,2,3\}$ of upper layers from the multi-section be solved with \AlgName{Fennel} while the remaining ones are solved with \AlgName{Hashing}.
We plot results for these experiments in Figures~\ref{fig:recmultisec_hybridMap_res}~and~\ref{fig:recmultisec_hybridMap_tim}.
When no hierarchy is given, we let different percentages $h_\% \in \{0\%, 20\%, 40\%, 60\%, 80\%, 100\%\}$ of upper layers be solved with \AlgName{Fennel} while the remaining ones are solved with \AlgName{Hashing}.
More specifically, given a percentage $h_\%$, the actual number of layers for a given~$k$ is $h=\lceil{h_\%}{\log_4{k}}\rceil$.
Figures~\ref{fig:recmultisec_stateoftheartPar_res}~and~\ref{fig:recmultisec_stateoftheartPar_tim} display results for these experiments.
Both groups of experiments show an expected trade-off:
(i) the more the layers solved with \AlgName{Fennel}, the better the edge-cut or process mapping objective;
(ii) the fewer the layers solved with \AlgName{Fennel}, the better the running time.
Although the trends are analogous, note that the edge-cut degenerates faster than the process mapping objective as more hierarchy layers are solved with \AlgName{Hashing}.
On one hand, solving only $h=2$ or $h=1$ layers of the multi-section with \AlgName{Fennel} respectively generates $2.5\%$ and $21.5\%$ worse communication cost on average than solving all the three layers with \AlgName{Fennel}.
This happens because, when a specific communication hierarchy is given as input, each partitioning subproblem of the multi-section corresponds to the decomposition of a real module into its submodules.
As a consequence,solving the upper layers with low edge-cut is already sufficient to produce a good total communication cost, since the higher the layer of the process mapping hierarchy, the more the edge-cut contained in it impacts the total communication cost.
Hence, the use of \AlgName{Hashing} to partition lower layers of \AlgName{Online Recursive Multi-Section} can be reasonable for the process mapping problem. 
On the other hand, solving only $80\%$ of the layers of the multi-section with \AlgName{Fennel} already cuts $65.4\%$ more edges on average that solving $100\%$ of the layers with \AlgName{Fennel}.
In other words, this shows that the random behavior of \AlgName{Hashing} has a significant impact on the total edge-cut even if it is only used in $20\%$ of the multi-section layers.
Finally, note that some oscillation is observed in both figures~\ref{fig:recmultisec_stateoftheartPar_res}~and~\ref{fig:recmultisec_stateoftheartPar_tim} for the configurations in which between $20\%$ and $80\%$ of the upper layers are solved with \AlgName{Fennel}.
This oscillation is due to the step shape of the number of layers $h=\lceil{h_\%}{\log_4{k}}\rceil$ solved by \AlgName{Fennel} for these configurations.

\fi{}

\begin{figure}[p]
	\captionsetup[subfigure]{justification=centering}
	\centering
	\begin{subfigure}[]{\scaleFactorSmall\textwidth}
		\centering
		\includegraphics[angle=-0, width=\imgScaleFactorSmall\textwidth]{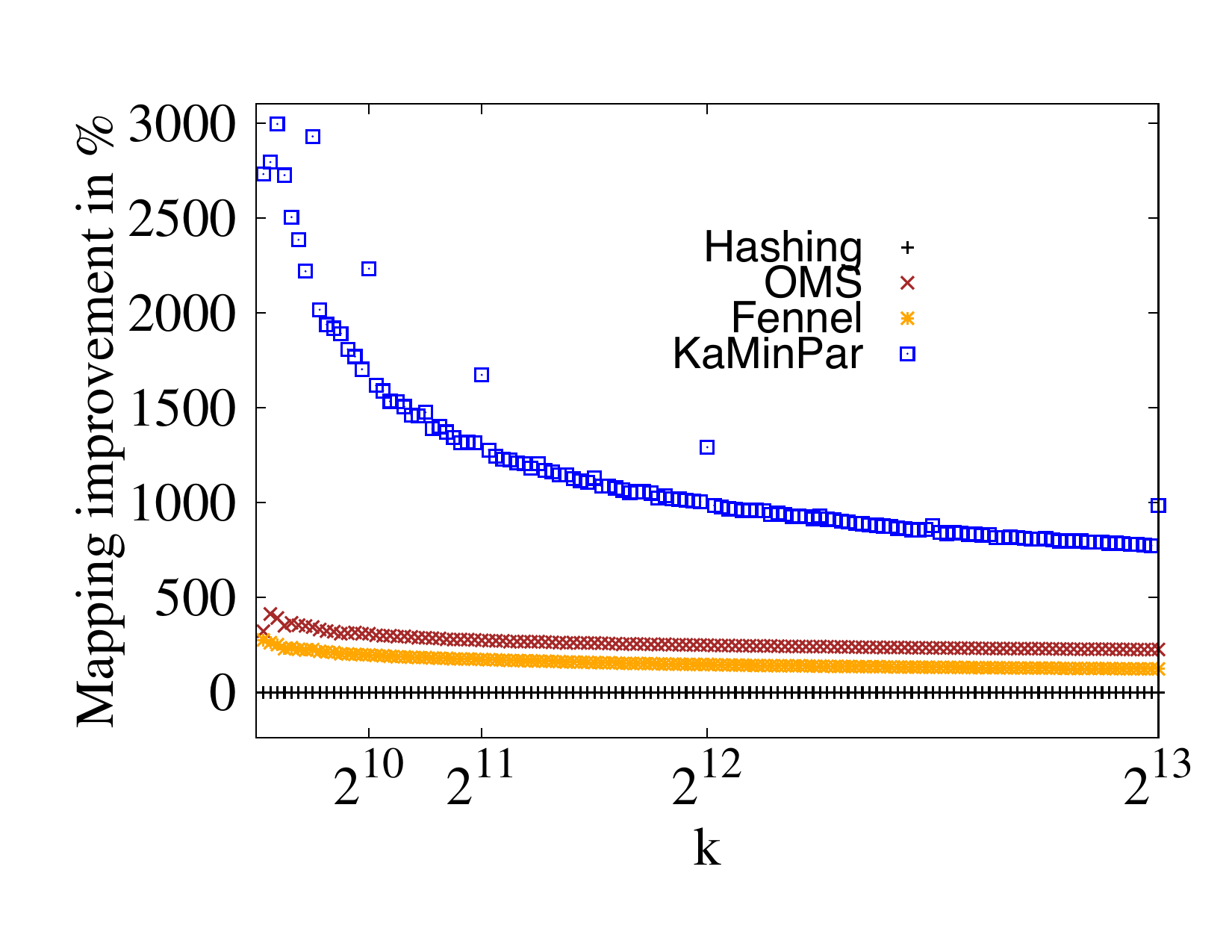}
		\vspace*{\capPositionSmall}
		\caption{Mapping improvement over \AlgName{Hashing}.}
		\label{fig:recmultisec_stateoftheartMap_res}
	\end{subfigure}\hspace{2mm}%
	\begin{subfigure}[]{\scaleFactorSmall\textwidth}
		\centering
		\includegraphics[angle=-0, width=\imgScaleFactorSmall\textwidth]{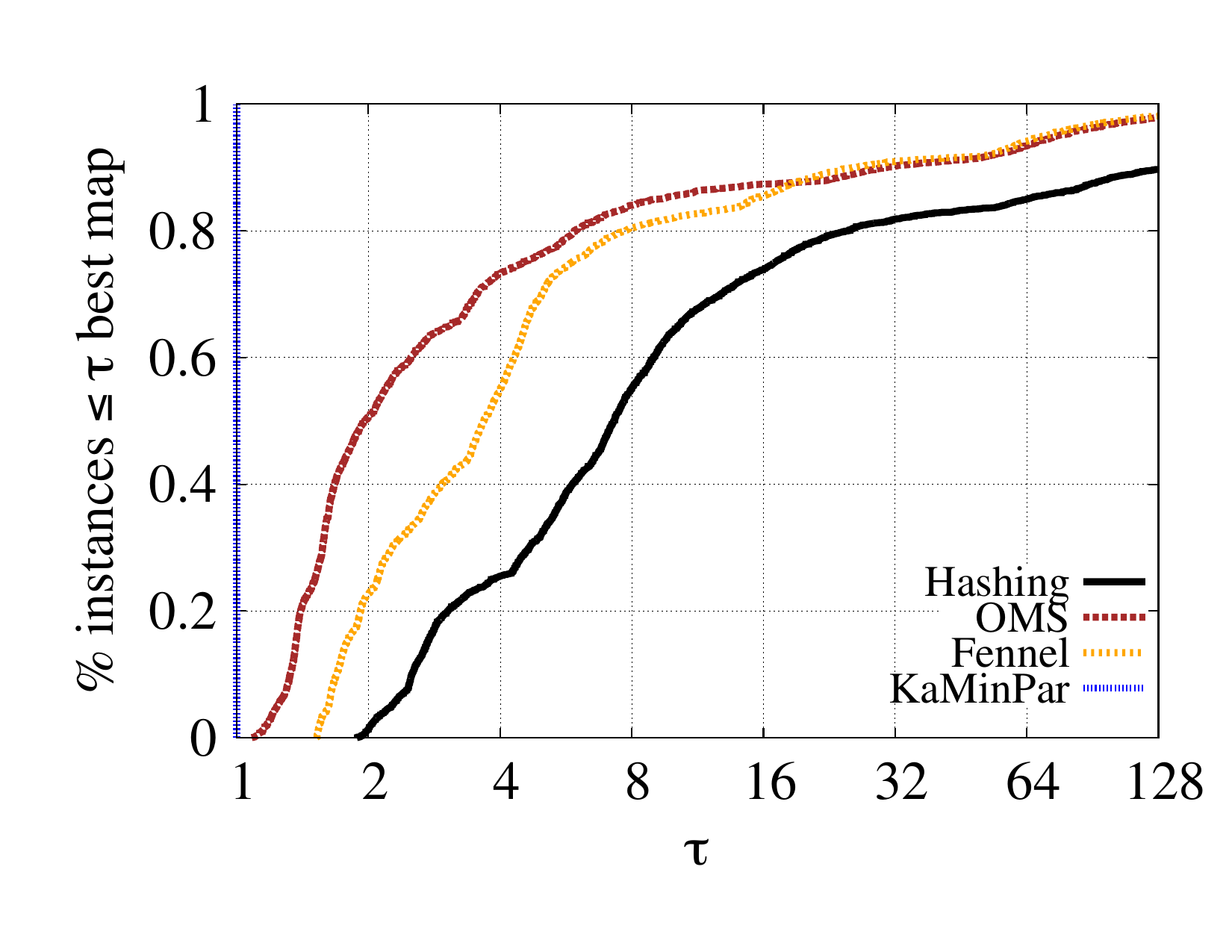}
		\vspace*{\capPositionSmall}
		\caption{Mapping performance profile.}
		\label{fig:recmultisec_stateoftheartMap_respp}
	\end{subfigure}
	\vspace*{\afterCapSmall}
	\begin{subfigure}[]{\scaleFactorSmall\textwidth}
		\centering
		\includegraphics[angle=-0, width=\imgScaleFactorSmall\textwidth]{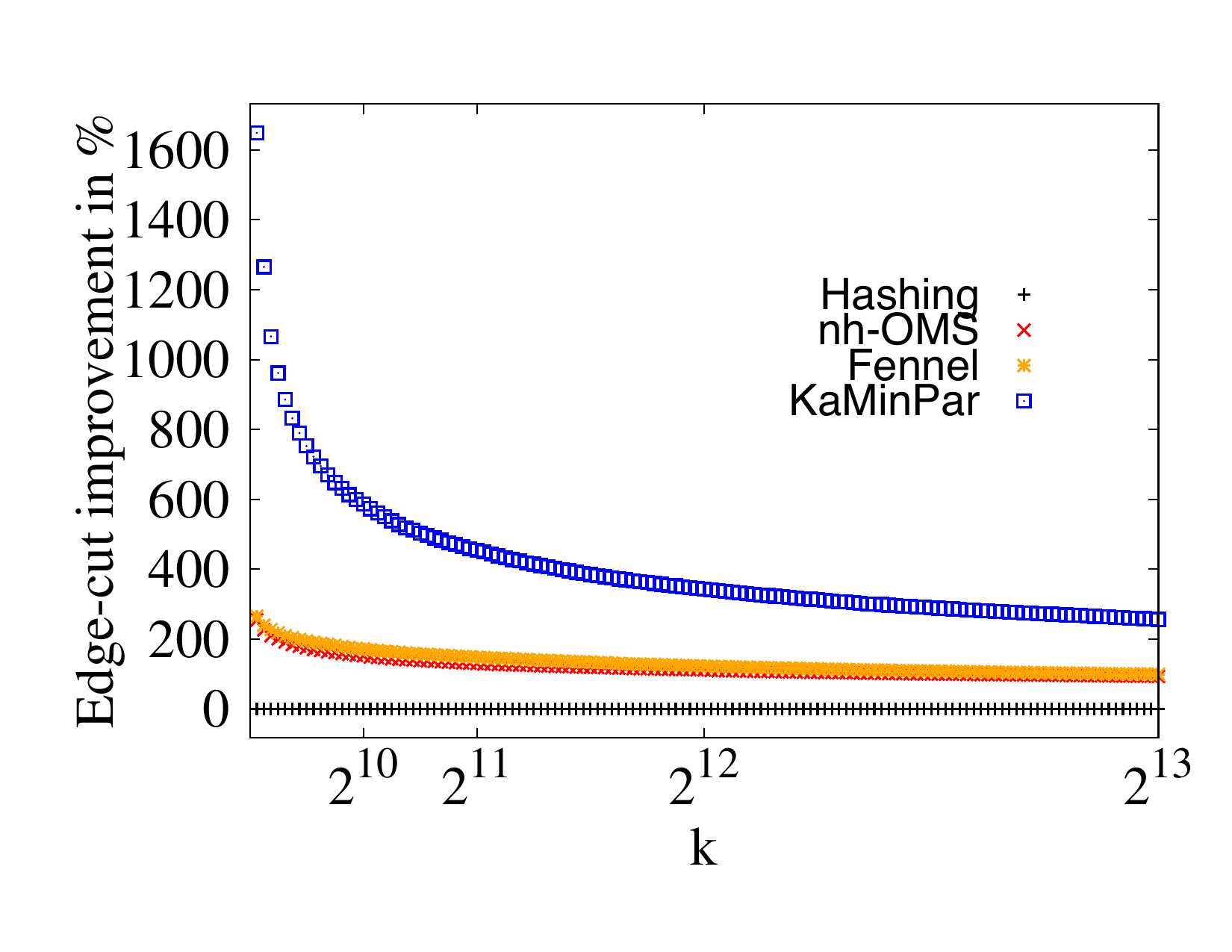}
		\vspace*{\capPositionSmall}
		\caption{Edge-cut improvement over \AlgName{Hashing}.}
		\label{fig:recmultisec_stateoftheartPar_res}
	\end{subfigure}\hspace{2mm}%
	\begin{subfigure}[]{\scaleFactorSmall\textwidth}
		\centering
		\includegraphics[angle=-0, width=\imgScaleFactorSmall\textwidth]{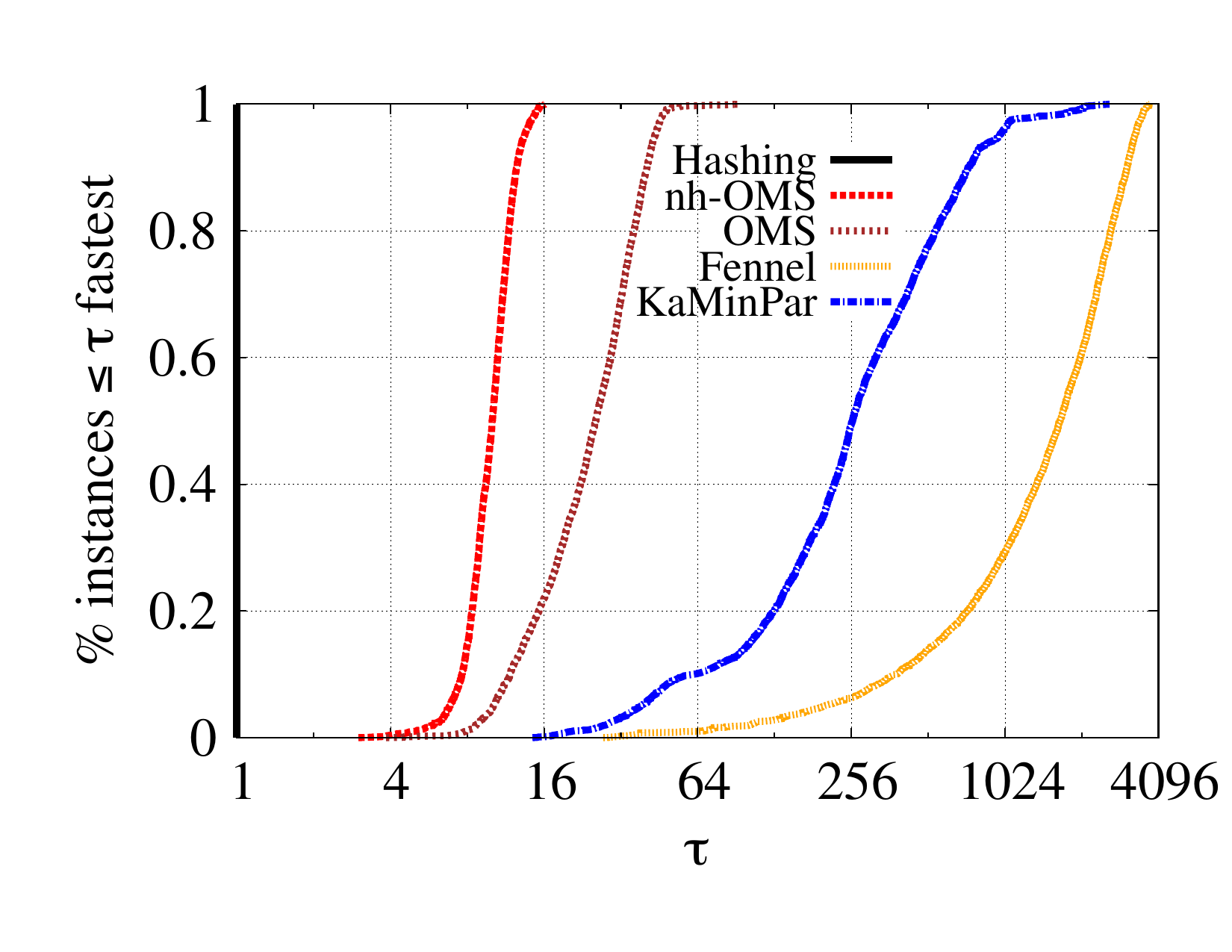}
		\vspace*{\capPositionSmall}
		\caption{Running time performance profile.}
		\label{fig:recmultisec_stateoftheartPar_timpp}
	\end{subfigure}\hspace{2mm}%
	\begin{subfigure}[]{\scaleFactorSmall\textwidth}
		\centering
		\includegraphics[angle=-0, width=\imgScaleFactorSmall\textwidth]{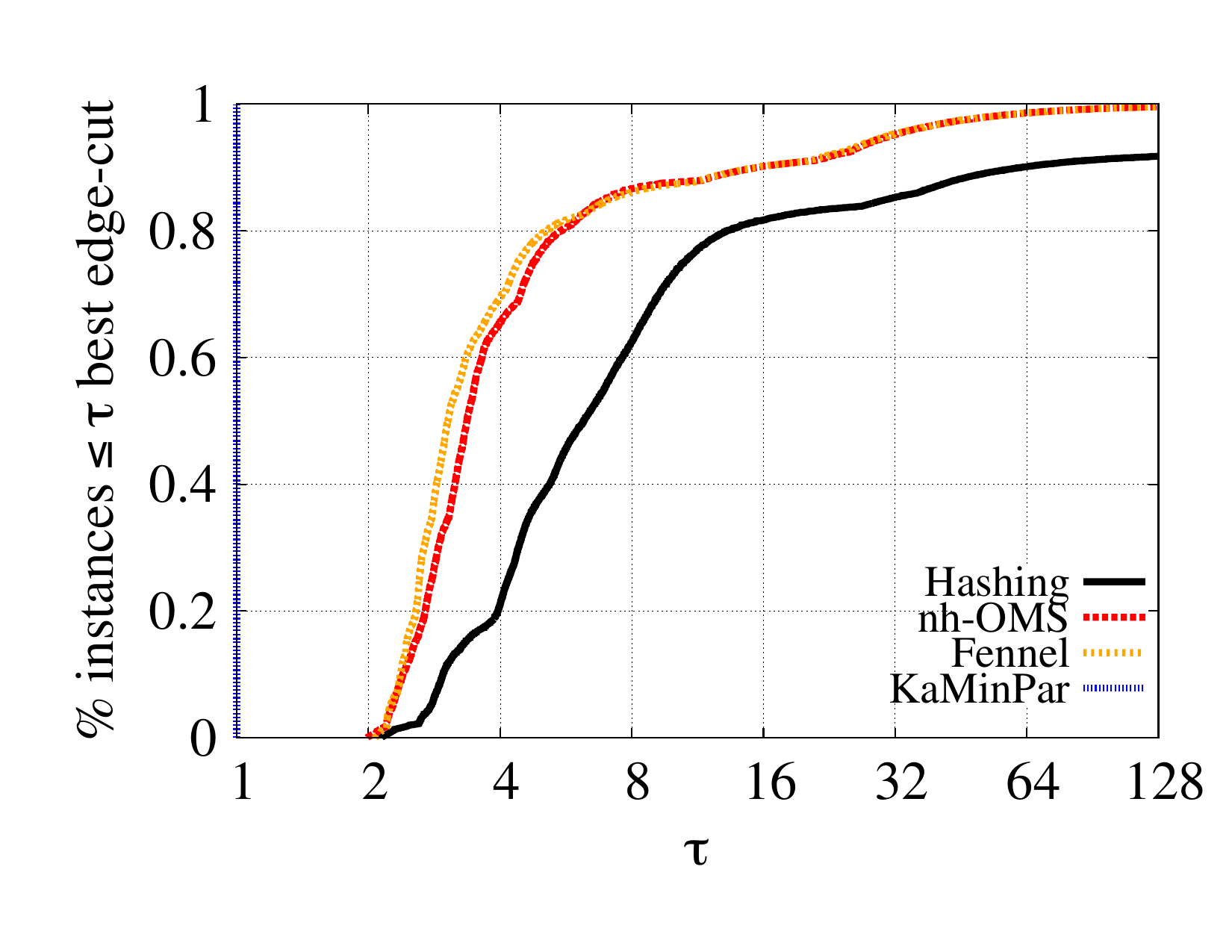}
		\vspace*{\capPositionSmall}
		\caption{Edge-cut performance profile.}
		\label{fig:recmultisec_stateoftheartPar_GPpp}
	\end{subfigure}\hspace{2mm}%
	\begin{subfigure}[]{\scaleFactorSmall\textwidth}
		\centering
		\includegraphics[angle=-0, width=\imgScaleFactorSmall\textwidth]{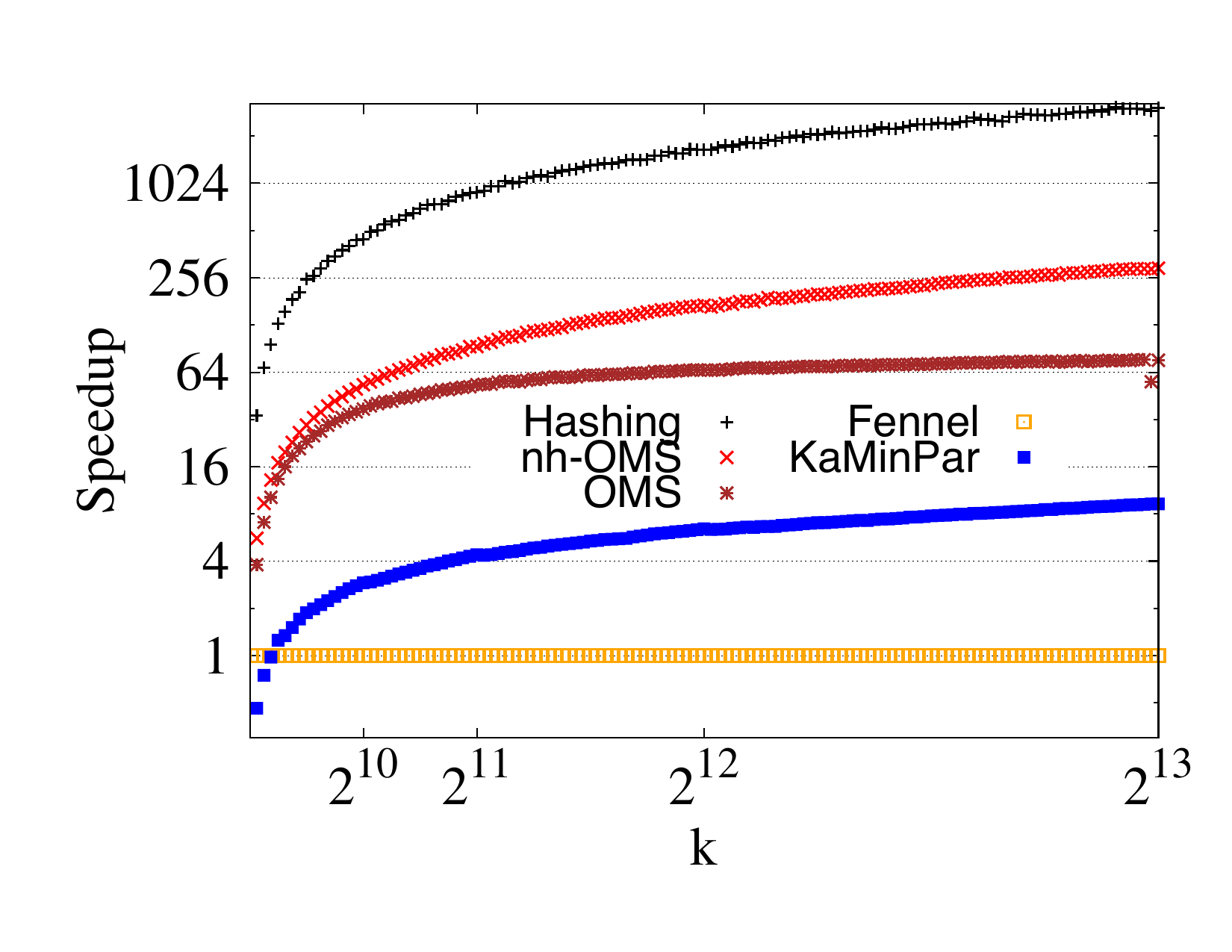}
		\vspace*{\capPositionSmall}
		\caption{Total speedup over \AlgName{Fennel}.}
		\label{fig:recmultisec_stateoftheartPar_tim}
	\end{subfigure}

	\vspace*{.55cm}
	\caption{Comparison against the state-of-the-art. Higher is better.}
	\label{fig:recmultisec_state-of_the_art}
	
\end{figure}

\begin{figure}[p]
	\captionsetup[subfigure]{justification=centering}
	\centering
		\begin{subfigure}[t]{\scaleFactorSmall\textwidth}
			\centering
			\includegraphics[angle=-0, width=\imgScaleFactorSmall\textwidth]{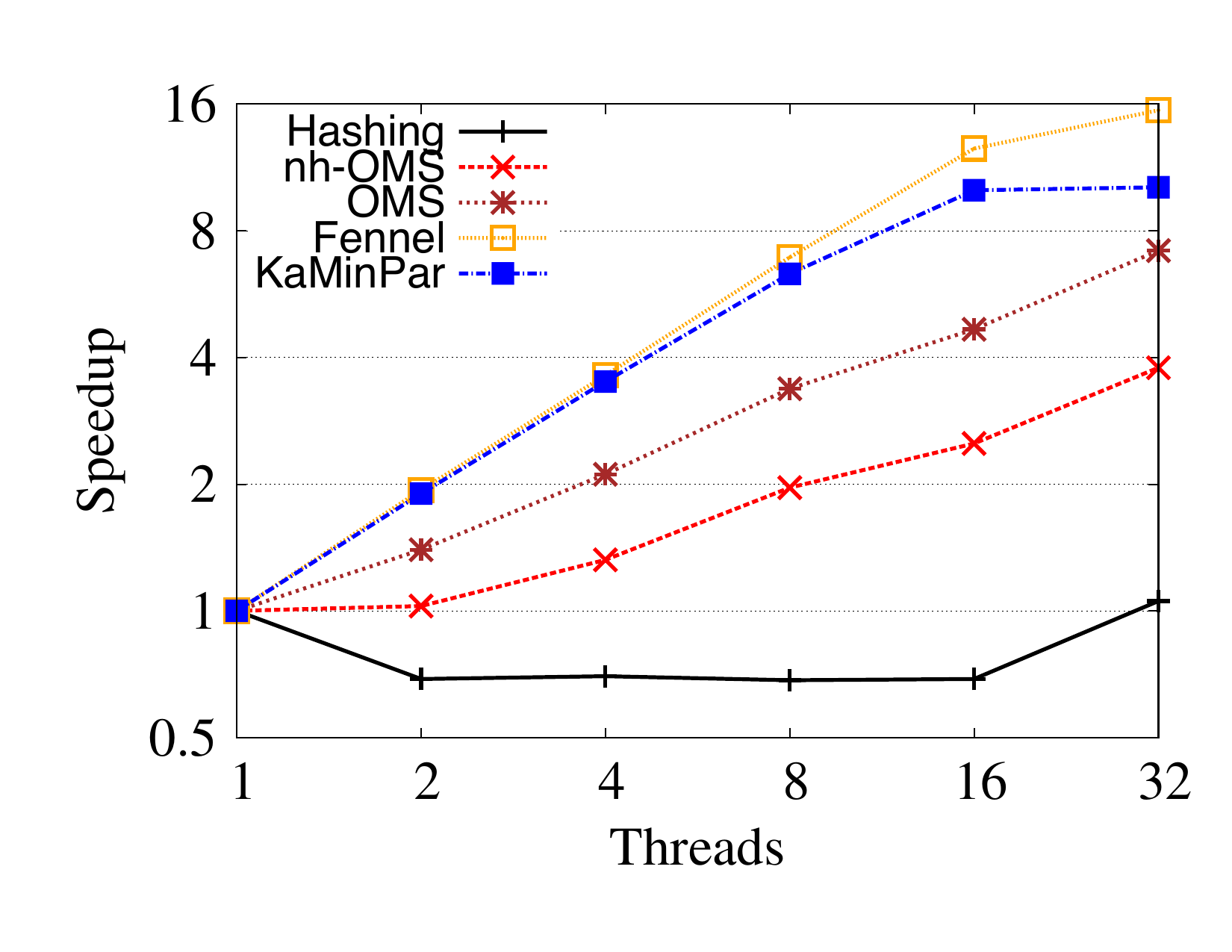}
			\vspace*{\capPositionSmall}
			\caption{Speedup versus number of used threads for graph soc-orkut-dir.}
			\label{fig:recmultisec_soc-orkut-dir_speedup}
		\end{subfigure}\hspace{2mm}%
		\begin{subfigure}[t]{\scaleFactorSmall\textwidth}
			\centering
			\includegraphics[angle=-0, width=\imgScaleFactorSmall\textwidth]{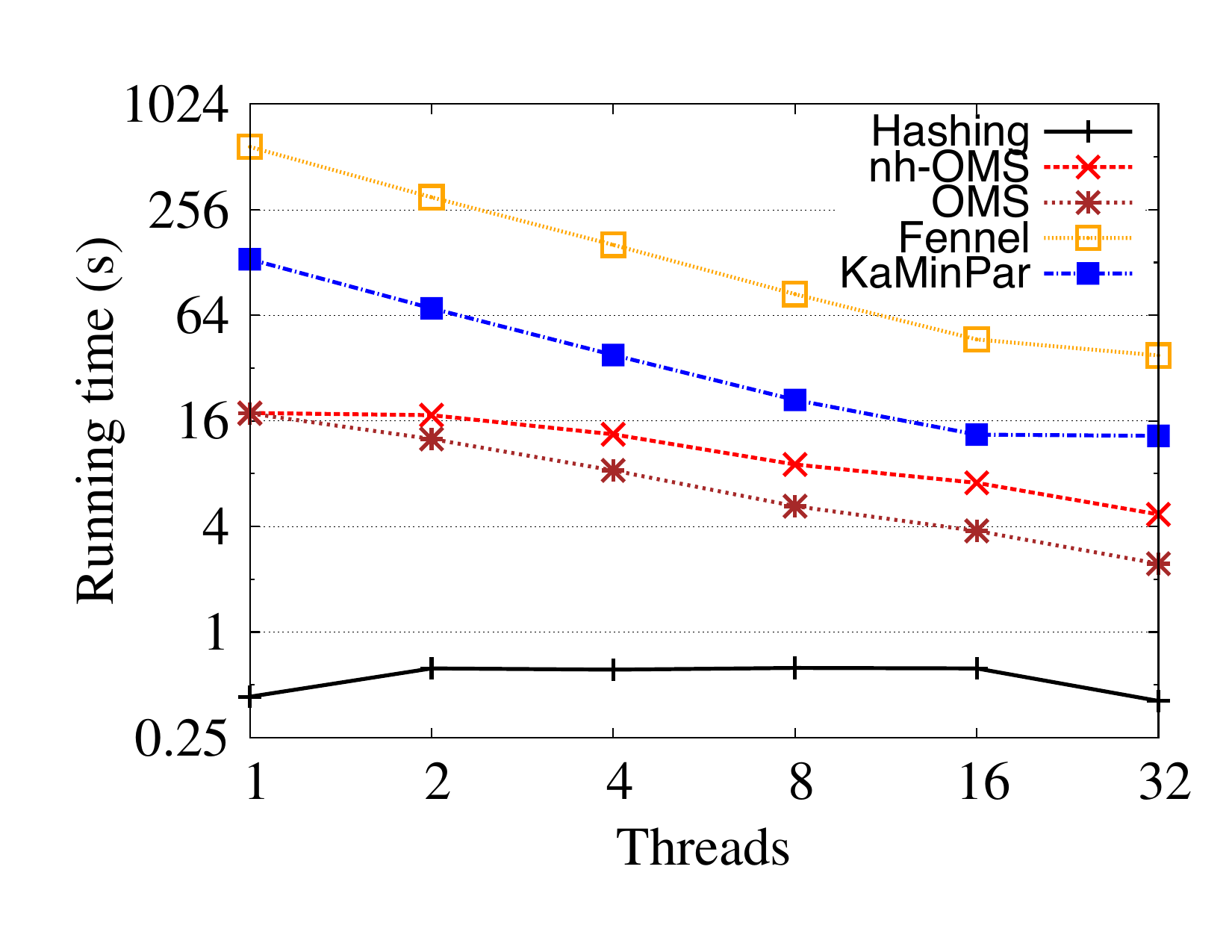}
			\vspace*{\capPositionSmall}
			\caption{Running time versus number of used threads for graph soc-orkut-dir.}
			\label{fig:recmultisec_soc-orkut-dir_times}
		\end{subfigure}\hspace{2mm}%
		\begin{subfigure}[t]{\scaleFactorSmall\textwidth}
			\centering
			\includegraphics[angle=-0, width=\imgScaleFactorSmall\textwidth]{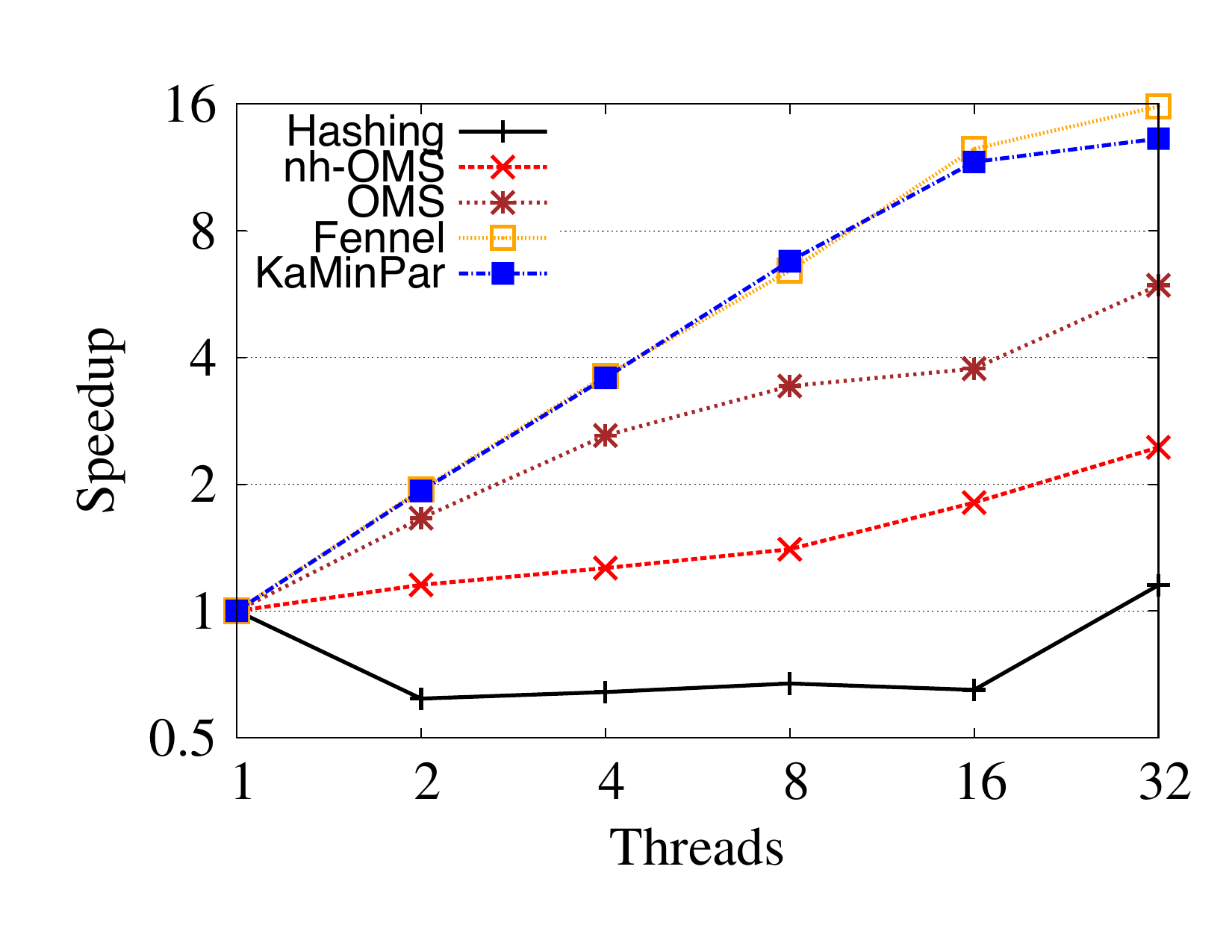}
			\vspace*{\capPositionSmall}
			\caption{Speedup versus number of used threads for graph HV15R.}
			\label{fig:recmultisec_HV15R_speedup}
		\end{subfigure}
		\vspace*{\afterCapSmall}
		\begin{subfigure}[t]{\scaleFactorSmall\textwidth}
			\centering
			\includegraphics[angle=-0, width=\imgScaleFactorSmall\textwidth]{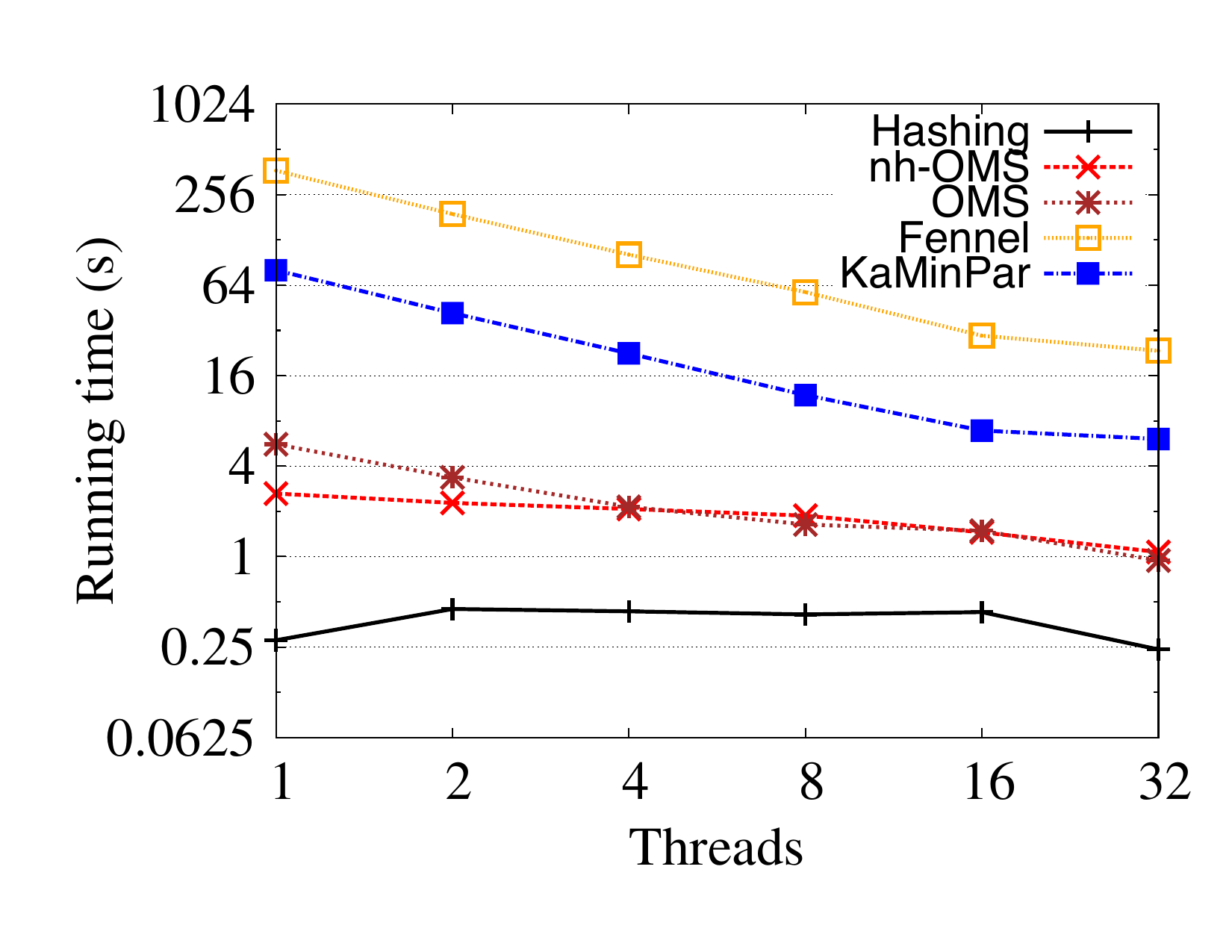}
			\vspace*{\capPositionSmall}
			\caption{Running time versus number of used threads for graph HV15R.}
			\label{fig:recmultisec_HV15R_times}
		\end{subfigure}\hspace{2mm}%
		\begin{subfigure}[t]{\scaleFactorSmall\textwidth}
			\centering
			\includegraphics[angle=-0, width=\imgScaleFactorSmall\textwidth]{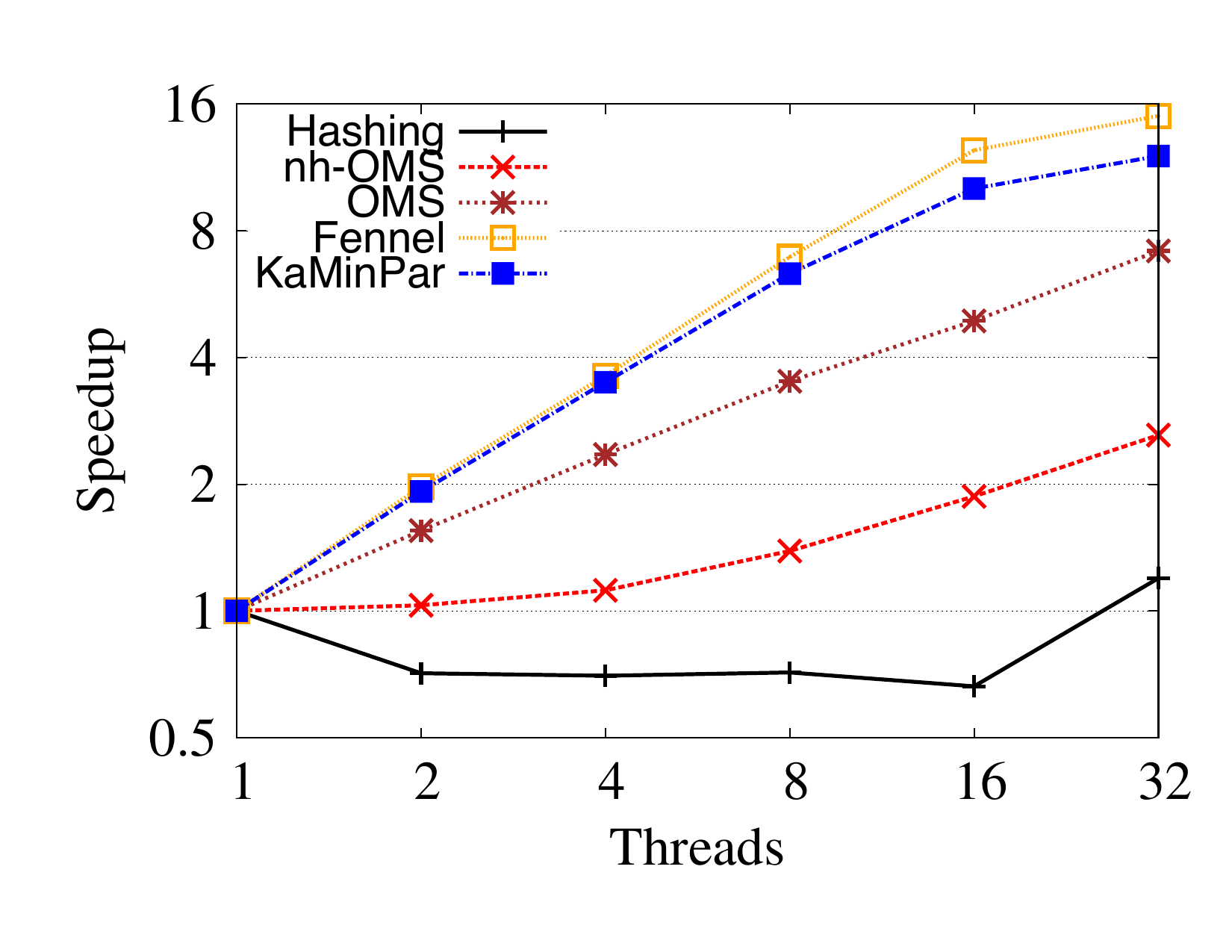}
			\vspace*{\capPositionSmall}
			\caption{Speedup versus number of used threads for graph soc-LiveJournal1.}
			\label{fig:recmultisec_soc-LiveJournal1_speedup}
		\end{subfigure}\hspace{2mm}%
		\begin{subfigure}[t]{\scaleFactorSmall\textwidth}
			\centering
			\includegraphics[angle=-0, width=\imgScaleFactorSmall\textwidth]{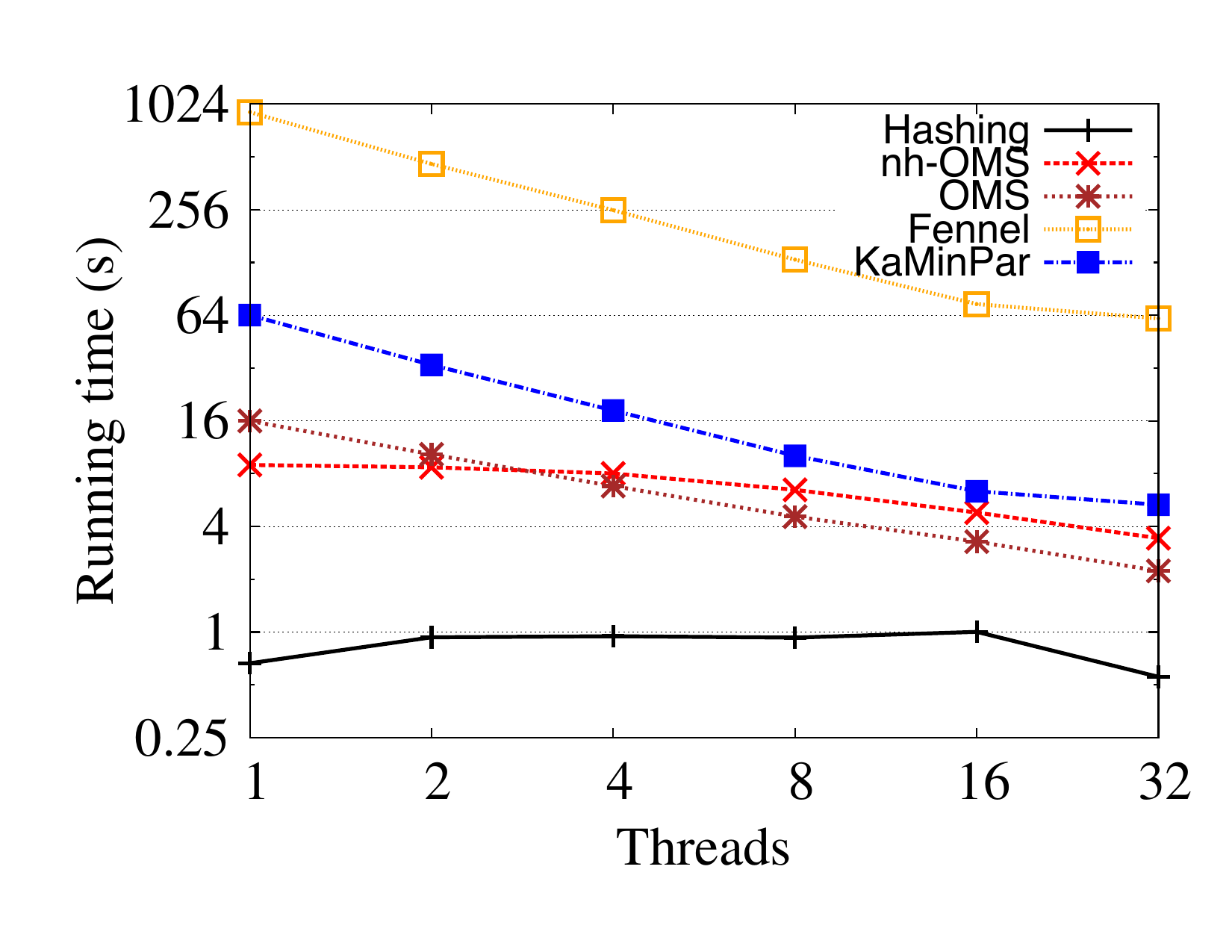}
			\vspace*{\capPositionSmall}
			\caption{Running time versus number of used threads for graph soc-LiveJournal1.}
			\label{fig:recmultisec_soc-LiveJournal1_times}
		\end{subfigure}

	\vspace*{.45cm}
	\caption{Speedup and time comparison for $k=8192$. Higher is better for speedup. Lower is better for time.}
	\label{fig:recmultisec_state-of_the_art_parallel}
	
	\vspace*{-.5cm}
\end{figure}

\subsubsection{Parameter Study}
\label{subsec:Parameter Study}
We performed extensive tuning experiments using the graphs from the tuning set in Table~\ref{tab:heistream_graphs}. 
We briefly summarize the main results here. %
\AlgName{Online Recursive Multi-Section} produces on average $3.89\%$ better mapping and $0.19\%$ better edge-cut when coupled with \AlgName{Fennel} than when coupled with \AlgName{LDG}.
Hence, we use \AlgName{Fennel} as our scoring function.
Computing adapted values of $\alpha$ for each partitioning subproblem is superior than using the default value of $\alpha$ of the original $k$-way partitioning. 
Particularly, it is on average $3.1\%$ faster while producing $9.7\%$ better mapping and cutting roughly the same number of edges. 
Hence, our algorithm uses \emph{adapted} $\alpha$ values.
When no communication hierarchy is given, using the \emph{base} $b=4$ to build the multi-section tree is the fastest configuration overall.  
Using $b=4$, our algorithm is $16.7\%$ faster and cuts $3.2\%$ fewer edges than using $b=2$.
Hence, our algorithm uses \emph{base} $b=4$.
From now on, we refer to our \AlgName{Online Recursive Multi-Section} algorithm as \emph{\AlgName{OMS}} when a communication hierarchy is given and \emph{\AlgName{nh-OMS}} otherwise.

\subsubsection{State-of-the-Art}
\label{subsec:recmultisec_State-of-the-Art}

In this section, we show experiments in which we compare \AlgName{Online Recursive Multi-Section} against the current state-of-the-art.

\paragraph*{Solution Quality (Process Mapping).}
We start by looking at the mapping quality produced by \AlgName{OMS}. 
In Figure~\ref{fig:recmultisec_stateoftheartMap_res}, we plot the average mapping improvement over \AlgName{Hashing}. 
\AlgName{KaMinPar} produces the best mapping overall, with an average improvement of $1117\%$ over \AlgName{Hashing}.
Among the instances \AlgName{IntMap} could solve, it improves on average $7.6\%$ over \AlgName{KaMinPar}. 
\AlgName{IntMap} produces the best overall mapping in $67\%$ of the cases it could solve. 
Note that this is in line with previous works in the area of graph partitioning, i.e. streaming algorithms typically compute worse solutions than internal memory algorithms that always have access to the whole graph. 
\AlgName{OMS} has an average improvement of $257.8\%$ over \AlgName{Hashing}, while \AlgName{Fennel} improves $153\%$ on average over \AlgName{Hashing}.
In a direct comparison, \AlgName{OMS} produces on average $41\%$ better mappings than \AlgName{Fennel}.
In Figure~\ref{fig:recmultisec_stateoftheartMap_respp}, we plot the mapping performance profile.
In the plot, \AlgName{KaMinPar} produces the best overall mapping for all instances.
We conclude that \AlgName{OMS} produces the best mapping among the streaming~competitors.

\paragraph*{Solution Quality (Edge-Cut).}
Next we look at the edge-cut of \AlgName{nh-OMS}.
In Figure~\ref{fig:recmultisec_stateoftheartPar_res}, we plot the edge-cut improvement over \AlgName{Hashing}.
\AlgName{KaMinPar} produces the best overall edge-cut, with an average improvement of $3024\%$ over \AlgName{Hashing}.
\AlgName{IntMap} cuts $20\%$ more edges on average than \AlgName{KaMinPar} for the instances it solved. 
Among the streaming algorithms, \AlgName{Fennel} and \AlgName{nh-OMS} produce improvements of respectively $130.5\%$ and $118.2\%$ on average over \AlgName{Hashing}.
In a direct comparison, \AlgName{nh-OMS} cuts on average $5\%$ more edges than \AlgName{Fennel}.
In Figure~\ref{fig:recmultisec_stateoftheartPar_GPpp}, we plot the edge-cut performance profile.
\AlgName{KaMinPar} produces the smallest edge-cut for all instances.
Among the streaming algorithms, \AlgName{Fennel} is slightly better than \AlgName{nh-OMS} and both are distinctly better than \AlgName{Hashing}.

\paragraph*{Running Time.} 
We now investigate the running time of \AlgName{OMS} and \AlgName{nh-OMS}.
In Figure~\ref{fig:recmultisec_stateoftheartPar_tim}, we plot the speedup over \AlgName{Fennel}.
On average, \AlgName{Hashing} is~$1301$~times faster than \AlgName{Fennel}, while \AlgName{nh-OMS} and \AlgName{OMS} are respectively $133$~and~$55.4$ times faster than \AlgName{Fennel}.
In a direct comparison, \AlgName{Hashing} is on average $9.7$ times faster than \AlgName{nh-OMS} and $23.4$ times faster than \AlgName{OMS}.
\AlgName{KaMinPar} comes next with an average speedup of $5.3$ over \AlgName{Fennel}.
In a direct comparison, \AlgName{nh-OMS} and \AlgName{OMS} are respectively $25.1$ and $10.5$ times faster than \AlgName{KaMinPar}.
\AlgName{KaMinPar} is on average $2.5$ times faster than \AlgName{IntMap} for the instances \AlgName{IntMap} could solve. 
In Figure~\ref{fig:recmultisec_stateoftheartPar_timpp}, we plot the running time performance profile.
Note that the running time of \AlgName{nh-OMS} is at most 16 times slower than \AlgName{Hashing} for $100\%$ of the experiments, which is in accordance with Theorem~\ref{theo:recmultisec_expected_running_time_bisec}.
As the third fastest algorithm, \AlgName{OMS} is considerably faster than all the other competitors, including \AlgName{Fennel}.

\paragraph*{Memory Requirements.}
We now look at the memory requirements of the different algorithms. We measured this on three graphs of our collection. In this case, we run stream graphs directly from disk for the streaming algorithms.
Besides being the fastest algorithm, \AlgName{Hashing} needs the least memory overall.
For soc-orkut-dir, HV15R, and soc-LiveJournal1, it respectively needs $17$MB, $13$MB, and $24$MB.
\AlgName{OMS}, \AlgName{nh-OMS}, and \AlgName{Fennel} have comparable consumption, all of which using $19$MB, $14$MB, and $25$MB for the mentioned graphs, respectively.
Finally, \AlgName{KaMinPar} respectively uses $4.1$GB, $4.1$GB, and $1.8$GB, while \AlgName{IntMap} respectively uses $34$GB, $12$GB, and $10$GB.

\begin{table}[t]
	\scriptsize
	\centering
	\setlength{\tabcolsep}{4pt}
	\begin{tabular}{l@{\hskip 30pt}rr@{\hskip 30pt}rr@{\hskip 30pt}rr@{\hskip 30pt}rr@{\hskip 30pt}rr}
		\toprule	
		\multirow{2}{*}{Threads} & \multicolumn{2}{l}{\AlgName{Hashing}} & \multicolumn{2}{l}{\AlgName{nh-OMS}} & \multicolumn{2}{l}{\AlgName{OMS}} & \multicolumn{2}{l}{\AlgName{Fennel}} & \multicolumn{2}{l}{\AlgName{KaMinPar}} \\
		
		& RT            & SU          & RT           & SU          & RT          & SU        & RT           & SU          & RT            & SU           \\
                \midrule
		1                        & 0.49          & 1.0         & 4.8          & 1.0         & 10.7        & 1.0       & 673.6        & 1.0         & 36.9          & 1.0          \\
		2                        & 0.72          & 0.7         & 4.6          & 1.1         & 6.4         & 1.7       & 346.3        & 1.9         & 19.3          & 1.9          \\
		4                        & 0.70          & 0.7         & 3.6          & 1.3         & 3.9         & 2.7       & 184.6        & 3.6         & 10.5          & 3.5          \\
		8                        & 0.72          & 0.7         & 3.0          & 1.6         & 2.6         & 4.1       & 96.3         & 7.0         & 5.8           & 6.4          \\
		16                       & 0.75          & 0.7         & 2.5          & 1.9         & 2.3         & 4.7       & 54.0           & 12.5        & 3.5           & 10.5         \\
		32                       & 0.46          & 1.1         & 1.7          & 2.8         & 1.3         & 8.2       & 44.2         & 15.2        & 3.1           & 11.9 \\
                \bottomrule
		
	\end{tabular}
	\caption{Average time in seconds~(RT) and average speedup~(SU) for $k=8192$.}
	\label{tab:recmultisec_scalability}
	
\end{table}

\subsubsection{Scalability}
\label{subsec:recmultisec_Scalability}

Now we evaluate the scalability of \AlgName{Online Recursive Multi-Section}.
As in Section~\ref{subsec:recmultisec_State-of-the-Art}, we refer to our algorithm as~\emph{\AlgName{OMS}} when a process mapping communication hierarchy is given and~\emph{\AlgName{nh-OMS}} otherwise.
As competitors, we include \AlgName{KaMinPar}, \AlgName{Fennel}, and \AlgName{Hashing}.
For a fair comparison against \AlgName{Fennel} and \AlgName{Hashing}, we implemented them with the same parallelization scheme of our algorithm, \ie a node-centric parallelization.
We do not include \AlgName{IntMap} in these experiments since it cannot run in parallel.
For these experiments, we selected the 12 graphs from our collection which have at least two million nodes and partitioned them into $k=8192$ blocks using all algorithms.

In Figure~\ref{fig:recmultisec_state-of_the_art_parallel}, we plot speedup and running time versus number of threads for the graphs soc-orkut-dir, HV15R, and soc-LiveJournal1.
In Table~\ref{tab:recmultisec_scalability}, we plot the average running time in seconds and speedup over all graphs as a function of the number of threads.
For all graphs, \AlgName{Hashing} presents the worst scalability, with speedups smaller than 1.
Although \AlgName{Hashing} is theoretically an embarrassingly parallel algorithm, it has two limitations:
(i) it is extremely fast, hence the overhead of the parallelization is too large in comparison to the overall running time and
(ii) it neither reuses data nor accesses sequential positions in memory, so there are almost no cache hits.
On the other hand, \AlgName{Fennel} presents the best scalability.
Differently than \AlgName{Hashing}, it is rather slow but reuses data, \eg the assignments of previous nodes to blocks, and goes through the array of blocks in order to compute their score. 
Following \AlgName{Fennel}, \AlgName{KaMinPar} has the second best scalability which roughly reproduces the behavior reported in~\cite{gottesburen2021deep}.
The algorithms \AlgName{OMS} and \AlgName{nh-OMS} have an intermediary scalability between \AlgName{KaMinPar} and \AlgName{Hashing}.
This is explained by their characteristics, which are intermediary between \AlgName{Fennel} and \AlgName{Hashing}.
Note that \AlgName{OMS} is more scalable than \AlgName{nh-OMS}.
This happens because \AlgName{OMS} goes through and scores several blocks in one of the partitioning subproblems contained in the multi-section hierarchy, which favors cache hits, whereas \AlgName{nh-OMS} partitioning subproblems have at most $4$ blocks.
For 32 threads, the average running time of \AlgName{OMS} is within a factor 3 of the running time of~\AlgName{Hashing}.

\section{Streaming Hypergraph Partitioning}
\label{sec:freight_Streaming Hypergraph Partitioning}
\label{chap:freight_Streaming Hypergraph Partitioning}
\label{chap:freight_FREIGHT: Fast Streaming Hypergraph Partitioning}

In this section, we propose \AlgName{FREIGHT}: a Fast stREamInG Hypergraph parTitioning algorithm that can optimize for cut-net and connectivity. 
By using an efficient data structure, we make the overall running time of \AlgName{FREIGHT} linearly dependent on the pin-count of the hypergraph and the memory consumption linearly dependent on the numbers of nets and blocks.
Our proposed algorithm demonstrates remarkable efficiency, with a running time comparable to the \AlgName{Hashing}  algorithm and a maximum discrepancy of only four in three quarters of the instancesr 
Our study establishes the superiority of \AlgName{FREIGHT} over all current (buffered) streaming algorithms and even the in-memory algorithm \AlgName{HYPE}, in both cut-net and connectivity measures. 
This shows the potential of our algorithm as a valuable tool for partitioning hypergraphs in the context of large and constantly changing \hbox{data processing environments}.

\subsection{FREIGHT}
\label{sec:freight_FREIGHT: Fast Streaming Hypergraph Partitioning}

\subsubsection{Mathematical Definition}
\label{subsec:freight_Mathematical Definition}
\label{subsubsec:freight_Mathematical Definition}

In this section, we provide a mathematical definition for \AlgName{FREIGHT} by expanding the idea of \AlgName{Fennel} to the domain of hypergraphs.
Recall that, assuming the nodes of a graph being streamed one-by-one, the \AlgName{Fennel} algorithm assigns an incoming node~$v$ to a block $V_d$ \hbox{where $d$ is computed as follows}:

\begin{equation}
d = \argmax\limits_{i,~|V_i| < L_{\max}}\big\{|V_i\cap N(v)|-\alpha * \gamma * |V_i|^{\gamma-1}\big\}
\label{eq:freight_Fennel}
\end{equation}

The term $-\alpha * \gamma * |V_i|^{\gamma-1}$, which penalizes block imbalance in \AlgName{Fennel}, is directly used in \AlgName{FREIGHT} without modification and with the same meaning.
The term $|V_i\cap N(v)|$, which minimizes edge-cut in \AlgName{Fennel}, needs to be adapted in~\AlgName{FREIGHT} to minimize the intended metric, i.e., either cut-net or connectivity.
Before explaining how this is adapted, recall that, in contrast to graph partitioning, in hypergraph partitioning the incident nets~$I(v)$ of an incoming node~$v$ might contain nets that are already cut, i.e., with pins assigned to multiple blocks.
The version of \AlgName{FREIGHT} designed to optimize for \emph{connectivity} accounts for already cut nets by keeping track of the block~$d_e$ to which the most recently streamed pin of each net~$e$ has been assigned.
More formally, the connectivity version of \AlgName{FREIGHT} assigns an incoming node~$v$ of a hypergraph to a block $V_d$ with $d$ given by Equation~(\ref{eq:freight_FREIGHT}), where \hbox{$I^i_{obj}(v) = I^i_{con}(v) =$} \hbox{$\{ e \in I(v): d_e = i\}$}.
On the other hand, the version of \AlgName{FREIGHT} designed to optimize for \emph{cut-net} ignores already cut nets, since their contribution to the overall cut-net of the hypergraph $k$-partition is fixed and cannot be changed anymore.
More formally, the cut-net version of \AlgName{FREIGHT} assigns an incoming node~$v$ of a hypergraph to a block $V_d$ with $d$ given by Equation~(\ref{eq:freight_FREIGHT}), where \hbox{$I^i_{obj}(v) = I^i_{cut}(v) =$} \hbox{$I^i_{con}(v) \setminus E'$}~and~$E'$ is the \hbox{set of already cut nets}.
\begin{equation}
d = \argmax\limits_{i,~|V_i| < L_{\max}}\big\{|I^i_{obj}(v)|-\alpha * \gamma * |V_i|^{\gamma-1}\big\}
\label{eq:freight_FREIGHT}
\end{equation}

Both configurations of \AlgName{FREIGHT} interpolate two objectives: favoring blocks with many incident (uncut) nets and penalizing blocks with large cardinality.
We briefly highlight that \AlgName{FREIGHT} can be adapted for weighted hypergraphs.
In particular, when dealing with weighted nets, the \hbox{term~$|I^i_{obj}(v)|$} is substituted \hbox{by~$\omega(I^i_{obj}(v))$}.
Likewise when dealing with weighted nodes, the term $-\alpha * \gamma * |V_i|^{\gamma-1}$ is substituted by $-c(v) * \alpha * \gamma * c(V_i)^{\gamma-1}$, where the weight~$c(v)$~of~$v$ is \hbox{used as a multiplicative factor in the penalty term}.

\begin{figure}[t]
	\centering
	\includegraphics[width=0.8\linewidth]{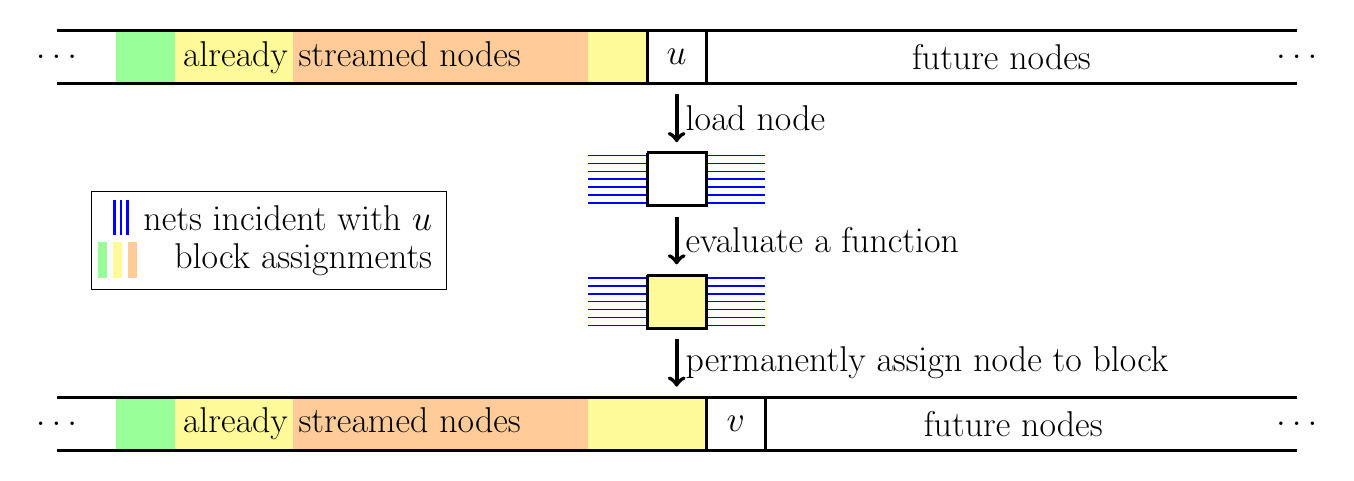}
	\caption{Typical layout of streaming algorithm for hypergraph partitioning.}
	\label{fig:freight_StreamingModel}
\end{figure}

\subsubsection{Streaming Hypergraphs}
\label{subsec:freight_Models for Streaming Hypergraphs}
\label{subsubsec:freight_Models for Streaming Hypergraphs}

In this section, we present and discuss the streaming model used by \AlgName{FREIGHT}.
Recall in the streaming model for graphs, nodes  are loaded one at a time alongside with their adjacency lists.
Thus, just streaming the graph (without doing additional compuations, implies a time cost $O(m+n)$.
In our model, the nodes of a hypergraph are loaded one at a time alongside with their incident nets, as illustrated in Figure~\ref{fig:freight_StreamingModel}.
Our streaming model implies a time cost $O(\sum_{e \in E}{|e|} + n)$ just to stream the hypergraph, where $O(\sum_{e \in E}{|e|})$ is the cost to stream each net $e$ exactly $|e|$ times.
\AlgName{FREIGHT} uses $O(m+k)$ memory, with $O(m)$ being used to keep track, for each net~$e$, of its cut/uncut status as well as the block~$d_e$ to which its most recently streamed pin was assigned.
This net-tracking information, which substitutes the need to keep track of node assignments, is necessary for executing \AlgName{FREIGHT}. 
Although \AlgName{FREIGHT} consumes more memory than required by graph-based streaming algorithms which often use $O(n+k)$ memory, it is still far better than the $O(mk)$ worst-case memory required by the state-of-the-art algorithms for streaming hypergraph partitioning~\cite{alistarh2015streaming,tacsyaran2021streaming}, all of which are also based on a computational model that implies a time cost \hbox{$O(\sum_{e \in E}{|e|} + n)$} \hbox{just to stream the hypergraph}.

\subsubsection{Efficient Implementation}
\label{subsec:freight_Efficient Implementation}
\label{subsubsec:freight_Efficient Implementation}

In this section, we describe an efficient implementation for \AlgName{FREIGHT}.
Recall that, for every node $v$ that is loaded, \AlgName{FREIGHT} uses Equation~(\ref{eq:freight_FREIGHT}) to find the block with the highest score among up to~$k$~options.
A simple method to accomplish this task consists of explicitly evaluating the score for each block and identifying the one with the highest score.
This results in a total of $O(nk)$ evaluations, leading to an overall complexity of \hbox{$O(\sum_{e \in E}{|e|}+nk)$}.
We propose an implementation that is significantly more efficient \hbox{than this approach}.

For each loaded node~$v$, our implementation separates the blocks $V_i$ for which \hbox{$|V_i|<L_{\max}$} into two disjoint sets, $S_1$ and $S_2$.
In particular, the set~$S_1$ is composed of the blocks $V_i$ where \hbox{$|I^i_{obj}(v)|>0$}, while the set~$S_2$ comprises the remaining blocks, i.e., blocks $V_i$ for which \hbox{$|I^i_{obj}(v)|=0$}.
Using the sets provided, we break down Equation~(\ref{eq:freight_FREIGHT}) into Equation~(\ref{eq:freight_FREIGHT_E_1}) and Equation~(\ref{eq:freight_FREIGHT_E_2}), which are solved separately. 
The resulting solutions are compared based on their \AlgName{FREIGHT} scores to ultimately find the solution for Equation~(\ref{eq:freight_FREIGHT}).
The overall process is illustrated \hbox{in~Figure~\ref{fig:freight_DecomposeEquation}}.
\begin{equation}
d = \argmax\limits_{i \in S_1}\big\{|I^i_{obj}(v)|-\alpha * \gamma * |V_i|^{\gamma-1}\big\}
\label{eq:freight_FREIGHT_E_1}
\end{equation}
\begin{equation}
d = \argmax\limits_{i \in S_2}\big\{|I^i_{obj}(v)|-\alpha * \gamma * |V_i|^{\gamma-1}\big\} = \argmin\limits_{i \in S_2}|V_i|
\label{eq:freight_FREIGHT_E_2}
\end{equation}

\begin{figure}[t]
	\centering
	\includegraphics[width=0.9\linewidth]{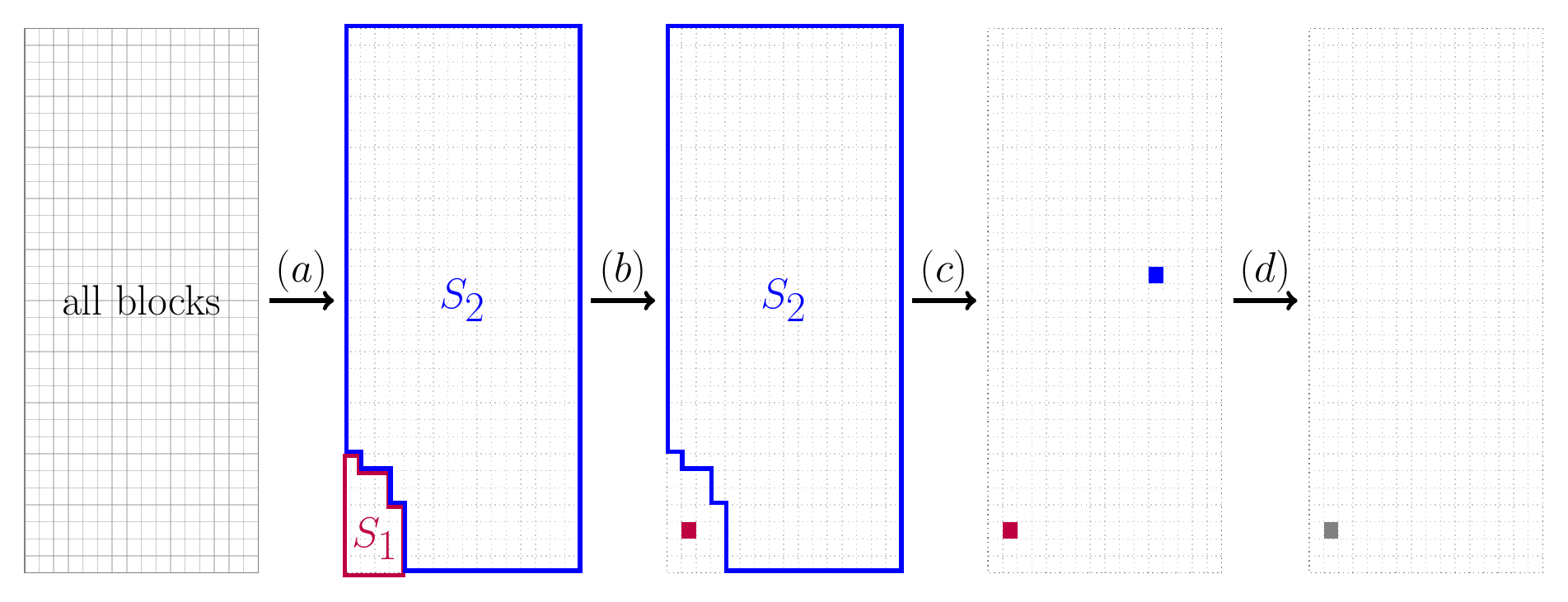}
	\caption{Illustration of the process to solve Equation~(\ref{eq:freight_FREIGHT}) for an incoming node~$u$ with \hbox{$k=512$}~blocks. 
		(a)~The~$k$~blocks are decomposed into~$S_1$~and~$S_2$, with~\hbox{$|S_1| = O(|I(u)|)$}. 
		(b)~Equation~(\ref{eq:freight_FREIGHT_E_1}) is explicitly solved at cost~\hbox{$O(|I(u)|)$}.  
		(c)~Equation~(\ref{eq:freight_FREIGHT_E_2}) is implicitly solved at cost~$O(1)$.
		(d)~Both solutions are then evaluated using their \AlgName{FREIGHT} scores to determine \hbox{the final solution for Equation~(\ref{eq:freight_FREIGHT})}.}
	\label{fig:freight_DecomposeEquation}
\end{figure}

Now we explain how we solve Equation~(\ref{eq:freight_FREIGHT_E_1})~and~Equation~(\ref{eq:freight_FREIGHT_E_2}).
To solve Equation~(\ref{eq:freight_FREIGHT_E_1}), we use the theoretical complexity outlined in Theorem~\ref{theo:freight_E_1} and solve it explicitly. 
In contrast, Equation~(\ref{eq:freight_FREIGHT_E_2}) is implicitly solved by identifying the block with minimal cardinality.
We use an efficient data structure to keep all blocks sorted by cardinality throughout the entire execution, which enables us to solve \hbox{Equation~(\ref{eq:freight_FREIGHT_E_2}) in constant time}.

\begin{theorem}
	Equation~(\ref{eq:freight_FREIGHT_E_1}) can be solved in time~$O(|I(v)|)$.
	\label{theo:freight_E_1}
\end{theorem}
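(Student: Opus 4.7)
The plan is to bound $|S_1|$ first and then show how to enumerate it together with the values $|I^i_{obj}(v)|$ in linear time in $|I(v)|$, after which evaluating the expression over $S_1$ is immediate.

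First, I would observe that for each net $e \in I(v)$, the quantity $|I^i_{obj}(v)|$ counts $e$ in at most one block: for the connectivity variant, $e$ is counted only in the block $d_e$ where its most recently streamed pin lies, and for the cut-net variant, $e$ is counted only in $d_e$ when $e \notin E'$, and otherwise in no block. Consequently
\[
\sum_{i=1}^{k} |I^i_{obj}(v)| \;\leq\; |I(v)|,
\]
which in particular yields $|S_1| \leq |I(v)|$ since every $i \in S_1$ contributes at least~$1$ to this sum.

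Second, I would describe how to materialize $S_1$ and the counts $|I^i_{obj}(v)|$ in $O(|I(v)|)$ time. We iterate once over the incident nets $e \in I(v)$; using the $O(m)$ auxiliary array that \AlgName{FREIGHT} maintains (recording $d_e$ and the cut/uncut status of each net), we look up $d_e$ in $O(1)$, skip $e$ if we are in the cut-net variant and $e \in E'$, and otherwise increment a counter associated with block~$d_e$. To avoid paying $\Omega(k)$ for initialization or lookups, the counters are stored in a sparse structure: a dense array of size~$k$ augmented by a list of ``touched'' block indices, which is reset at the end of processing each incoming node. Touching a fresh block costs $O(1)$, and the total number of touches is at most $|I(v)|$, giving overall cost $O(|I(v)|)$.

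Third, once $S_1$ and the values $|I^i_{obj}(v)|$ are available, we scan $S_1$ and compute for each $i \in S_1$ the score $|I^i_{obj}(v)| - \alpha\gamma |V_i|^{\gamma-1}$, maintaining a running argmax; $|V_i|$ is $O(1)$-accessible and the penalty is evaluated in $O(1)$ using the approximate powering described in Section~\ref{subsec:heistream_Implementation Details}. Since $|S_1| \leq |I(v)|$, this final scan also costs $O(|I(v)|)$, and summed with the previous step the total is $O(|I(v)|)$. The main subtlety — and the only place where a naive implementation could silently introduce a factor of $k$ — is the bookkeeping of the sparse counters across successive incoming nodes; committing to the dense-array-plus-touched-list trick (as opposed to a hash map, which would add constants) is what makes the bound tight rather than merely expected.
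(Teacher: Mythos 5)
Your proof is correct and follows the same route as the paper: iterate once over $I(v)$ to determine each net's status and accumulate the counts $|I^i_{obj}(v)|$, then scan $S_1$ (whose size is at most $|I(v)|$) to evaluate the scores. You are in fact more careful than the paper's own proof in one respect — the paper asserts the $O(|I(v)|)$ cost for accumulating the counters without addressing how to avoid $\Omega(k)$ initialization per node, whereas your dense-array-plus-touched-list bookkeeping spells out exactly why that factor does not appear.
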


\begin{proof}
	The terms \hbox{$|I^i_{obj}(v)|$} in Equation~(\ref{eq:freight_FREIGHT_E_1}) can be computed by iterating through the nets of~$v$ at a cost of \hbox{$O(|I(v)|)$} and determining their status as cut, unassigned, or assigned to a block.
	The calculation of the factors \hbox{$-\alpha * \gamma * |V_i|^{\gamma-1}$} in Equation~(\ref{eq:freight_FREIGHT_E_1}) can be done in time \hbox{$O(|S_1|) = O(|I(v)|)$}, \hbox{thus completing the proof}.
\end{proof}

Now we explain our data structure to keep the blocks sorted by cardinality during the whole algorithm execution.
The data structure is implemented with two arrays~$A$~and~$B$, both with $k$ elements, and a list~$L$.
The array~$A$ stores all $k$~blocks always in ascending order. 
The array~$B$ maps the index~$i$ of a block~$V_i$ to its position in~$A$.
Each element in the list~$L$ represents a bucket.
Each bucket is associated with a unique block cardinality and contains the leftmost and the rightmost positions~$\ell$~and~$r$ of the range of blocks in~$A$ which currently have this cardinality.
Reciprocally, each block in~$A$ has a pointer to the unique bucket in~$L$ corresponding to its cardinality.
To begin the algorithm, $L$ is set up with a single bucket for cardinality~$0$ which covers the $k$ positions of~$A$, i.e., its paramenters $\ell$~and~$r$ are $1$~and~$k$, respectively. 
The blocks in~$A$ are sorted in any order initially, however, as each block starts with a cardinality of $0$, they will be \hbox{ordered by their cardinalities}.

\begin{figure}[t]
	\centering
	\includegraphics[width=0.7\linewidth]{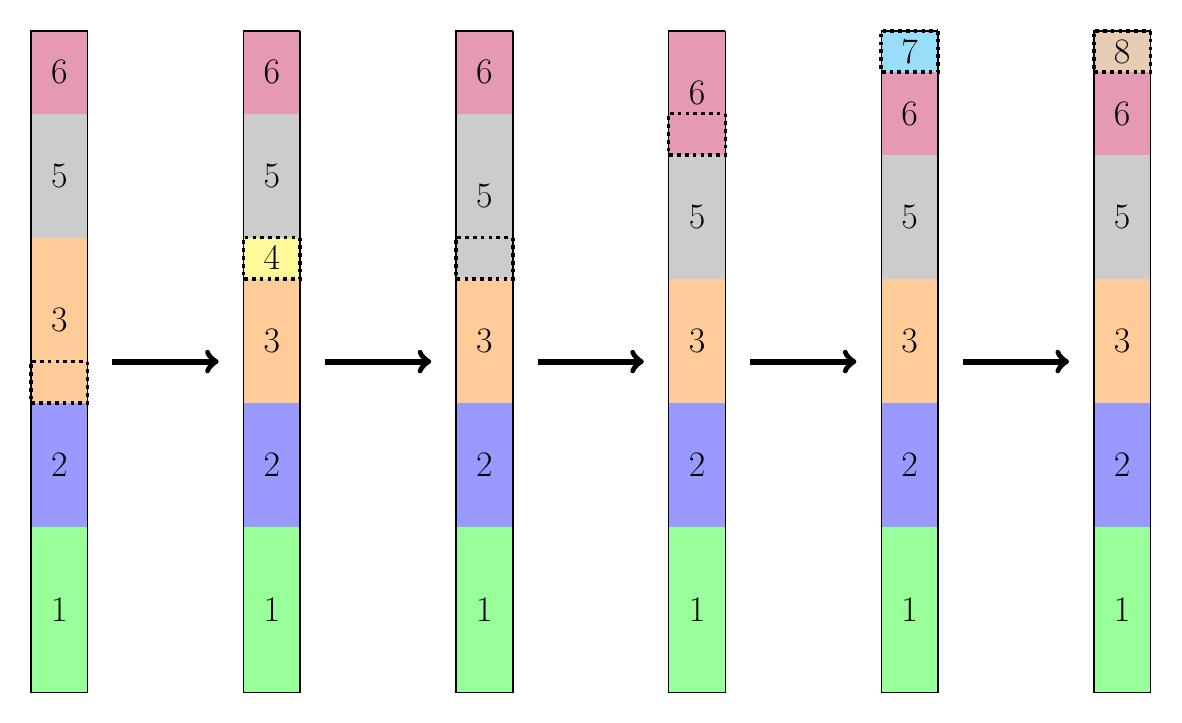}
	\caption{Illustration of our data structure used to keep the blocks sorted by cardinality throughout the execution of \AlgName{FREIGHT}. The array~$A$ is represented as a vertical rectangle. Each region of~$A$ is covered by a unique bucket, which is represented by a unique color filling the corresponding region in~$A$. The cardinality associated with each bucket is written in the middle of the region of~$A$ covered by it. Here we represent the behavior of the data structure when assigning nodes \hbox{to the block surrounded by a dotted rectangle five times consecutively}.}
	\label{fig:freight_SortingArray}
\end{figure}

When a node is assigned to a block~$V_d$, we update our data structure as detailed in Algorithm~\ref{alg:freight_IncrementCardinalityBlock} and exemplified in Figure~\ref{fig:freight_SortingArray}.
We describe Algorithm~\ref{alg:freight_IncrementCardinalityBlock} in detail now.
In line~1, we find the position~$p$ of~$V_d$ in~$A$ and find the bucket~$C$ associated with it.
In line~2, we exchange the content of two positions in~$A$: the position where~$V_d$ is located and the position identified by the variable $r$~in~$C$, which marks the rightmost block in~$A$ covered by~$C$.
This variable $r$ is afterwards decremented in line~3 since $V_d$ is now not covered anymore by the bucket $C$.
In lines~4~and~5, we check if the new (increased) cardinality of~$V_d$ matches the cardinality of the block located right after it in~$A$.
If so, we associate $V_d$ to the same bucket as it and decrement this bucket's leftmost position~$\ell$ in line~6;
Otherwise, we push a new bucket to~$L$ and match it to $V_d$ adequately in lines~8~and~9.
Finally, in line~10, we delete~$C$ in case its range $[\ell,r]$ is empty.
Figure~\ref{fig:freight_SortingArray} shows our data structure through five consecutive executions of Algorithm~\ref{alg:freight_IncrementCardinalityBlock}.
Theorem~\ref{theo:freight_sorting_correctness} proves the correctness of our data structure. 
Theorem~\ref{theo:freight_E_2} shows that, using our proposed data structure, we need time~$O(1)$ to either solve Equation~(\ref{eq:freight_FREIGHT_E_2}) or prove that the solution for Equation~(\ref{eq:freight_FREIGHT_E_1}) solves Equation~(\ref{eq:freight_FREIGHT}).
Note that our data structure can only handle unweighted vertices.
In case of weighted vertices, a bucket queue can be used instead of our data structure, resulting in the same overall complexity and requiring $O(k+L_{\max})$ memory, while our data structure only requires $O(k)$ memory.
The overall complexity of \AlgName{FREIGHT}, which directly follows from Theorem~\ref{theo:freight_E_1} and Theorem~\ref{theo:freight_E_2}, \hbox{is expressed in Corollary~\ref{cor:freight_complexity_without_sampling}}.

\begin{algorithm}[] %
	\begin{algorithmic}[1] %
		\State $p \gets B_d$; $C \gets A_p.bucket$;
		\State $q \gets C.r$; $c \gets A_q.id$; $Swap(A_p, A_q)$; $Swap(B_c, B_d)$; 
		\State $C.r \gets C.r - 1$;
		\State $C^\prime \gets A_{q+1}.bucket$;
		\If{$C.cardinality + 1 = C^\prime.cardinality$}
		\State $A_q.bucket \gets C^\prime$; $C^\prime.\ell \gets C^\prime.\ell - 1$;
		\Else
		\State $C^{\prime\prime} \gets NewBucket()$; $A_q.bucket \gets C^{\prime\prime}$; $L \gets L \cup \{C^{\prime\prime}\}$; 
		\State $C^{\prime\prime}.cardinality \gets C.cardinality + 1$; $C^{\prime\prime}.\ell \gets q$; $C^{\prime\prime}.r \gets q$;
		\EndIf
		\State \textbf{if} $C.r = C.\ell$ \textbf{then} $L \gets L \setminus \{C\}$;
	\end{algorithmic}
	\caption{Increment cardinality of block $V_d$ in the proposed data structure} %
	\label{alg:freight_IncrementCardinalityBlock} %
\end{algorithm}

\begin{theorem}
	Our proposed data structure keeps the blocks within array A consistently sorted in ascending order of cardinality.
	\label{theo:freight_sorting_correctness}
\end{theorem}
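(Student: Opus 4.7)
The plan is to prove the claim by induction on the number of times Algorithm~\ref{alg:freight_IncrementCardinalityBlock} has been invoked, showing that a stronger invariant is preserved at every step. The stronger invariant has two parts: (i) $A$ is sorted in ascending order of block cardinality; and (ii) the buckets in $L$ form a partition of $\{1,\ldots,k\}$ into maximal contiguous ranges $[\ell,r]$ such that every block at a position inside the range has exactly the cardinality stored in that bucket, and bucket cardinalities are strictly increasing with the starting positions of their ranges. Part (ii) implies part (i), and it is what actually gets maintained by the code.

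For the base case, initialization places a single bucket of cardinality~$0$ covering positions $1$ through $k$, and every block has cardinality~$0$, so the invariant holds. For the inductive step, I would fix an arbitrary call to Algorithm~\ref{alg:freight_IncrementCardinalityBlock} on a block $V_d$, let $C$ denote the bucket containing $V_d$ just before the call, let $c \defeq C.cardinality$, and let $p \defeq B_d$ and $q \defeq C.r$. Line~2 swaps the contents of positions $p$ and $q$ and fixes the inverse map $B$ accordingly; since both positions lie inside $C$'s range by the invariant, both blocks involved have cardinality $c$, so this swap alone preserves sortedness and leaves every bucket's range description still correct. The cardinality increment of $V_d$ then logically happens by reassigning its bucket pointer.

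After the swap, $V_d$ sits at position $q$, which is the right endpoint of $C$'s range. There are two cases. If $C$ was not the last bucket, let $C^\prime$ be the bucket covering position $q+1$; by maximality and strict monotonicity of bucket cardinalities, $C^\prime.cardinality \geq c+1$. If equality holds, line~6 absorbs position $q$ into $C^\prime$ by decrementing $C^\prime.\ell$, which is exactly the correct maintenance of a maximal contiguous range at cardinality $c+1$. If $C^\prime.cardinality > c+1$ (or if $C$ was the last bucket, which one handles analogously by treating a ``virtual'' sentinel of cardinality $+\infty$), lines~8--9 create a fresh singleton bucket of cardinality $c+1$ covering just position $q$; this keeps all ranges maximal and preserves strict monotonicity because $c < c+1 < C^\prime.cardinality$. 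Lines~3 and~10 shrink $C.r$ by one and delete $C$ if its range becomes empty (because either $C.\ell$ also equals the new $C.r+1$ or $C$ was a singleton to begin with), which keeps the partition consistent. In every case the invariant is restored, so sortedness of $A$ is maintained after the call.

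The main obstacle is careful bookkeeping, not a deep argument: one must make sure the inverse map $B$ is updated correctly under the swap, that the range endpoints $\ell$ and $r$ of the neighboring buckets are touched exactly in the places claimed, and that the boundary cases (where $C$ becomes empty, where the next bucket $C^\prime$ does not exist, or where $V_d$ was already at the right end of its bucket so $p = q$ and the swap is a no-op) all respect the invariant. I would treat these edge cases explicitly in a short case analysis at the end, because they are the places where an off-by-one in the algorithm as written would most plausibly break the proof.
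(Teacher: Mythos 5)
Your proof is correct and follows essentially the same route as the paper's: both arguments proceed by induction on calls to Algorithm~\ref{alg:freight_IncrementCardinalityBlock}, maintain a joint invariant about sortedness of $A$ together with correctness of the bucket ranges, observe that the swap in line~2 exchanges two blocks of equal cardinality, and then case-split on whether $V_d$ joins an existing bucket $C^\prime$ or spawns a new singleton bucket $C^{\prime\prime}$. Your statement of the invariant is a bit more explicit (partition, maximality, strict monotonicity of bucket cardinalities) and you flag the boundary cases more carefully than the paper does, but the underlying argument is the same.
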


\begin{proof}
	We inductively prove two claims at the same time: 
	(a)~the variables~$\ell$~and~$r$ contained in each bucket from~$L$ respectively store the leftmost and the rightmost positions of the unique range of blocks in~$A$ which currently have this cardinality;
	(b)~the array~$A$ contains the blocks sorted in ascending order of cardinality.
	Both claims are trivially true at the beginning, since all blocks have cardinality 0 and~$L$ is initialized with a single bucket with $\ell=1$ and $r=k$.
	Now assuming that (a)~and~(b) are true at some point, we show that they keep being true after Algorithm~\ref{alg:freight_IncrementCardinalityBlock} is executed.
	Note that line~2 performs the only position exchange in~$A$ throughout the whole algorithm.
	As~(a) is assumed, it is the case that~$V_d$ swaps positions with the rightmost block in $A$ containing the same cardinality of~$V_d$.
	Since the cardinality of~$V_d$ will be incremented by one and all blocks have integer cardinalities, this concludes the proof of~(b).
	To prove that~(a) remains true, note that the only buckets in~$L$ that are modified are~$C$ (line~3),~$C^\prime$ (line~6), and~$C^{\prime\prime}$ (line~9).
	Claim~(a) remains true for~$C$ because $V_d$, whose cardinality will be incremented, is the only block removed from its range.
	Claim~(a) remains true for~$C^\prime$ because line~6 is only executed if the new cardinality of~$V_d$ equals the cardinality of~$C^\prime$, whose current range starts right after the new position of~$V_d$ in~$A$.
	Bucket~$C^{\prime\prime}$ is only created if the new cardinality of~$V_d$ is respectively larger and smaller than the cardinalities of~$C$~and~$C^\prime$.
	Since~(b) is true, then this condition only happens if there is no block in~$A$ with the same cardinality as the new cardinality of~$V_d$.
	Hence, claim~(a) remains true for~$C^{\prime\prime}$, which is created \hbox{covering only the position of~$V_d$ in~$A$}.
\end{proof}

\begin{theorem}
	By utilizing our proposed data structure, solving Equation~(\ref{eq:freight_FREIGHT_E_2}) or demonstrating that any solution for Equation~(\ref{eq:freight_FREIGHT_E_1}) is also a solution for Equation~(\ref{eq:freight_FREIGHT}) can be accomplished~in~$O(1)$~time.
	\label{theo:freight_E_2}
\end{theorem}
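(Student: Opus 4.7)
The plan is to argue that the minimum-cardinality block (call it $V^\star$) is enough to decide the issue in constant time, and to use the data structure to retrieve $V^\star$ directly from the first position of array $A$.

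First I would observe that, by Theorem~\ref{theo:freight_sorting_correctness}, the blocks in $A$ are always kept in ascending order of cardinality, so $V^\star := V_{A_1.id}$ is a global minimum-cardinality block and can be accessed in $O(1)$ time. Next I would arrange that, as a by-product of solving Equation~(\ref{eq:freight_FREIGHT_E_1}) (cf.\ proof of Theorem~\ref{theo:freight_E_1}), we maintain an auxiliary bit-array indexed by block id that marks exactly the members of $S_1$ for the current incoming node~$v$; this array is written when iterating through $I(v)$ and is cleared in the same time budget at the end of handling~$v$. Then a single lookup decides in $O(1)$ whether $V^\star \in S_1$ or $V^\star \in S_2$.

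Two cases remain. If $V^\star \in S_2$, then because $V^\star$ has globally minimum cardinality and the term $|I^i_{obj}(v)|$ vanishes on all of $S_2$, we have
\[
V^\star \;\in\; \argmax_{i \in S_2}\bigl\{-\alpha\gamma|V_i|^{\gamma-1}\bigr\} \;=\; \argmin_{i \in S_2}|V_i|,
\]
so $V^\star$ solves Equation~(\ref{eq:freight_FREIGHT_E_2}) and is returned in $O(1)$. If instead $V^\star \in S_1$, I would argue that the optimum of Equation~(\ref{eq:freight_FREIGHT_E_1}) already dominates every block in $S_2$. Indeed, for any $j \in S_2$ we have $|V_j| \geq |V^\star|$, hence $-\alpha\gamma|V_j|^{\gamma-1} \leq -\alpha\gamma|V^\star|^{\gamma-1}$, while the objective value achieved at $V^\star \in S_1$ is at least $1 - \alpha\gamma|V^\star|^{\gamma-1}$ because $|I^{V^\star}_{obj}(v)| \geq 1$. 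Since the maximiser over $S_1$ is no worse than the value at $V^\star$, any solution to Equation~(\ref{eq:freight_FREIGHT_E_1}) attains a strictly larger score than every candidate in $S_2$, and so is also a solution to Equation~(\ref{eq:freight_FREIGHT}); this conclusion is reached in $O(1)$ time after the single lookup.

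The only subtlety I expect is arranging the $O(1)$ membership test for ``$V^\star \in S_1$?''; this is the main obstacle, and I would resolve it by the auxiliary marking scheme described above, whose set-up and tear-down costs are absorbed into the $O(|I(v)|)$ budget of Theorem~\ref{theo:freight_E_1} and therefore do not affect the $O(1)$ bound claimed here.
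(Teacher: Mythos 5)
Your proof takes essentially the same approach as the paper's. The paper likewise accesses the minimum-cardinality block from the first position of array~$A$ in $O(1)$ time (by Theorem~\ref{theo:freight_sorting_correctness}), then splits into the same two cases — $V^\star \in S_2$ solves Equation~(\ref{eq:freight_FREIGHT_E_2}) directly, while $V^\star \in S_1$ forces the score of $V^\star$ (and hence of the Equation~(\ref{eq:freight_FREIGHT_E_1}) maximizer) to exceed that of any $S_2$ candidate by at least $|I^{V^\star}_{obj}(v)| \geq 1$. Your only addition is the explicit auxiliary bit-array for the $O(1)$ membership test, a sensible implementation detail that the paper leaves implicit and that, as you note, is absorbed into the $O(|I(v)|)$ budget of Theorem~\ref{theo:freight_E_1}.
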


\begin{proof}
	Algorithm~\ref{alg:freight_IncrementCardinalityBlock} contains no loops and each command in it has a complexity of $O(1)$, thus the total cost of the algorithm is $O(1)$.
	Our data structure executes Algorithm~\ref{alg:freight_IncrementCardinalityBlock} once for each assigned node, hence it costs $O(1)$ per node.
	Say we are evaluating an incoming node~$v$.
	According to Theorem~\ref{theo:freight_sorting_correctness}, the block~$V_d$ with minimum cardinality is stored in the first position of the array~$A$, hence it can be accessed in time $O(1)$.
	In~case~\hbox{$V_d \in S_2$}, then~$d$ is a solution for Equation~(\ref{eq:freight_FREIGHT_E_2}).
	On the other hand, if $V_d$ is in $S_1$, the \AlgName{FREIGHT} score of $V_d$ will be larger than the \AlgName{FREIGHT} score of the solution for Equation~(\ref{eq:freight_FREIGHT_E_2}) by at least \hbox{$|I^{d}_{obj}(v)| > 0$}.
	In this case, it follows that \hbox{any solution for Equation~(\ref{eq:freight_FREIGHT_E_1}) solves Equation~(\ref{eq:freight_FREIGHT})}.
\end{proof}

\begin{corollary}
	The overall complexity of \AlgName{FREIGHT} is \hbox{$O\big(\sum_{e \in E}{|e|} + n\big)$}.
	\label{cor:freight_complexity_without_sampling}
\end{corollary}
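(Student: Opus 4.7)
The plan is to argue that the total work done by \AlgName{FREIGHT} decomposes into three cleanly bounded contributions: initialization, streaming input, and per-node assignment, and then to verify that their sum collapses to the claimed bound.

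First, I would handle initialization. Setting up the data structure from Section~\ref{subsubsec:freight_Efficient Implementation} requires allocating arrays $A$ and $B$ of size $k$ and the list $L$ with a single bucket covering all $k$ positions, which costs $O(k)$. Since $k \le n$ (assuming at least one node per block is meaningful, or otherwise the bound is trivially absorbed), this is dominated by the $O(n)$ term in the claim. Similarly, the net-tracking storage described in Section~\ref{subsubsec:freight_Models for Streaming Hypergraphs}, which keeps for each net $e$ its cut/uncut status and the block~$d_e$ of its most recently streamed pin, can be initialized in $O(m)$ time; however $m \le \sum_{e \in E} |e|$ (every net has at least one pin, or the trivial case is absorbed), so this too fits.

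Next, I would account for the cost of streaming the hypergraph. As already observed in Section~\ref{subsubsec:freight_Models for Streaming Hypergraphs}, reading all nodes together with their incident nets costs $O\!\big(\sum_{e \in E} |e| + n\big)$, because each net $e$ appears in the incidence lists of exactly $|e|$ pins.

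The main step is bounding the per-node assignment work. For an incoming node $v$, \AlgName{FREIGHT} must solve Equation~(\ref{eq:freight_FREIGHT}). Using the decomposition into Equations~(\ref{eq:freight_FREIGHT_E_1}) and~(\ref{eq:freight_FREIGHT_E_2}), Theorem~\ref{theo:freight_E_1} gives a cost of $O(|I(v)|)$ for the first, and Theorem~\ref{theo:freight_E_2} gives $O(1)$ for either solving the second or certifying that the solution of the first already solves the full equation. Updating the data structure after the assignment via Algorithm~\ref{alg:freight_IncrementCardinalityBlock} also costs $O(1)$, and updating the per-net tracking information for $v$ costs $O(|I(v)|)$. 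Hence per-node work is $O(|I(v)| + 1)$.

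Summing over all $n$ nodes yields
\[
\sum_{v \in \mathcal{V}} O\!\big(|I(v)| + 1\big) \;=\; O\!\Big(\sum_{v \in \mathcal{V}} |I(v)|\Big) + O(n) \;=\; O\!\Big(\sum_{e \in E} |e|\Big) + O(n),
\]
where the key identity $\sum_{v} |I(v)| = \sum_{e} |e|$ is just double counting of pin–net incidences. Combining with the initialization and streaming contributions gives the claimed $O\!\big(\sum_{e \in E} |e| + n\big)$ bound. I do not expect a major obstacle here, since the two theorems already encapsulate the only nontrivial arguments; the only care point is to make explicit that the bucket-based data structure update in Algorithm~\ref{alg:freight_IncrementCardinalityBlock} is invoked at most once per assigned node and is genuinely $O(1)$ amortized as well as worst case, so that the sum telescopes without any hidden logarithmic factor.
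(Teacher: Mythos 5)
Your proof is correct and takes essentially the same route the paper intends: the paper simply states the corollary "directly follows from Theorem~\ref{theo:freight_E_1} and Theorem~\ref{theo:freight_E_2}," and your write-up fills in the bookkeeping (initialization, streaming cost, per-node assignment, and the double-counting identity $\sum_v |I(v)| = \sum_e |e|$) that makes that implication explicit. No gaps; the additional detail is consistent with the paper's argument.
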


\subsection{Experimental Evaluation}
\label{sec:freight_Experimental Evaluation}

\paragraph*{Setup.} 
We performed our implementations in C++ and compiled them using gcc 11.2 with full optimization turned on (-O3 flag). 
Unless mentioned otherwise, all experiments are performed on a single core of Machine~C. 
Unless otherwise mentioned we stream (hyper)graphs directly from the internal memory to obtain clear running time comparisons. 
However, note that \AlgName{FREIGHT} as well as most of the other used algorithms can also be run streaming \hbox{the hypergraphs from hard~disk}.

\paragraph*{Baselines.}
We compare \AlgName{FREIGHT} against various state-of-the-art algorithms.
In this section we will list these algorithms and explain our criteria for algorithm selection.
We have implemented \AlgName{Hashing}  in C++, since it is a simple algorithm.
It basically consists of hashing the IDs of incoming nodes into $\{1,\ldots,k\}$.
The remaining algorithms were obtained either from official repositories or privately from the authors, with the exception of  \AlgName{Min-Max}, for which there is no official implementation available.
Here, we use the \AlgName{Min-Max} implementations by Ta{\c s}yaran~et~al.~\cite{tacsyaran2021streaming}. 
\hbox{All algorithms were compiled with gcc 11.2}.

We run \AlgName{Hashing}, \AlgName{Min-Max}~\cite{alistarh2015streaming} and its improved versions proposed in~~\cite{tacsyaran2021streaming}: %
\AlgName{Min-Max-BF}, \AlgName{Min-Max-N2P}, \AlgName{Min-Max-L$\ell$}, \AlgName{Min-Max-MH}, \AlgName{REF}, \AlgName{REF\_RLX}, and \AlgName{REF\_RLX\_SV}.
(see Section~\ref{subsec:Streaming Hypergraph Partitioning} for details on the different \AlgName{Min-Max} versions), \AlgName{HYPE}~\cite{HYPE2018}, and \AlgName{PaToH}  v3.3~\cite{ccatalyurek2011patoh}.
\AlgName{Hashing}  is relevant because it is the simplest and fastest streaming algorithm, which gives us a lower bound for partitioning time.
\AlgName{Min-Max} is a current state-of-the-art for streaming hypergraph partitioning in terms of cut-net and connectivity.
The improved and buffered versions of \AlgName{Min-Max} proposed in~\cite{tacsyaran2021streaming} are relevant because some of them are orders of magnitude faster than \AlgName{Min-Max} while others produce improved partitions in comparison to it.
\AlgName{HYPE}   and \AlgName{PaToH}  are in-memory algorithms for hypergraph partitioning, hence they are not suitable for the streaming setting.
However, we compare against them because \AlgName{HYPE}   is among the fastest in-memory algorithms while \AlgName{PaToH}  is very fast and also computes partitions with very good cut-net and connectivity.
Note that KaHyPar~\cite{schlag2016k} is the leading tool with respect to solution quality, however it is also much slower than \AlgName{PaToH}.

\paragraph*{Instances.}
In our experiments we use 310~hypergraphs which are described in more detail in Section~\ref{subsec:Hypergraphs}.
Prior to each experiment, we converted all hypergraphs to the appropriate streaming formats required by each algorithm.
We removed parallel and empty hyperedges and self loops, and assigned unitary weight to all nodes and hyperedges.
In all experiments with streaming algorithms, we stream the hypergraphs with the natural given order of the nodes.
We use a number of blocks $k \in \{512,1024,1536,2048,2560\}$ 
unless mentioned otherwise.
We allow a fixed imbalance of $3\%$ for all experiments (and all algorithms) since this is a frequently used value in the partitioning literature. 
All algorithms always generated balanced partitions, except for \AlgName{HYPE}  
which generated highly unbalanced partitions in around \hbox{$5\%$ of its experiments}.

\paragraph*{Methodology.}
Depending on the focus of the experiment, we measure running time, cut-net, and-or connectivity.
We perform 5 repetitions per algorithm and instance using random seeds for non-deterministic algorithms.

\begin{figure*}[p!]
	\captionsetup[subfigure]{justification=centering}
	\centering
	\begin{subfigure}[]{\scaleFactorSmall\textwidth}
		\centering
		\includegraphics[angle=-0, width=\imgScaleFactorSmall\textwidth]{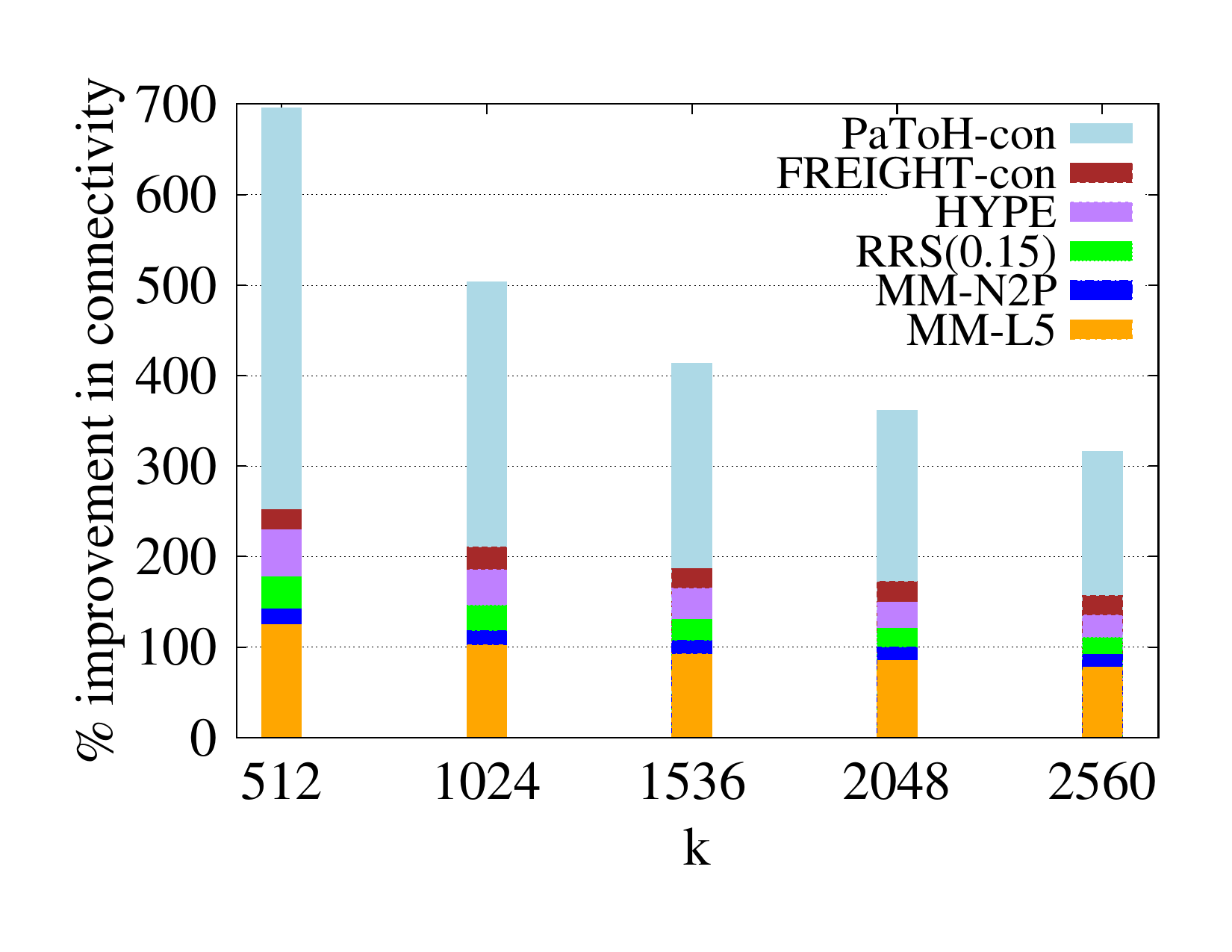}
		\vspace*{\capPositionSmall}
		\caption{Connectivity improvement over \AlgName{Hashing}.}
		\label{fig:freight_connectivity_impr}
	\end{subfigure}%
	\begin{subfigure}[]{\scaleFactorSmall\textwidth}
		\centering
		\includegraphics[angle=-0, width=\imgScaleFactorSmall\textwidth]{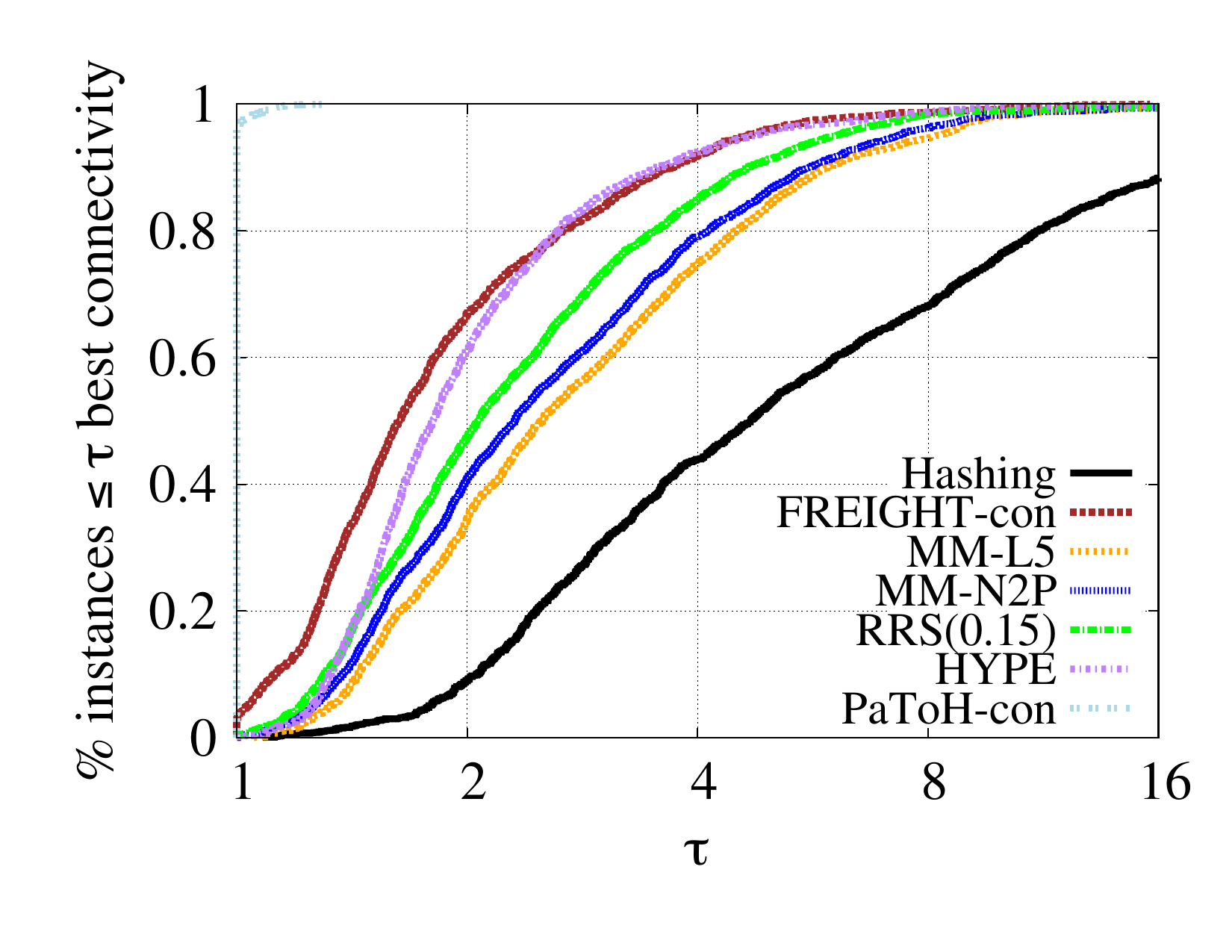}
		\vspace*{\capPositionSmall}
		\caption{Connectivity performance profiles.}
		\label{fig:freight_connectivity_pp}
	\end{subfigure}%

	\begin{subfigure}[]{\scaleFactorSmall\textwidth}
		\centering
		\includegraphics[angle=-0, width=\imgScaleFactorSmall\textwidth]{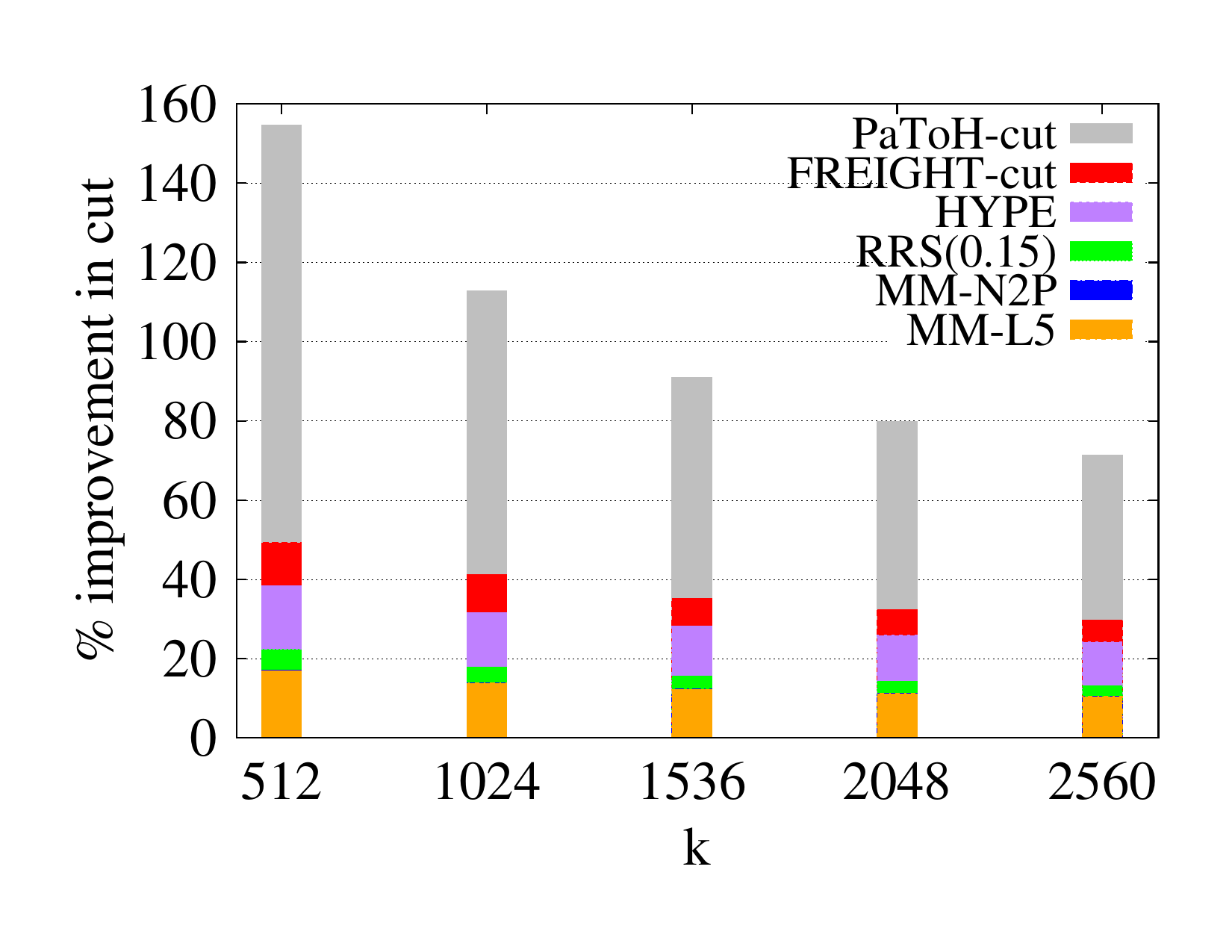}
		\vspace*{\capPositionSmall}
		\caption{Cut-net improvement over \AlgName{Hashing}.}
		\label{fig:freight_cut_impr}
	\end{subfigure}%
	\begin{subfigure}[]{\scaleFactorSmall\textwidth}
		\centering
		\includegraphics[angle=-0, width=\imgScaleFactorSmall\textwidth]{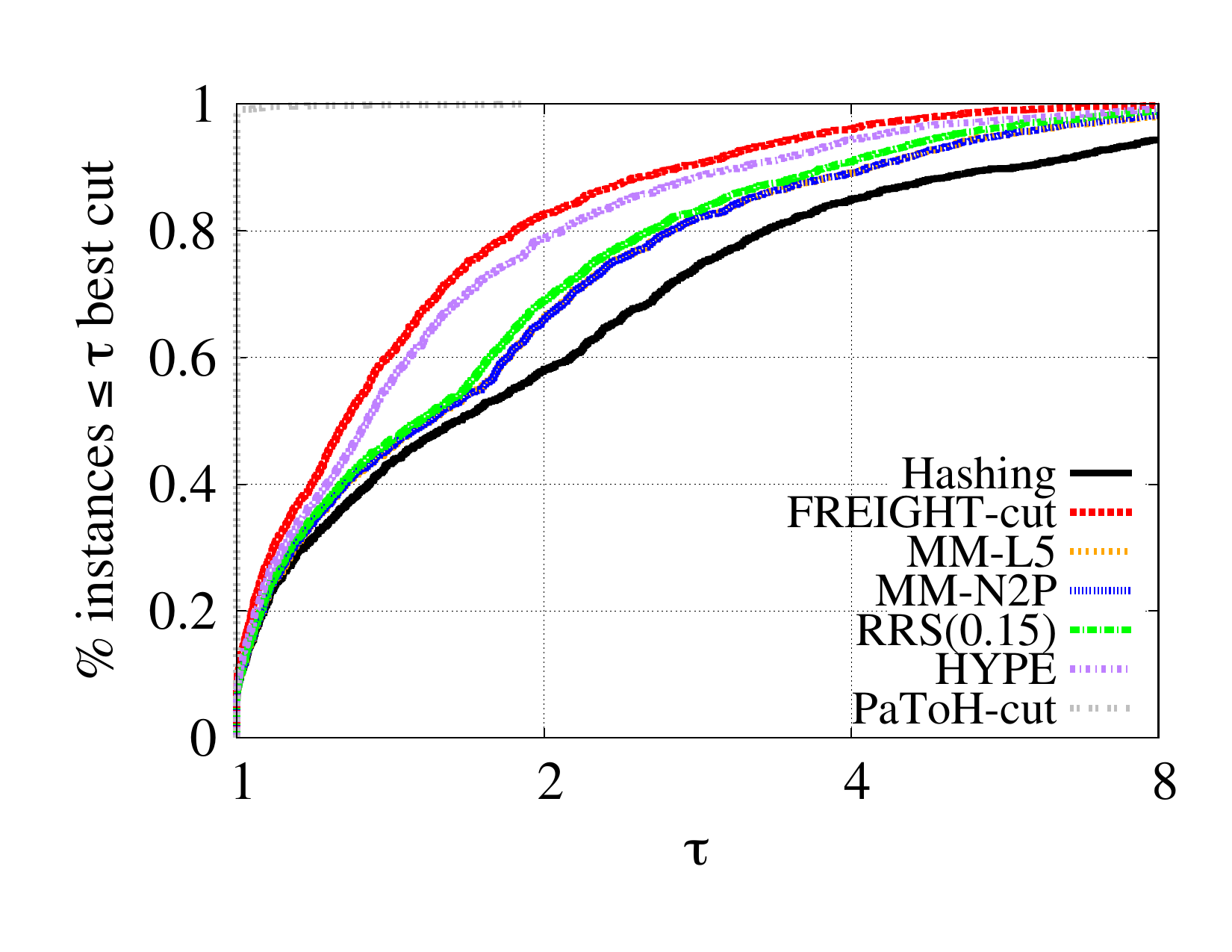}
		\vspace*{\capPositionSmall}
		\caption{Cut-net performance profiles.}
		\label{fig:freight_cut_pp}
	\end{subfigure}

	\begin{subfigure}[]{\scaleFactorSmall\textwidth}
		\centering
		\includegraphics[angle=-0, width=\imgScaleFactorSmall\textwidth]{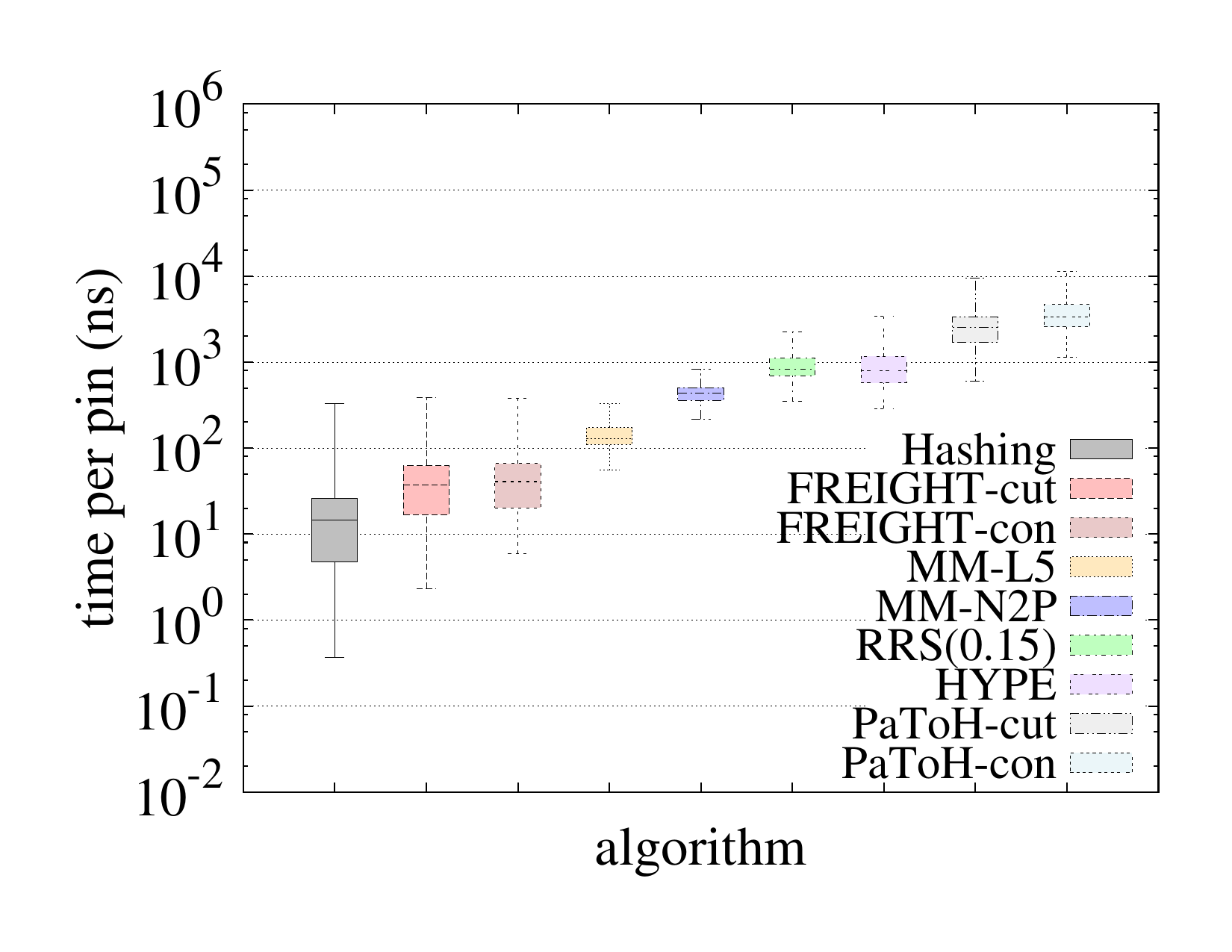}
		\vspace*{\capPositionSmall}
		\caption{Running time boxplots.}
		\label{fig:freight_time_bp}
	\end{subfigure}
	\begin{subfigure}[]{\scaleFactorSmall\textwidth}
		\centering
		\includegraphics[angle=-0, width=\imgScaleFactorSmall\textwidth]{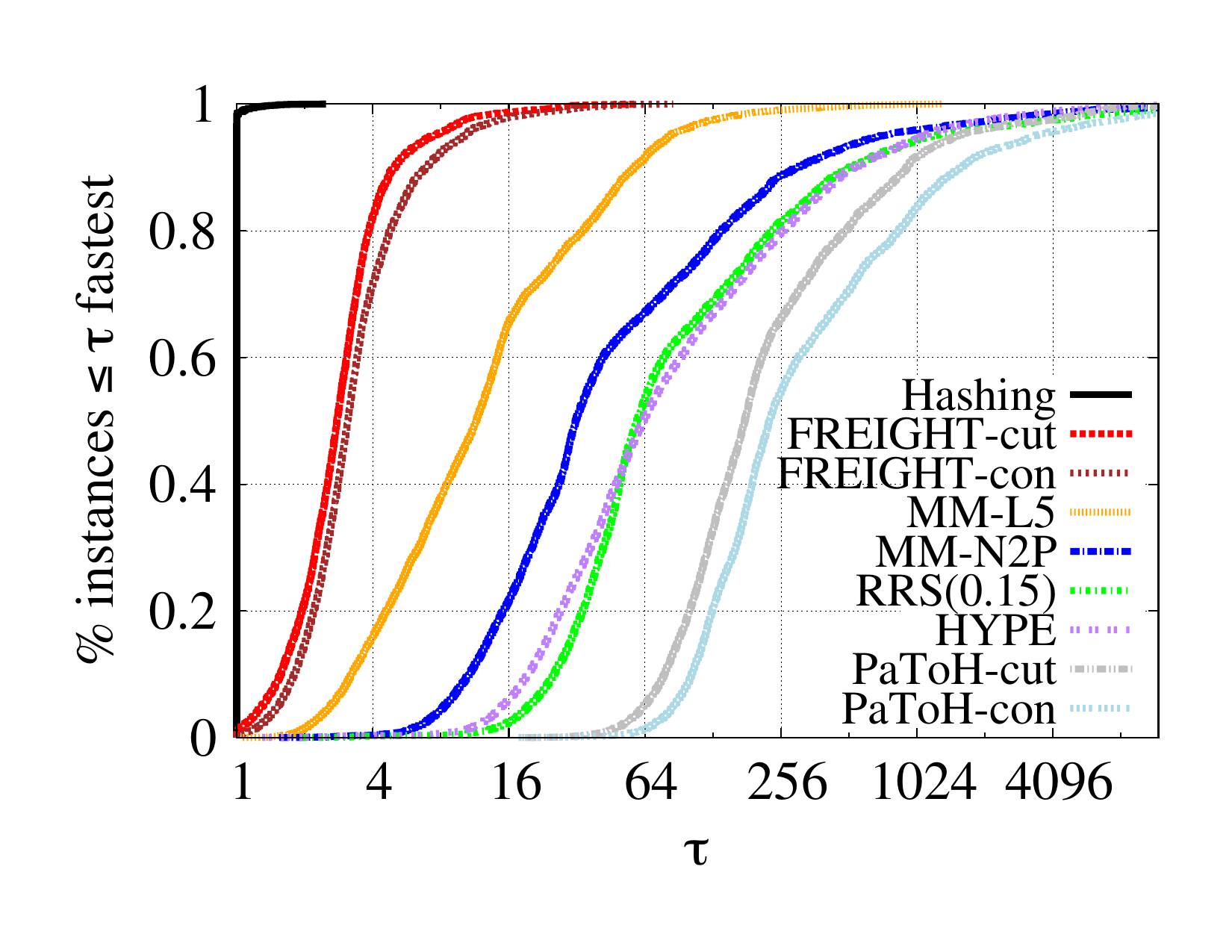}
		\vspace*{\capPositionSmall}
		\caption{Running time performance profiles.}
		\label{fig:freight_time_pp}
	\end{subfigure}%
	
	\vspace*{0.5cm}
	
	\caption{Comparison against the state-of-the-art streaming algorithms for hypergraph partitioning. We show performance profiles, improvement plots over \AlgName{Hashing}, and boxplots. Note that \AlgName{PaToH-con}, \AlgName{PaToH-cut}, and \AlgName{Hashing}  align almost perfectly with the y-axis in Figures~\ref{fig:freight_connectivity_pp},~\ref{fig:freight_cut_pp}, and~\ref{fig:freight_time_pp}, respectively. Also the curves and bars of \AlgName{MM-N2P} and \AlgName{MM-L5} roughly overlap \hbox{with one another in Figure~\ref{fig:freight_cut_pp} and Figure~\ref{fig:freight_cut_impr}}.}
	\label{fig:freight_state-of_the_art}
	
\end{figure*}

\subsubsection{State-of-the-Art}
\label{subsec:freight_Results}
\label{subsec:freight_State-of-the-Art}
\label{subsubsec:freight_Results}
\label{subsubsec:freight_State-of-the-Art}

In this section, we show experiments in which we compare \AlgName{FREIGHT} against the current state-of-the-art of streaming hypergraph partitioning.	
As already mentioned, we also use two internal-memory algorithms~\cite{HYPE2018,ccatalyurek2011patoh} as more general baselines for comparison. 
We focus our experimental evaluation on the comparison of solution quality and running time.
Observe that \AlgName{PaToH}  and \AlgName{FREIGHT} have distinct versions designed to optimize for each quality metric (i.e., connectivity and cut-net).
For a meaningful comparison, we only take into account the relevant version when dealing with each quality metric, however, both versions are still considered for running time comparisons.
To differentiate between the versions, suffixes \AlgName{-con} and \AlgName{-cut} are added to represent the connectivity-optimized and cut-net \hbox{versions respectively}.

For clarity, we refrain from discussing state-of-the-art streaming algorithms that are \emph{dominated} by another algorithm.
We define a dominated algorithm as one that has worse running time compared to another without offering a superior solution quality in return, or vice-versa.
In particular, we leave out \AlgName{Min-Max} and \AlgName{Min-Max-BF} since they are dominated by \AlgName{Min-Max-N2P}, which is referred to as \AlgName{MM-N2P} hereafter. %
Similarly, we omit \AlgName{Min-Max-MH} 
because it is dominated by \AlgName{Hashing}.
We use a buffer size of $15\%$ for testing the buffered algorithms \AlgName{REF}, \AlgName{REF\_RLX}, and \AlgName{REF\_RLX\_SV}, following the best results outlined in~\cite{tacsyaran2021streaming}. %
We omit the first two of them since they are dominated by the latter one, which is referred to as \AlgName{RRS(0.15)}   from now on.
Since \AlgName{Min-Max-L$\ell$} is not dominated by any other algorithm, we exhibit its results with $\ell=5$, as seen in the best results in~\cite{tacsyaran2021streaming}, and \hbox{we refer to it as \AlgName{MM-L5} from this point}.

\paragraph*{Connectivity.}
We start by looking at the connectivity metric. 
In Figure~\ref{fig:freight_connectivity_impr}, we plot the average connectivity improvement over \AlgName{Hashing}  for each value of~$k$. 
\AlgName{PaToH-con}  produces the best connectivity on average, yielding an average improvement of $443\%$ when compared to \AlgName{Hashing}.
This is in line with previous works in the area of (hyper)graph partitioning, i.e. streaming algorithms typically compute worse solutions than internal memory algorithms, which have access to the whole graph. 
\AlgName{FREIGHT-con} is found to be the second best algorithm in terms of connectivity, outperforming both the internal memory algorithm \AlgName{HYPE}   and the buffered streaming algorithm \AlgName{RRS(0.15)}.
On average, these three algorithms improve $194\%$, $171\%$, and $136\%$ over \AlgName{Hashing}, respectively.
Finally, \AlgName{MM-N2P} and \AlgName{MM-L5} compute solutions which improve $111\%$ and $96\%$ over \AlgName{Hashing}  on average, respectively.
In direct comparison, \AlgName{FREIGHT-con} shows average connectivity improvements of $8\%$, $24\%$, $39\%$, and $50\%$ over \AlgName{HYPE}, \AlgName{RRS(0.15)}, \AlgName{MM-N2P}, and \AlgName{MM-L5}, respectively.
Note that each algorithm retains its relative ranking in terms of average connectivity \hbox{over all values of~$k$}.

In Figure~\ref{fig:freight_connectivity_pp}, we plot connectivity performance profiles across all experiments.
Note that \AlgName{PaToH-con}  produces the best overall connectivity for $96.4\%$ of the instances, while \AlgName{FREIGHT-con} produces the best connectivity for $3.1\%$ of the instances and no other algorithm computes the best connectivity for more than $0.35\%$ of the instances.
The connectivity produced by  \AlgName{FREIGHT-con}, \AlgName{HYPE}, \AlgName{RRS(0.15)}, \AlgName{MM-N2P}, \AlgName{MM-L5}, and \AlgName{Hashing}  are within a factor~$2$ of the best found connectivity for $67\%$, $61\%$, $47\%$, $41\%$, $34\%$, and $9\%$ of the instances, respectively.
In summary, \AlgName{FREIGHT-con} produces the best connectivity among (buffered) streaming competitors, outperforming even \hbox{in-memory algorithm \AlgName{HYPE}}.

\paragraph*{Cut-Net.}
Next we examine at the cut-net metric.
In Figure~\ref{fig:freight_cut_impr}, we plot the cut-net improvement over \AlgName{Hashing}.
\AlgName{PaToH-cut}  produces the best overall cut-net, with an average improvement of $100\%$ compared to \AlgName{Hashing}.
\AlgName{FREIGHT-cut} is found to be the second best algorithm with respect to cut-net, superior to internal-memory algorithm \AlgName{HYPE}   and buffered streaming algorithm \AlgName{RRS(0.15)}.
These three algorithms improve connectivity over \AlgName{Hashing}  by $37\%$, $30\%$, and $17\%$ respectively.
Finally, both \AlgName{MM-N2P} and \AlgName{MM-L5} improve connectivity by $13\%$ on average over \AlgName{Hashing}.
In direct comparison, \AlgName{FREIGHT-cut} shows average connectivity improvements of $6\%$, $18\%$, $22\%$, and $22\%$ over \AlgName{HYPE}, \AlgName{RRS(0.15)}, \AlgName{MM-N2P}, and \AlgName{MM-L5}, respectively.
Each algorithm preserves its relative ranking in average cut-net \hbox{across all values of~$k$}.


In Figure~\ref{fig:freight_cut_pp}, we plot cut-net performance profiles across all experiments.
In the plot, \AlgName{PaToH-cut}  produces the best overall connectivity for $98.0\%$ of the instances, while \hbox{\AlgName{FREIGHT-cut}} and \AlgName{HYPE}   produce the best cut-net for $6.8\%$ and $5.2\%$ of the instances and all other streaming algorithms (\AlgName{RRS(0.15)}, \AlgName{MM-N2P}, \AlgName{MM-L5}, and \AlgName{Hashing}) produce the best cut-net for $4.8\%$ of the instances.
The cut-net results produced by  \AlgName{FREIGHT-cut}, \AlgName{HYPE}, \AlgName{RRS(0.15)}, \AlgName{MM-N2P}, \AlgName{MM-L5}, and \AlgName{Hashing}  are within a factor~$2$ of the best found cut-net for $83\%$, $79\%$, $69\%$, $66\%$, $66\%$, and $58\%$ of the instances, respectively.
This shows that \AlgName{FREIGHT-cut} produces the best cut-net among all (buffered) streaming competitors and \hbox{even beats the in-memory algorithm \AlgName{HYPE}}.

\paragraph*{Running Time.}
Now we compare the algorithms' runtime.
Boxes and whiskers in Figure~\ref{fig:freight_time_bp} display the distribution of the running time per pin, measured in nanoseconds, for all instances.
\AlgName{Hashing}, \AlgName{FREIGHT-cut}, and \AlgName{FREIGHT-con} are the three fastest algorithms, with median runtimes per pin of \numprint{15}ns, \numprint{38}ns, and \numprint{41}ns, respectively.
\AlgName{MM-L5}, \AlgName{MM-N2P}, \AlgName{HYPE}, and \AlgName{RRS(0.15)}   follow with median runtimes per pin of \numprint{130}ns, \numprint{437}ns, \numprint{792}ns, and \numprint{833}ns, respectively.
Lastly, the algorithms with the highest median runtime per pin are \AlgName{PaToH-cut}  and \AlgName{PaToH-con}, with \numprint{2516}ns and \numprint{3333}ns respectively.
The measured runtime per pin for both \AlgName{HYPE}   and \AlgName{PaToH}  align with \hbox{values reported in prior research~\cite{schlag2020high}}.

In Figure~\ref{fig:freight_time_pp}, we show running time performance profiles.
\AlgName{Hashing}  is the fastest algorithm for $98.3\%$ of the instances, while \AlgName{FREIGHT-cut} is the fastest one for $1.2\%$ of the instances and no other algorithm is the fastest one for more than $0.4\%$ of the instances. 
The running time of \AlgName{FREIGHT-cut} and \AlgName{FREIGHT-con} is within a factor 4 of that of \AlgName{Hashing}  for $82\%$ and $72\%$ of instances, respectively. 
In contrast, for only $16\%$ of instances does this occur for \AlgName{MM-L5}, and for less than $0.4\%$ of instances for all other algorithms.
The close running times of \AlgName{FREIGHT} to \AlgName{Hashing}  are surprising given \AlgName{FREIGHT}'s superior solution quality compared to \AlgName{Hashing}  and \hbox{all other streaming algorithms and even \AlgName{HYPE}}.

\paragraph*{Further Comparisons.}
\label{subpar:freight_Further Comparison}
For graph partitioning \AlgName{FREIGHT} and \AlgName{Fennel} are mathematically equivalent. 
However, \AlgName{FREIGHT} exhibits a lower computational complexity of $O(m+n)$ compared to the standard implementation of \AlgName{Fennel}, which has a complexity of $O(m+nk)$ due to evaluating all blocks for each node. 
To optimize its performance for this use case, we have implemented an optimized version of \AlgName{FREIGHT} with a memory consumption of $O(n+k)$, matching that of \AlgName{Fennel}. 
In our experiments, we utilized the same graphs as in~\cite{StreamMultiSection} and tested with $k \in \{512,1024,1536,2048,2560\}$.
On average, \AlgName{FREIGHT} proves to be 109 times faster than the standard implementation of \AlgName{Fennel}. 
Moreover, the performance gap is found to increase as the value of $k$ grow, with \AlgName{FREIGHT} reaching up to 261 times \hbox{faster than \AlgName{Fennel} in some instances}.


\section{Experimental Comparison}
\label{sec:Experimental Comparison}

In this section, we present an unpublished experimental comparison of our three streaming algorithms.
We performed experiments for $k={8,16,32,64,128,256}$ on the \emph{huge graphs} listed in Table~\ref{tab:streaming_hugeResults} using Machine~B, which is a relatively modest machine.
As in Section~\ref{sec:heistream_huge_graphs}, we did not repeat each test multiple times with different seeds.
As mentioned in Section~\ref{sec:heistream_huge_graphs}, we also ran \AlgName{Metis} and \AlgName{KaHIP} on these graphs, but they failed on all instances as they require more memory than \hbox{the machine has}.

Like in Section~\ref{sec:heistream_huge_graphs}, we refer to setups of \AlgName{HeiStream} with specific buffer sizes as \AlgName{HeiStream}($X$k), where a buffer contains $X\times 1024$ nodes.
Apart from \AlgName{HeiStream}, we ran \AlgName{Online Recursive Multi-Section} (referred to as \AlgName{nh-OMS}) and \AlgName{FREIGHT}.
In particular, we use here the best parameters of \AlgName{nh-OMS} found in the tuning experiments in Section~\ref{subsec:Parameter Study}.
Table~\ref{tab:streaming_hugeResults} provides detailed per-instance results with \AlgName{HeiStream} using large buffer sizes capable of running on Machine~B.
We have excluded the IO delay of loading the input graph from the disk in Table~\ref{tab:streaming_hugeResults} as it depends on the disk and is roughly the same regardless of the used partitioning algorithm.
The delay (in seconds) is \hbox{reported in Section~\ref{sec:heistream_huge_graphs}}.

\begin{table}[p]
	\centering
	\scriptsize
	\setlength{\tabcolsep}{2.pt}
	\begin{tabular}{lr@{\hskip 30pt}rrr@{\hskip 30pt}rr@{\hskip 30pt}rr}
		\toprule	
		\multicolumn{1}{l}{\multirow{2}{*}{Graph}} & \multirow{2}{*}{k} & \multicolumn{3}{l}{\AlgName{HeiStream}(Xk)}                                               & \multicolumn{2}{l}{\AlgName{nh-OMS}}                               & \multicolumn{2}{l}{\AlgName{FREIGHT}}   \\ 
		\multicolumn{1}{l}{}                       &                    & \multicolumn{1}{l}{X} & \multicolumn{1}{l}{CE(\%)} & \multicolumn{1}{l}{RT(s)} & \multicolumn{1}{l}{CE(\%)} & \multicolumn{1}{l}{RT(s)} & \multicolumn{1}{l}{CE(\%)} & \multicolumn{1}{l}{RT(s)}  \\ 
		
                \midrule
		\multirow{6}{*}{uk-2005}                   %
		& 8                  & 1024                  & \textbf{4.03}              & 290.23                      & 22.08                      & 26.96                       & 23.87                      & 20.07                                    \\
		& 16                 & 1024                  & \textbf{6.01}              & 300.04                      & 25.40                      & 27.85                       & 26.79                      & 21.49                                    \\
		& 32                 & 1024                  & \textbf{7.65}              & 310.72                      & 28.34                      & 34.01                       & 30.15                      & 21.16                                    \\
		& 64                 & 1024                  & \textbf{8.99}              & 322.14                      & 30.58                      & 36.52                       & 32.10                      & 21.89                                    \\
		& 128                & 1024                  & \textbf{9.94}              & 346.73                      & 32.05                      & 41.89                       & 33.31                      & 22.74                                    \\	
		& 256                & 1024                  & \textbf{10.68}             & 386.64                      & 33.71                      & 44.50                       & 34.35                      & 23.51                                  \\ 
                \midrule
		\multirow{6}{*}{twitter7}                  %
		& 8                  & 512                   & \textbf{41.64}             & 1727.13                      & 60.14                      & 178.59                       & 55.13                      & 173.88                                    \\
		& 16                 & 512                   & \textbf{47.04}             & 1774.92                      & 72.74                      & 182.74                       & 61.73                      & 184.13                                   \\
		& 32                 & 512                   & \textbf{52.59}             & 1884.16                      & 81.78                      & 187.33                       & 65.07                      & 179.02                                   \\
		& 64                 & 512                   & \textbf{57.53}             & 1988.11                      & 87.64                      & 205.12                       & 76.78                      & 183.84                                   \\
		& 128                & 512                   & \textbf{61.87}             & 2113.34                      & 92.06                      & 228.06                       & 81.20                      & 185.98                                    \\
		& 256                & 512                   & \textbf{65.47}             & 2357.92                      & 94.28                      & 249.24                       & 83.39                      & 188.20                                   \\ 
                \midrule
		\multirow{6}{*}{sk-2005}                   %
		& 8                  & 1024                  & \textbf{3.23}              & 634.79                      & 20.86                      & 40.10                       & 23.89                      & 34.13                                   \\
		& 16                 & 1024                  & \textbf{4.11}              & 648.48                      & 25.69                      & 43.63                       & 28.87                      & 33.97                                   \\
		& 32                 & 1024                  & \textbf{5.32}              & 667.84                      & 28.71                      & 49.29                       & 31.90                      & 35.58                                   \\
		& 64                 & 1024                  & \textbf{7.55}              & 695.60                      & 32.59                      & 53.15                       & 35.20                      & 35.87                                   \\
		& 128                & 1024                  & \textbf{8.95}              & 733.05                      & 35.65                      & 61.62                       & 38.59                      & 37.56                                   \\ 
		& 256                & 1024                  & \textbf{12.02}              & 798.73                     & 39.64                      & 64.80                       & 43.15                      & 38.58                               \\ 
                \midrule
		\multirow{6}{*}{soc-friendster}           %
		& 8                  & 1024                  & \textbf{27.36}             & 4099.35                     & 37.68                      & 362.80                      & 35.54                      & 383.06                                   \\
		& 16                 & 1024                  & \textbf{34.50}             & 4202.04                     & 52.70                      & 393.57                      & 50.06                      & 383.60                                   \\
		& 32                 & 1024                  & \textbf{39.52}             & 4345.96                     & 64.99                      & 426.01                      & 54.70                      & 374.97                                   \\
		& 64                 & 1024                  & \textbf{46.35}             & 4546.98                     & 76.39                      & 427.77                      & 57.88                      & 388.27                                   \\
		& 128                & 1024                  & \textbf{52.41}             & 4796.56                     & 80.88                      & 481.40                      & 60.59                      & 394.40                                   \\ 
		& 256                & 1024                  & \textbf{57.79}             & 5323.08                     & 83.32                      & 477.39                      & 63.32                      & 408.62                                  \\ 
                \midrule
		\multirow{6}{*}{er-fact1.5s26}           %
		& 8                  & 1024                  & \textbf{73.27}             & 2216.99                     & 74.33                      & 233.06                      & 73.44                      & 231.99                                   \\
		& 16                 & 1024                  & \textbf{80.18}             & 2292.12                     & 81.59                      & 255.07                      & 80.40                      & 243.52                                   \\
		& 32                 & 1024                  & \textbf{84.36}             & 2400.35                     & 86.27                      & 293.15                      & 84.63                      & 255.88                                   \\
		& 64                 & 1024                  & \textbf{86.99}             & 2534.09                     & 89.05                      & 316.62                      & 87.31                      & 253.72                                   \\
		& 128                & 1024                  & \textbf{88.72}             & 2725.81                     & 90.82                      & 374.28                      & 89.10                      & 248.61                                   \\ 
		& 256                & 1024                  & \textbf{89.99}             & 2913.95                     & 91.80                      & 393.10                      & 90.45                      & 261.57                                   \\ 
                \midrule
		\multirow{6}{*}{RHG1}                     %
		& 8                  & 1024                  & \textbf{0.04}              & 380.04                      & 1.89                       & 58.12                       & 2.02                       & 44.17                                   \\
		& 16                 & 1024                  & \textbf{0.06}              & 391.63                      & 1.99                       & 60.36                       & 2.12                       & 44.83                                   \\
		& 32                 & 1024                  & \textbf{0.09}              & 406.56                      & 2.02                       & 70.22                       & 2.16                       & 46.87                                   \\
		& 64                 & 1024                  & \textbf{0.15}              & 435.71                      & 2.05                       & 75.38                       & 2.17                       & 45.09                                   \\
		& 128                & 1024                  & \textbf{0.22}              & 482.06                      & 2.07                       & 92.34                       & 2.18                       & 47.09                                   \\
		& 256                & 1024                  & \textbf{0.34}              & 569.77                      & 2.09                       & 92.16                       & 2.19                       & 46.92                              \\ 
                \midrule
		\multirow{6}{*}{RHG2}                      %
		& 8                  & 1024                  & 0.09                       & 621.56                      & \textbf{0.04}              & 73.03                       & 0.05                       & 57.02                                   \\
		& 16                 & 1024                  & 0.13                       & 632.61                      & \textbf{0.04}              & 72.78                       & 0.08                       & 57.87                                   \\
		& 32                 & 1024                  & 0.19                       & 648.68                      & \textbf{0.04}              & 84.19                       & 0.12                       & 57.34                                   \\
		& 64                 & 1024                  & 0.29                       & 674.36                      & \textbf{0.05}              & 92.05                       & 0.18                       & 57.78                                   \\
		& 128                & 1024                  & 0.44                       & 727.66                      & \textbf{0.06}              & 97.14                       & 0.27                       & 56.64                                   \\
		& 256                & 1024                  & 0.68                       & 816.60                      & \textbf{0.08}              & 105.69                      & 0.44                       & 61.23                                 \\ 
                \midrule
		\multirow{6}{*}{uk-2007-05}                %
		& 8                  & 1024                  & \textbf{0.54}              & 1024.26                      & 23.04                      & 76.25                       & 25.28                     & 62.97                                   \\
		& 16                 & 1024                  & \textbf{0.60}              & 1045.36                      & 26.69                      & 82.78                       & 28.18                     & 65.81                                   \\
		& 32                 & 1024                  & \textbf{0.70}              & 1058.73                      & 28.14                      & 94.47                       & 29.46                     & 66.08                                   \\
		& 64                 & 1024                  & \textbf{0.92}              & 1099.64                      & 29.74                      & 103.21                      & 29.92                     & 66.86                                   \\
		& 128                & 1024                  & \textbf{1.31}              & 1163.45                      & 30.80                      & 118.08                      & 30.63                     & 67.78                                   \\
		& 256                & 1024                  & \textbf{1.95}              & 1280.48                      & 32.10                      & 125.89                      & 31.54                     & 70.58                                 \\ 
                \bottomrule
	\end{tabular}
	\vspace*{-.25cm}
	\caption{Our algorithms compared. CE and RT denote cut edges and running time. 
	}
	\label{tab:streaming_hugeResults}
\end{table}

The results indicate that \AlgName{HeiStream} performs better than its competitors in terms of solution quality for most instances.
Notably, it produces partitions with significantly lower edge-cut compared to our other streaming algorithms for four tested graphs, namely uk-2005, sk-2005, uk-2007-05, and RHG1.
For social networks soc-friendster and twitter7, \AlgName{HeiStream} outperforms all other algorithms, but the improvement over \AlgName{nh-OMS} and \AlgName{FREIGHT} is not as substantial as in other cases.
However, there is an exception on the RHG2 network, where \AlgName{HeiStream} produces edge-cut values below $0.7\%$, but \AlgName{FREIGHT} performs better and \AlgName{nh-OMS} produces even lower edge-cut values.
Overall, \AlgName{FREIGHT} and \AlgName{nh-OMS} produce comparable quality results, with \AlgName{FREIGHT} being slightly better on three graphs and \AlgName{nh-OMS} being slightly better \hbox{on five graphs}.

In terms of running time, \AlgName{HeiStream} is considerably slower than \AlgName{nh-OMS} and \AlgName{FREIGHT}.
Although the runtime complexity of \AlgName{HeiStream} is $O(m+n)$, which is not worse than that of \AlgName{nh-OMS} and \AlgName{FREIGHT}, its constant factors are larger due to more complex computations.
It is also worth noting that the running time of \AlgName{nh-OMS} increases with increasing $k$, while the running time of \AlgName{FREIGHT} remains roughly constant for any $k$.
This is expected because the running time of \AlgName{nh-OMS} is $O((m+n)\log{k})$, while the running time of \AlgName{FREIGHT} for graphs is $O(m+n)$.
As a result, \AlgName{FREIGHT} becomes faster than \AlgName{nh-OMS} \hbox{as $k$ increases}.

\vfill

\section{Conclusion}
\label{sec:streaming_conclusion}

In this chapter, we proposed three streaming algorithms for (hyper)graph decomposition: \AlgName{HeiStream}, \AlgName{Online Recursive Multi-Section}, and \AlgName{FREIGHT}.
All were discussed in detail and were subjected to extensive experimental evaluation against the state-of-the-art.
Lastly, we presented an experimental comparison between them for the graph \hbox{partitioning problem}.

We proposed \AlgName{HeiStream}, a buffered streaming graph partitioning algorithm.
It combines the buffered streaming model with multilevel graph partitioning techniques and an extension of \AlgName{Fennel} to a multilevel algorithm.
Compared to the previous state of-the-art, \AlgName{HeiStream} computes significantly better solutions while at the same time being faster in many cases.
An important property of \AlgName{HeiStream} is that its running time does not depend on the number of blocks, while the previous state-of-the-art streaming partitioning algorithms have running time almost proportional to this number of blocks. 

We proposed \AlgName{Online Recursive Multi-Section}, a streaming algorithm to compute hierarchical partitionings of graphs. 
In terms of time complexity our algorithm outperforms previous state-of-the-art one-pass streaming algorithms for graph partitioning while also implicitly optimizing process mapping objectives.
To the best of our knowledge, this is the first streaming algorithm for the process mapping problem. %
We present extensive experimental results in which compare it against the previous state-of-the-art algorithms for streaming non-buffered one-pass graph partitioning, and show that we can speed it up even further with a multi-threaded parallelization.
Experiments show that our algorithm is up to two orders of magnitude faster than the previous state of-the-art while producing solutions with better communication cost and slightly worse edge-cut.
Moreover, our algorithm is only~three times slower than \AlgName{Hashing} when running on 32 threads while computing significantly better results.

We proposed \AlgName{FREIGHT}, a streaming algorithm for hypergraph partitioning. 
Our algorithm leverages an optimized data structure, resulting in linear running time with respect to pin-count and linear memory consumption in relation to the numbers of nets and blocks.
The results of our extensive experimentation demonstrate that the running time of \AlgName{FREIGHT} is competitive with the \AlgName{Hashing}  algorithm, with a maximum difference of a factor of four observed in three fourths of the instances.
Importantly, our findings indicate that \AlgName{FREIGHT} consistently outperforms all existing (buffered) streaming algorithms and even the in-memory algorithm \AlgName{HYPE}, with regards to both cut-net and connectivity measures. 
This underscores the significance of our proposed algorithm as a highly efficient and effective solution for hypergraph partitioning in the context of large-scale and \hbox{dynamic data processing.}


We presented an unpublished experimental comparison of our three streaming algorithms.
Although each of them is designed for a specific problem, they are all capable of solving the graph partitioning problem, so we compare them for this problem.
In our evaluation \AlgName{HeiStream} produces the lowest edge-cut while consuming the most runtime.
Our results show that, in terms of graph partitioning, \AlgName{FREIGHT} and \AlgName{Online Recursive Multi-Section} produce comparable solution quality but \AlgName{FREIGHT} becomes considerably faster than \AlgName{Online} \AlgName{Recursive} \AlgName{Multi-Section} for larger \hbox{numbers of blocks}.

In future work, we intend to parallelize \AlgName{HeiStream} and extend it to solve other (hyper)graph decomposition problems such as hypergraph partitioning and process mapping.
While \AlgName{Online Recursive Multi-Section} is already useful for a wide-range of applications that need (hierarchical) partitions very fast, in future work, we plan to parallelize the algorithm in the distributed memory model and want to port it to GPUs.
Moreover, we intend to parallelize \AlgName{FREIGHT} and try to make it even more effective by adapting its mathematical formulation to more accurately encode the connectivity \hbox{gains involved}.

\chapter{Local Algorithms}
\label{chap:Local Algorithms}

In this chapter, we propose our contributions in the field of local graph decomposition.
Our contributions comprise the detailed design of two algorithms, namely, \AlgName{LMCHGP} and \AlgName{SOCIAL}, for solving the local motif clustering problem.
\AlgName{LMCHGP} constructs a (hyper)graph model centered around the seed node in such a way that optimizing for conductivity in the model is equivalent to optimizing for motif conductance in the original graph.
The model is then partitioned utilizing high-quality (hyper)graph partitioning techniques available in the literature.
On the other hand, \AlgName{SOCIAL} builds the same hypergraph model around the seed node but converts it into a flow network that guarantees exact quality improvement.
The flow network is then recursively solved using a standard max-flow algorithm until a locally optimal cluster is obtained.
In our experiments, both \AlgName{LMCHGP} and \AlgName{SOCIAL} outperform the state-of-the-art by producing clusters with lower motif conductance value while being significantly faster, up to multiple orders of magnitude.

\paragraph*{References.}
This chapter is based on~\cite{LocMotifClusHyperGraphPartition}~and~\cite{LocMotifClusHyperGraphPartitionExtAbs}, which is joint work with Adil Chhabra and Christian Schulz, and on the technichal report~\cite{LocMotifClusMaxFlows}, which is also joint work with Adil Chhabra and Christian Schulz and is currently under submission.

\section{Local Motif Clustering}
\label{chap:lmcvhgp_Local Motif Clustering via (Hyper)Graph Partitioning}
\label{sec:lmcvhgp_Local Motif Clustering via (Hyper)Graph Partitioning}

Within this section, we propose two novel algorithms to solve the local motif clustering problem using sophisticated combinatorial algorithms. 

As a first contribution, we propose \AlgName{LMCHGP}~(Local Motif Clustering via (Hyper)Graph Partitioning).
This is an algorithm based on (hyper)graph partitionig which has two versions: one based on a graph model and the other one based on hypergraph model.
To begin, our algorithm constructs a (hyper)graph model that represents the distribution of motifs around the seed node on the original graph.
While the graph model accurately captures motifs of size at most three, the hypergraph model accommodates arbitrary motifs and is designed to minimize motif conductance in the original network.
We then partition the (hyper)graph model using a powerful multi-level hypergraph or graph partitioner to directly minimize the motif conductance of the corresponding partition in the original graph.

As a second contribution, we propose an algorithm named \AlgName{SOCIAL} (\emph{faSter mOtif Clustering vIa mAximum fLows}).
This algorithm optimizes for motif conductance by combining the strongly local hypergraph model proposed in Chapter~\ref{chap:lmcvhgp_Local Motif Clustering via (Hyper)Graph Partitioning} with an adapted version of the fast and effective algorithm \emph{max-flow quotient-cut improvement}(\AlgName{MQI})\cite{mqipaper2004}.
Using the same hypergraph model as our previous algorithm, we construct a flow model in which certain cuts correspond one-to-one with subsets of the initial cluster that include the seed node and have lower motif conductance than that of the whole cluster.
We then utilize a push-relabel algorithm to find such a cut and recursively repeat the process, or prove that the current cluster is optimal among all its sub-clusters containing the seed node.

We extensively evaluate our proposed algorithms \AlgName{LMCHGP} and \AlgName{SOCIAL} against the state-of-the-art.
In experiments involving triangle motifs, \AlgName{LMCHGP} and \AlgName{SOCIAL} compute clusters with a motif conductance value lower than the state-of-the-art, while also being up to multiple orders of magnitude faster.

\subsection{LMCHGP}
\label{sec:lmcvhgp_Local Motif Clustering via (Hyper)Graph Partitioning}
\label{sec:lmcvhgp_LMCHGP}

\subsubsection{Overall Strategy}
\label{subsec:lmcvhgp_Overall Strategy}

Given a graph $G=(V,E)$, a seed node $u$, and a motif $\mu$, our strategy for local clustering is based on four consecutive phases.
First, we select a set $S \subseteq V$ containing~$u$ and close-by nodes.
From now on, we refer to this set $S$ as a \emph{ball around} $u$.
Second, we enumerate the collection $M$ of occurrences of the motif $\mu$ which contain at least one node in~$S$.
Next, we build a graph or a hypergraph model $H_\mu$ depending on the configuration of the algorithm.
In particular, we design $H_\mu$ in such a way that the motif-conductance metric in $G$ can be computed directly in $H_\mu$.
Then, we partition this model into two blocks using a high-quality (hyper)graph partitioning algorithm.
The obtained partition of $H_\mu$ is directly translated back to $G$ as a local cluster around the seed node.
Figure~\ref{fig:lmcvhgp_overall_algorithm} provides a comprehensive illustration of the consecutive phases of our algorithm.
Note that (hyper)graph partitioning algorithms do not optimize for traditional clustering objectives such as conductance.
Instead, they aim at minimizing the edge-cut (resp. cut-net) value while respecting a hard balancing constraint.
To improve for the correct objective, we repeat the partitioning phase $\beta$ times with different imbalance constraints and pick the clustering with best motif conductance. 
Especially for the graph-based version of our model $H_\mu$, we subsequently run a special label propagation for each of these $\beta$ iterations in order to increase the chances of reaching a local minimum motif conductance.
Moreover, the first three phases of our strategy are repeated $\alpha$ times with different balls around the seed node in order to better explore the vicinity of the seed node in the original graph.
Our overall strategy including the mentioned repetitions is outlined in Algorithm~\ref{alg:lmcvhgp_overall_strategy}.

\begin{figure}[t]
	\centering
	\includegraphics[width=0.83\linewidth]{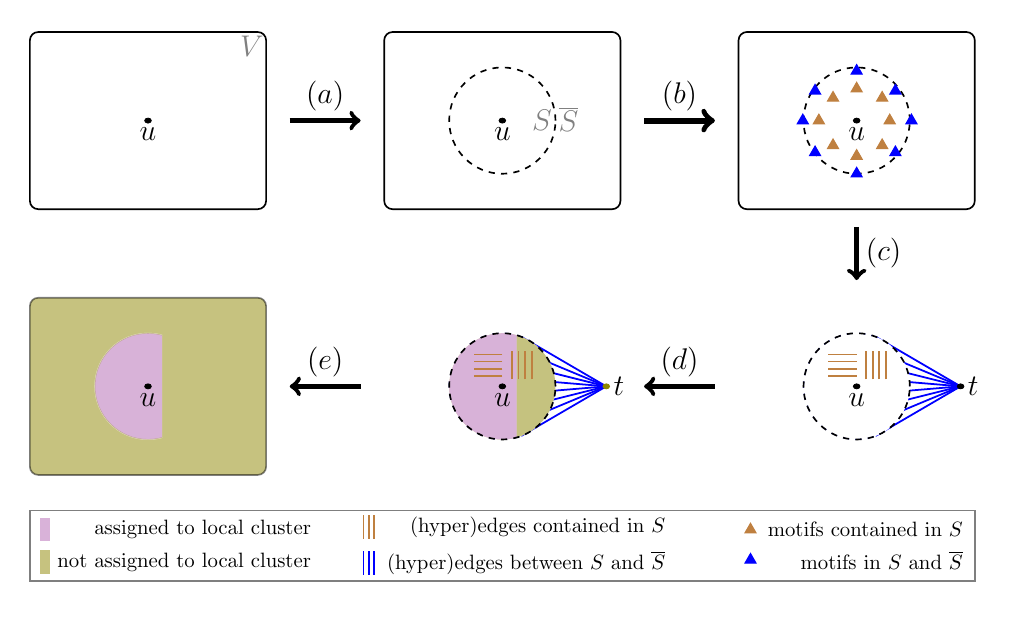}
	\caption{Illustration of the phases of our algorithm. (a)~Given a seed node $u$ and a graph $G$, a ball $S$ around $u$ is selected. (b)~Motif occurrences of $\mu$ with at least a node in $S$ are enumerated. (c)~The (hyper)graph model $H_\mu$ is built by converting motifs into (hyper)edges and contracting $\overline{S}$ into a node~$r$. (d)~The model is partitioned into two blocks using a multi-level (hyper)graph partitioner. (e)~The partition of $H_\mu$ is converted in a local cluster around the seed node in $G$.}
	\label{fig:lmcvhgp_overall_algorithm}
\end{figure}

\begin{algorithm}[t]
	\hspace*{-0.0cm} \textbf{Input} graph $G=(V,E)$; seed node $u \in V$; motif $\mu$ \\
	\hspace*{-.0cm} \textbf{Output} cluster $C^* \subseteq V$ 
	\begin{algorithmic}[1]  
		\State $C^* \leftarrow \emptyset$
		\For{$i=1,\ldots,\alpha$}
			\State Select ball $S$ around $u$
			\State $M \leftarrow$ Enumerate motifs in $S$
			\State Build (hyper)graph model $H_\mu$ based on $S$ and $M$%
		\For{$j=1,\ldots,\beta$}
		\State Partition model $H_\mu$ into $(C, \overline{C})$, where $u \in C$
		\If{$C^* = \emptyset \lor \phi_\mu(C) <  \phi_\mu(C^*)$}
		\State $C^* \leftarrow C$
		\EndIf
		\EndFor
		\EndFor
		\State Convert $C^*$ into a local motif cluster in $G$
	\end{algorithmic}
	\caption{Local Motif Clustering via (Hyper)Graph Partitioning}
	\label{alg:lmcvhgp_overall_strategy}
\end{algorithm}

\subsubsection{Ball around Seed Node}
\label{subsec:lmcvhgp_Ball around Seed}

Our approach to select $S$ is a fixed-depth breadth-first search (BFS) rooted on $u$.
More specifically, we compute the first $\ell$ layers of the BFS tree rooted on $u$, then we include all its nodes in~$S$.
For each of the $\alpha$ repetitions of our overall algorithm, we use different amounts $\ell$ of layers for a better algorithm exploration. 
Two exceptional cases are handled by our algorithm, namely a ball~$S$ that is either too small or disconnected from $\overline{S}$.
We avoid the first exceptional case by ensuring that~$S$ contains $100$ or more nodes in at least one repetition of our overall algorithm.
More specifically, in case this condition is not automatically met, then we accomplish it in the last repetition by growing additional layers in our partial BFS tree while it contains fewer than $100$ nodes.
The number $100$ is based on the findings of 
\citet{leskovec2009community}, which show that most well characterized communities from real-world graphs have a relatively small size, in the order of magnitude of $100$ nodes.
If the second exceptional case happens, it means that the whole BFS tree rooted on the seed node has at most $\ell$ layers.
In this case, we simply stop the algorithm and return the entire ball $S$, which corresponds to an optimal community with motif conductance $0$ provided that there is at least one motif in $S$.

The approach described above makes sure that there is a reasonable chance that a well characterized community containing $u$ is contained in $S$, since it has at least $100$ nodes~\cite{leskovec2009community} which are all very close to $u$.
This likelihood is further increased due to the multiple repetitions of our overall algorithm using balls $S$ of different sizes.
Our BFS approach to select $S$ can be executed in time linear on the subgraph induced in~$G$ by the closed neighborhood $N[S]$ of $S$.
After selecting~$S$, the further phases of our algorithm do not deal with the whole graph, but exclusively with~$S$, its edges, and its motif occurrences.
As a consequence, our algorithm operates on a much smaller problem dimension than the size of input graph~$G$, hence its running time corresponds to the same smaller problem dimension.
The number~$\alpha$ of repetitions as well as the amount $\ell$ of layers used in each repetition are tuning~parameters.

\subsubsection{Motif Enumeration}
\label{subsec:lmcvhgp_Motif Enumeration}

We now describe and discuss the motif-enumeration phase of our algorithm.
We optimally solve it for the triangle motif in time roughly linear on the size of the subgraph induced in~$G$ by the closed neighborhood $N[S]$ of $S$.
Moreover, we show that there are good heuristics approaches to enumerate higher-order motifs~efficiently.
The general problem of finding out if a given motif is a subgraph of some graph is NP-hard~\cite{read1977graph}, hence the enumeration of all such motifs is also NP-hard.
Nevertheless, some simpler motifs can be enumerated in polynomial time, which is the case for the triangle motif~\cite{ortmann2014triangle}.
Triangles, which can be defined as cycles or cliques of length three, have a wide variety of relevant applications on network analysis and clustering~\cite{holme2002growing,batagelj2007short,prat2012shaping}.
Without loss of generality, we specifically focus on the triangle motif within our algorithm. 
Nevertheless, note that many other small motifs can also be polynomially enumerated, such as small (directed and undirected) paths and cycles.
Moreover, our overall algorithm can also be adapted for more arbitrary motifs if we relax the optimality of the enumeration, which can be done using efficient heuristics such as the one proposed by 
\citet{kimmig2017shared}.
A simple and exact algorithm for triangle enumeration was proposed by 
\citet{chiba1985arboricity}.
Roughly speaking, this algorithm works by intersecting the neighborhoods of adjacent nodes.
For each node $v$, the algorithm starts by marking its neighbors with degree smaller than or equal to its own degree.
For each of these specific neighbors of $v$, it then scans its neighborhood and enumerates new triangles as soon as marked nodes are found.
The running time of this algorithm is $O(ma) = O(m^{\frac{3}{2}})$, where $a$ is the arboricity of the graph.
For the motif-enumeration phase of our algorithm, we apply the algorithm of 
\citet{chiba1985arboricity} only on the subgraph induced in $G$ by $N[S]$.
This is enough to find all triangles containing at least one node in $S$, as exemplified by transformation~(a) in figures~\ref{fig:lmcvhgp_model_construction}~and~\ref{fig:lmcvhgp_model_construction_graph}. 
Assuming a constant-bounded arboricity, the overall cost of our motif-enumeration phase for triangles is $O\big(|N[S] \times N[S])\cap E|\big)$. 

\subsubsection{Hypergraph Model}
\label{subsec:lmcvhgp_Hypergraph model}
In this section, we conceptually describe the hypergraph version of our model~$H_\mu$ and explain how to build it.
We show that it can be constructed in time linear on the amount of nodes in~$S$ and motifs in~$M$.
We also discuss advantages and limitations of our hypergraph model approach.

\begin{figure*}[t!]
	\centering
	\includegraphics[width=.7\linewidth]{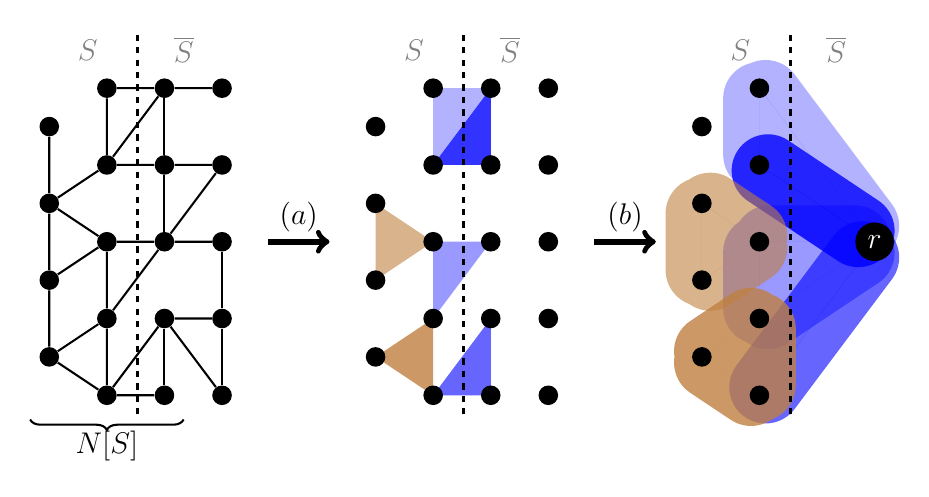}
	\caption{Example of motif-enumeration and model-construction phases of our algorithm for triangle motif and the hypergraph model. In the left, the nodes of $G$ are split into sets $S$ and $\overline{S}$. In the center, motif occurrences containing nodes in $S$ are enumerated. In the right,~$H_\mu$ is built by converting motifs into nets and contracting $\overline{S}$ into a node~$r$.}
	\label{fig:lmcvhgp_model_construction}
	\label{fig:social_model_construction}
\end{figure*}

Our hypergraph model~$H_\mu$ is built in two conceptual operations.
First, define a hypergraph containing $V$ as nodes and a set~$\mathcal{E}$ of nets such that, for each motif in~$M$, $\mathcal{E}$ has a net with pins equal to the endpoints of this motif. 
Then, we contract together all nodes in~$\overline{S}$ into a single node~$r$ and substitute parallel nets by a single net whose weight is equal to the summed weights of the removed parallel nets.
More formally, we define the hypergraph version of our model as $H_\mu = (S \cup \{r\},\mathcal{E})$ where the set~$\mathcal{E}$ of nets contains one net~$e$ associated with each motif occurrence $G'=(V',E') \in M$ such that $e = V'$ if $V'\subseteq S$, and $e = V' \cap S \cup \{r\}$ otherwise.
In the former case the net has weight~$1$, in the latter case the net has weight equal to the amount of motif occurrences in $M$ represented by it.
In a typical hypergraph contraction, the weight of $r$ would be $\mathfrc{c}(r) = c(\overline{S})$.
Nevertheless, we opt to make the nodes of $H_\mu$ unweighted, which is convenient for the purpose of our overall algorithm as will become clear later.
In practice, the hypergraph version of $H_\mu$ can be built by instantiating the nodes in $S \cup \{r\}$ and the nets in $\mathcal{E}$.
Assuming that the number of nodes in $\mu$ is a constant, our model is built in time $O(|S|+|M|)$ and uses memory $O(|S|+|M|)$.
The construction of~$H_\mu$ is illustrated in transformation~(c) of Figure~\ref{fig:lmcvhgp_overall_algorithm} and demonstrated for a particular example in transformation~(b)~of~Figure~\ref{fig:lmcvhgp_model_construction}.

Observing the relationship between $G$ and $H_\mu$, we can distinguish three groups of components.
The first group comprises nodes in $S$ and motifs with all endpoints in $S$, all of which are represented in $H_\mu$ without any contraction as nodes and nets.
The second group consists of nodes in $\overline{S}$ and motifs with all endpoints in $\overline{S}$, which are compactly represented in $H_\mu$ as the contracted node~$r$.
The third group comprises motifs with nodes in both $S$ and $\overline{S}$, all of which are abstractly represented in $H_\mu$ as nets containing individual pins in $S$ as well as the pin~$r$.
Summing up, our hypergraph model is a concise representation of the whole graph $G$ where relevant information for local motif clustering is emphasized in two perspectives:
Edges are omitted while motifs are made explicit and global information is abstracted while local information is preserved in detail.
Theorem~\ref{theo:lmcvhgp_cut_equivalence} shows that the cut-net of a partition of $H_\mu$ directly corresponds to the motif-cut of an equivalent partition in $G$ if our motif enumeration step is exact.
On top of that, Theorem~\ref{theo:lmcvhgp_conductance_equivalence} shows that the motif conductance of this equivalent partition of $G$ can be directly computed from $H_\mu$ assuming $d_\mu(S) \leq d_\mu(\overline{S})$.
Assuming $d_\mu(S) \leq d_\mu(\overline{S})$ is fair since $S$ is ideally much smaller than~$\overline{S}$. 
Enumerating the motifs in $\overline{S}$ is not reasonable for a local clustering algorithm, but we did verify that our assumption holds during all our experiments.

\begin{theorem}
	Every $k$-way partition $P$ of the hypergraph model $H_\mu$ corresponds to a distinct $k$-way partition $P^\prime$ of $G$, such that the cut-net of $P$ is equivalent to the motif-cut of $P^\prime$, assuming an exact motif enumeration step.
	\label{theo:lmcvhgp_cut_equivalence}
\end{theorem}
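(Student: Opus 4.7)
The plan is to set up an explicit correspondence between partitions of $H_\mu$ and partitions of $G$, and then account for the cut contribution of each motif/net under this correspondence. First I would define, for any $k$-way partition $P = (B_1,\ldots,B_k)$ of $H_\mu = (S \cup \{r\}, \mathcal{E})$, the induced partition $P' = (B'_1,\ldots,B'_k)$ of $G$ by setting, for each node $v \in V$,
\[
v \in B'_j \iff
\begin{cases}
v \in B_j, & \text{if } v \in S, \\
r \in B_j, & \text{if } v \in \overline{S}.
\end{cases}
\]
This map is well defined since the blocks of $P$ partition $S \cup \{r\}$, and it is injective because different block assignments in $H_\mu$ yield different assignments either of some node in $S$ or of the entire set $\overline{S}$.

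Next I would argue net-by-motif that the cut-net of $P$ equals the motif-cut of $P'$. By construction of $H_\mu$ (and by the exactness assumption on the enumeration step), the motifs in $M$ are in bijection with the ``pre-contraction'' nets, and the nets of $\mathcal{E}$ arise from this collection by collapsing parallel nets and summing their weights. So it suffices to show that each motif $G' = (V',E') \in M$ is cut in $P'$ iff the pin-set $e(G')$ associated with it crosses blocks in $P$, because then summing the weights of cut nets recovers exactly the total number of cut motifs, weighted appropriately. I would split this into the two cases from the model definition: (i) $V' \subseteq S$, in which case $e(G') = V'$ and pin-blocks under $P$ coincide pointwise with node-blocks of $V'$ under $P'$, so ``cut'' transfers directly; (ii) $V' \not\subseteq S$, in which case $e(G') = (V' \cap S) \cup \{r\}$, and every node of $V' \cap \overline{S}$ is placed in $P'$ into the same block as $r$ is in $P$, so again the multiset of blocks hit by pins of $e(G')$ equals the multiset of blocks hit by nodes of $V'$.

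The main obstacle, which deserves a careful sentence, will be the bookkeeping around parallel nets. Several motifs $G'_1,\ldots,G'_t \in M$ may induce the same pin-set $e$ after contraction of $\overline{S}$, and these are merged into a single net of weight $t$. I would handle this by observing that all $t$ such motifs either all share the same cut-status under $P'$ (because cut-status depends only on the multiset of blocks hit by the pins of their common collapsed pin-set), so the single merged net of weight $t$ contributes exactly $t$ to the cut-net of $P$ iff exactly $t$ motifs are cut in $P'$. Summing over all nets in $\mathcal{E}$ then yields the equality of cut-net of $P$ and motif-cut of $P'$, completing the proof; distinctness of the induced partitions follows immediately from the injectivity of the map $P \mapsto P'$ established earlier.
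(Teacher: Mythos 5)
Your proof is correct and follows essentially the same route as the paper's: a node-level correspondence from $H_\mu$ to $G$ (collapsing $\overline{S}$ onto $r$), followed by transferring cut status through the bijection between enumerated motifs and pre-collapse nets, with the parallel-net merging handled by weight bookkeeping---which the paper's proof disposes of in a single introductory sentence rather than carrying through as you do. One minor imprecision in your case (ii): when $|V' \cap \overline{S}| > 1$ the \emph{multiset} of blocks hit by $V'$ under $P'$ need not equal the multiset of blocks hit by the pins of $e(G')$ under $P$, but the underlying \emph{sets} do coincide, and cut status depends only on the set, so the argument still goes through.
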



\begin{proof}
	For simplicity, we prove the claim assuming that parallel nets are not substituted by a single net whose weight is equal to their summed weights.
	This proof directly extends to our model since the contribution of a contracted cut net to the overall cut-net equals the contribution of the parallel nets represented by it.
	Due to the design of our hypergraph model $H_\mu$, there is a direct correspondence between its nodes and the nodes of $G$.
	Hence, any partition $P$ of $H_\mu$ corresponds to a partition $P^\prime$ of $G$ where corresponding nodes are simply assigned to the same blocks.
	Since $\overline{S}$ is represented by the single node~$r$ in $H_\mu$, no motif occurrence totally contained in $\overline{S}$ can be cut in $P^\prime$.
	All the remaining motif occurrences in $G$ can be potentially cut in $P^\prime$, but these motif occurrences are bijectively associated with the nets of $H_\mu$ with a direct correspondence between motif endpoints in $G$ and net pins in $H_\mu$.
	As a consequence, a motif occurrence of $G$ is cut in $P^\prime$ if, and only if, the corresponding net in $H_\mu$ is cut in~$P$.
\end{proof}

\begin{theorem}
	Given a $2$-way partition $P=(C,\overline{C})$ of our hypergraph model $H_\mu$ with~$r \in \overline{C}$, the motif conductance $\phi_\mu(C^\prime)$ of the corresponding $2$-way partition $P^\prime=(C^\prime,\overline{C^\prime})$ of $G$ is the ratio of the cut-net of~$P$ to $\mathfrc{d}_{\mathfrc{w}}(C)$, assuming an exact motif enumeration step and $d_\mu(S) \leq d_\mu(\overline{S})$.
	\label{theo:lmcvhgp_conductance_equivalence}
	\label{theo:social_conductance_equivalence}
\end{theorem}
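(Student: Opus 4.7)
The plan is to combine Theorem~\ref{theo:lmcvhgp_cut_equivalence} (which already handles the numerator) with a careful accounting of weighted degrees in $H_\mu$ to identify the denominator. First I would observe that, because $r \in \overline{C}$ and because the only node of $H_\mu$ representing anything outside $S$ is $r$, the block $C$ consists solely of nodes lying in $S$, and under the bijection between nodes of $H_\mu \setminus \{r\}$ and nodes of $S$ we have $C = C'$. By Theorem~\ref{theo:lmcvhgp_cut_equivalence}, the cut-net of $P$ equals the number of motif occurrences of $\mu$ cut by $P'$, i.e.\ $|M'|$ in the definition of $\phi_\mu(C')$. Hence the numerators already match, and what remains is to show $\min\big(d_\mu(C'), d_\mu(\overline{C'})\big) = \mathfrc{d}_{\mathfrc{w}}(C)$.

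Next I would establish the identity $\mathfrc{d}_{\mathfrc{w}}(v) = d_\mu(v)$ for every $v \in S$. The key observation is that each motif occurrence in $M$ containing $v$ contributes exactly one pin at $v$ in $H_\mu$: either as the unique unit-weight net associated to a fully-$S$ motif, or as one of the parallel nets merged into a single net whose weight records the multiplicity. Thus summing net weights over $\mathcal{I}(v)$ counts each incident motif exactly once, yielding $\mathfrc{d}_{\mathfrc{w}}(v) = d_\mu(v)$. Summing over $v \in C \subseteq S$ gives $\mathfrc{d}_{\mathfrc{w}}(C) = d_\mu(C) = d_\mu(C')$.

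Finally, I would use the assumption $d_\mu(S) \leq d_\mu(\overline{S})$ to conclude that the $\min$ in the definition of motif conductance is attained at $C'$. Since $C' \subseteq S$ and motif degrees are nonnegative, $d_\mu(C') \leq d_\mu(S)$; and since $\overline{S} \subseteq \overline{C'}$, $d_\mu(\overline{S}) \leq d_\mu(\overline{C'})$. Chaining these with the hypothesis gives $d_\mu(C') \leq d_\mu(\overline{C'})$, so $\min\big(d_\mu(C'), d_\mu(\overline{C'})\big) = d_\mu(C') = \mathfrc{d}_{\mathfrc{w}}(C)$. Combining this with Theorem~\ref{theo:lmcvhgp_cut_equivalence} yields the claimed ratio.

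The only real subtlety I would need to handle carefully is the bookkeeping around the merging of parallel nets during the contraction of $\overline{S}$ into $r$: one must verify that the weighting convention chosen in the model construction preserves motif counts exactly, both in the numerator (cut nets versus cut motifs through the contracted node~$r$) and in the denominator (weighted pin-degrees at nodes $v \in S$ whose incident motifs have other endpoints in $\overline{S}$). This is essentially a counting argument identical in spirit to the one already used to extend Theorem~\ref{theo:lmcvhgp_cut_equivalence}, so no new machinery is needed.
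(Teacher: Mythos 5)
Your proof is correct and follows essentially the same route as the paper's: reduce the numerator via Theorem~\ref{theo:lmcvhgp_cut_equivalence}, identify $\mathfrc{d}_{\mathfrc{w}}(C)$ with $d_\mu(C')$, and use the containments $C' \subseteq S$ and $\overline{S} \subseteq \overline{C'}$ together with $d_\mu(S) \leq d_\mu(\overline{S})$ to show the minimum is attained at $C'$. The only difference is that you spell out the pin-by-pin counting behind $\mathfrc{d}_{\mathfrc{w}}(v) = d_\mu(v)$, which the paper compresses into ``due to the design of $H_\mu$.''
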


\begin{proof}
	From Theorem~\ref{theo:lmcvhgp_cut_equivalence}, the motif-cut of $P^\prime$ can be substituted by the cut-net of $P$ in the numerator of the definition of $\phi_\mu(C^\prime)$.  
	To complete the proof, it suffices to show that the denominator of $\phi_\mu(C^\prime)$, namely $min(d_\mu(C^\prime),d_\mu(\overline{C^\prime}))$, is equal to $\mathfrc{d}_{\mathfrc{w}}(C)$.
	Due to the design of $H_\mu$, the values of $d_\mu(C^\prime)$ and $\mathfrc{d}_{\mathfrc{w}}(C)$ are identical.
	Our assumption~$r \in \overline{C}$ leads to $\overline{S} \subseteq \overline{C^\prime}$ and $C^\prime \subseteq S$, which respectively imply $d_\mu(\overline{S}) \leq d_\mu(\overline{C^\prime})$ and $d_\mu(C^\prime) \leq d_\mu(S)$. 
	Since $d_\mu(S) \leq d_\mu(\overline{S})$, hence $\mathfrc{d}_{\mathfrc{w}}(C) = d_\mu(C^\prime) \leq d_\mu(S) \leq d_\mu(\overline{S}) \leq d_\mu(\overline{C^\prime})$.
\end{proof}

\subsubsection{Graph Model}
\label{subsec:lmcvhgp_Graph model}

In this section, we describe the graph version of our model~$H_\mu$ and explain how to build it.
Similarly to the hypergraph version of this model, our graph model can be built in time linear on the amount of nodes in~$S$ and motifs in~$M$.
We discuss advantages and limitations of the graph model in comparison to the hypergraph model.

\begin{figure*}[t!]
	\centering
	\includegraphics[width=.7\linewidth]{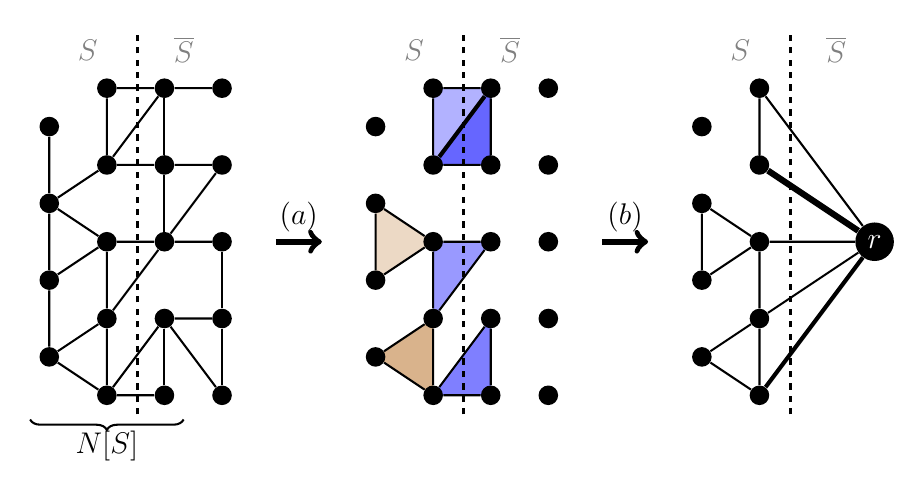}
	\caption{Example of motif-enumeration and model-construction phases of our algorithm for triangle motif and the graph model. In the left, $G$ is shown with its nodes split into the sets $S$ and $\overline{S}$. In the center, the motif occurrences containing at least a node in $S$ are enumerated and the weight of an edge equals the number triangles it touches. In the right, the model $H_\mu$ is built by contracting $\overline{S}$ into a single node~$r$. The weight of an edge is represented by its thickness.}
	\label{fig:lmcvhgp_model_construction_graph}
\end{figure*}

The graph version of our model~$H_\mu$ is built in two conceptual operations.
The first one consists of obtaining the weighted graph $W$ proposed by
\citet{benson2016higher} and used in the state-of-the-art algorithm MAPPR~\cite{yin2017local}.
The graph~$W$ contains $V$ as nodes and a set of edges such that two conditions are met: 
(i)~there is an edge between a pair of nodes if, and only if, both nodes belong at the same time to at least a motif in $G$; 
(ii)~the weight of an edge is equal to the number of motif occurrences containing both its endpoints.
The second operation consists of contracting all nodes in~$\overline{S}$ into a single node~$r$ and substitute parallel edges by a single edge whose weight is equal to the summed weights of the removed edges.
The construction of the graph version of~$H_\mu$ is illustrated in transformation~(c) of Figure~\ref{fig:lmcvhgp_overall_algorithm} and demonstrated for a particular example in transformation~(b) of Figure~\ref{fig:lmcvhgp_model_construction_graph}.
More formally, we define the model as $H_\mu=(S \cup \{r\},{E}_\mu)$ where~${E}_\mu$ contains an edge~$e$ for each pair of nodes sharing a motif $G'=(V',E') \in M$ provided that at least one of its endpoints is contained in $S$. 
The weight of $r$ is set to ${c}(r) = c(\overline{S})$.
Similarly to our approach with the hypergraph version of the model, we opt to make the nodes of the graph model unweighted in our experiments. 
In practice, the graph version of $H_\mu$ can be built by instantiating the nodes in $S \cup \{r\}$ and directly the computing the edges in ${E}_\mu$ and their weights.
Assuming that the number of nodes in $\mu$ is a constant, our graph model is built in time $O(|S|+|M|)$ and uses memory $O(|S|+|E_\mu|)$.
Especially for the triangle motif, the memory requirement of the graph model is $O(|S|+|N[S] \times N[S] \cap E|)$, which is linear on $n$ and $m$ in the worst~case.

We reproduce here Theorem~\ref{theo:lmcvhgp_conductance_equivalence_graph_complete} by \citet{yin2017local}, which shows that conductance in the weighted graph $W$ is equivalent to motif conductance in $G$ as long as the motif has at most $3$ nodes.
Based on this result, Theorem~\ref{theo:lmcvhgp_conductance_equivalence_graph} shows that we can compute motif conductance directly from our graph model $H_\mu$ if $d_\mu(S) \leq d_\mu(\overline{S})$.
As we mentioned, assuming $d_\mu(S) \leq d_\mu(\overline{S})$ is fair since $S$ is ideally much smaller than~$\overline{S}$. 
Recall that the hypergraph version of our model~$H_\mu$ is flexible enough to represent any motif as a net such that Theorems~\ref{theo:lmcvhgp_cut_equivalence}~and~\ref{theo:lmcvhgp_conductance_equivalence} continue valid.
Although the graph version of our model can technically represent any motif, Theorem~\ref{theo:lmcvhgp_conductance_equivalence_graph} is only valid for motifs with at most three nodes, while other models only allow a heuristic computation of the motif conductance~\cite{benson2016higher}.
Nevertheless, the biggest drawback of the hypergraph-based approach is the need for storing up to $|M|$ nets, which costs $O(n^{3})$ in the worst case.
In contrast, the memory needed to store our graph model is $O(n^2)$ in the worst case and $O(n+m)$ specifically for the triangle motif.

\begin{theorem}[Theorem 4.1 by \citet{yin2017local}]
	Given a $2$-way partition $P^{\prime\prime}=(C^{\prime\prime},\overline{C^{\prime\prime}})$ of the weighted graph $W$, the motif conductance $\phi_\mu(C^\prime)$ of the corresponding $2$-way partition $P^\prime=(C^\prime,\overline{C^\prime})$ in $G$ is equal to the conductance $\phi(C^{\prime\prime})$ of $C^{\prime\prime}$ in $W$, assuming a motif with at most three nodes.
	\label{theo:lmcvhgp_conductance_equivalence_graph_complete}
\end{theorem}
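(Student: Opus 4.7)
The plan is to exploit a direct node-level correspondence between $G$ and $W$ (they share the same vertex set) and reduce the claim to two combinatorial identities: one comparing cut weights in $W$ with the number of cut motif occurrences in $G$, and one comparing weighted degrees in $W$ with motif degrees in $G$. Since the partition $P''=(C'',\overline{C''})$ of $W$ induces a partition $P'=(C',\overline{C'})$ of $G$ simply by using the same vertex assignment, it suffices to show
\begin{equation*}
\omega(\mathrm{cut}(C'')) \;=\; (k-1)\,|M'| \qquad\text{and}\qquad d_\omega(v) \;=\; (k-1)\,d_\mu(v)\ \ \forall v\in V,
\end{equation*}
where $k$ denotes the number of nodes of the motif $\mu$. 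If both identities hold for $k\le 3$, the common factor $(k-1)$ cancels between numerator and denominator of $\phi(C'')$, yielding $\phi(C'')=\phi_\mu(C')$.

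The second identity is the easier one. By construction of $W$, each occurrence of $\mu$ containing a node $v$ contributes weight one to each of the $k-1$ edges joining $v$ to the other endpoints of that motif occurrence. Summing over all motif occurrences containing $v$ gives $d_\omega(v)=(k-1)\,d_\mu(v)$ directly, and this holds for every $k\ge 2$. I would present this as a short counting argument over pairs (motif occurrence, neighbor endpoint).

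The first identity is the step that genuinely uses $k\le 3$ and is where I expect the only subtlety. For a motif occurrence on $k$ vertices split as $j$ in $C'$ and $k-j$ in $\overline{C'}$ (with $1\le j\le k-1$), the occurrence contributes exactly $j(k-j)$ cut edges of weight one to $W$, because $W$ places an edge of unit weight between \emph{every} pair of endpoints of the occurrence, regardless of whether those endpoints are adjacent in the motif itself. For $k=2$ the only possibility is $j=1$, giving $j(k-j)=1=k-1$. For $k=3$ both possibilities $j\in\{1,2\}$ give $j(k-j)=2=k-1$. Hence for $k\le 3$ the contribution is constant and equals $k-1$ per cut motif occurrence. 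Summing over all cut occurrences gives $\omega(\mathrm{cut}(C''))=(k-1)|M'|$. I would flag that this is precisely where the bound $k\le 3$ is essential: already for $k=4$ the count $j(k-j)$ depends on $j$, so the two sides of the conductance would not line up by a single scalar factor.

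Combining the two identities, both the cut term and the volumes in $\phi(C'')=\omega(\mathrm{cut}(C''))/\min(d_\omega(C''),d_\omega(\overline{C''}))$ pick up the same factor $k-1$ relative to the corresponding motif quantities in $G$, which cancels and yields $\phi(C'')=|M'|/\min(d_\mu(C'),d_\mu(\overline{C'}))=\phi_\mu(C')$. The main obstacle is really just the clean bookkeeping of the first identity together with a brief remark that the argument breaks for $k\ge 4$, so the statement of the theorem is sharp in that sense.
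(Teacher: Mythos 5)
The paper does not contain a proof of this statement: it is explicitly labeled ``Theorem 4.1 by Yin et al.'' and is reproduced from \citet{yin2017local} without argument, so there is no in-paper proof to compare against. Your reconstruction is, however, correct and self-contained. The two reductions you isolate, $\omega(\mathrm{cut}(C''))=(k-1)|M'|$ and $d_\omega(v)=(k-1)d_\mu(v)$, are exactly the right bookkeeping given the paper's definition of $W$ (an edge $\{u,w\}$ of weight equal to the number of motif occurrences containing both $u$ and $w$, regardless of adjacency in $\mu$ itself). Your observation that a motif occurrence split $j$-vs-$(k-j)$ across the cut contributes $j(k-j)$ cut weight, and that $j(k-j)$ is the constant $k-1$ for all admissible $j$ precisely when $k\le 3$, is the right reason the claim holds and the right place to see why it fails at $k=4$ (where $j=1$ gives $3$ but $j=2$ gives $4$). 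One small stylistic note: the paper's generic definition of conductance is written with unweighted quantities $|E'|$ and $d(\cdot)$, while the correct interpretation for the weighted graph $W$ — which you silently and correctly adopt — uses $\omega(\cdot)$ and $d_\omega(\cdot)$; it would be worth making that substitution explicit so a reader does not stumble on the notational mismatch.
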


\begin{theorem}
	Given a $2$-way partition $P=(C,\overline{C})$ of the graph version of model $H_\mu$ with~$r \in \overline{C}$, the motif conductance $\phi_\mu(C^\prime)$ of the corresponding $2$-way partition $P^\prime=(C^\prime,\overline{C^\prime})$ of $G$ is the ratio of the edge-cut of~$P$ to ${d}_\omega(C)$, assuming an exact motif enumeration step, $d_\mu(S) \leq d_{\mu}(\overline{S})$, and a motif with at most three nodes.
	\label{theo:lmcvhgp_conductance_equivalence_graph}
\end{theorem}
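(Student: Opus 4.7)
The plan is to reduce this claim to Theorem~\ref{theo:lmcvhgp_conductance_equivalence_graph_complete} (the Yin et al.\ result) by treating the graph version of $H_\mu$ as a contracted form of the weighted graph $W$ and tracking how the contraction preserves cut and volume. Concretely, I would start from a $2$-way partition $P=(C,\overline{C})$ of $H_\mu$ with $r\in\overline{C}$ and lift it back to a partition $P''=(C'',\overline{C''})$ of $W$ by uncontracting $r$, i.e.\ setting $C''=C$ and $\overline{C''}=(\overline{C}\setminus\{r\})\cup\overline{S}$. The standard contraction invariant (used earlier in Section~\ref{sec:Multilevel Scheme}) guarantees that the weighted edge-cut of $P$ in $H_\mu$ equals the weighted edge-cut of $P''$ in $W$, since every edge between $S$ and $\overline{S}$ is either merged into an incident edge of $r$ with summed weight or kept inside $\overline{C''}$.

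Next, I would invoke Theorem~\ref{theo:lmcvhgp_conductance_equivalence_graph_complete} on $P''$ to replace $\phi_\mu(C')$ by $\phi(C'')$, which by definition equals
\[
\phi(C'')=\frac{\text{edge-cut}(P'')}{\min\bigl(d_\omega(C''),\,d_\omega(\overline{C''})\bigr)}.
\]
The numerator is already what we want by the previous paragraph, so the whole proof reduces to showing that the minimum in the denominator is realized by $d_\omega(C'')$ and that $d_\omega(C'')=d_\omega(C)$. The second equality is immediate: by construction of $W$ the weighted degree of a node equals its motif degree in $G$, and contraction of $\overline{S}$ into $r$ does not touch the incidences inside $S$, so $d_\omega(C)=d_\omega(C'')=d_\mu(C')$.

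For the minimum claim, I would reproduce the argument used in Theorem~\ref{theo:lmcvhgp_conductance_equivalence}: since $r\in\overline{C}$ forces $C'\subseteq S$ and $\overline{S}\subseteq\overline{C'}$, monotonicity of $d_\mu$ gives $d_\mu(C')\le d_\mu(S)$ and $d_\mu(\overline{S})\le d_\mu(\overline{C'})$, which chained with the hypothesis $d_\mu(S)\le d_\mu(\overline{S})$ yields $d_\omega(C'')=d_\mu(C')\le d_\mu(\overline{C'})=d_\omega(\overline{C''})$. Putting the three pieces together gives the claimed identity.

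The only non-bookkeeping step is the preservation of edge-cut under the $\overline{S}\to r$ contraction, and this is the part I would be most careful about: I need to check that the parallel-edge merging rule used when building $H_\mu$ (summing weights of parallel edges) is exactly what makes the cut-weight invariant, and that motifs lying entirely inside $\overline{S}$ (which are never represented in $H_\mu$) cannot contribute to the cut of $P''$ either, since both of their endpoints are in $\overline{C''}$. Once that is spelled out, the rest is a direct substitution chain and the three-node restriction on $\mu$ is only needed to license the invocation of Theorem~\ref{theo:lmcvhgp_conductance_equivalence_graph_complete}.
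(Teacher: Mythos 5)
Your proposal is correct and takes essentially the same route as the paper's own proof: both invoke Theorem~\ref{theo:lmcvhgp_conductance_equivalence_graph_complete} to transfer motif conductance in $G$ to conductance in $W$, use contraction-invariance of cut and volume to pass between $W$ and $H_\mu$, and establish that the minimum in the denominator is attained at $C$ via the chain $d_\mu(C')\le d_\mu(S)\le d_\mu(\overline{S})\le d_\mu(\overline{C'})$. You merely spell out two steps the paper states tersely (the explicit uncontraction $P''$ and the observation that edges of $W$ internal to $\overline{S}$ cannot cross the cut), so there is no gap and no genuinely different decomposition.
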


\begin{proof}
	From Theorem~\ref{theo:lmcvhgp_conductance_equivalence_graph_complete}, the motif conductance of any $2$-way partition of $G$ is equal to the conductance of the equivalent partition in $W$ assuming a motif with at most $3$ nodes. 
	Since we assume $d_\mu(S) \leq d_\mu(\overline{S})$, hence the conductance in $W$ of any community $C^{\prime\prime} \in S$ is equal to its edge-cut divided by the volume of $C^{\prime\prime}$.
	From the construction of our graph model $H_\mu$, the assumed community~$C$ has an equivalent community~$C^{\prime\prime}$ in $W$ with same edge-cut and same volume, which completes the proof.
\end{proof}

\subsubsection{Partitioning}
\label{subsec:lmcvhgp_Partitioning}

In this section, we describe the (hyper)graph partitioning phase of our local motif clustering algorithm.
We present the used (hyper)graph partitioning algorithms and discuss how we enforce feasibility of the found solution and maximize its quality.
Moreover, we provide remarks about the running time of our partitioning phase.

The partitioning phase of our algorithm consists of a $2$-way partitioning of $H_\mu$. 
When using the hypergraph model, the partition is computed by the multi-level hypergraph partitioner KaHyPar~\cite{schlag2016k}. 
When using the graph model, the partition is computed by the multi-level graph partitioner KaHIP~\cite{kaHIPHomePage}.
These partitioners contain sophisticated algorithms to produce low-cut partitions of (hyper)graphs efficiently.
As already shown,
any $2$-way partition of $H_\mu$ automatically corresponds to a community in $G$.
Nevertheless, our aim is to obtain a \emph{consistent} partition of $H_\mu$, which we define as a partition where the seed node~$u$ and the contracted node~$r$ are in different blocks.
This consistent partition ultimately corresponds to a local community in~$G$ which contains the seed node~$u$ and is completely contained in the ball~$S$.
This \emph{consistency} criterion is important since the nodes in $\overline{S}$ have not been explored by our algorithm and are farther from $u$ than the nodes in~$S$.
Note that the $2$-way partition illustrated in Figure~\ref{fig:lmcvhgp_overall_algorithm} generates a consistent local community according to our definition of it.
While KaHyPar allows partitioning hypergraphs with fixed nodes, KaHIP does not offer such functionality.
Nevertheless, we ensure block feasibility for both versions of our algorithm by simply assigning the seed node to the block that does not contain $r$ after the partition is computed (before computing the motif conductance).
Although simplistic, this approach has not affected solution quality considerably for the hypergraph-based version of our algorithm in preliminary comparisons against a fixed nodes-based approach.

Besides corresponding to a consistent local clustering on the original graph, our solution should have as low a motif conductance as possible.
However, KaHyPar and KaHIP {are randomized algorithms and} do not directly optimize for this objective, but rather minimize the cut value while enforcing a hard balancing constraint.
To improve our results, we explore different combinations of edge-cut (resp. cut-net) and imbalance by repeating the partitioning procedure $\beta$~times with random balancing constraints for each built (hyper)graph model, where $\beta$ is a tuning parameter.
For each obtained partition, our algorithm computes the motif conductance of the corresponding local cluster as shown in Theorem~\ref{theo:lmcvhgp_conductance_equivalence_graph} (resp. Theorem~\ref{theo:lmcvhgp_conductance_equivalence}) and keeps the partition with the best motif~conductance.

KaHyPar and KaHIP run in time close to linear in practice.
Nevertheless, the partitioning phase can be the dominating operation of our overall local clustering algorithm. 
This is the case because KaHyPar and KaHIP use sophisticated algorithms and data structures in order to minimize the cut value, which increases constant factors in the algorithm running time complexity.
KaHyPar can be especially much slower than KaHIP since the number of nets in the hypergraph model can be considerably larger than the number of edges in the graph model.
{We can make the hypergraph version of our algorithm faster by using Mt-KaHyPar~\cite{gottesburen2021scalable} instead of KaHyPar.
	Mt-KaHyPar obtains significant speedups by using allowing shared-memory parallel execution.}{In both cases, running time can be improved further by using parallelized tools such as Mt-KaHyPar~\cite{gottesburen2021scalable} or \AlgName{KaMinPar}~\cite{DBLP:conf/esa/GottesburenH00S21}. However, parallelization is not the focus~of~this~thesis.}

\textbf{Local Search.}
\label{subsec:lmcvhgp_Local Search}
We implement a local search inspired by \emph{label propagation} \cite{labelpropagationclustering} for the graph model-based version of our algorithm. 
This local search is designed to optimize for the correct objective, i.e., motif conductance.
We apply it directly on each $2$-way partition generated by KaHIP in order to increase the chances of reaching a local minimum motif conductance.
The local search algorithm works in rounds.
In each round, it visits all nodes of $H_\mu$ in a random order, starting with the labels being the current assignment of nodes to blocks.
When a node $v$ is visited, it is moved to the opposite block if this movement causes a decrease in the motif conductance of the clustering.
Movements of nodes with zero-gain can occasionally occur with $50\%$ probability.
We ensure that the seed node $u$ and the contracted node $r$ continue in opposite blocks by simply skipping them.
We stop the local search when a local optimum is reached or after at most~$\ell$ rounds, where $\ell$ is a tuning parameter.

\subsection{SOCIAL}
\label{sec:social_Faster Local Motif Clustering via Maximum Flows}
\label{sec:social_SOCIAL}

\subsubsection{Overall Strategy}
\label{subsec:social_Overall Strategy}

Given a graph $G=(V,E)$, a seed node $u$, and a motif $\mu$, our strategy for local clustering is based on the following phases.
First, we select a set $S \subseteq V$ containing~$u$ and close-by nodes.
As in Section~\ref{subsec:lmcvhgp_Overall Strategy}, we refer to this set $S$ as a \emph{ball around}~$u$.
Second, we enumerate the collection $M$ of occurrences of the motif $\mu$ which contain at least one node in~$S$.
Third, we build a hypergraph model $H_\mu$ in such a way that the motif-conductance of any cluster~$C \subseteq S$ in~$G$ can be computed directly in~$H_\mu$.
Fourth, we set~$C_0=S$ as our initial cluster and use it to build our \AlgName{MQI}-based~\cite{mqipaper2004} flow model~$G_f$ from the hypergraph model~$H_\mu$.
Fifth, we use~$G_f$ to either find a new cluster~$C \subset C_0$ containing~$u$ with strictly smaller motif conductance than~$C_0$ or prove that such cluster does not exist.
While~$C \subset C_0$ is found, we take it as our new initial cluster, rebuild~$G_f$, and repeat the previous~phase.
When eventually no such strict sub-set is found, the best obtained cluster is directly translated back to $G$ as a local cluster around the seed node.
Figure~\ref{fig:social_overall_algorithm} provides a comprehensive illustration of the consecutive phases of \AlgName{SOCIAL}.
Note that there is no guarantee of finding the best overall cluster including~$u$ strictly contained in~$S$.
Instead, we find a succession of clusters with strictly decreasing cardinality and motif conductance until a local optimum~is~reached.
To better explore the vicinity of~$u$~in~$G$ and overcome the fact we only find clusters inside~$S$, we repeat the overall strategy~$\alpha$~times with distinct balls~$S$.
Our overall algorithm including the mentioned repetitions is outlined in Algorithm~\ref{alg:social_overall_strategy}.

\begin{figure*}[t]
	\centering
	\includegraphics[width=1.0\linewidth]{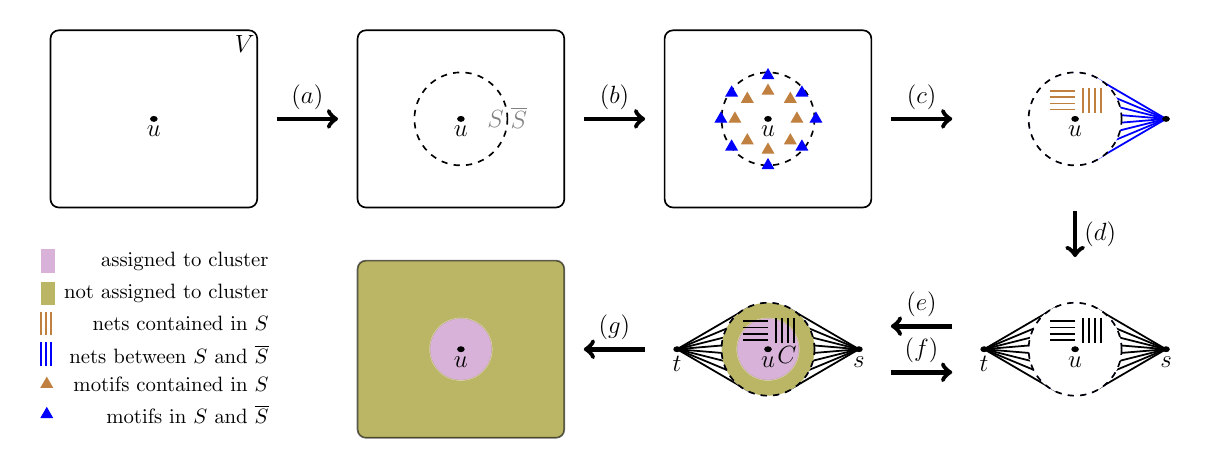}
	\caption{Illustration of the phases of \AlgName{SOCIAL}. (a)~Given a seed node $u$ and a graph $G$, a ball $S$ around $u$ is selected. (b)~Motif occurrences of $\mu$ with at least a node in $S$ are enumerated. (c)~The hypergraph model $H_\mu$ is built by converting motifs into nets and contracting $\overline{S}$ into a single node. The ball~$S$ is taken as the initial cluster~$C_0$. (d)~The flow model~$G_f$ is built based on~$C_0$~in~$H_\mu$. (e)~A cluster~$C \subseteq C_0$ containing~$u$ is found using maximum flows. (f)~While~$C \subset C_0$, the model~$G_f$ is rebuilt based on~$C$, which is taken as the initial cluster~$C_0$. (g)~When eventually~$C = S$,~$C$ is converted in a local cluster around the seed~node~in~$G$.}
	\label{fig:social_overall_algorithm}
\end{figure*}

\begin{algorithm}[t]
	\hspace*{-0cm} \textbf{Input} graph $G=(V,E)$; seed node $u \in V$; motif $\mu$ \\
	\hspace*{-0cm} \textbf{Output} cluster $C^* \subseteq V$ 
	\begin{algorithmic}[1]  
		\State $C^* \leftarrow \emptyset$
		\For{$i=1,\ldots,\alpha$}
			\State Select ball $S$ around $u$
			\State $M \leftarrow$ Enumerate motifs in $S$			
			\State Build hypergraph model $H_\mu$ based on~$S$~and~$M$%
			\State $C \leftarrow S$
			\State\Do
				\State \hskip1.5em $C_0 \leftarrow C$
				\State \hskip1.5em Build flow model~$G_f$ based on~$C_0$~in~$H_\mu$
				\State \hskip1.5em Solve~$G_f$ to obtain cluster~$C \subseteq C_0$ including~$u$ %
			\State \MyWhile{$C \subset C_0$}
			\If{$C^* = \emptyset \lor \phi_\mu(C) <  \phi_\mu(C^*)$} 
				\State $C^* \leftarrow C$
			\EndIf
		\EndFor
		\State Convert $C^*$ into a local motif cluster in $G$
	\end{algorithmic}
	\caption{Local Motif Clustering via Maximum Flows}
	\label{alg:social_overall_strategy}
\end{algorithm}

\subsubsection{Hypergraph Model}
\label{subsec:social_Hypergraph Model}
The hypergraph model employed in \AlgName{SOCIAL} is constructed in a manner that is identical to the hypergraph model utilized in \AlgName{LMCHGP}.
Specifically, the construction process entails the creation of a ball~$S$ centered around the seed node, as detailed in Section~\ref{subsec:lmcvhgp_Ball around Seed}.
Subsequently, the collection~$M$ of occurrences of the motif~$\mu$ that are incident to this ball are enumerated according to the procedure outlined in Section~\ref{subsec:lmcvhgp_Motif Enumeration}.
The hypergraph model~$H_\mu$ is then constructed, as illustrated in Figure~\ref{fig:social_model_construction} and described in Section~\ref{subsec:lmcvhgp_Hypergraph model}, to ensure the \hbox{validity of Theorem~\ref{theo:social_conductance_equivalence}}.

\ifFull

\begin{proof}
	From Theorem~\ref{theo:social_cut_equivalence}, the motif-cut of $P^\prime$ can be substituted by the cut-net of $P$ in the numerator of the definition of $\phi_\mu(C^\prime)$.  
	To complete the proof, it suffices to show that the denominator of $\phi_\mu(C^\prime)$, namely $min(d_\mu(C^\prime),d_\mu(\overline{C^\prime}))$, is equal to $\mathcal{d}_{\mathfrc{w}}(C)$.
	Due to the design of $H_\mu$, the values of $d_\mu(C^\prime)$ and $\mathcal{d}_{\mathfrc{w}}(C)$ are identical.
	Our assumption~$r \in \overline{C}$ leads to $\overline{S} \subseteq \overline{C^\prime}$ and $C^\prime \subseteq S$, which respectively imply $d_\mu(\overline{S}) \leq d_\mu(\overline{C^\prime})$ and $d_\mu(C^\prime) \leq d_\mu(S)$. 
	Since $d_\mu(S) \leq d_\mu(\overline{S})$, hence $\mathcal{d}_{\mathfrc{w}}(C) = d_\mu(C^\prime) \leq d_\mu(S) \leq d_\mu(\overline{S}) \leq d_\mu(\overline{C^\prime})$.
\end{proof}

\fi

\subsubsection{Flow Model}
\label{subsec:social_Flow Model}

\begin{figure*}[t]
	\centering
	\includegraphics[width=.9\linewidth]{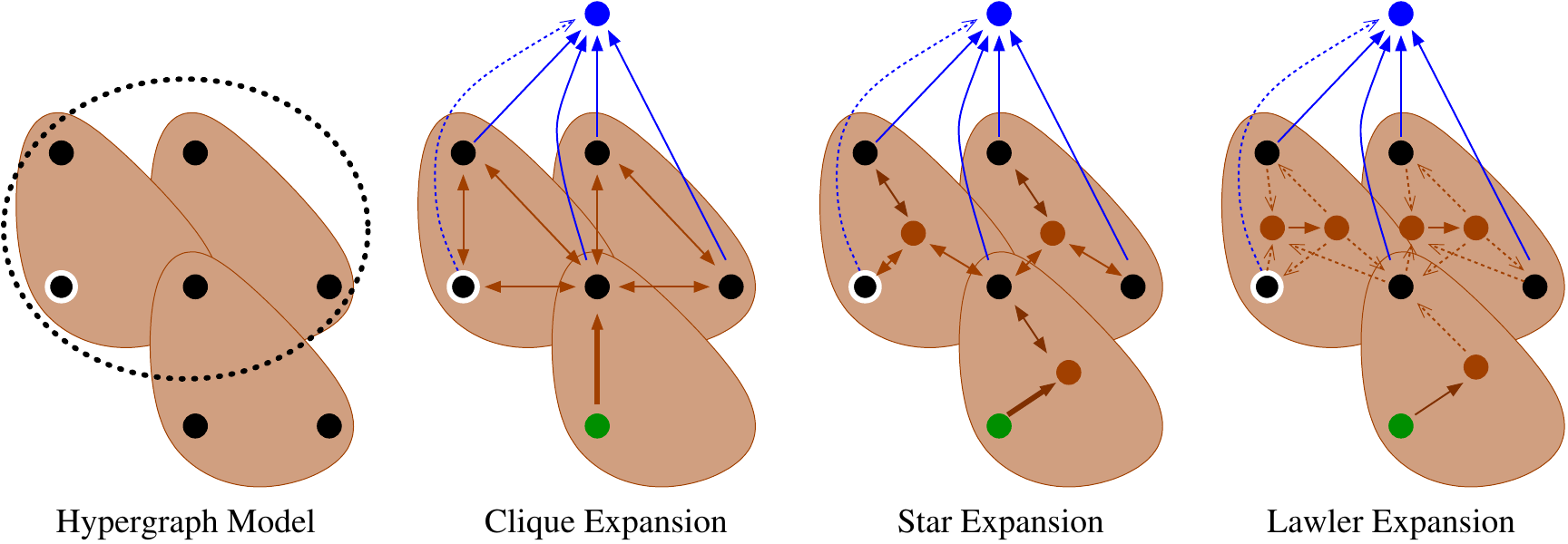}
	\caption{Flow model~$G_f$ given a hypergraph model~$H_\mu$ and an initial cluster~$C_0$.
		Nodes and nets of~$H_\mu$ are respectively represented by black circles and brown areas around them. 
		The seed node~$u$ is circled in white and the initial cluster~$C_0$ is surrounded by a dotted ellipse. 
		Auxiliary artificial nodes and edges used in each net-expansion are respectively represented by brown circles and arrows. 
		Bidirectional arrows represent pairs of edges in both directions.
		The seed node~$s$, the sink node~$t$, and the in-edges of~$t$ are respectively represented by a green circle, a blue circle, and blue arrows.
		Solid and dashed arrows respectively represent edges with finite and \hbox{infinite weight}.}
	\label{fig:social_flow_model}
\end{figure*}

In this section, we describe the process of constructing our \AlgName{MQI}-based flow model~$G_f$ using the hypergraph model~$H_\mu$ and an initial cluster~$C_0~\subseteq~S$ which contains the seed node~$u$. 
There are three possible implementations of~$G_f$ based on the three already explained techniques to represent hypergraphs using graphs, namely \emph{clique expansion}, \emph{star expansion}, and \emph{Lawler expansion} (see Figure~\ref{fig:social_hyperedge_expansion}).
We show a bijective correspondence between certain s-t cuts in $G_f$ and clusters~$C \subseteq C_0$~in~$G$ that include the seed node~$u$ and have motif conductance less~than~that~of~$C_0$.

We start by converting our hypergraph model~$H_\mu$ in a directed graph using the chosen net expansion technique.
Second, we find a corresponding cluster~$C_0^\prime$ for~$C_0$ in the created graph.
For the \emph{clique expansion}, $C_0^\prime~=~C_0$ since this transformation does not create artificial nodes.
For the \emph{star expansion},~$C_0^\prime$ consists of~$C_0$ and also the auxiliary artificial nodes connected to at least one node in~$C_0$.
For the \emph{Lawler expansion},~$C_0^\prime$ consists of~$C_0$, the auxiliary artificial nodes~$w_1$ having in-edges only from nodes in~$C_0$, and the auxiliary artificial nodes~$w_2$ having out-edges to at least one node in~$C_0$.
Third, we contract~$\overline{C_0^\prime}$ to a single \emph{source}~node~$s$ and then remove all its in-edges.
Fourth, we multiply the weight of all the remaining edges by $\mathfrc{d}_{\mathfrc{w}}(C_0)$, i.e., the weighted~volume of~$C_0$~in~$H_\mu$.
Fifth, we introduce a sink node~$t$ and include in-edges to it from each of the nodes~$v \in C_0\setminus\{u\}$, such that the weight of~$(v,t)$ is set to~$cut(C_{0})\mathfrc{d}_{\mathfrc{w}}(v)$, i.e., the cut-net of~$C_0$~in~$H_\mu$ multiplied by the weighted degree of~$v$~in~$H_\mu$.
Finally, we include an edge $(u,t)$ from the seed node to the sink and set its weight to infinity.
Our flow network model~$G_f$ is concluded by setting edge capacities to match edge weights.
Figure~\ref{fig:social_flow_model} shows the three possible configurations of our flow model~$G_f$ for a given hypergraph model~$H_f$ and an initial~cluster~$C_0$.


We now analyze the theoretical guarantees provided by the defined flow model~$G_f$.
Theorem~\ref{theo:social_flow_model_guarantee} shows that there is a set~${C\subset{C_0}}$ in~$G$ including the seed node~$u$ with motif conductance smaller than~that~of~$C_0$ if, and only if, the value of the maximum flow on $G_f$ is less than~$cut(C_{0})\mathfrc{d}_{\mathfrc{w}}(C_0)$, which is the weight of the trivial cut $(\{s\},V(G_f)\setminus\{s\})$.
In the proof, we show how such improved cluster~$C$ can be directly obtained from a maximum flow on~$G_f$.
Assumptions (a)~and~(b) in Theorem~\ref{theo:social_flow_model_guarantee} are the same used in Theorem~\ref{theo:social_conductance_equivalence}, which were previously shown to be reasonable in practice.
Note that the claim is only valid for motifs with three nodes for clique and star expansion models, while it is valid in general for the \hbox{Lawler expansion model}.

\begin{theorem}
	There is a set~${C\subset{C_0}}$ in~$G$ including the seed node~$u$ with motif conductance smaller than that of~$C_0$ if, and only if, the maximum flow on~$G_f$ is less than $cut(C_{0})\mathfrc{d}_{\mathfrc{w}}(C_0)$ under the following assumptions:
	\begin{enumerate}
		\item the motif enumeration phase is exact;
		\item $d_\mu(S) \leq d_\mu(\overline{S})$~in~$G$;
		\item in case~$G_f$ is based on clique expansion or star expansion, the motif~$\mu$ has three nodes;
	\end{enumerate}
	\label{theo:social_flow_model_guarantee}
\end{theorem}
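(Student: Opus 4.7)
My proof plan is to exploit the one-to-one correspondence between finite s-t cuts in $G_f$ and subsets $C \subseteq C_0$ containing $u$, and to rewrite the weight of such a cut as a simple linear expression in $cut(C)$ and $\mathfrak{d}_{\mathfrak{w}}(C)$ (measured in $H_\mu$). The entire theorem should then reduce to the algebraic equivalence
\[
  cut(C)\cdot \mathfrak{d}_{\mathfrak{w}}(C_0) < cut(C_0)\cdot \mathfrak{d}_{\mathfrak{w}}(C) \ \Longleftrightarrow\ \phi_\mu(C) < \phi_\mu(C_0),
\]
which follows from Theorem~\ref{theo:social_conductance_equivalence} applied under assumptions (1) and (2).

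First I would observe that the edge $(u,t)$ of infinite capacity forces $u$ onto the sink side in every finite s-t cut, so finite cuts of $G_f$ are in bijection with subsets $C$ of $C_0$ with $u \in C$ (taking $C$ to be the intersection of the sink side with $C_0$; in the clique expansion this is immediate, while in the star and Lawler expansions I would argue that each auxiliary node is optimally placed, so we may restrict attention to cuts induced by a choice of $C$). The trivial cut $(\{s\},V(G_f)\setminus\{s\})$ corresponds to $C=C_0$ and, by construction of the source-side out-edges, has weight exactly $cut(C_0)\cdot \mathfrak{d}_{\mathfrak{w}}(C_0)$.

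Next I would compute the weight of the cut corresponding to a general $C$ as the sum of two contributions: (i) the sink edges $(v,t)$ for $v \in C_0\setminus C$, which contribute $cut(C_0)\cdot \mathfrak{d}_{\mathfrak{w}}(C_0\setminus C) = cut(C_0)\bigl(\mathfrak{d}_{\mathfrak{w}}(C_0)-\mathfrak{d}_{\mathfrak{w}}(C)\bigr)$, and (ii) the expansion edges crossing between $C$ and $\{s\}\cup (C_0\setminus C)$, scaled by the factor $\mathfrak{d}_{\mathfrak{w}}(C_0)$ introduced in the construction; these contribute $cut(C)\cdot \mathfrak{d}_{\mathfrak{w}}(C_0)$, where $cut(C)$ is the cut-net of $C$ in $H_\mu$. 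Subtracting the trivial-cut weight, this cut is strictly lighter than $cut(C_0)\cdot \mathfrak{d}_{\mathfrak{w}}(C_0)$ iff $cut(C)\cdot \mathfrak{d}_{\mathfrak{w}}(C_0) < cut(C_0)\cdot \mathfrak{d}_{\mathfrak{w}}(C)$, which by Theorem~\ref{theo:social_conductance_equivalence} is equivalent to $\phi_\mu(C) < \phi_\mu(C_0)$. Combining both directions with Max-Flow Min-Cut closes the argument: the max-flow value drops below $cut(C_0)\cdot \mathfrak{d}_{\mathfrak{w}}(C_0)$ iff a cut of that form exists, iff such an improving $C$ exists.

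The main obstacle is contribution (ii): justifying that the expansion-edge contribution to the cut weight is exactly $cut(C)\cdot \mathfrak{d}_{\mathfrak{w}}(C_0)$ rather than merely an upper bound. In the Lawler expansion, each net $e$ is represented by a single finite-capacity edge $(w_1,w_2)$ acting as a bottleneck, so any cut separating the pins of $e$ costs exactly $\mathfrak{w}(e)$ regardless of $|e|$, giving the identity for arbitrary motifs. In the clique and star expansions, however, a cut net of size larger than $3$ can contribute \emph{more} than $\mathfrak{w}(e)$ (e.g., a $2$--$2$ split in clique expansion of a $4$-pin net yields $4\cdot \mathfrak{w}(e)/3 > \mathfrak{w}(e)$). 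This is precisely why assumption~(3) restricts to three-node motifs for those two expansions: for $|e|=3$, the only non-trivial split is $1$--$2$, and both clique and star expansions are easily checked to realize exactly $\mathfrak{w}(e)$ at the min-cut (for the star expansion one additionally places the auxiliary node $w$ on the majority side, which the min-cut does automatically). Settling these per-expansion accounting details carefully is the technical heart of the proof.
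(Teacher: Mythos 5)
Your proposal is correct and follows essentially the same route as the paper's proof: both decompose the cut weight into the sink-edge contribution $cut(C_0)\mathfrak{d}_{\mathfrak{w}}(C_0\setminus C)$ plus the expansion-edge contribution $cut(C)\mathfrak{d}_{\mathfrak{w}}(C_0)$, reduce the biconditional to the cross-multiplication inequality, invoke Theorem~\ref{theo:social_conductance_equivalence}, and handle the clique/star/Lawler accounting separately (with assumption~(3) excusing only the third from the three-node restriction). The paper spells out the forward and backward directions as two explicit Max-Flow Min-Cut applications and constructs the witnessing cut $(B_1,B_2)$ concretely in each expansion, but the underlying identities and the per-expansion analysis you flag as the technical heart are precisely what appears there.
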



\begin{proof}
	We start with the backward direction, i.e., if the maximum flow on~$G_f$ is less than $cut(C_{0})\mathfrc{d}_{\mathfrc{w}}(C_0)$, then there is a sub-set of~${C_0}$ in~$G$ including the seed node~$u$ with motif conductance smaller than that of~$C_0$.
	According to the Max-Flow Min-Cut Theorem~\cite{ford_fulkerson_1956}, the weight of the maximum s-t flow on a network equals the weight of its minimum s-t cut, hence it follows that there is an s-t cut~$(B_1,B_2)$~of~$G_f$ with weight smaller than $cut(C_{0})\mathfrc{d}_{\mathfrc{w}}(C_0)$.
	Without loss of generality, let~$s \in B_1$, $t \in B_2$, $C = C_0 \cap B_2$, and hence, by definition,~${C_0\setminus{C}=C_0\cap{B_1}}$.
	Necessarily~${u\in{C}}$, otherwise the edge $(u,t)$, which has infinite weight, would be cut.
	There are two kinds of edges from $B_1$~to~$B_2$.
	First, there are the edges~$(x,t)$, with~$x \in C_0 \setminus C$.
	By design, the total weight of these edges is given by~$cut(C_{0})\mathfrc{d}_{\mathfrc{w}}(C_0\setminus{C})$.
	The second kind consists of edges from~$B_1$~to~${B_2\setminus\{t\}}$.
	These edges vary based on the net-expansion technique used, but their total weight is~$cut(C)\mathfrc{d}_{\mathfrc{w}}{({C}_0)}$ by design under \hbox{the~given~assumptions}. 
	
	Now we show that the total weight of the edges from~$B_1$~to~${B_2\setminus\{t\}}$ is~$cut(C)\mathfrc{d}_{\mathfrc{w}}{({C}_0)}$ for the three net-expansion techniques.
	In the clique expansion under assumption~(c), each cut net~$e$ of~$C$ in~$H_\mu$ corresponds directly to two cut in-edges of~$B_2$ in~$G_f$.
	The weight of each of these edges is, by design, set to $\mathfrc{w}(e)\mathfrc{d}_{\mathfrc{w}}{({C}_0)} / 2$, so they add up to the specified total weight.
	In the star expansion under assumption~(c), for each cut net~$e$ of~$C$ in~$H_\mu$ there is a single cut in-edge of~$B_2$ in~$G_f$ that connects an auxiliary artificial node and a node from $C_0$. 
	The weight of this cut edge is set by design to $\mathfrc{w}(e)\mathfrc{d}_{\mathfrc{w}}{({C}_0)}$, so the total weight of these edges is as claimed.
	In the Lawler expansion, for each cut net~$e$ of~$C$ in~$H_\mu$ there is exactly one cut in-edge of~$B_2$ in~$G_f$ that connects two auxiliary artificial nodes, namely~$w_1 \in B_1$~and~$w_2 \in B_2$. 
	If this were not the case, there would be an edge from~$B_1$~to~$B_2$ with infinite weight. 
	The weight of this single cut edge is set by design to $\mathfrc{w}(e)\mathfrc{d}_{\mathfrc{w}}{({C}_0)}$, so the sum of the weights of the cut edges is as stated.
	By adding up the weights of the two kinds of edges that cross the cut~$(B_1,B_2)$ and verifying that their total weight is less than $cut(C_{0})\mathfrc{d}_{\mathfrc{w}}(C_0)$, we derive Equation~(\ref{eq:social_forward_direction}), which can further be simplified to Equation~(\ref{eq:social_forward_direction_final}).
	We conclude the proof of the backward direction by applying Theorem~\ref{theo:social_conductance_equivalence}, Equation~(\ref{eq:social_forward_direction_final}), and the assumptions~(a)~and~(b).
	
	\begin{equation}
		cut(C_{0})\mathfrc{d}_{\mathfrc{w}}(C_0\setminus{C})~+
		cut(C)\mathfrc{d}_{\mathfrc{w}}{({C}_0)}
		< cut(C_{0})\mathfrc{d}_{\mathfrc{w}}(C_0) %
		\label{eq:social_forward_direction}
	\end{equation}
	
	\begin{equation}
		\frac{cut(C)}{\mathfrc{d}_{\mathfrc{w}}(C)}
		<
		\frac{cut(C_0)}{\mathfrc{d}_{\mathfrc{w}}(C_0)}
		\label{eq:social_forward_direction_final}
	\end{equation}
	
	Now we prove the forward direction, i.e., given a set~$C~\subset~C_0$ including the seed node~$u$ in~$G$ with motif conductance smaller than~that~of~$C_0$, then the maximum flow on~$G_f$ is less than $cut(C_{0})\mathfrc{d}_{\mathfrc{w}}(C_0)$.
	With the assumptions~(a)~and~(b) in place, Equation~(\ref{eq:social_forward_direction_final}) holds true for the selected set~$C$.
	Since Equation~(\ref{eq:social_forward_direction_final}) can be rewritten as Equation~(\ref{eq:social_forward_direction}).
	To complete the proof, we will show that there exists an s-t~cut $(B_1,B_2)$ of $G_f$ such that $s \in B_1$, $t \in B_2$, $B_2 \cap C_0 = C$, and the total weight of the in-edges of $B_2$ is \hbox{$cut(C_{0})\mathfrc{d}{\mathfrc{w}}(C_0\setminus{C})+cut(C)\mathfrc{d}{\mathfrc{w}}{({C}_0)}$}.
	Using the Max-Flow Min-Cut Theorem~\cite{ford_fulkerson_1956} again, it follows that if there is an s-t cut with a weight of less than $cut(C_{0})\mathfrc{d}{\mathfrc{w}}(C_0)$, the maximum flow value on~$G_f$ must also be less than $cut(C_{0})\mathfrc{d}_{\mathfrc{w}}(C_0)$.
	Since $B_2 \cap C_0 = C$, it follows that $B_1 \cap C_0 = C_0 \setminus C$, hence there are $|C_0 \setminus C|$ cut edges of the form~$(x,t)$.
	By design, the total weight of these edges is $cut(C_{0})\mathfrc{d}{\mathfrc{w}}(C_0\setminus{C})$.
	
	Now we show that the weights of the remaining cut edges, i.e., edges from~$B_1$~to~${B_2\setminus\{t\}}$ add up to~$cut(C)\mathfrc{d}_{\mathfrc{w}}{({C}_0)}$ for each net-expansion technique. 
	In the clique expansion, we forcibly have $B_2=C \cup \{t\}$.
	Under assumption~(c), each cut net~$e$ of~$C$ in~$H_\mu$ corresponds to two cut in-edges of~$B_2$ in~$G_f$, both with weight set to $\mathfrc{w}(e)\mathfrc{d}_{\mathfrc{w}}{({C}_0)} / 2$ by design, hence they add up to the specified total weight.
	In the star expansion, under assumption~(c), we make $B_2 = C \cup A \cup \{t\}$, where~$A$ is the set of artificial nodes~$a$ with~$|N(a) \cap C| \geq |N(a)|/2$.
	For each cut net~$e$ of~$C$ in~$H_\mu$ there is a single cut in-edge of~$B_2$ in~$G_f$ that connects an auxiliary artificial node and a node from $C_0$. 
	The weight of this cut edge is set by design to $\mathfrc{w}(e)\mathfrc{d}_{\mathfrc{w}}{({C}_0)}$, so the total weight of these cut edges is as stated.
	In the Lawler expansion, we make~$B_2 = C \cup A \cup \{t\}$, where~$A$ consists of the auxiliary artificial nodes~$w_1$ having in-edges only from nodes in~$C$, and the auxiliary artificial nodes~$w_2$ having out-edges to at least one node in~$C$.
	Hence, for each cut net~$e$ of~$C$ in~$H_\mu$ there is exactly one cut in-edge of~$B_2$ in~$G_f$ that connects two auxiliary artificial nodes, namely~$w_1 \in B_1$~and~$w_2 \in B_2$. 
	The weight of this single cut edge is set by design to $\mathfrc{w}(e)\mathfrc{d}_{\mathfrc{w}}{({C}_0)}$, so~the~sum of the weights of the cut edges is as expected.
\end{proof}

\AlgName{SOCIAL} utilizes a push-relabel approach to iteratively search for a maximum s-t flow in the model~$G_f$.
If the found maximum flow is strictly smaller than $cut(C_{0})\mathfrc{d}_{\mathfrc{w}}(C_0)$, then we can directly find a minimum cut with the same weight as it and, consequently, a cluster~$C \subset C_0$ containing the seed node~$u$ that has a strictly smaller motif conductance value~$\phi_\mu(C)$ than that of~$C_0$~in~$G$. 
If such a cut is found, the algorithm repeats the process recursively setting the identified sub-cluster~$C$ as the new initial cluster, i.e., it constructs a new flow model based on~$H_\mu$ and the initial cluster and uses the push-relabel algorithm to continue searching for sub-clusters with even lower motif conductance values. 
If, on the other hand, the maximum flow is not strictly smaller than $cut(C_{0})\mathfrc{d}_{\mathfrc{w}}(C_0)$, it means that the current cluster~$C_0$ is optimal among all of its sub-clusters containing the seed node~$u$, and the algorithm terminates for the given ball~$S$.

\subsection{Experimental Evaluation}
\label{sec:social_Experimental Evaluation}

\paragraph*{Setup.} 
We implemented \AlgName{LMCHGP} and \AlgName{SOCIAL} in C++ within the KaHIP framework. 
In particular, \AlgName{LMCHGP} was implemented using the public libraries for KaHyPar~\cite{schlag2016k} and KaHIP~\cite{kaffpa}. 
We use the fastest configuration of these libraries throughout our experiments.
Note that using stronger configurations would likely yield better solutions at the cost of higher running time.
We compiled all algorithms using gcc 11.2 with full optimization turned on (-O3 flag).
Unless mentioned otherwise, all experiments are performed on a single core of Machine D.
For the reported experiments, we use a time limit of one hour for our overall algorithm.
This time limit is checked between repetitions of each of our algorithms, hence it can be violated in case a particular partitioning procedure takes too long.

\paragraph*{Baselines.} 
We experimentally compare our algorithms against the state-of-the-art competitors, namely \AlgName{MAPPR}~\cite{yin2017local}.
We also ran preliminary experiments with \AlgName{HOSPLOC}~\cite{zhou2021high}.
However, the algorithm is very slow even for small graphs and not scalable as their algorithm works using an adjacency matrix and hence needs $\Omega(n^2)$ space and time. 
Moverover,  experiments done in their paper are on graphs that are  multiple orders of magnitude smaller than the graphs used in our evaluation.  
Hence, we are not able to run the algorithm on the scale of the instances used in this thesis.  
We were not able to explicitly compare against~\AlgName{LCD-Motif}~\cite{zhang2019local} since their code is not available (neither public, nor privately\footnote{Personal communication with the authors}) and the data presented in the respective paper does not warrant explicit comparisons (e.g. seed nodes are typically not presented in the papers).
However, we try to make implicit comparisons \hbox{in Section~\ref{subsec:social_addcomp}}. %
We compare our results against the globally best cluster computed for each seed node by \AlgName{MAPPR} using its standard parameters ($\alpha=0.98$, $\epsilon=10^{-4}$).

\paragraph*{Instances.}
Basic properties of the graphs used in our experiments can be found in Table~\ref{tab:lmcvhgp_graphs}.
For our experiments, we split the graphs in two disjoint sets: a \emph{tuning} set for the parameter study experiments and a \emph{test} set for the comparisons against the state-of-the-art.
The graphs in the test set are exactly the graphs used in the MAPPR paper \cite{yin2017local}.
Prior to our experiments, we removed parallel edges, self-loops, and directions of edges and assigning unitary weight to all nodes and edges.
For each graph, we pick $50$ random seed nodes and use all of them as input for each algorithm.

\paragraph*{Methodology.} 
For the sake of simplicity, all our experiments are based on the triangle motif, i.e., the undirected clique of size three.
However, in general our method is also applicable to larger motifs.
Since this motif has three nodes, Theorems~\ref{theo:lmcvhgp_conductance_equivalence},~\ref{theo:lmcvhgp_conductance_equivalence_graph},~and~\ref{theo:social_flow_model_guarantee} are valid in general.
In other words, this setup ensures that the graph model and the hypergraph model of \AlgName{LMCHGP} are exact and guarantees that \AlgName{SOCIAL} yields exact improvements for all net expansion techniques. 
We ensure the integrity of our results by using the same motif-conductance evaluator function for all tested algorithms.

We measure running time, motif-conductance, and/or size of the computed cluster.
When averaging motif conductance over multiple instances, the final score is computed via arithmetic mean.
This is a necessary averaging strategy since motif conductance can be zero, which makes the geometric mean infeasible to compute.
When averaging running time or cluster size over multiple instances, we still use the geometric mean in order to give every instance the same influence on the \textit{final score}.

\begin{table}[t]
	\centering
	\setlength{\tabcolsep}{4pt}
	\scriptsize
	\begin{tabular}{l@{\hskip 30pt}rrr@{\hskip 30pt}rrr@{\hskip 30pt}rrr}
	\toprule	
        \multirow{2}{*}{Graph} & \multicolumn{3}{c}{\hspace*{-1cm}\AlgName{SOCIAL}}  & \multicolumn{3}{c}{\hspace*{-1cm}\AlgName{LMCHGP}}    & \multicolumn{3}{c}{\hspace*{-1cm}\AlgName{MAPPR}} \\
		& $\phi_\mu$     & $|C|$   & t(s)   & $\phi_\mu$     & $|C|$   & t(s)   & $\phi_\mu$  & $|C|$ & t(s)    \\ 
                \midrule

		com-amazon             & \textbf{0.031} & 76    & $<$0.01  & 0.037 & 64    & 0.22  & 0.153      & 58  & 2.68   \\
		com-dblp               & \textbf{0.090} & 58    & 0.02  & 0.115 & 56    & 0.38  & 0.289      & 35  & 3.04   \\
		com-youtube            & \textbf{0.125} & 1832  & 4.52  & 0.172 & 1443  & 7.93  & 0.910      & 2   & 10.44  \\
		com-livejournal        & \textbf{0.158} & 494   & 3.33  & 0.244 & 387   & 8.17  & 0.507      & 61  & 173.80 \\
		com-orkut              & 0.273 & 1041 & 256.21 & \textbf{0.150} & 13168 & 496.94 & 0.407      & 511 & 923.26 \\
		com-friendster         & 0.388      &  2060     & 1194.50       & \textbf{0.368}      &  10610     &   1339.99     &   0.741         &   121  &   16565.99      \\ 
                \midrule
		Overall                & \textbf{0.178} & 453   & 2.33  & 0.181 & 823   & 12.67  & 0.500      & 50  & 79.34  \\ 
                \bottomrule
	\end{tabular}
	\caption{Average comparison against state-of-the-art.}
	\label{tab:social_resultsoverall}
\end{table}


\subsubsection{Parameter Study}
\label{sec:lmcvhgp_Parameter Study}

We tune the paramenters of \AlgName{LMCHGP} using the graphs in the Tuning Set of Table~\ref{tab:lmcvhgp_graphs}, which are disjoint from the graphs used for the evaluation against state-of-the-art. 
In a comparison of the hypergraph-based version of our algorithm against its graph-based version, each approach produces the best motif conductance for around $50\%$ of the instances.
Nevertheless, the hypergraph-based version is $23$ times slower on average and uses up to $68.7$ times more memory since the hypergraph model stores a large number of nets.
Hence, we exclusively use the graph-based version of our algorithm for the remaining experiments.
Next, we compare the effect of using different values for the parameters $\beta$ and $\alpha$.
The results show the expected regular relationship between solution quality, running time, and these parameter: the larger $\alpha$ (resp. $\beta$), the smaller the motif conductance and the larger the running time.
Finally, we checked that the impact of including the label propagation local search in \AlgName{LMCHGP} is, on average, a $13\%$ decrease in the motif conductance at the cost of only $1.5\%$ more running time.
Summing up, we use the following parameters for \mbox{\AlgName{LMCHGP}}: graph model, label propagation, $\alpha=3$, $\ell \in \{1,2,3\}$, and $\beta=80$.
In view of that fact that the parameters $\alpha$ and $\ell$ utilized in \AlgName{SOCIAL} carry the same significance as their counterparts in \AlgName{LMCHGP}, we directly incorporate the finely-tuned values of these parameters from the latter in \AlgName{SOCIAL}.
In addition, the clique expansion method proves to be the most efficient for our three-node motif as it requires no auxiliary artificial nodes and involves the fewest number of auxiliary artificial edges compared to all other net expansion techniques.
In summary, we use the following parameters for \AlgName{SOCIAL}: clique expansion, \hbox{$\alpha=3$, and $\ell \in \{1,2,3\}$}.

\subsubsection{State-of-the-Art}
\label{subsec:social_State-of-the-Art}
\label{subsec:social_Results}

In this section we present and discuss our experiments against the state-of-the-art.
All the reported results are based on the Tet Set of Table~\ref{tab:lmcvhgp_graphs}. 
In the performance profile plots shown in Figures~\ref{fig:social_SIMPLEstateoftheart_graph_res_pp}~and~\ref{fig:social_SIMPLEstateoftheart_graph_tim_pp}, we compare \AlgName{MAPPR}~\cite{yin2017local}, \mbox{\AlgName{LMCHGP}},and \AlgName{SOCIAL}.
In Table~\ref{tab:social_resultsoverall}, we show average results for each graph.

As shown in Figure~\ref{fig:social_SIMPLEstateoftheart_graph_res_pp}, \AlgName{SOCIAL} obtains the best or equal motif conductance value for $62\%$ of the instances, while \mbox{\AlgName{LMCHGP}} and \mbox{\AlgName{MAPPR}} respectively obtain the best or equal motif conductance for $49\%$~and~$19\%$ of the instances.
This result can be explained with two observations.
First, \AlgName{SOCIAL}~and~\AlgName{LMCHGP} explore the solution space considerably better than \AlgName{MAPPR}, since both build the (hyper)graph model multiple times, while \AlgName{MAPPR} simply uses the \AlgName{APPR} algorithm.
Second, \AlgName{SOCIAL} is based on a flow approach which directly optimizes for motif conductance, whereas \mbox{\AlgName{LMCHGP}} is based on a graph partitioning algorithm which is repeated multiple times to compensate for its design to minimize the number of cut motifs rather than motif conductance.
In Table~\ref{tab:social_resultsoverall}, \AlgName{SOCIAL} outperforms \mbox{\AlgName{LMCHGP}} for 4 of the 6 graph and overall, and outperforms \mbox{\AlgName{MAPPR}} for all graphs and overall.
On the other hand,\AlgName{LMCHGP} outperforms \AlgName{SOCIAL} for 2 of the 6 graphs, and outperforms \AlgName{MAPPR} with respect to motif conductance and running time for all graphs and overall.
Overall, \AlgName{SOCIAL} computes clusters with motif conductance $0.178$ while \mbox{\AlgName{LMCHGP}} and \mbox{\AlgName{MAPPR}} compute clusters with motif conductance $0.181$~and~$0.500$,~respectively.

\begin{figure*}[htb]
	\captionsetup[subfigure]{justification=centering}
	\centering
	\begin{subfigure}{0.48\textwidth}
		\centering
		\includegraphics[width=\textwidth]{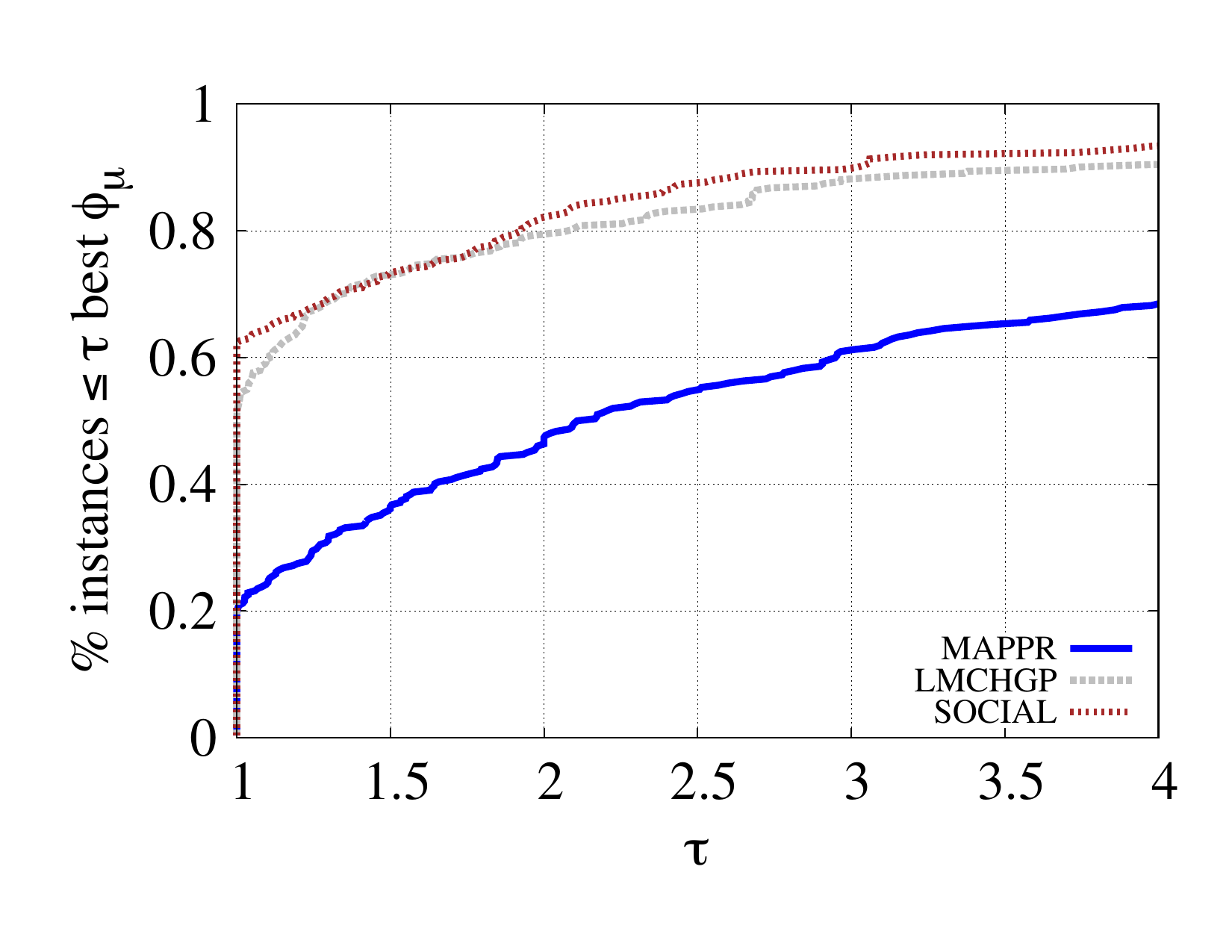}
		\caption{Motif conductance.}
		\label{fig:social_SIMPLEstateoftheart_graph_res_pp}
	\end{subfigure}
	\begin{subfigure}{0.48\textwidth}
		\centering
		\includegraphics[width=\textwidth]{chapters/social/plots/mqi_clique/tim/pp_chart.png}
		\caption{Running time.}
		\label{fig:social_SIMPLEstateoftheart_graph_tim_pp}
	\end{subfigure}
	\caption{Performance profiles comparing our algorithms against the state-of-the-art in terms of solution quality and \hbox{running time}.}
	\label{fig:social_SIMPLEstateoftheart_graph_pp}
\end{figure*}

As exhibited in Figure~\ref{fig:social_SIMPLEstateoftheart_graph_tim_pp}, \AlgName{SOCIAL} is the fastest one for $87\%$ of the instances, while \mbox{\AlgName{LMCHGP}} and \mbox{\AlgName{MAPPR}} are the fastest ones for $12\%$~and~$1\%$ of the instances, respectively.
Furthermore, the running time of \AlgName{SOCIAL} is within a factor $1.18$ of the running times of the fastest competitors for all instances.
\AlgName{SOCIAL} is respectively up to~\numprint{237}~and~\numprint{144063}~times faster than \mbox{\AlgName{LMCHGP}} and \mbox{\AlgName{MAPPR}} while being a factor~$5.4$~and~$34.1$ faster than them on average.
The reason for \AlgName{MAPPR} being considerably slower than the other algorithms is that it must enumerate motifs across the entire graph, while \AlgName{SOCIAL} and \AlgName{LMCHGP} only require enumeration of motifs in a ball around the seed node.
The reduced but still substantial difference in running time between \AlgName{SOCIAL} and \AlgName{LMCHGP} is a result of \AlgName{LMCHGP}'s repeated partitioning of each ball around the seed node, while \AlgName{SOCIAL} employs a flow model to greedily improve the motif conductance metric until a local optimum cluster is obtained.
Table~\ref{tab:social_resultsoverall} demonstrates that, on average, \AlgName{SOCIAL} has a lower running time than \AlgName{LMCHGP}, and \AlgName{LMCHGP} has a lower running time than \AlgName{MAPPR}, \hbox{for each graph as well as overall}.

For a more intuitive analysis of the quality of our results, Figure~\ref{fig:social_SIMPLEstateoftheart_graph_res_double} plots motif conductance versus cluster size for all communities computed by the three algorithms.
Observe that the communities found by \AlgName{SOCIAL} are densely localized in the lower left area of the chart, which is the region with smaller motif conductance and smaller cluster size. 
On the other hand, communities computed by \AlgName{MAPPR} are often in the upper area of the chart and communities computed by \mbox{\AlgName{LMCHGP}} are often in the right area~of~the~chart.

\begin{figure}[t]{}
	\centering
	\includegraphics[width=0.6\textwidth]{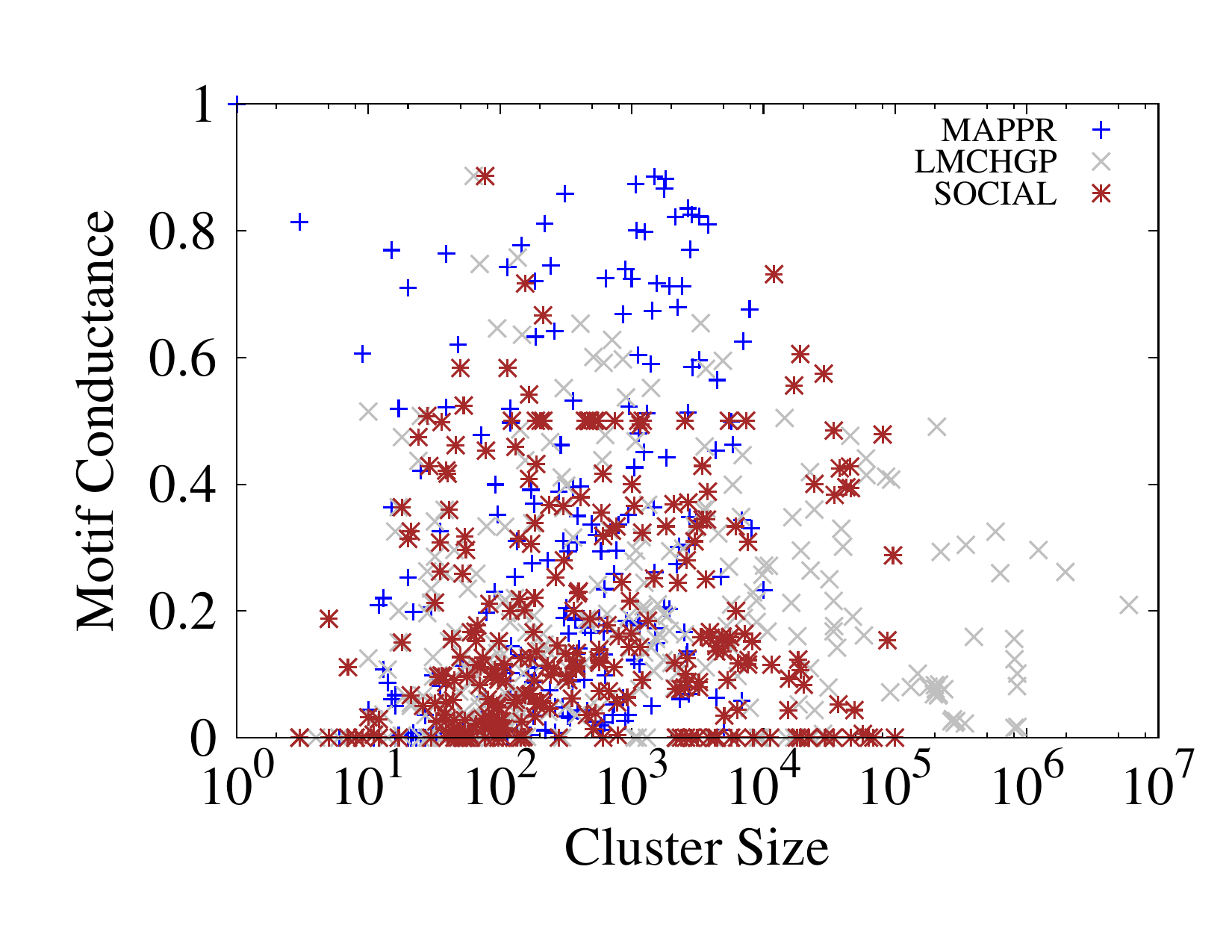}
	\caption{Motif conductance vs cluster size.}
	\label{fig:social_SIMPLEstateoftheart_graph_res_double}
\end{figure}

\subsubsection{Additional Comparisons}
\label{subsec:social_addcomp}
As mentioned above, we were not able to compare against \AlgName{LCD-Motif}~\cite{zhang2019local} explicitly since their code is not available (neither publicly, nor privately) and the data presented in the respective papers does not warrant explicit comparisions (e.g.,~seed nodes are typically not presented in papers, and in this case instances are directed rather than undirected).
Here, we make an attempt at implicit comparisons.
Zhang~\etal~\cite{zhang2019local} (Table 4 therein) compare motif conductance against \AlgName{MAPPR} on three directed instances (cit-HepPh, Slashdot, StanfordWeb) and report an geometric mean improvement of 54\% in motif conductance for the triangle motif. 
As \AlgName{LMCHGP} and \AlgName{SOCIAL} work for undirected instances, we have build undirected version of those graphs and run both as well as \AlgName{MAPPR} for the triangle motif.
Respectively, the geometric mean improvement \AlgName{LMCHGP} and \AlgName{SOCIAL} obtain over \AlgName{MAPPR} is 194\% and 223\%. %
Both improvements are significantly larger than the improvement of Zhang~\etal over \AlgName{MAPPR}.
Also note that in our experiments from Table~\ref{tab:social_resultsoverall}, the geometric mean improvement (using the average motif conductance values) of \AlgName{LMCHGP} and \AlgName{SOCIAL} over \AlgName{MAPPR} in motif conductance is \hbox{192\% and 219\%, respectively}.%

\section{Conclusion}
\label{sec:local_Conclusion}

In this chapter, we proposed three local algorithms for graph decomposition: \AlgName{LMCHGP} and \AlgName{SOCIAL}.
They were discussed in detail and were experimentally compared against one another and agains the state-of-the-art.
Both proposed algorithms solve the local motif clustering problem with sophisticated combinatorial techniques.
Our first contribution, \AlgName{LMCHGP} (Local Motif Clustering via (Hyper)Graph Partitioning), is an algorithm based on (hyper)graph partitioning.
We construct both a graph model and a hypergraph model to represent the motifs around the seed node, and then partition the (hyper)graph model using a multi-level partitioner to minimize the motif conductance in the original graph.
Our second contribution, \AlgName{SOCIAL} (faSter mOtif Clustering vIa mAximum fLows), combines the strongly local hypergraph model proposed in the previous algorithm with the \emph{max-flow quotient-cut improvement} algorithm (\AlgName{MQI}) to optimize for motif conductance.
We construct a flow model and utilize a push-relabel algorithm to find subsets of the initial cluster containing the seed node with lower motif conductance than the whole cluster.
In experiments involving triangle motifs, \AlgName{LMCHGP} produces communities with lower motif conductance compared to \AlgName{MAPPR}, while being up to multiple orders of magnitude faster than it.
On the other hand, \AlgName{SOCIAL} produces communities with lower motif conductance compared to \AlgName{MAPPR} and \AlgName{LMCHGP}, while being up to multiple orders of magnitude faster than both.
In future work, we intend to conduct experiments with larger motifs and use the Lawler-expansion version of our flow graph, since it is the only one whose quality guarantee holds true for larger motifs. 
Laslty, we intend to add parallelization to improve the speed on \hbox{large instances further}.

\chapter{Multilevel Algorithms}
\label{chap:Multilevel Algorithms}

In this chapter, we propose our contributions in the field of multilevel graph decomposition.
Firstly, we introduce an integrated multilevel algorithm for solving the process mapping problem, which is designed and implemented in multiple versions with different goals.
Experimental results demonstrate that the multiple versions of our algorithm outperform the state-of-the-art in terms of both solution quality and running time.
Secondly, we design a multilevel algorithm for solving the signed graph clustering problem and utilize it as a building block to construct a powerful distributed memetic algorithm.
Our experimental evaluation demonstrates that our memetic algorithm is capable of converging effectively and significantly outperforms the state-of-the-art with respect to edge-cut.

\paragraph*{References.}
This section is based on~\cite{HighQualityHierarchicalPM20}, which is joint work with Alexander van der Grinten, Henning Meyerhenke, Jesper Larsson Tr{\"{a}}ff, and Christian Schulz, on~\cite{DistrMemMLevelSigGraphClustering_short}, which is joint work with Felix Hausberger and Christian Schulz, and on the technical report~\cite{DistrMemMLevelSigGraphClustering}, which is also joint work with Felix Hausberger and Christian Schulz.

\section{Multilevel Process Mapping}
\label{sec:intmap_Multilevel Process Mapping}
\label{chap:intmap_Multilevel Process Mapping}
\label{chap:intmap_High-Quality Hierarchical Process Mapping}

In this section, we propose a high-quality algorithm for the process mapping problem.
Our algorithms is an integrated approach based on a multilevel scheme coupled with sophisticated initial partitioning and local search techniques. 
Additionally, we introduce faster techniques to keep precise topology information without the need to store distance matrices.
Our experiments indicate that our new integrated algorithm improves mapping quality over other state-of-the-art algorithms. 
For example, one configuration of our algorithm yields similar solution quality as the previous state-of-the-art in terms of mapping quality for large values of $k$ while being a factor~9.3 faster.  
Compared to the currently fastest iterated multilevel mapping algorithm \AlgName{Scotch}, we obtain 16\% better solutions while investing slightly more running time.
Most importantly, hierarchical multi-section algorithms that take the system hierarchy into account for model creation improve the results of the overall process mapping significantly.

\subsection{Integrated Mapping}
\label{s:intmap_main}
\label{sec:intmap_integrated_mapping_approach}
\label{sec:intmap_Integrated Process Mapping}

\subsubsection{Overall Algorithm}
\label{subsec:intmap_Overall Algorithm}

We engineered all the components of a multilevel algorithm to solve the process mapping problem in an \emph{integrated} way, as illustrated in Figure \ref{fig:intmap_multilevel}.
This algorithm includes coarsening-uncoarsening schemes, methods to construct and refine initial solutions, local search methods, and additional tools to explore trade-offs in memory \hbox{usage and performance}.

\begin{figure}[tb]
	\centering
	\includegraphics[width=0.65\columnwidth]{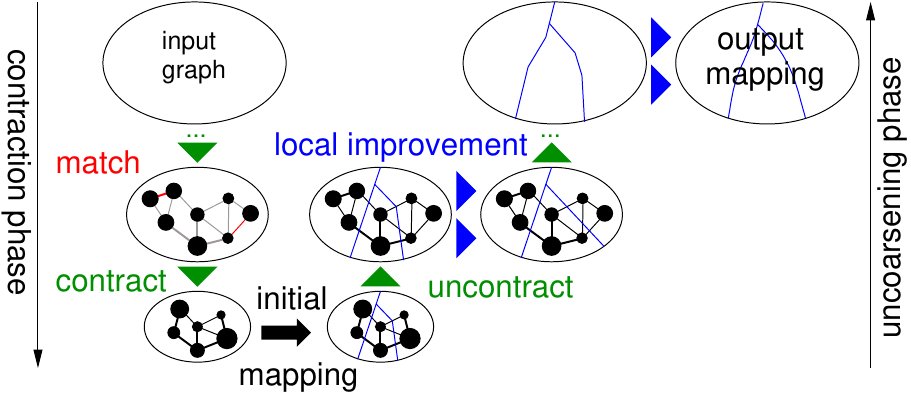}
	\caption{Multilevel scheme used to solve the process mapping problem (adapted from~\cite{kaffpa}).}
	\label{fig:intmap_multilevel}
\end{figure}

\subsubsection{Coarsening}
\label{subsec:intmap_Contraction}
We use a matching-based coarsening scheme.
The \emph{matching--based} coarsening is the most common choice in multilevel partitioning algorithms due to its simplicity, speed, and generality. 
It has two consecutive steps: An edge rating function and a matching algorithm. 
Based on local information, the \emph{edge rating function} scores each edge to estimate the benefit of contracting it.
We employ the same edge rating function $\text{exp*}(e)=  \omega(e)/(|\Gamma(u)|*|\Gamma(v)|)$ as used in~\cite{kaffpa}.
Then, the \emph{matching algorithm} obtains a maximal match to maximize the sum of the ratings of the contracted edges. 
As in \cite{kaffpa}, we computed matchings with the \emph{Global Paths Algorithm} \cite{kaffpa}, which is a $\frac{1}{2}$-approximate algorithm.

\subsubsection{Initial Solution Algorithms}
\label{subsec:intmap_Initial Solution}

We compute the initial mapping using a two-phase approach.
To solve the graph partitioning phase, we compare two multilevel recursive bisection algorithms:
(i)~\emph{standard bisection} setup, in which we perform a recursive bisection to obtain $k$ blocks;
(ii)~\emph{multi-section} setup, in which we perform recursive bisections throughout the hierarchical structure of PEs.
To construct a solution for one-to-one mapping, we apply two different construction methods: 
(i)~\emph{identity}, which assigns each block to the PE with the same ID to favor locality; 
(ii)~\emph{hierarchy top down}, which partitions the set of blocks throughout the hierarchical structure of PEs.
To refine the one-to-one mapping solution, we perform an $N^{10}_\mathcal{C}$ swap neighborhood local search.
Hence, the resulting map~$\Pi$ of nodes to PEs becomes our initial solution.

Our \emph{standard bisection} setup for initial partition  corresponds to the initial partition step in \AlgName{KaHIP}.
Moreover, it is a canonical choice to produce an initial partition using multilevel algorithms. 
On the other hand, the \emph{multi-section} setup draws inspiration from the scheme used in~\cite{GlobalMultisection}.
It is an attempt to specialize the initial partition for the particular case tackled in this paper: a regularly hierarchical distribution of PEs in which the communication cost between two processes (nodes) highly depends on the hierarchy level shared by their corresponding PEs (blocks).
Particularly, we apply a recursive partitioning scheme that splits all the nodes in $a_\ell$ blocks, then splits the nodes in each block in $a_{\ell-1}$ sub-blocks, then splits the nodes in each sub-blocks in $a_{\ell-2}$ sub-sub-blocks, and so forth.
Observing that the communication costs decrease as the communicating processes share lower hierarchy levels, the multi-section approach implies a hierarchy of sub-problems that directly reflects the problem cost hierarchy.

In both setups of the partitioning step, we recursively assign consecutive IDs to blocks throughout the process in order to maintain locality.
Moreover, the PEs belonging to each hierarchy module are labeled with consecutive IDs, which also promotes locality. 
Then, the \emph{identity} method is a fast way to construct a solution for one-to-one mapping which takes advantage of this locality: it assigns each block $V_i$ to the PE with the ID $i$.
Note, the \emph{standard bisection} setup conveniently combines with the identity mapping approach when $k$ is a power of $2$ since the recursive bisections will be automatically performed throughout the hierarchical topology.
For an analogous reason, the \emph{multi-section} setup is a good algorithm to create a coarse model to be mapped by the \emph{identity} mapping approach independently of~$k$.
The \emph{hierarchy top down}~\cite{schulz2017better} is a more general approach to construct solutions for one-to-one mapping when the PEs are hierarchically organized.
Its mechanism is similar to the idea of multi-section throughout the hierarchy.

\subsubsection{Uncoarsening}
\label{subsec:intmap_uncoarsening}

After obtaining an initial solution at the coarsest level, we apply a sequence of four local refinement methods to move nodes between blocks (which are already associated to unique PEs).
Then, we undo each of the contractions performed previously, from the coarsest graph until the original input graph.
After each uncoarsening step, we repeat our four local refinement methods.
The refinements run in a specific order based on their characteristics.
First, a \emph{quotient graph refinement} exhaustively tries to improve solution quality and eliminate imbalance by moving nodes between each pair of blocks connected by an edge in the quotient graph.
Second, a \emph{$k$-way Fiduccia-Mattheyses (FM) algorithm~\cite{fiduccia1982lth} refinement} greedily goes through the boundary nodes trying to relocate them with a more global perspective in order to improve the mapping.
Third, a \emph{label propagation refinement} randomly visits all nodes and moves each one to the most appropriate block while not increasing the objective.
Finally, a \emph{multi-try FM refinement} is exhaustively applied in rounds with random starting points throughout the graph in order to escape local optima as many times as possible.
Before explaining the local search algorithms, we introduce the notion of \emph{gain} for the process mapping problem.

\paragraph*{Gain.}

All our refinement methods are based on the concept of \emph{gain}.
We define $\Psi_b(v)$ as the \emph{partial} contribution of a node $v$ to the objective function $J(\mathcal{C},\mathcal{D}, \Pi)$ (Equation~(\ref{eq:process_mapping_obj})) in case $v$ is assigned to the PE $b$.
More precisely, $\Psi_b(v)$ represents the total cost of the communications involving~$v$ if $\Pi(v)=b$ and the neighbors of $v$ remain assigned to their current PEs.
The \emph{gain}~$g_b(v)$ represents the value that will be subtracted from $J(\mathcal{C},\mathcal{D}, \Pi)$ if a node $v$ is moved from its current PE $\Pi(v)$ to PE $b$. More precisely,
$    \Psi_b(v)  := \sum_{\set{v,u} \in I(v)} {\mathcal{C}_{v,u} \mathcal{D}_{b,\Pi(u)}}$ and
    $g_b(v) := \Psi_{\Pi(v)}(v) - \Psi_b(v)$.
Note that $g_{\Pi(v)}(v) \equiv 0$. Observe that a positive (resp., negative) gain indicates improvement (resp., worsening) of the solution.
Computing the gains of~$v$ to all blocks in $R(v)$ costs $O\big(|R(v)||I(v)|\big) = O\big(|I(v)|^2\big)$.
For comparison purposes, the computation of the same corresponding gains in the context of graph partitioning and edge-cut objective function costs $O\big(|I(v)|\big)$.

\paragraph*{Quotient Graph Refinement.}

We implemented an adapted version of the \emph{quotient graph refinement} \cite{kaffpa} to incorporate our definition of gains.
Within this refinement, we visit each pair of neighboring blocks in the quotient graph $\mathcal{Q}$ underlying the current $k$-partition. 
Then we apply an FM algorithm~\cite{fiduccia1982lth} to move nodes between the two currently visited blocks, keeping two respective gain--based priority queues of eligible nodes.
Each queue is randomly initialized with the boundary in its corresponding block.
After a node is moved (which can only happen once during an execution of the local search), its unmoved neighbors become eligible.

\paragraph*{K-Way FM Refinement.}
\label{subsec:intmap_K-Way_FM_Refinement}

Our $k$-way FM refinement was adapted from the implementation in~\cite{kaffpa}.
Unlike the quotient graph refinement, the $k$-way FM does not restrict the movement of a node to a certain pair of blocks, but performs global-aware movement choices.
Our implementation of $k$-way FM uses only one gain--based priority queue $P$, which is initialized with the \emph{complete} partition boundary in a random order.   
Then, the local search repeatedly looks for the highest-gain node $v$ and moves it to the best $c(v)$-underloaded neighboring block.
When a node is moved, we insert in $P$ all its neighbors that were not in $P$ and have not been moved yet.
The $k$-way local search stops if $P$ is empty (i.e., each node was moved once) or when a stopping criterion based on a random-walk model described in~\cite{kaffpa} applies.
To escape from local optima, this refinement allows some movements with negative gain or to blocks that are not  $c(v)$-underloaded.
Afterwards local search is rolled back to the lowest cut fulfilling the balance criterion that occurred.

\paragraph*{Label Propagation Refinement.}

We propose a local search inspired by \emph{label propagation}~\cite{labelpropagationclustering}. 
The algorithm works in rounds.
In each round, the algorithm visits all nodes in a random order, starting with the labels being the current assignment of nodes to blocks.
When a node $v$ is visited, it is moved to the $c(v)$-underloaded neighboring block  with highest positive gain.
We consider only $c(v)$-underloaded blocks since this ensures that the target block is not overloaded when the node is moved there.
Ties are broken randomly and a 0-gain neighboring block can be occasionally chosen with $50\%$ probability if there is no neighboring $c(v)$-underloaded block with positive gain.
We perform at most~$\ell$ rounds of the algorithm, where $\ell$ is a tuning parameter.

\paragraph*{Multi-Try FM Refinement.}
\label{subsec:intmap_Multi-Try_FM_Refinement}

We also adapted our gain concept to a localized variant of the $k$-way local search algorithm similar to that proposed in \cite{kaffpa} under the name of \emph{multi-try} FM.
Instead of being initialized with all boundary nodes, as in $k$-way FM, multi-try FM is repeatedly initialized with a single boundary node. %
This introduces a higher diversification to the search since it is not restricted to movements in boundary nodes with global largest gain.
As a result, this local search can escape local optima more easily than \emph{$k$-way} FM.

\subsubsection{Additional Techniques}
\label{subsec:intmap_additional_techniques}

\paragraph*{Implicit Distance Matrix.}

When the topology matrix $\mathcal{D}$ is stored in memory, access time to obtain the distance between a pair of PEs is $O(1)$, but this requires $O(k^2)$ space. 
From now on, we refer to the algorithm explicitly keeping $\mathcal{D}$ in memory as \emph{matrix--based} approach.
We implement three alternative approaches to save memory by exploiting the fact that our topology matrix is a hierarchy and the IDs of PEs in each of the hierarchy modules are sequential.
For simplification reasons, we call these approaches: (i) \emph{division--based}; (ii) \emph{stored division--based}; and (iii) \emph{binary notation--based}.

In the \emph{division--based} approach, we perform $O(\ell)$ successive integer divisions and comparisons in the ID of two PEs when we need to find out their distance. Here, $\ell$ is the number of levels in the system hierarchy. 
As a preprocessing step to be executed only once, we create a vector $h=\Big(k\Big/{\prod_{t=1}^{\ell} a_{t}}, \; k\Big/{\prod_{t=2}^{\ell} a_{t}},\; \ldots,\; k\Big/{a_{\ell}}\Big)$.
To find the distance between PEs $b$ and $b^\prime$ with $b \neq b^\prime$, we loop through the hierarchy layers from $i=\ell$ to $i=1$.
In each iteration, we perform the integer division of $b$ and $b^\prime$ by $h_{i}$.
Whenever the division results differ, then we break the loop and return $\mathcal{D}_{b,b^\prime}=d_{i}$.
This approach does not require any additional memory other than a vector with $O(\ell)$ integers and has time complexity $O(\ell)$.

The \emph{stored division--based} approach works in a similar way as the \emph{division--based} one.
The only difference is that we avoid repetitive integer divisions of IDs by elements of $h$ by storing the results of all possible divisions in a preprocessing step executed only once.
Although we still need $O(\ell)$ running time to perform comparisons in order to obtain the distance between a pair of PEs, the constant factors involved are much lower.
This improvement in running time comes at the cost of additional $O(k\ell)$ memory.

\begin{figure}[bt]
	\centering
	\includegraphics[width=0.7\textwidth]{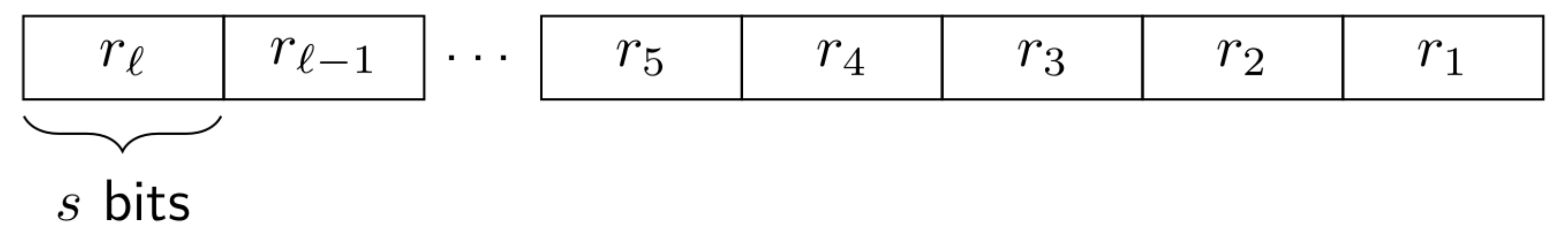}
	\caption{Section structure of the binary number used to represent PE~$b$.} 
	\label{fig:bina_repr}
\end{figure}

The \emph{binary notation--based} approach is a more compact way of decomposing the IDs of PEs.
Instead of storing $\ell$ numbers for each PE, we keep in memory a single binary number per PE.
This binary number $r$ consists of $\ell$ sections $r_i$, each containing $s$ bits, with $s$ given by Equation~(\ref{eq:bitsbinary}) (see Figure \ref{fig:bina_repr}).
To describe the construction of $r$ for a PE~$b$, let a variable~$t$ be initialized as $t=b$.
Then, we loop through the hierarchy layers, from $i=1$ to $i=\ell$.
In each iteration~$i$, $r_{i}$ receives the remainder of the division of $t$ by $a_{i}$ and, then, $t$ is updated to store the integer quotient of $t$ by $a_{i}$.
Afterwards, it is possible to precisely locate $b$ at the hierarchy by sweeping the sections of $r$ from $r_{\ell}$ to $r_1$.
In particular, $r_{\ell}$ specifies its data center, $r_{\ell-1}$ specifies its server among those belonging to its data center, and so forth.
Obtaining the distance between distinct PEs $b$ and $b^\prime$ is equivalent to finding which section $r_i$ contains the leftmost nonzero bit in the result of the bit-wise operation XOR($b$,$b^\prime$).
The running-time complexity of finding the section of the leftmost nonzero bit is $O(\log(\ell))$.
Furthermore, current processors often implement a \emph{count leading zeros} (CLZ) operation in hardware which allows the identification of the leftmost nonzero bit in $O(1)$ time, under the assumption that the size $\log r = O(\log k)$ of the
   binary numbers is smaller than the size of a machine word.

\tikzset{Nodo/.style={circle,fill=gray!25, minimum size=7mm}}

\begin{equation}
	s=\big\lceil \log_{2}\big(\max_{1\leq t \leq \ell}(a_t) \big) \big\rceil
	\label{eq:bitsbinary}
\end{equation}


\paragraph*{Delta-Gain Updates.}

Our local searches frequently need to compute \emph{gains} involved in the movement of nodes.
A \emph{base approach} to check these gains consists of computing them from scratch whenever they are needed, which can yield many gain recomputations.
For this reason, we implement a technique to save running time  called \emph{delta-gain updates} \cite{schlag2016k}.

In \emph{delta-gain updates}, we store a vector $R$ of length $|R(v)|=O(|\Gamma(v)|)$ for each node~$v$.
In this vector, we keep the gains $g_{b}(v)$ for all PEs $b$ containing neighbors of $v$.
Additionally, we store an $n$-sized vector $h$ to keep flags that indicate whether a node has up-to-date gains in memory.
Asymptotically speaking, these vectors represent $O(n+m)$ extra memory.
Each flag is initialized with an inactive seed and is considered active if its value equals a counter that is increased after each uncoarsening steps.
When we need to check a gain of some node $v$, we look at $h_v$ to verify if the gains of $v$ are up-to-date.
If they are not, we compute all gains $g_{b}(v)$ from scratch, which costs $O\big(|I(v)|^2\big)$, and activate $h_v$.
Otherwise, we just access the required gain from memory in $O(1)$ time.

If a node $v$ moves from its current PE to another one, we have to update all delta gains of $v$ and $u \in \Gamma(v)$ with $h_u$ being active.
Assume that $h_v$ and $h_u$ are active and $v$ moves from PE 1 to PE 2 during some local refinement.
After this movement, we should change the delta gains of $u$ and $v$ in memory.
For $v$, it suffices to subtract $g_2(v)$ from all other gains of $v$ and then set $g_2(v)$ to 0.
For $u$, it is slightly trickier, but we do not need to recalculate all its gains from scratch since their only source of change is the edge $e$ that connects $u$ and $v$.
Hence, we respectively subtract and add to $g_b(u)$ the corresponding contribution of $e$ before and after the movement of $v$.
We end up doing the update in time $O\big( |I(v)| + |I(v)| * \overline{|R(u)|}  \big)$, where $\overline{|R(u)|}$ is the average of $|R(u)|, \forall \{v,u\} \in I(v)$.

\subsection{Experimental Evaluation}
\label{sec:intmap_experiments}
\label{s:intmap_exp}

\paragraph*{Setup.} 
We performed our implementations using the KaHIP framework (using C++) and compiled them using gcc 8.3 with full optimization turned on (-O3 flag). 
All of our experiments were run on a single core of a  Machine~E. 

\paragraph*{Baselines.} 
For experiments based on the two-phase approach for tackling the process mapping problem, we solve the graph partitioning phase using \AlgName{KaHIP}~\cite{kaffpa}. 
We use its configurations \emph{fast}, \emph{eco} and \emph{strong} which are described in~\cite{kaffpa} -- we respectively refer to them as \emph{\AlgName{K(Fast)}}, \emph{\AlgName{K(Eco)}} and \emph{\AlgName{K(Strong)}}.
We select the most successful one-to-one mapping algorithms from \cite{schulz2017better} and also \AlgName{Scotch} for our comparison:
(i)~\emph{Top down} with $N_{\mathcal{C}}^{d}$ local search (TopDownN), which represent the state-of-the-art for one-to-one mapping when $k$ is not a power of 2;
(ii)~\emph{identity} mapping, which (when coupled with the \AlgName{KaHIP} multilevel partitioning algorithm) represents the state-of-the-art for the process mapping problem via two-phase approach when $k$ is a power of 2; 
(iii)~the algorithm of \emph{Müller-Merbach}~\cite{muller2013optimale} (Müller-Merbach), whose results are also used as a reference algorithm to calculate solution improvements in~\cite{schulz2017better}; and
(iv)~\emph{\AlgName{Scotch}}~\cite{Scotch}.
We run the two-phase approches TopDownN, \AlgName{Identity}, and Müller-Merbach coupled with \AlgName{K(Fast)}, \AlgName{K(Eco)} and \AlgName{K(Strong)} as a partitioning algorithm.
We also compare against global multi-section \cite{GlobalMultisection}, in which the graph partitioning step is already using  \emph{hierarchical multi-section} itself.
In particular, we use its configurations strong, econ, and fast, which we refer to ass ~\AlgName{Gmsec(Strong)}, \AlgName{Gmsec(Eco)} and \AlgName{Gmsec(Fast)}.
Recall that this algorithm is also non-integrated: it uses different quality configurations of \AlgName{KaHIP} to partition the graph, compute the coarser communication model, and then apply TopDownN to solve one-to-one mapping.
We also run \AlgName{Scotch}~\cite{Scotch} configured to only use recursive bipartitioning methods using the quality setting. 
\AlgName{Scotch} is among the algorithms with best running times in our experiments. 
Hence, we add an algorithm (ScotchTC) which reports the best solution out of multiple runs of \AlgName{Scotch} with different random seeds when given the same amount of time to compute a solution as our \emph{strong} configuration has used.
We contacted Christopher Walshaw, who informed us that \AlgName{Jostle}~\cite{walshaw2000mpm} is not available anymore.

\paragraph*{Instances.}
Our instances come from various sources. 
Basic properties of the graphs under consideration can be found in Table~\ref{tab:test_instances_walshaw}. 
To keep the evaluation simple, we use the following hierarchy configurations for all the experiments: $D=1:10:100$, $\mathcal{S}=4:16:r$, with $r \in \{1,2,3,\ldots,128\}$. Hence, $k=64 \cdot r$.

\paragraph*{Methodology.} 
Depending on the focus of the experiment, we measure running time and/or the process mapping communication cost defined in Equation~(\ref{eq:process_mapping_obj}).
Some of our plots are \emph{modified} performance profiles. 
These plots relate the running times of all algorithms to the slowest algorithm on a per-instance basis.
For each algorithm, these ratios are sorted in increasing order. 
The plots show $\big(\frac{\sigma_\text{A}}{\sigma_\text{slowest}}\big)$ on the y-axis. 
A point close to zero indicates that the algorithm was considerably faster than the slowest algorithm.

\subsubsection{Parameter Study}
\label{subsec:Parameter Study}

In this section, we present a sequence of experiments to test the performance of our algorithmic components regarding solution quality and running time.
Our general goal consists of individually evaluating the effectiveness and significance of each component.
Our specific goal consists of tuning three different configurations of the algorithm based on different principles:
(i)~a \emph{strong} configuration, mostly concerned with maximizing solution quality;
(ii)~a \emph{fast} configuration, mostly concerned with minimizing running time; and
(iii)~an \emph{eco} configuration, which seeks to balance running time and solution quality.


Our experimental strategy consists of defining a single \emph{focus} aspect of the algorithm for each experiment.
Then, this specific aspect is tested with different components or setups while other parameters  of the algorithm are kept constant.
Initially we use standard components. Then we use the best component found in an experiment the next section.

We begin by focusing on a representative component of the multilevel scheme: 
(i)~initial mapping;
(ii)~local search\ifFull; and (iii)~contraction scheme\fi{}.
Then, we evaluate algorithmic aspects which only affect running time and memory consumption: the distance matrix representation.
\ifFull We do not test different global search schemes and all tests are based on a single iteration of the multilevel scheme. \fi{}
The \emph{standard configuration} consists of the matching--based contraction, all local search methods, explicit storage of distance matrix, and no delta-gains updates.
All experiments in this section ran for the six \emph{tuning graphs} from Table \ref{tab:test_instances_walshaw}.

\vspace*{-.5cm}
\paragraph*{Initial Mapping.}

\begin{figure}[t!]
    \captionsetup[subfigure]{justification=centering}
	\centering
	\begin{subfigure}[t]{0.485\textwidth}
		\centering
		\includegraphics[width=\textwidth]{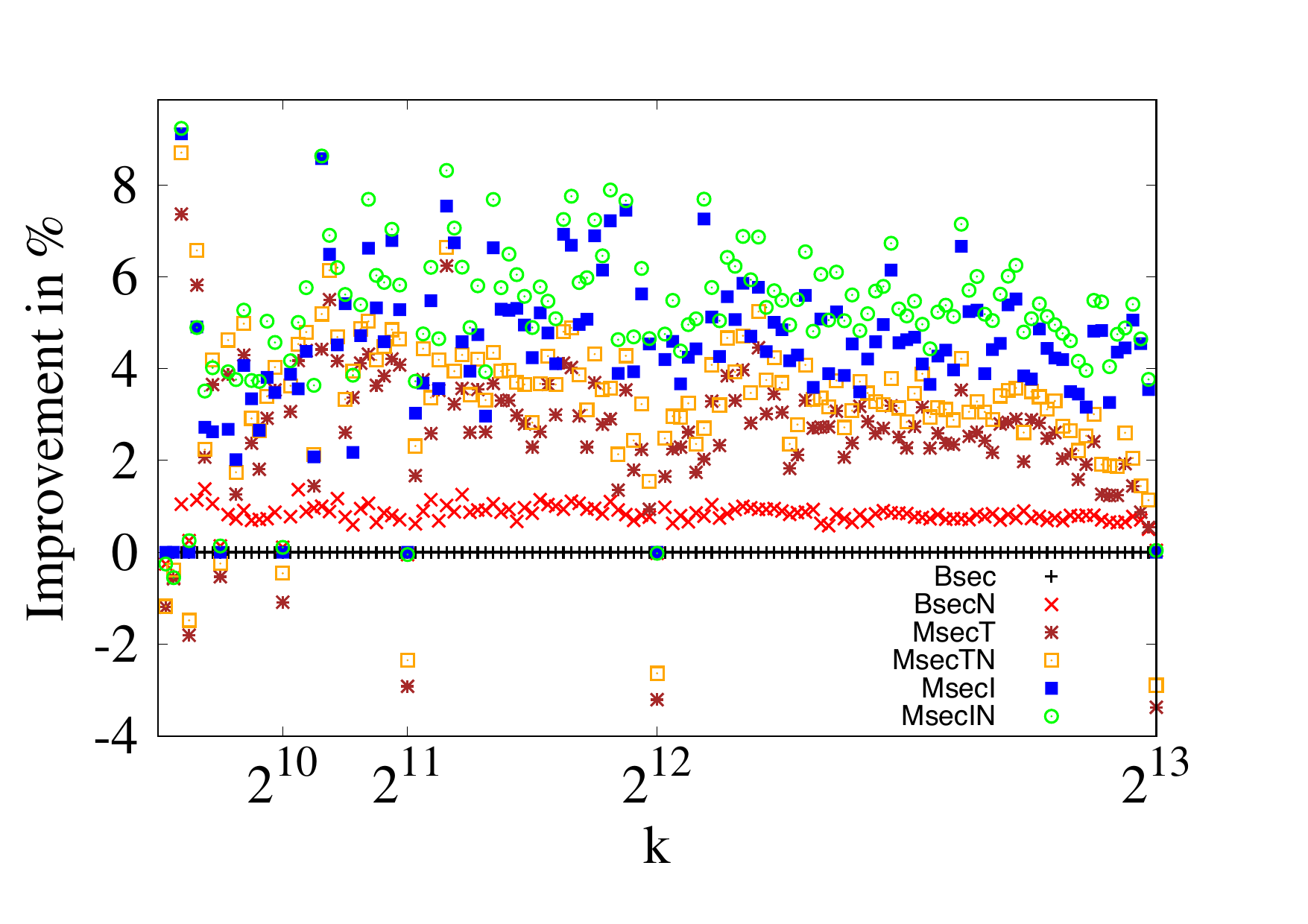}
		\caption{Improvements in objective function over Bsec. Higher is better.}
		\label{fig:InitSolRes}
	\end{subfigure}
	\begin{subfigure}[t]{0.485\textwidth}
		\centering
		\includegraphics[width=\textwidth]{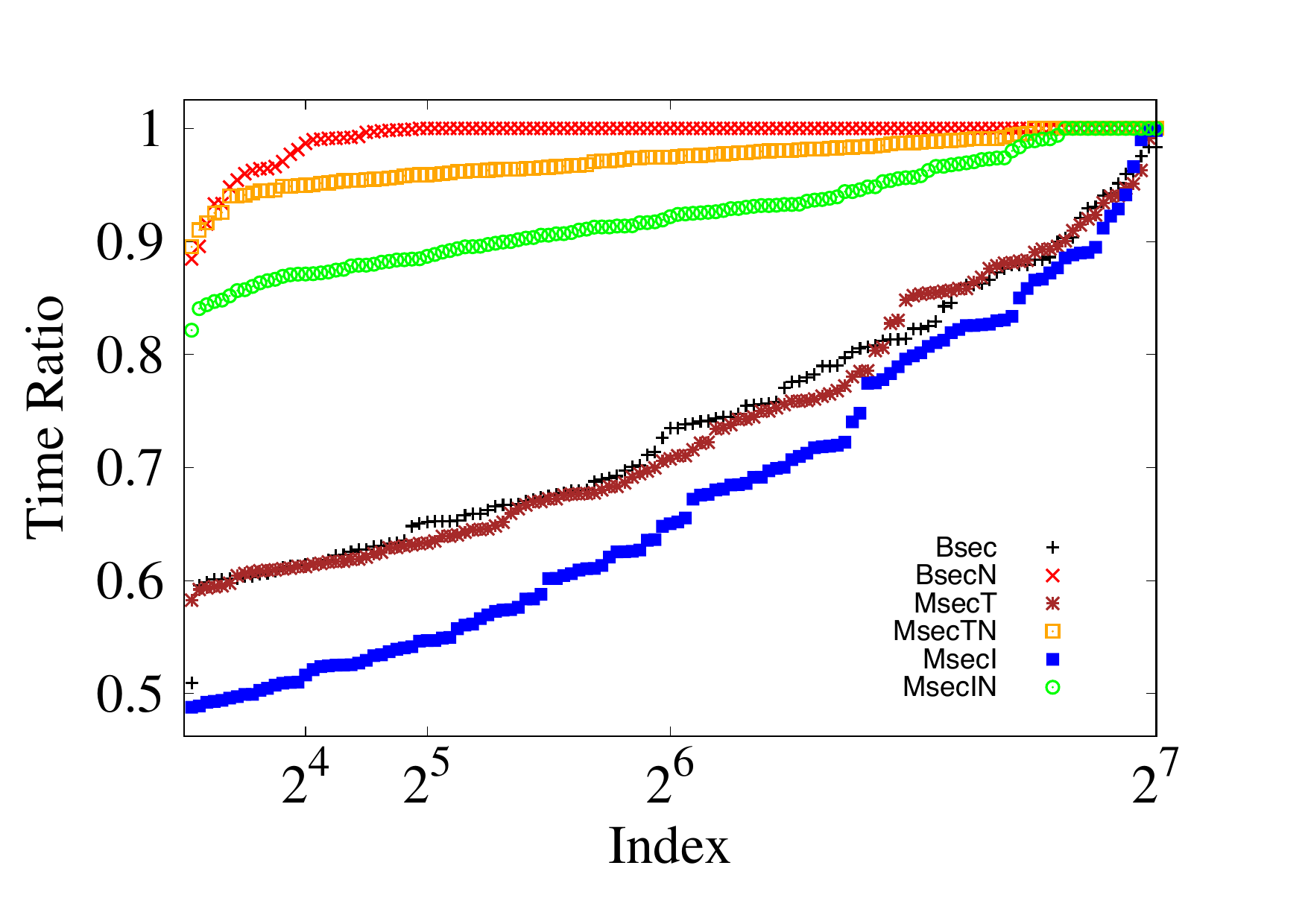}
		\caption{Performance profile for running time (ordered running time ratios). Lower is better.}
		\label{fig:InitSolTim}
	\end{subfigure}

	\caption{Comparing initial mapping algorithms from Table \ref{tab:res_init_sol}.}
	\label{fig:InitSol}
\end{figure}

For the computation of initial mappings, we consider the six configurations listed in Table \ref{tab:res_init_sol}.
Observe that Bsec and BsecN should apply either \emph{identity} or \emph{top down} depending on $k$.
This choice is based on results obtained in \cite{schulz2017better} comparing these two construction algorithms for one-to-one mapping.
Figure \ref{fig:InitSol} plots the results regarding solution quality and running time for our six configurations.

Looking at solution quality, the configurations using multisection dominate those using standard bisection except for instances having $k$ as a power of 2.
This exception was expected since the standard bisection naturally performs a multisection partition for these instances.
Among the configurations using multisection, \emph{identity} produces overall better solutions than \emph{hierarchy top down}, which is explained by the inherent locality of the multisection approach.
Finally, the $N^{10}_\mathcal{C}$ local search is the least significant factor for solution quality, although it slightly improves solution compared to the similar configurations that skip local search.

The $N^{10}_\mathcal{C}$ local search is the dominant factor regarding running time.
Observe that the configurations using \emph{identity} are always the fastest ones among those algorithms that either use $N^{10}_\mathcal{C}$ local search or among those that don't.
Hence, the construction algorithm for one-to-one mapping is the second most relevant factor for running time.
Finally, the partitioning algorithm has little influence over running time, which reflects the rather small time difference between each of the pairs \{BsecN, MsecTN\} and \{Bsec, MsecT\}.

Since MsecIN has the best overall solution quality results, it is the natural choice for \emph{strong}.
Notice that MsecI has the best overall running times, which makes it the perfect choice for \emph{fast}.
Nevertheless, it is also the second best regarding solution quality, which suffices to make it also the best choice for \emph{eco}.

\begin{table}[t]
	\scriptsize
	\centering
	\begin{tabular}{l@{\hskip 60pt}rr@{\hskip 60pt}rrr}
		\toprule	
		{Config.} & Std.Bisec. & Multisec. & Identity           & Top Down & $N^{10}_\mathcal{C}$   \\ 
                \midrule
		Bsec          & yes                & no           & if $k$ power of 2 & if $k$ not power of 2 & no  \\ 
		BsecN           & yes                & no           & if $k$ power of 2 & if $k$ not power of 2 & yes \\ 
		MsecT          & no                 & yes          & no                 & yes                    & no  \\ 
		MsecTN           & no                 & yes          & no                 & yes                    &  yes   \\ 
		MsecI          & no                 & yes          & yes                & no                     & no  \\ 
		MsecIN           & no                 & yes          & yes                & no                     &   yes  \\ 
                \bottomrule
	\end{tabular}
	\caption{Various configurations for the evaluation of different initial mapping algorithms.}
	\label{tab:res_init_sol}
\end{table}

\vspace*{-.5cm}
\paragraph*{Local Search.}

For local search experiments, we start looking at the \emph{fast} algorithm.
To obtain a fast algorithm, we restrict its number of local search methods to one.
Experiments with single local search algorithms do not yield much insight except that  label propagation with delta-gain updates yields a very good trade-off for running time and solution quality.

\begin{figure}
    \captionsetup[subfigure]{justification=centering}
	\centering
	\begin{subfigure}[t]{0.485\textwidth}
		\centering
		\includegraphics[width=\textwidth]{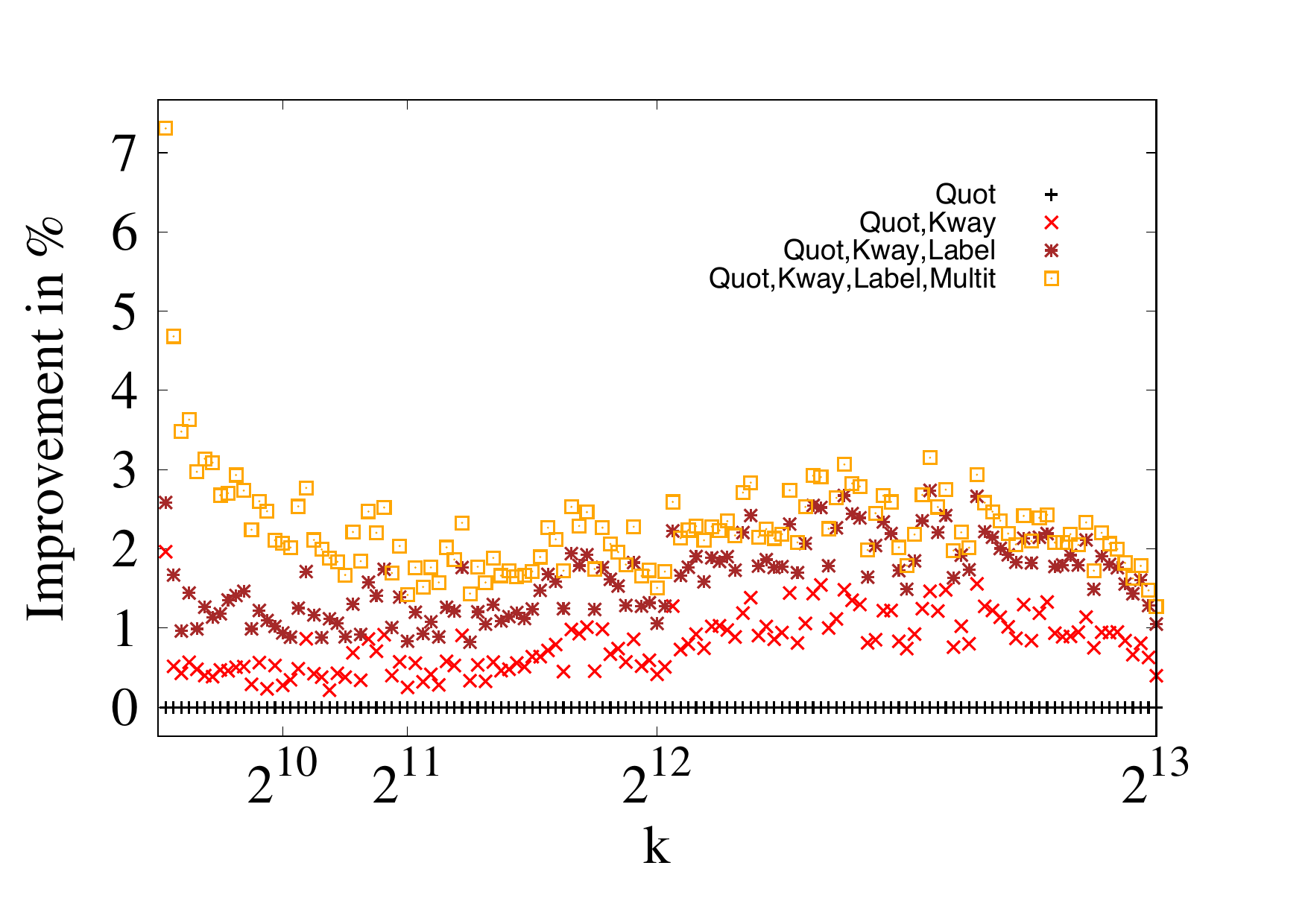}
		\caption{Improvements in objective function over Quot. Higher is better.}
		\label{fig:LSeco_Res}
	\end{subfigure}
	\begin{subfigure}[t]{0.485\textwidth}
		\centering
		\includegraphics[width=\textwidth]{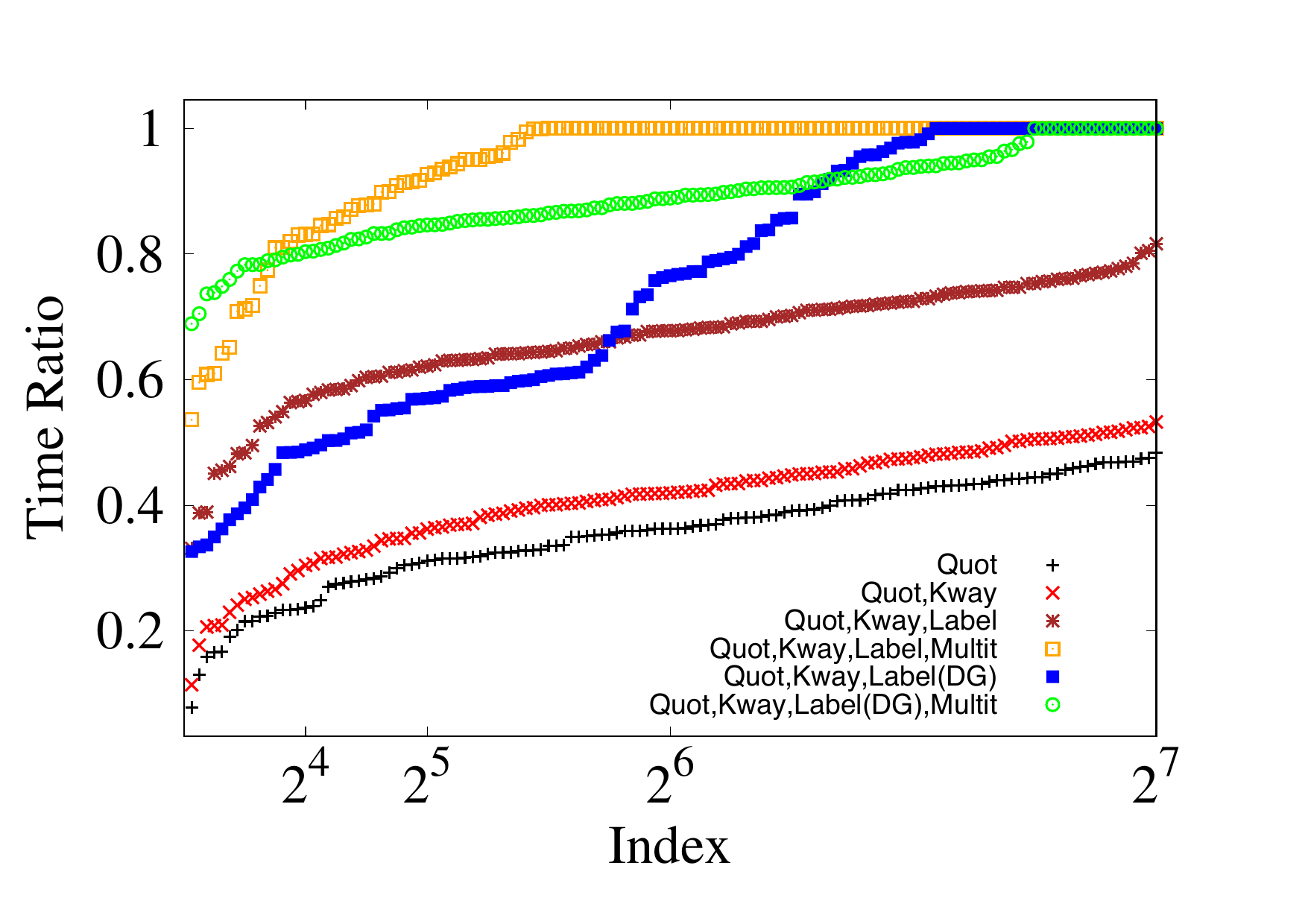}
		\caption{Performance profile for running time (ordered running time ratios). Lower is better.}
		\label{fig:LSeco_Tim}
	\end{subfigure}
	
	\caption{Results for local search experiment of the \emph{eco} algorithm. It comprises four scenarios that represent successive additions of the respective local search methods. 
	In (b), we show two additional scenarios in which delta-gain updates are used (represented by DG).}
	\label{fig:LSeco}
\end{figure}

For the \emph{eco} configuration of our algorithm, we build four configurations by incrementally inserting the local search methods. 
Additionally, we consider two extra configurations equipped with delta-gain updates during label propagation.
Figure \ref{fig:LSeco} summarizes the results concerning these six configurations.
Since the behavior of \emph{strong} in this experiment is equivalent, we omit its results without loss of completeness.

Figure \ref{fig:LSeco} shows that solution quality and running time consistently increase after each consecutive addition of local refinement methods.
Regarding delta gains, running times decrease for some values of $k$ but increase considerably and irregularly for others.
Since this behavior is undesirable for \emph{eco}, we drop delta gains for it.
We also drop delta gains for \emph{strong} since it does not affect solution quality and has negligible influence on running time compared to the $N^{10}_\mathcal{C}$ refinement.
The clear choice for \emph{strong} is the configuration with the four local searches since all of them contribute to incrementally improve solution quality.
For \emph{eco}, we drop only the multi-try FM local search since it adds little to solution quality but significantly increases the running time.

\vspace*{-.5cm}
\paragraph*{Distance Matrix.}

As the objective function is not influenced by the representation of the distance matrix, we only evaluate running time.
We test four configurations: one with each of the three techniques that imply the distance matrix and a reference scenario in which we store the full distance matrix.
Since all configurations of our algorithm display equivalent behavior, we focus on \emph{strong} without loss of generality.
Figure \ref{fig:DistMatrix} plots a running time ratio chart for \emph{strong}.
\begin{figure}[t]
    \captionsetup[subfigure]{justification=centering}
	\centering
        \vspace*{-1cm}
		\includegraphics[width=0.5\textwidth]{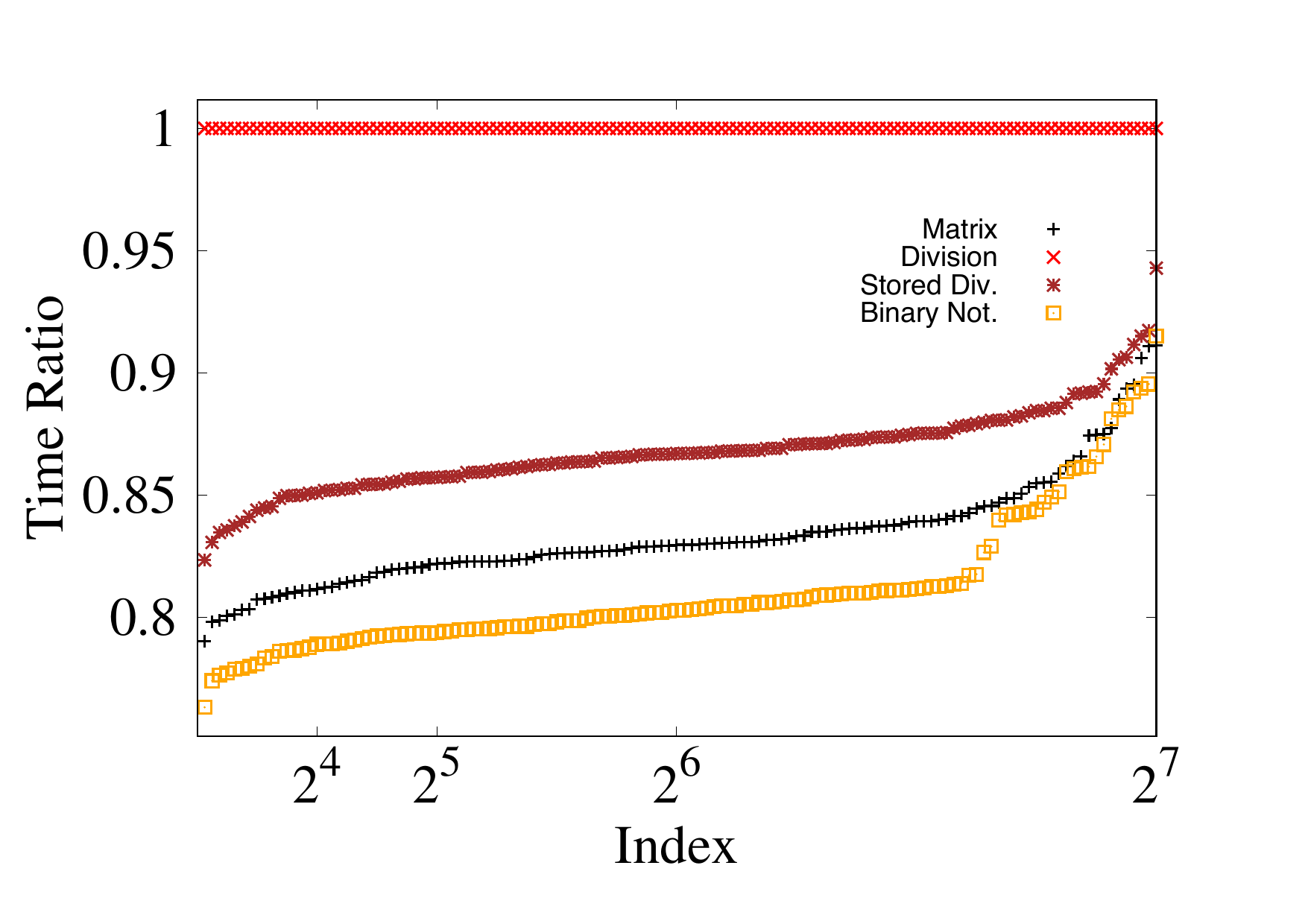}

	\caption{Performance plot for running times of different algorithm configurations (ordered running time ratios for different distance matrix implementations). 
	Four alternative configurations are considered: implicit representations of the distance matrix based on division, stored division, and binary notation, and a reference scenario in which the matrix is explicitly stored. Lower is better.}
	\label{fig:DistMatrix}
\end{figure} 
It is easily understandable that the \emph{binary notation} technique is faster than the \emph{stored division}--based approach, and also that the latter is faster than the \emph{division}--based approach.
On the
   other hand, the \emph{binary notation} outperforms the full-distance matrix approach. While both approaches allow O(1) distance calculations on
   our x86\_64 architecture, accessing the distance matrix incurs a
   memory access.
This leads to frequent cache misses since the $O(k^2)$-sized distance matrix does not fit into the cache of our machine.
Lastly, not using the distance matrix significantly improves the memory footprint of the algorithm. This especially prounced if the number of blocks gets very large. For example, for $2^{15}$ blocks, not using the distance matrix saves roughly an order of magnitude in memory.

After the tuning step, the three configurations of our algorithm ended up as follows:
(i)~\emph{fast} applies MsecI, label propagation with delta-gain updates, and binary notation;
(ii)~\emph{eco} applies MsecI, quotient graph refinement, $k$-way FM, label propagation, and binary notation; and
(iii)~\emph{strong} applies MsecIN, quotient graph refinement, $k$-way FM, label propagation, multi-try FM, and binary notation.
To improve speed even more, we also include a configuration called \emph{fastest} which applies MsecI as initial mapping, does not use any local search during uncoarsening, and never needs to use information from the distance matrix.

\subsubsection{State of the Art}
\label{subsec:State of the Art}
In this section, we compare our algorithms against the best alternative algorithms in the literature.
We report experiments on all graphs listed in Table~\ref{tab:test_instances_walshaw} -- excluding the graphs from the tuning set. 
Figure \ref{fig:intmap_GenSol} gives an overview over our results.

\begin{figure}[t!]
    \captionsetup[subfigure]{justification=centering}
	\centering
        \vspace*{-.75cm}
	\begin{subfigure}[t]{0.495\textwidth}
		\centering
		\includegraphics[angle=-0, width=\textwidth]{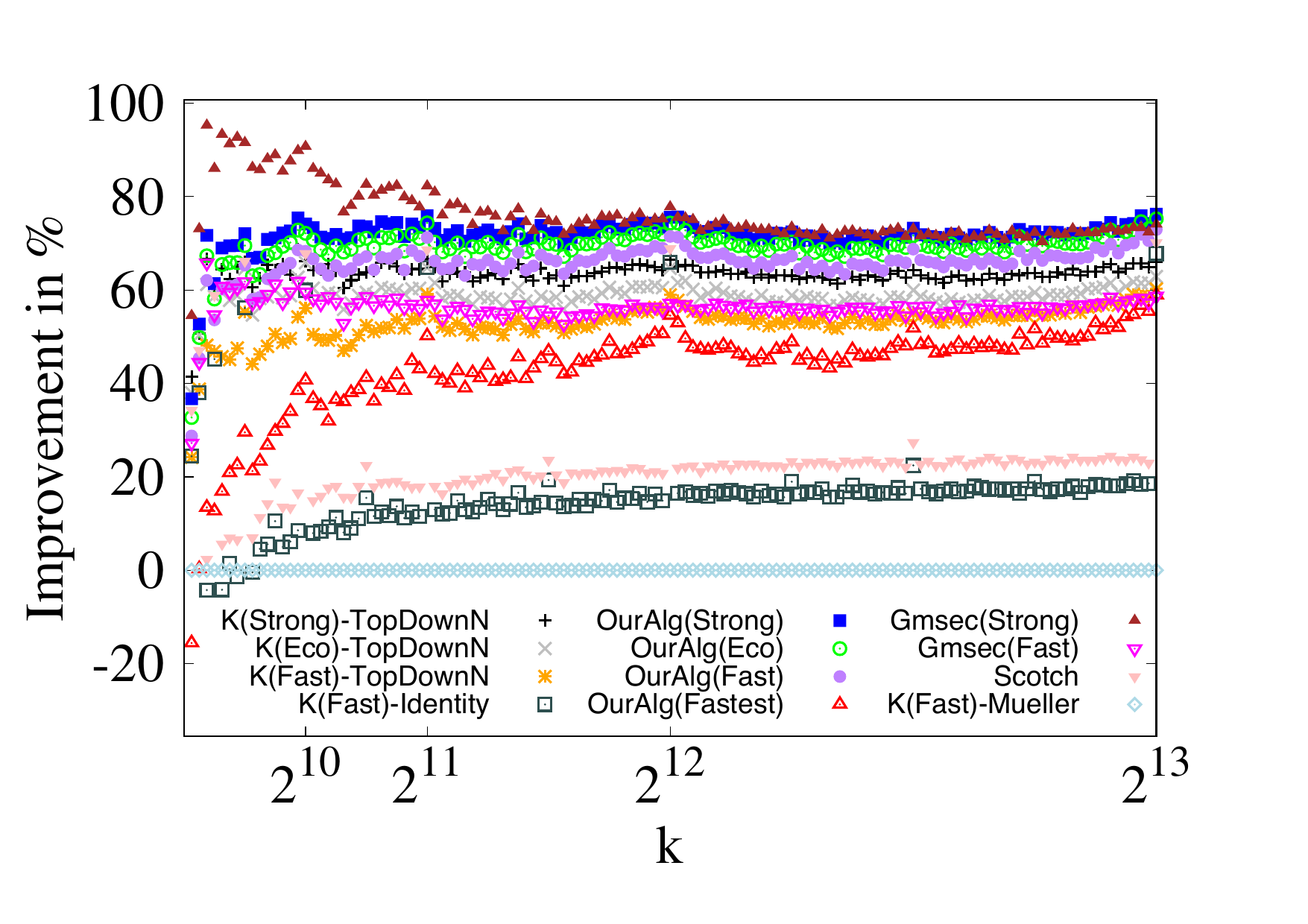}
		\caption{Improvements in objective function over \AlgName{K(Fast)}-Müller-Merbach. Higher is better.}
		\label{fig:intmap_GenSolRes}
	\end{subfigure}
	\begin{subfigure}[t]{0.495\textwidth}
		\centering
		\includegraphics[angle=-0, width=\textwidth]{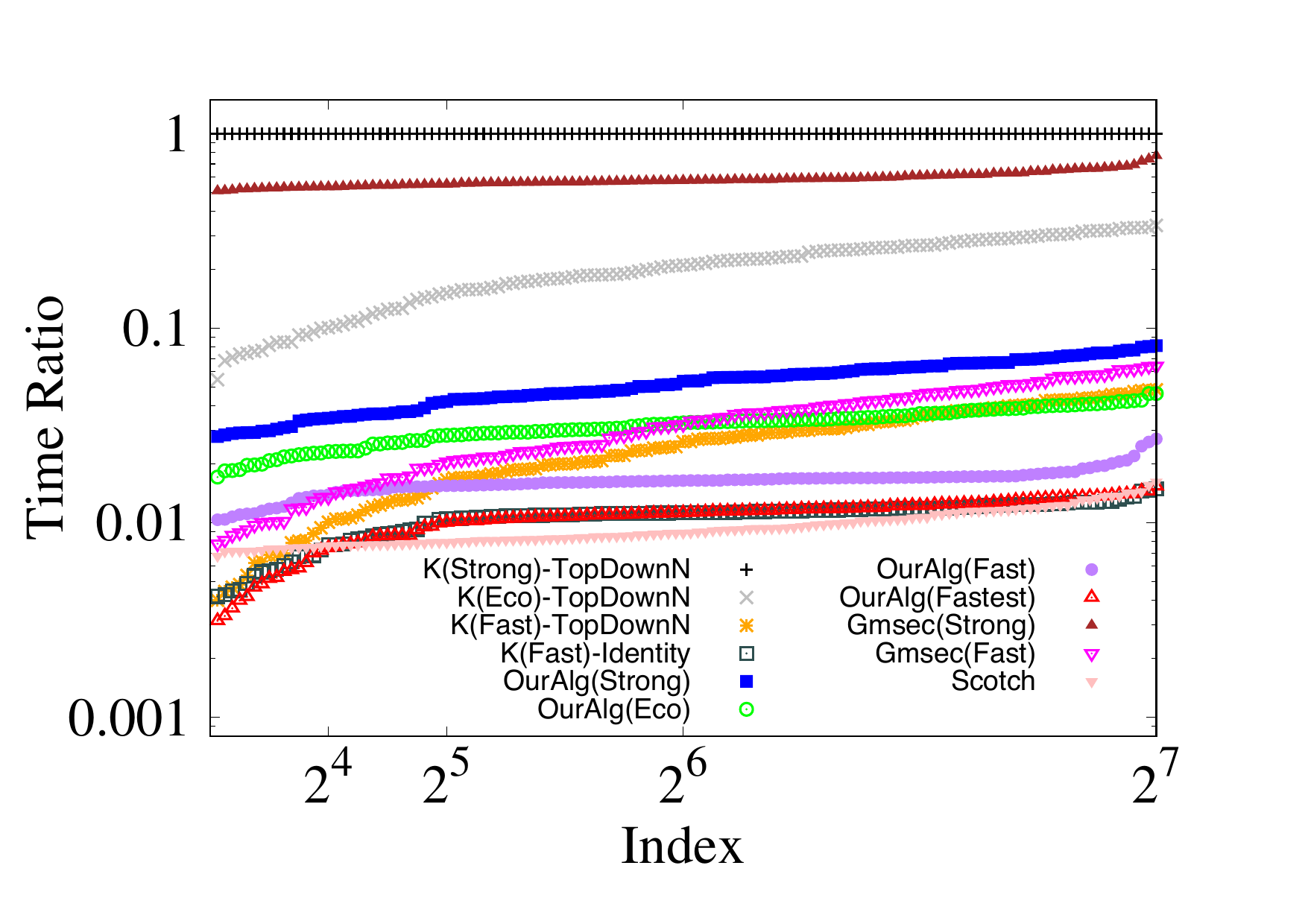}
		\caption{Performance profile for running time.  Lower is better.}
		\label{fig:intmap_GenSolTim}
	\end{subfigure}

        \vspace*{-.25cm}
	\caption{Comparisons against state-of-the-art approaches for process mapping.}
        \vspace*{-.5cm}
	\label{fig:intmap_GenSol}
\end{figure}

Overall, \AlgName{Scotch} has the lowest average running time, directly followed by our algorithm \emph{fastest}, \AlgName{K(Fast)}-\AlgName{Identity}, and our algorithm \emph{fast} (respectively $9\%$, $10\%$, and $73\%$ slower than \AlgName{Scotch} on average).
The average running time of \AlgName{K(Fast)}-TopDownN and \AlgName{Gmsec(Fast)} are respectivelly a factor $2.3$ and $3.1$ higher than \AlgName{Scotch}.
For our algorithms \emph{eco} and \emph{strong}, this factor is respectively $3.3$ and $5.4$.
By definition ScotchTC is also a factor $5.4$ higher than \AlgName{Scotch}. 
Next, \AlgName{Gmsec(Eco)} and \AlgName{K(Eco)}-TopDownN have much higher runtimes ($9.3$ and $20.4$ times slower than \AlgName{Scotch}, respectively).
\AlgName{Gmsec(Strong)} and \AlgName{K(Strong)}-TopDownN are the slowest ones ($62$ and $107$ times slower than \AlgName{Scotch}, respectively).

We now highlight the comparison of various configurations/algorithms.
\AlgName{Gmsec(Strong)} is the algorithm with best overall mapping quality. It is 2.5\% on average better compared to our \emph{strong} configuration, however our \emph{strong} configuration is more than an order of magnitude faster on average -- a factor 11.5 faster.
Better quality of \AlgName{Gmsec(Strong)} stems from the fact that global multi-section itself already takes the system hierarchy into account and hence yields good models to be mapped.
Moreover, the graph partitioning approach itself which is used to compute a communication graph also uses more (time-consuming) sophisticated local search algorithms. This includes methods such as flow-based methods which particularly work well for small values of $k$ as well as global search methods such as V-cycles.
In particular for $k > 2^{11}$, our \emph{strong} has the same average quality as \AlgName{Gmsec(Strong)} but is $9.3$ times faster.
Our algorithm computes similar solutions in much less time since the multilevel algorithm directly optimizes the correct objective. 
For $k > 2^{11}$, our \emph{eco} is $2.4\%$ better and $2.8$ times faster than \AlgName{Gmsec(Eco)}, and our \emph{fast} is $9\%$ better and $2.6$ times faster than \AlgName{Gmsec(Fast)}.
Overall, our \emph{strong} configuration improves solution quality over \AlgName{K(Strong)}-TopDownN by $5.1$\% while being a factor $20$ on average faster.
Our \emph{eco} configuration has roughly $3.6\%$ better quality than \AlgName{K(Strong)}-TopDown but is a factor $32$ faster on average. 
Our \emph{fast} configuration still yields $1.3\%$ better solutions than \AlgName{K(Strong)}-TopDown on average, and is a factor $62$ faster.
Here, improvements stem from the fact that the new algorithms are integrated and not two-phase as well as the fact that these algorithms do not perform multi-sections.
Lastly, our \emph{fastest} algorithm is on average $9\%$ slower than \AlgName{Scotch} but also improves solution quality over \AlgName{Scotch} by $16\%$.
Our \emph{strong} algorithm is $40\%$ better than \AlgName{Scotch} and consumes a factor $5.4$ more running time.

\section{Multilevel Signed Graph Clustering}
\label{sec:sgc_Multilevel Signed Graph Clustering}
\label{chap:sgc_Multilevel Signed Graph Clustering}
\label{chap:sgc_A Distributed Multilevel Memetic Algorithm for Signed Graph Clustering}

In this section, we make a first attempt at solving the signed graph clustering problem by leveraging some of the most effective techniques from graph partitioning that minimize edge-cut.
We introduce and thoroughly engineer a memetic algorithm specifically designed for this problem.
The memetic algorithm features a multilevel approach, which encompasses a coarsening-uncoarsening process and efficient local search methods.
The memetic algorithm is further enhanced with natural multilevel recombination and mutation operations.
We also parallelize our approach using a scalable coarse-grained island-based strategy which has already shown to be scalable in practice. 
Experimental results show that our memetic algorithm is able to converge effectively and demonstrate that it outperforms the state-of-the-art with respect to edge-cut, producing significantly better solutions \hbox{in 1 hour of processing on 16 elements}.

\subsection{Multilevel Algorithm}
\label{sec:sgc_Multilevel Signed Graph Clustering}

We now describe  a novel multilevel algorithm for signed graph clustering which is an important component of our memetic algorithm.
We start with our multilevel strategy, then we discuss \hbox{each of its components}.

\subsubsection{Overall Multilevel Strategy}
\label{subsec:sgc_Overall Multilevel Strategy}

In this section, we describe our overall multilevel strategy for signed graph clustering.
Our multilevel approach for graph partitioning differs from others in that it lacks a separate algorithm for initial solution computation. 
Instead, our scheme starts with coarsening to produce a hierarchy of coarser graphs. 
Upon completion, the coarsest graph is assigned individual clusters for each node, resulting in an initial clustering. 
The uncoarsening phase maps the current best clustering all the way back to the original graph while running local search approaches on each level. 
Two consecutive cycles of this multilevel process are performed, as respectively shown in Figures~\ref{fig:sgc_MSGC} and \ref{fig:sgc_MSGC_second_cycle}. 
In the second cycle, the previously obtained clustering is forced as the initial solution on the coarsest level, to be further refined during uncoarsening.
Our overall multilevel algorithm \hbox{is outlined in Algorithm~\ref{alg:sgc_overall_multilevel_strategy}}.

\begin{algorithm}
	\small
	\textbf{Input} graph $G=(V,E)$ \\
	\textbf{Output} clustering $\Pi: V \rightarrow \MdN$ 
	\begin{algorithmic}[1]  
		\State $G_0 \gets \emptyset$; $G_1 \gets G$; $i \gets 0$ %
		\While{$G_i \neq G_{i+1}$}
			\State $i \gets i+1$
			\State ComputeClustering($G_i$) %
			\State $G_{i+1} \gets$ Contract($G_i$)  %
		\EndWhile
		\State $\Pi_i \gets$ MapNodesToClusters($G_i$) %
		\For{$j\gets i-1,\ldots,1$}
			\State $\Pi_j \gets$ RemapToFiner$(\Pi_{j+1})$ %
			\State $\Pi_j \gets$ RefineClustering$(G_j,\Pi_j)$ %
		\EndFor
		\State $\Pi \gets$ GlobalSearch$(\Pi_{1})$ %
	\end{algorithmic}
	\caption{Our multilevel algorithm}
	\label{alg:sgc_overall_multilevel_strategy}
\end{algorithm}

\subsubsection{Coarsening}
\label{subsec:sgc_Coarsening}

In this section, we explain our \emph{coarsening} phase.
As shown in Algorithm~\ref{alg:sgc_overall_multilevel_strategy}, it consists of two consecutive steps which are iteratively repeated on the current coarsest graph until no more contraction can be performed without increasing the overall edge-cut value. 
In the first step, a graph clustering is computed.
In the second step, each cluster contained in this clustering is contracted \hbox{in order to produce a coarser graph}. 

To be self-contained, we briefly outline our clustering algorithm, which is based on \emph{label propagation}~\cite{labelpropagationclustering}.
In a graph with $n$ nodes and $m$ edges, size-constrained label propagation can be implemented in $O(n+m)$ time. 
The algorithm starts with each node in its own cluster, i.e., its initial cluster ID is set to its node ID.
The algorithm then works in rounds.  
In each round, all nodes are traversed, and each node $u$ is moved to the cluster with the largest connection weight with $u$ in case it is strictly positive, i.e., it is only moved to the cluster $V_i$ with maximum connection weight $\omega(\{(u, v) \mid v \in N(u) \cap V_i \})$ if it is strictly positive. 
Ties are broken randomly. 
This greedy approach ensures that the edge-cut is monotonically decreased. 
The algorithm is repeated at most $L$ times, where $L$ is a tuning parameter.
The intuition of the overall procedure is to force positive edges inside clusters and negative edges between clusters, which is \hbox{favorable for the edge-cut objective}.

After the label propagation, each cluster is replaced by a single node (as exemplified in Figure~\ref{fig:sgc_clustercontraction}), creating a hierarchy of coarser graphs. 
The contraction ensures a clustering of the coarse graph corresponds to the same edge-cut in finer graphs. 
This aggressive graph-contraction strategy allows us to shrink irregular networks, and the coarsening stops when no further contraction can be done without increasing the edge-cut. 
The final output is the coarsest graph, which implies an initial clustering of assigning each node to its own cluster. 
This coarsening phase serves as a method to construct an initial clustering, starting with $n$ clusters and decreasing the number of clusters and \hbox{the edge-cut at each hierarchy level}.

\subsubsection{Uncoarsening}
\label{subsec:sgc_Uncoarsening}

In this section, we explain our \emph{uncoarsening} phase.
It is executed directly after the coarsening phase has been completed and every node of the coarsest graph has been assigned to its own cluster.
As shown in Algorithm~\ref{alg:sgc_overall_multilevel_strategy}, our uncoarsening phase consists of two consecutive steps which are iteratively repeated for each graph in our contraction hierarchy, from the coarsest one to the finest one.
In the first step, the current clustering is mapped to the graph contained in the current level of the contraction hierarchy.
In the second step, we apply a sequence of local refinement methods to optimize our clustering by moving \hbox{nodes between clusters}.

At each level of uncoarsening, we remap the best clustering to the next finer graph using our contraction approach, maintaining edge-cut as in the example shown in Figure \ref{fig:sgc_clustercontraction}. 
We apply two refinement methods: 
First, a \emph{label propagation} refinement randomly visits all nodes and greedily moves each one to the cluster where edge-cut is minimized.
Second, a \emph{$k$-way Fiduccia-Mattheyses} (FM)~\cite{fiduccia1982lth} refinement greedily goes through the boundary nodes trying to relocate them with a more global perspective in order to improve solution quality. 
Both methods are based on the concept of \emph{Gain}, defined as the increase \hbox{in edge-cut caused by node movement}.

\paragraph*{Label Propagation Refinement.}

During uncoarsening, our of our local search procedures is based on \emph{label propagation}~\cite{labelpropagationclustering}. 
The algorithm works in rounds.
In each round, the algorithm visits all nodes in a random order, starting with the labels being the current assignment of nodes to clusters.
When a node $v$ is visited, it is moved to the neighboring cluster with highest positive gain.
Ties are broken randomly and a 0-gain neighboring cluster can be occasionally chosen with $50\%$ probability if there is no neighboring cluster with positive gain.
We perform at most~$\ell$ rounds of the algorithm, \hbox{where $\ell$ is a tuning~parameter}.

\paragraph*{K-Way FM Refinement.}
\label{subsec:sgc_K-Way_FM_Refinement}

Our $k$-way FM refinement was adapted from the implementation in~\cite{kaffpa}, which is an adapted version of the FM algorithm~\cite{fiduccia1982lth} to move nodes between clusters.
Unlike the original algorithm, FM algorithm, our refinement moves nodes globally, rather than restricting movements to pairs of clusters. 
We use a gain-based priority queue initialized with the \emph{complete} partition boundary.
We repeatedly move the highest-gain node to its best neighboring cluster, updating the queue by adding its neighbors not yet moved. 
The search stops when the queue is empty (i.e., each node was moved once) or a random-walk stopping criterion described in~\cite{kaffpa} is met. 
The refinement escapes local optima by allowing negative-Gain movements and rolling back to the lowest cut that satisfies \hbox{the balance criterion}.

\subsubsection{Global Search}
\label{subsec:sgc_Global Search}

After the standard multilevel cycle described in Algorithm~\ref{alg:sgc_overall_multilevel_strategy}, a global search is run using an extra multilevel cycle almost identical to the standard one, with the previous clustering used as the initial solution.
Figure~\ref{fig:sgc_MSGC_second_cycle} illustrates this global search procedure. 
Its difference to the standard multilevel cycle is that the previously computed clustering is forced to be the initial solution in the extra multilevel cycle. 
This is achieved by blocking cut edges from contraction and stopping coarsening when the coarser graph equals the quotient graph of the clustering. 
This global search strategy was introduced in \cite{walshaw2004multilevel} and has been successfully used for many optimization problems.
This approach is extended in Section~\ref{sec:sgc_Distributed Evolutionary Signed Graph Clustering} to \hbox{form the operators of our evolutionary algorithm}.

\begin{figure}[t]
	\centering
	\includegraphics[width=.75\textwidth]{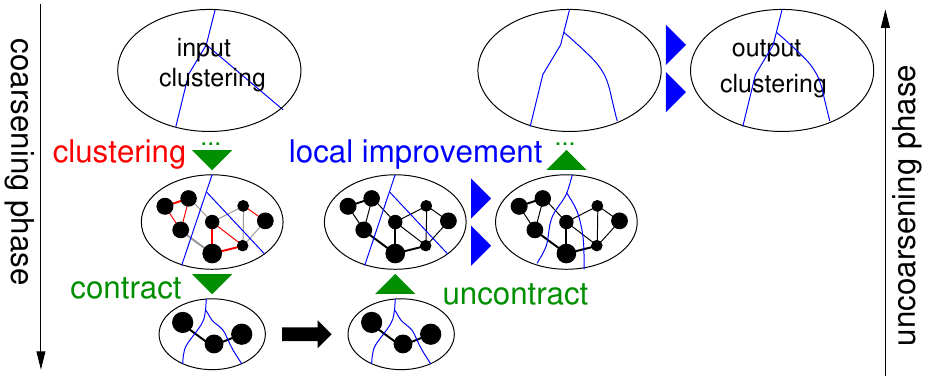}
	\caption{Global Search (adapted from~\cite{kaffpa}).}
	\label{fig:sgc_MSGC_second_cycle}
\end{figure}

\subsection{Distributed Evolutionary Algorithm}
\label{sec:sgc_Distributed Evolutionary Signed Graph Clustering}

In this section, we extend our multilevel algorithm to a distributed evolutionary framework.  
Our approach is directly inspired by the evolutionary algorithm proposed in~\cite{sanders2012distributed} for graph partitioning.
We start by providing a broad description of our overall evolutionary strategy.
Then, we discuss each of its algorithmic components and show how we use our multilevel algorithm as a tool to implement evolutionary operators such as recombination \hbox{and mutation}.

\subsubsection{Evolutionary Strategy}
\label{subsec:sgc_Evolutionary Strategy}

In this section, we describe the overall strategy of our distributed evolutionary algorithm, which is summarized in Algorithm~\ref{alg:sgc_overall_evolutionary_strategy}.
As a typical evolutionary approach, our algorithm is a population-based heuristic which mimics the biological process of evolution to optimize for our problem.
Given a signed graph~$G$, we define an \emph{individual} as a particular clustering on $G$.
Each processing element initializes a population~$P$ of $\alpha$ individuals using different random seeds.
Over many rounds or \emph{generations}, the population evolves via either \emph{mutation} (with probability $\beta=10\%$) or \emph{recombination} (with probability $1-\beta=90\%$).
In particular, the process continues until a desired time limit~$t_f$ is reached.
The \emph{recombination} procedure is executed on two \emph{good} individuals selected via \emph{tournament} and then used as \emph{parents} to produce \emph{offspring} which partially inherits their characteristics.
The \emph{mutation} procedure is run on a randomly-picked individual which is used as reference for building another individual from scratch. 
The next phase is called \emph{replacement}.
New individuals are inserted into the population and worst ones are \emph{evicted} to maintain $\alpha$ population size.
Each running instance of our algorithm generates only one individual per generation, hence it is a \emph{steady-state} evolutionary algorithm~\cite{evolutionary_book}.
After each generation, the best individual found so far by each instance is \hbox{shared among processing elements}.

\begin{algorithm}
        \small
	\textbf{Input} graph $G=(V,E)$ \\
	\textbf{Output} clustering $\Pi: V \rightarrow \MdN$ 
	\begin{algorithmic}[1]  
		\State $P \gets \{\Pi_1, \ldots, \Pi_\alpha\}$ %
		\While{running time $< t_f$}
			\State $i \gets$ Random$([0,1])$ %
			\If{$i > \beta$}
				\State $\Pi_{a}, \Pi_{b} \gets $ Selection($P$) %
				\State $\Pi_c \gets $ Recombination($P$)  %
			\Else
				\State $\Pi_c \gets $ Mutation($\Pi_a$) %
			\EndIf
			\State $P \gets $ Replacement($P,\Pi_c$)  %
			\State CommunicateBest($P$)  %
		\EndWhile
		
		\State $\Pi \gets $ BestIndividualOverall() %
	\end{algorithmic}
	\caption{Our distributed evolutionary algorithm}
	\label{alg:sgc_overall_evolutionary_strategy}
\end{algorithm}

\subsubsection{Notation, Population, and Fitness}
\label{subsec:sgc_Notation, Population, and Fitness}

In this section, we describe how the evolutionary concepts of notation, fitness function, and population are defined and operated within our algorithm.
We define an individual as a particular clustering $\Pi$ on the graph $G$ and represent it by its set of cut edges.
This way, each clustering has a unique notation, i.e., there is no symmetry to represent individuals.
Note that some sets of edges cannot be the set of cut edges of any feasible clustering, e.g., a single edge of a cycle.
However, feasibility is \emph{implicitly} ensured throughout the algorithm, as will become clear later.
As a consequence, we do not need any penalty function, hence the \emph{fitness} function of an individual is its \emph{edge-cut}, which is the objective to be minimized.
The initial population is built by running our multilevel algorithm (Algorithm~\ref{alg:sgc_overall_multilevel_strategy}) from scratch $\alpha$ times, where $\alpha \geq 3$ is automatically determined such that roughly $10\%$ of the allowed time limit $t_f$ is spent \hbox{on the construction of the initial population}.

\subsubsection{Recombination}
\label{subsec:sgc_Recombination}

\begin{figure}[t]
	\centering
	\includegraphics[width=.8\textwidth]{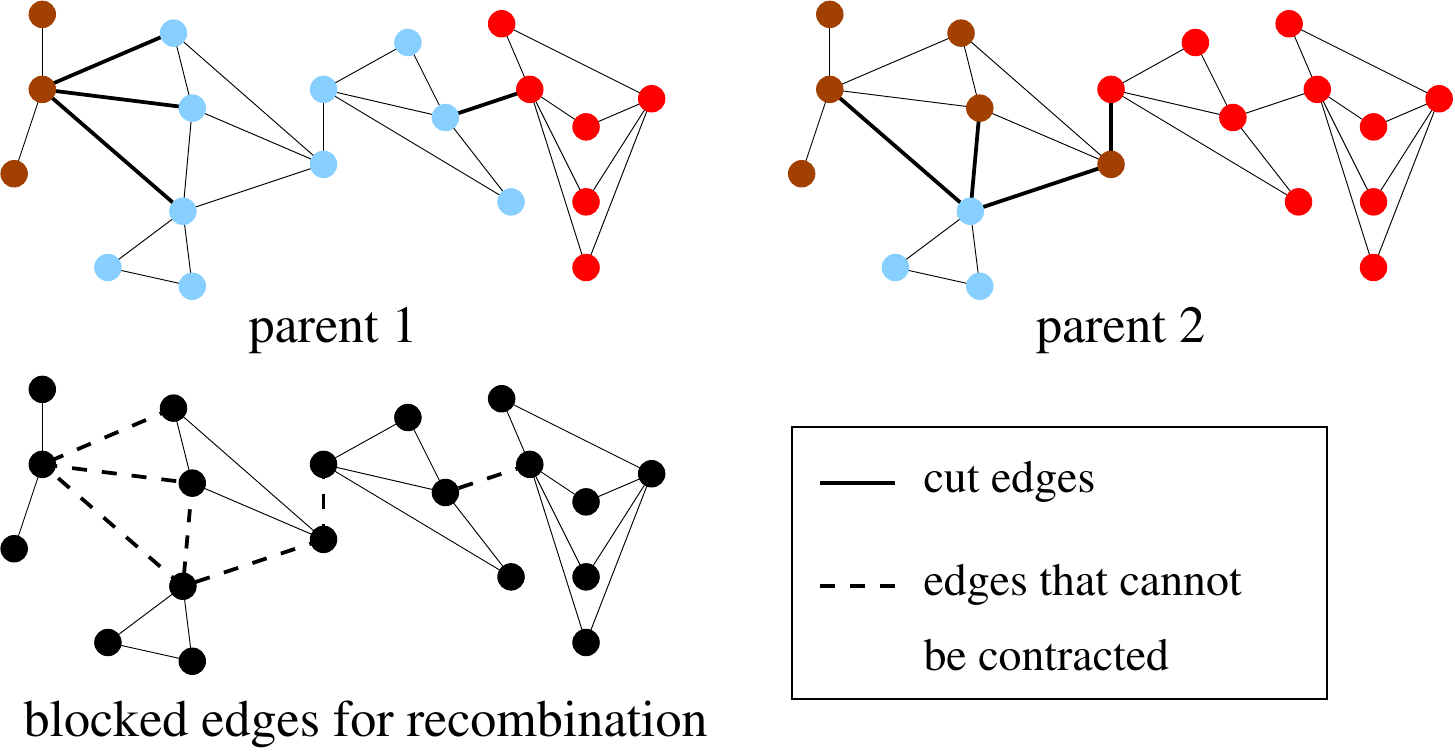}
	\caption{In the recombination, the cut edges of both parents cannot be contracted during the first multilevel cycle to build an offspring.}
	\label{fig:sgc_recombination}
\end{figure}

In this section, we explain the operation of the recombination operator in our evolutionary algorithm. 
We select two \emph{good} individuals, $\Pi_a$ and $\Pi_b$, from the population to serve as parents using a \emph{tournament}~\cite{tournament_selection}, i.e., two distinct individuals are drawn and the one with best edge-cut is selected to be a parent. 
To ensure distinct parents, the second parent may be the loser of the tournament if the winner is already the first parent. 
To create an offspring, we run Algorithm~\ref{alg:sgc_overall_multilevel_strategy} on $G$ and block the cut edges of $\Pi_a$ and $\Pi_b$ from contracting during the first multilevel cycle. 
Figure~\ref{fig:sgc_recombination} demonstrates which edges are blocked for a particular instance.
This leads to a coarsening phase that stops when no further contraction is possible unless the edge-cut increases or blocked edges are contracted. 
We then choose the best edge-cut clustering among three options: the two parents and a clustering where each node of the coarsest graph is assigned its own cluster. 
Our algorithm then uncoarsens this new clustering, applies local searches, and further improves the clustering with the global search in Figure~\ref{fig:sgc_MSGC_second_cycle}. 
This approach directly ensures non-increasing edge-cut compared to $\Pi_a$ and $\Pi_b$ and combines the characteristics of both parents, \hbox{as expected from recombination}.

\subsubsection{Mutation}
\label{subsec:sgc_Mutation}

In this section, we explain the mutation operator in our algorithm. 
We randomly select an individual $\Pi_a$ from the population and create a new individual by running Algorithm~\ref{alg:sgc_overall_multilevel_strategy} on $G$ while blocking $\Pi_a$'s cut edges from contracting only in the \emph{first level} of the first multilevel cycle. 
The initial solution obtained at the finer level of this multilevel cycle can differ from $\Pi_a$ but inherits some of its characteristics due to this edge-blocking constraint. 
The new individual is obtained by optimizing this initial solution during uncoarsening as well as the global search cycle. 
This operator does not guarantee non-increasing edge-cut compared to $\Pi_a$, although it increases population variability and optimizes the new individual \hbox{as much as possible}.

\subsubsection{Replacement}
\label{subsec:sgc_Replacement}

In this section, we explain the replacement phase in our algorithm. 
It inserts the newly generated individual $\Pi_c$ from mutation or recombination into the population $P$ and removes an individual to keep only $\alpha$ individuals alive. 
We first check if $\Pi_c$ has worse edge-cut than all individuals in $P$. 
If so, we do not insert $\Pi_c$ into $P$. 
Otherwise, we remove the individual in $P$ most similar to $\Pi_c$, where a high similarity is defined as a low-cardinality symmetric difference between the two individuals' edge sets. 
This approach helps maintain high diversity in $P$, which is \hbox{beneficial for the evolutionary process}.

\subsubsection{Parallelization}
\label{subsec:sgc_Parallelization}

In this section, we explain how our evolutionary algorithm can be executed across distributed processing elements.
We follow the \emph{coarse-grained} parallelization approach used in~\cite{sanders2012distributed} for a similar problem. 
Each instance follows the same steps outlined in Algorithm~\ref{alg:sgc_overall_evolutionary_strategy}. 
Communication is done through a variation of the \emph{randomized rumor spreading} protocol~\cite{randomgossip}. 
The communication is organized in rounds, in which each process tries to send and receive individuals. 
In each round, a process sends the best individual from its local population to a randomly selected process it has not yet sent this individual to. 
Afterwards, it tries to receive incoming individuals and inserts them into its local population using the described replacement strategy. 
The overall Algorithm~\ref{alg:sgc_overall_evolutionary_strategy} is performed asynchronously, i.e., without global synchronization. 
The communication rounds can also happen only every $x$ iterations to save communication time (we use $x=1$ \hbox{on our very fast communication network)}.

\subsection{Experimental Evaluation}
\label{sec:sgc_Experimental Evaluation}

\paragraph*{Setup.} 
We performed the implementation of our algorithms and competing algorithms inside the KaHIP framework (using C++) and compiled them using gcc 9.3 with full optimization turned on (-O3 flag). 
We have used Machine~C, which has sixteen cores.

\paragraph*{Baselines.}
As the representative of the state-of-the-art for signed graph clustering, we use the general framework \AlgName{GASP}~\cite{bailoni2019generalized}, which is publicly available.
In particular, the two best algorithms implemented in it with respect to the minimization of edge-cut, namely \AlgName{HCC-Sum}~\cite{bailoni2019generalized} and \AlgName{GAEC}~\cite{keuper2015efficient}.
As both of these algorithms have deterministic implementations, we only repeat each experiment with them once.
We requested implementations from the authors of~\cite{HuaRandomWalk2020}, but we have received no response by the time this manuscript was submitted. 
However, based on the $z\_value$ reported in~\cite{HuaRandomWalk2020} and preliminary experiments with \AlgName{GAEC}, \AlgName{GAEC} outperforms this method. 
\hbox{Hence, we do not include~it~here}.

\paragraph*{Instances.}
In our experiments, we use all the real-world signed graphs listed in Table~\ref{tab:sgc_graphs}.
Before running the experiments, we converted each of the graphs in Table~\ref{tab:sgc_graphs} to an undirected signed graph without parallel or self edges.
In particular, we achieve this by simply removing self edges and substituting all parallel and opposite arcs by a single undirected edge whose weight equals the sum of the weights of these arcs.

\paragraph*{Methodology.} 
Depending on the focus of the experiment, we measure running time and/or edge-cut.
An alternative metric for edge-cut is the \emph{z\_value}~\cite{HuaRandomWalk2020}, which can be mathematically defined as $1-\frac{edgecut}{\omega(E^{-})}$ where lower values are better and a value of 0 means a perfectly balanced clustering.
Unless explicitly mentioned otherwise, we run our experiments three times on each of the graphs listed Table~\ref{tab:sgc_graphs} with different random seeds, except for deterministic algorithms, which are run only once per~graph.
Since all values of edge-cut computed in our experiments are negative, we define its geometric mean as the geometric mean of its absolute value multiplied afterwards by $-1$.

We present convergence plots, which are computed in the following way. 
Whenever a processing element creates a clustering it reports a pair ($t$, cut), where the timestamp $t$ is the currently elapsed time (in seconds) on the particular processing element and cut refers to the edge-cut of the computed clustering. 
After the completion of our algorithms on $P$ processing elements, we are left with $P$ sequences of pairs (t, cut) which we now merge into one sequence based on the timestamp.
Then, we scan this sequence in ascending order of timestamp and remove all pairs with an edge-cut higher than then minimum edge-cut scanned so far.
The resulting sequence defines a \emph{step function} which indicates the best edge-cut found up to each timestamp.
Since multiple repetitions are executed for each graph, we simply combine the step functions of all repetitions by computing the arithmetic mean of the best edge-cut found during all repetitions at each timestamp.
When presenting convergence plots over multiple graphs, we normalize the elapsed time for each instance and then combine all step functions in the same way described above, but based on geometric mean rather than arithmetic mean.
The normalization of the elapsed time is done as in~\cite{sanders2012distributed} by dividing all timestamps by a normalization factor representing the approximate running time necessary to compute a clustering for the respective graph in one processing element.
We define this normalization factor as the first (smallest) timestamp reported for an instance, which represents the the actual running time \hbox{invested to create its first clustering}.

\begin{figure}[t!]
	\centering
	\includegraphics[angle=-0, width=\scaleFactorSmall\textwidth]{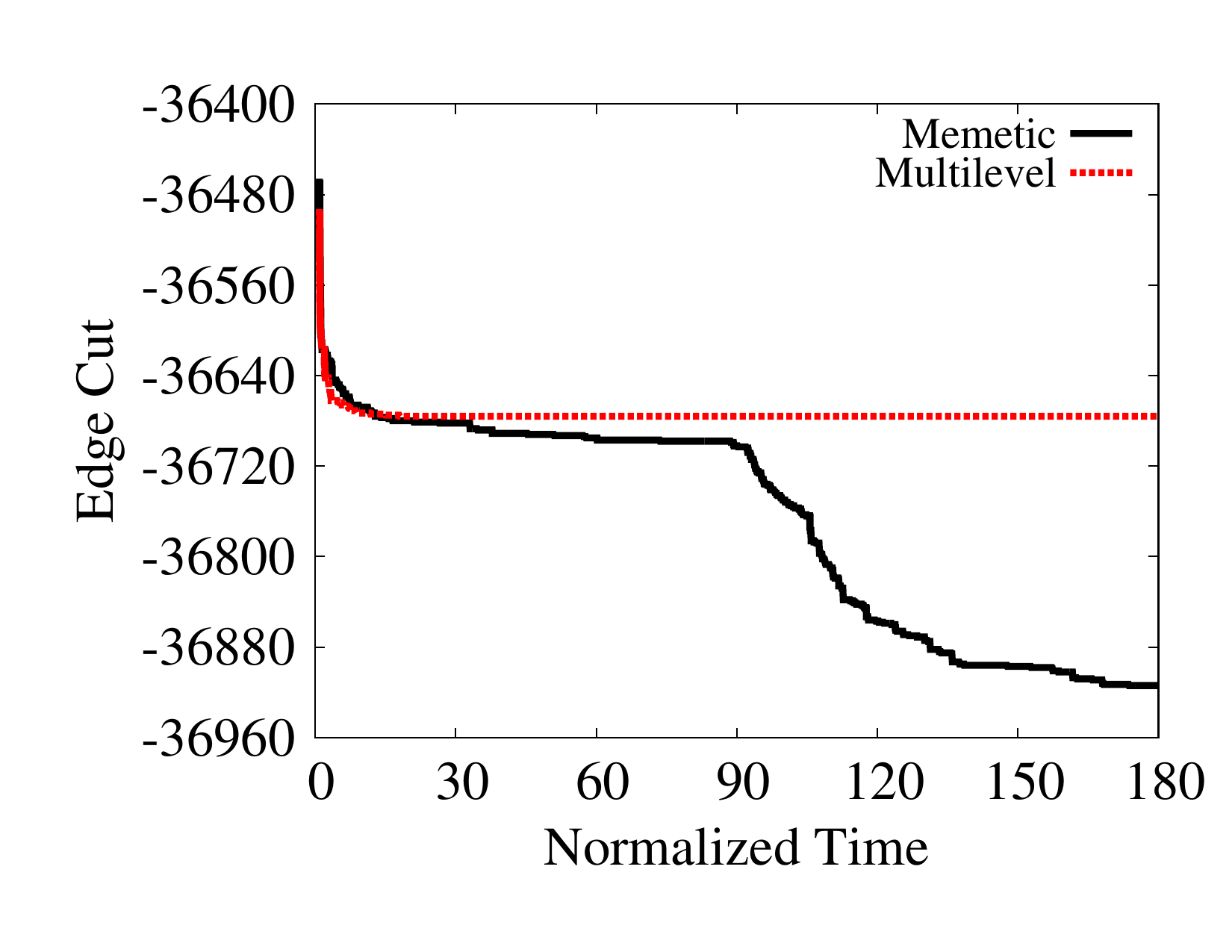}
	\caption{Convergence plot for comparison of our memetic algorithm against repeated executions of our multilevel algorithm for all graphs in Table~\ref{tab:sgc_graphs}.}
	\label{fig:sgc_convergence_overall}
\end{figure}

\begin{figure}[t]
	\captionsetup[subfigure]{justification=centering}
	\centering
	\begin{subfigure}[]{\scaleFactorSmall\textwidth}
		\centering
		\includegraphics[angle=-0, width=\imgScaleFactorSmall\textwidth]{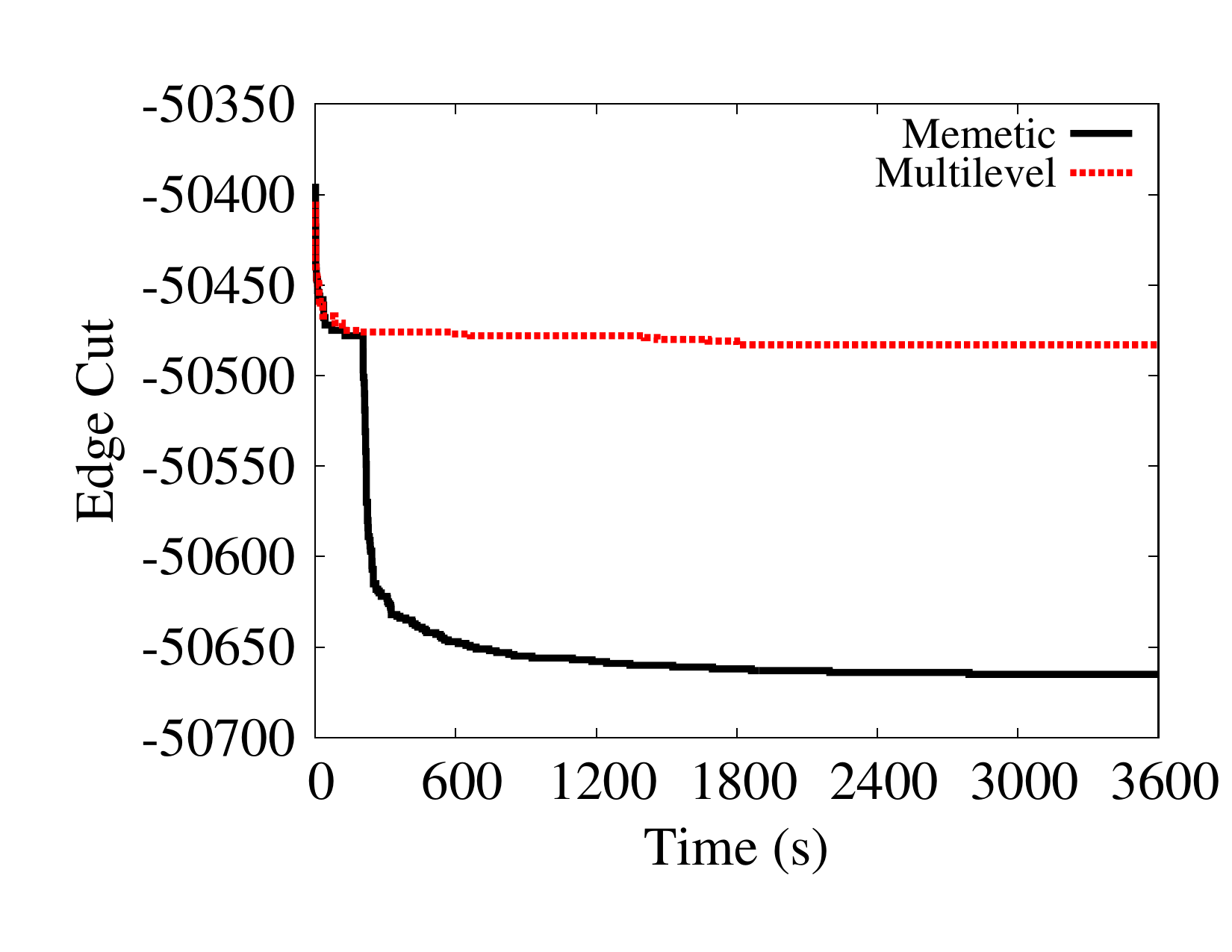}
		\vspace*{\capPositionSmall}
		\caption{\Id{slashdot090216}.}
		\label{fig:sgc_convergence_soc-sign-Slashdot090216}
	\end{subfigure}%
	\begin{subfigure}[]{\scaleFactorSmall\textwidth}
		\centering
		\includegraphics[angle=-0, width=\imgScaleFactorSmall\textwidth]{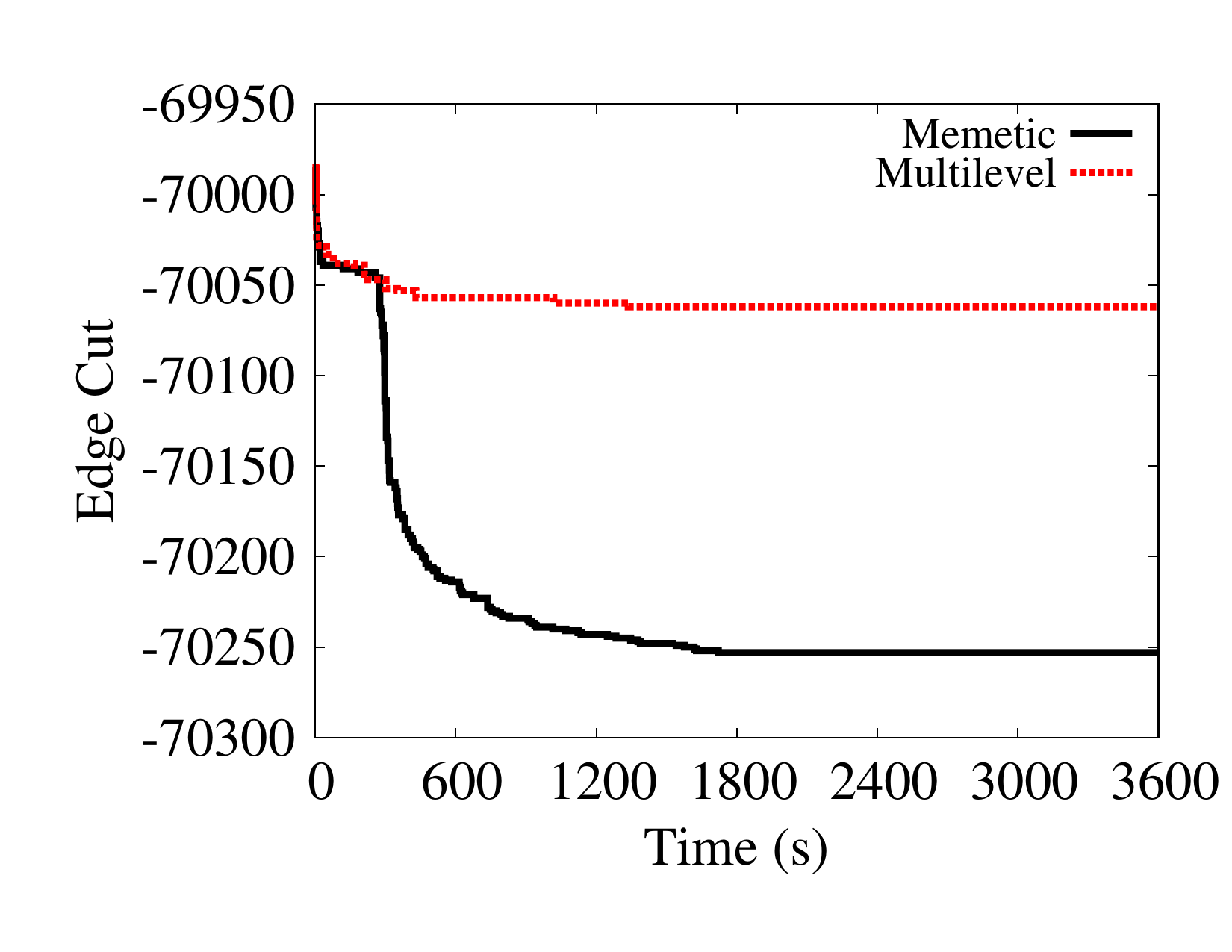}
		\vspace*{\capPositionSmall}
		\caption{\Id{epinions}.}
		\label{fig:sgc_convergence_soc-sign-epinions}
	\end{subfigure}%

	\begin{subfigure}[]{\scaleFactorSmall\textwidth}
		\centering
		\includegraphics[angle=-0, width=\imgScaleFactorSmall\textwidth]{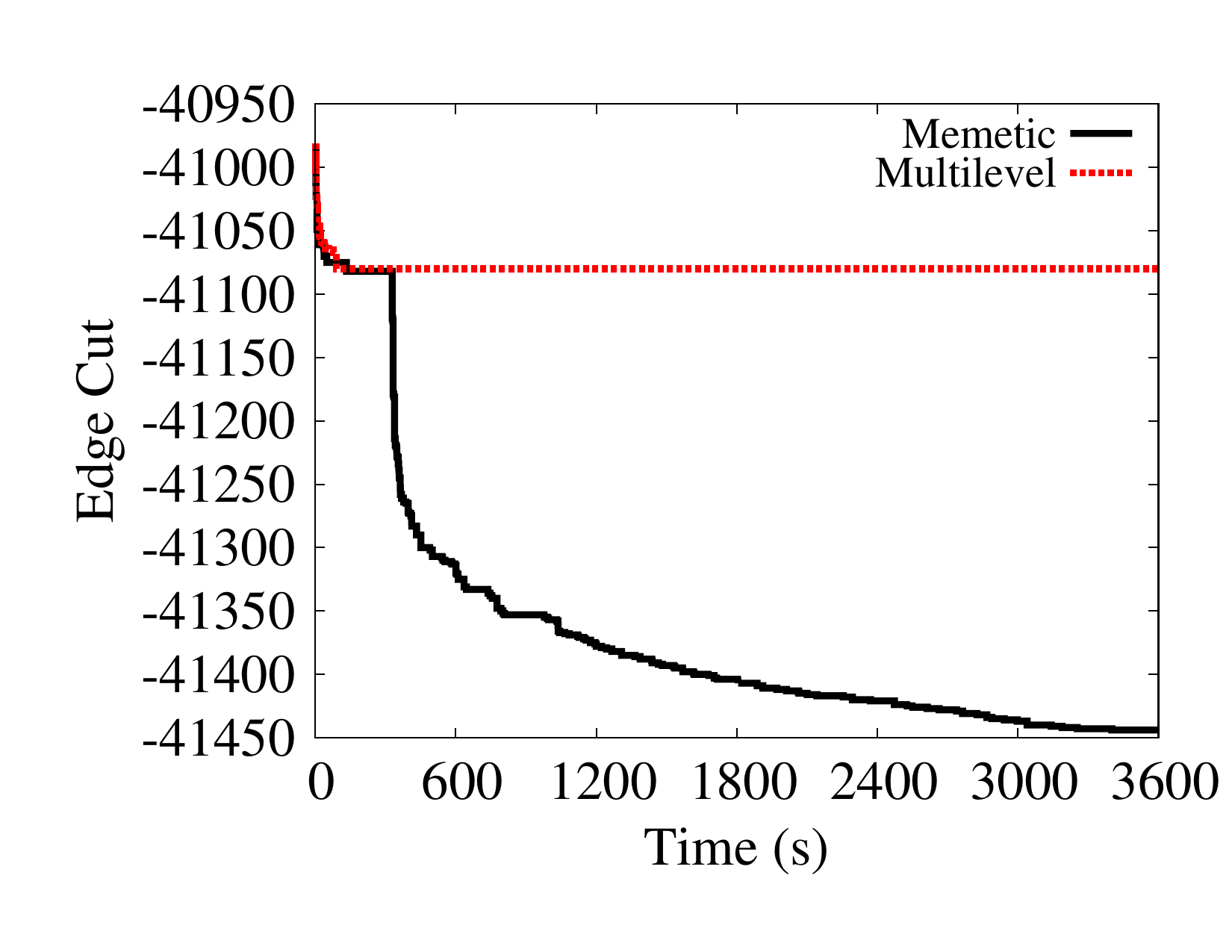}
		\vspace*{\capPositionSmall}
		\caption{\Id{wikisigned-k2}.}
		\label{fig:sgc_convergence_wikisigned-k2}
	\end{subfigure}%
	\begin{subfigure}[]{\scaleFactorSmall\textwidth}
		\centering
		\includegraphics[angle=-0, width=\imgScaleFactorSmall\textwidth]{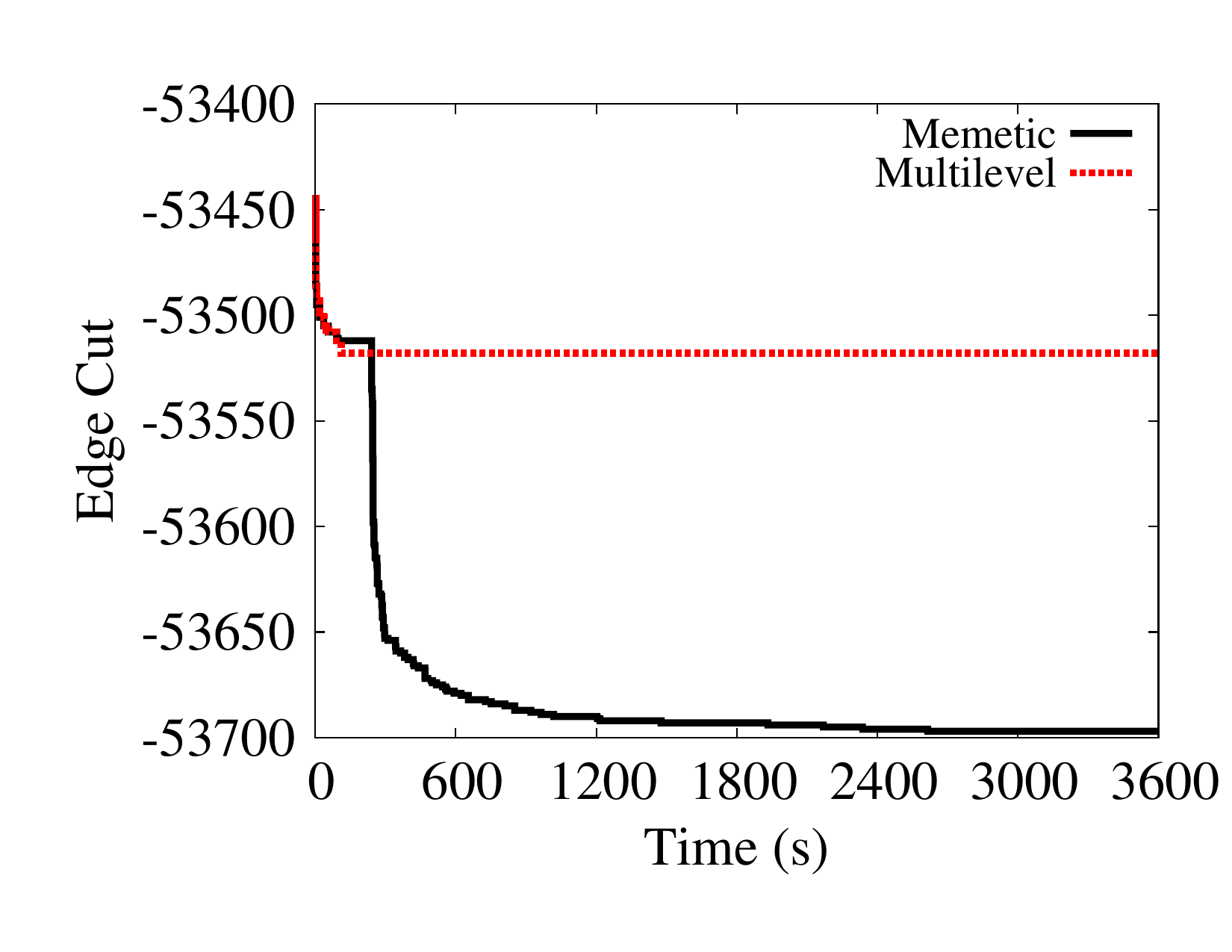}
		\vspace*{\capPositionSmall}
		\caption{\Id{slashdot-zoo}.}
		\label{fig:sgc_convergence_slashdot-zoo}
	\end{subfigure}
	\caption{Convergence plot for comparison of our memetic algorithm against repeated executions of our multilevel algorithm for individual graphs.}
	\label{fig:sgc_convergence_individual}
\end{figure}

\subsubsection{Convergence}

In this section, we evaluate the convergence of our memetic algorithm.
For this experiment, we run our memetic algorithm for $1$~hour on $16$~processing elements without hyper-threading.
As a baseline, we repeatedly run our multilevel algorithm (Algorithm~\ref{alg:sgc_overall_multilevel_strategy}) for the same amount of time in parallel on the same number of processing elements with different random seeds.
Our objective is to check whether our evolutionary strategy is directly improving the global optimization process, or if it has no advantage over simply repeating \hbox{our multilevel algorithm multiple times}.

Figures~\ref{fig:sgc_convergence_overall} shows a convergence plot over all instances.
The baseline improves edge-cut at first, but then stabilizes without major improvement. 
Our memetic algorithm performs similarly to the baseline for the first $10\%$ of the running time, then outperforms it and continues to improve edge-cut. 
This is as expected, as we allocate roughly $10\%$ of the running time to construct an initial population using multiple repetitions of our multilevel algorithm. 
After that, recombination and mutation operators are applied to the population.
Roughly a similar pattern is observed for all graphs, as exemplified in Figure~\ref{fig:sgc_convergence_individual}.
The convergence plots of individual graphs show another pattern not visible in Figure~\ref{fig:sgc_convergence_overall}. 
After about $10\%$ of the running time, our memetic algorithm computes better solutions quickly, then continues to improve edge-cut at a slower pace. 
This shows that our recombination and mutation operators quickly improve solution quality and continue to do so at a slower pace over time even \hbox{after this abrupt improvement}.

\subsubsection{Comparison against State-of-the-Art}

\begin{table*}[]
	\centering
	\scriptsize
	\setlength{\tabcolsep}{10.pt}
	\begin{tabular}{l@{\hskip 60pt}r@{\hskip 40pt}r@{\hskip 40pt}r}
		\toprule	
		Graph & \AlgName{HCC-Sum}  & \AlgName{GAEC}     & \AlgName{Memetic} \\ 
                \midrule
		\Id{bitcoinalpha}           & \numprint{-5476}    & \numprint{-5529}    & \textbf{\numprint{-5564}}                     \\
		\Id{bitcoinotc}             & \numprint{-20318}   & \numprint{-20391}   & \textbf{\numprint{-20440}}                    \\
		\Id{elec}                   & \numprint{-7723}    & \numprint{-7717}    & \textbf{\numprint{-7735}}                     \\
		\Id{chess}                  & \numprint{-4293}    & \numprint{-4312}    & \textbf{\numprint{-4796}}                     \\
		\Id{slashdot081106}         & \numprint{-48442}   & \numprint{-48454}   & \textbf{\numprint{-49805}}                    \\
		\Id{slashdot-zoo}           & \numprint{-51213}   & \numprint{-51806}   & \textbf{\numprint{-53703}}                    \\
		\Id{slashdot090216}         & \numprint{-49831}   & \numprint{-49797}   & \textbf{\numprint{-50668}}                    \\
		\Id{slashdot090221}         & \numprint{-49789}   & \numprint{-49798}   & \textbf{\numprint{-50926}}                    \\
		\Id{wikiconflict}           & \numprint{-2166604} & \numprint{-2166608} & \textbf{\numprint{-2167344}}                  \\
		\Id{epinions}               & \numprint{-68759}   & \numprint{-69156}   & \textbf{\numprint{-70323}}                    \\
		\Id{wikisigned-k2}          & \numprint{-41377}   & \numprint{-41458}   & \textbf{\numprint{-41623}}                    \\ 
                \midrule
		Overall                & \numprint{-36202}   & \numprint{-36320}   & \textbf{\numprint{-37126}}                    \\ 
                \bottomrule
	\end{tabular}
	\caption{Average edge-cut comparison.}
	\label{tab:sgc_results}
\end{table*}

In this section, we experimentally verify how much our evolutionary algorithm can improve over the state-of-the-art. 
For this experiment, we also run our memetic algorithm for $1$~hour on $16$~processing elements.
In order to further diversify our initial population and since \AlgName{HCC-Sum} and \AlgName{GAEC} are fast algorithms, we include the solutions computed by \AlgName{HCC-Sum} and \AlgName{GAEC} in the initial population and compute its remaining individuals using our multilevel algorithm as usual.
As a baseline, we use \AlgName{HCC-Sum} and \AlgName{GAEC}. Note that these algorithms are not randomized and thus can not be run multiple times to improve the result.
Table~\ref{tab:sgc_results} shows our overall results, in which our memetic algorithm computes better solutions than the state-of-the-art for every instance.
Note that our memetic strategy is able to improve the initial population (which also contains results of the competing algorithms) significantly. 
Our best edge-cut gain over \AlgName{HCC-Sum} and \AlgName{GAEC} is obtained for \Id{slashdot-zoo}, for which our average edge-cut is respectively \numprint{2490} and \numprint{1897} units smaller.
In proportional terms, our largest increase in the absolute value of the edge-cut over \AlgName{HCC-Sum} ad \AlgName{GAEC} is obtained for \Id{chess}: $11.9\%$ and $11.4\%$, respectively.
A natural lower bound for the edge-cut of a graph is given by $\omega(E^{-})$.
Based on this bound, our algorithm gets the smallest absolute optimality gap for \Id{bitcoinalpha} (\numprint{3786} units) and the smallest proportional \hbox{optimality gap for \Id{wikiconflict} ($8.35\%$)}.
We conclude that our memetic strategy is effective in improving results of the initial population and overall our algorithm can be seen as the state-of-the-art.

\section{Conclusion}
\label{sec:multilevel_Conclusion}

In this chapter, we proposed two multilevel algorithms for graph decomposition.
Both were engineered in all their details and were subjected to extensive experimental evaluation against \hbox{the state-of-the-art}.

First, we proposed a multilevel algorithm aimed at solving the process mapping problem.
Our algorithm integrates graph partitioning and process mapping and comprises multiple versions with varying speed/quality trade-offs.
Key ingredients of our algorithm include fast label propagation, more localized local search, initial partitioning, as well as a compressed data structure for computing processor distances without the need for storing a distance matrix.
Experimental results demonstrate that our algorithm represents the new state-of-the-art for process mapping.
Specifically, our algorithm generates superior or comparable overall solutions compared to any of the competing algorithms while being orders of magnitude faster than the previous best algorithm in terms of quality.
Our improvements are primarily attributable to the integrated multilevel approach coupled with high-quality local search algorithms and initial mapping algorithms that divide the initial network based on the specified system hierarchy.
Important future work entails parallelization, as well as the integration of global search schemes and different types of coarsening to further enhance solution quality.
Additionally, we plan to investigate the impact of our algorithm on the real performance of applications such as sparse matrix-\hbox{vector multiplications}.

Second, we engineered a memetic algorithm for the signed graph clustering problem.
Building upon a novel multilevel algorithm, we derive highly intuitive recombination and mutation operations.
Experimental results show that our memetic strategy produces substantially superior solutions compared to the present state-of-the-art.
In future work, we aim to enhance our algorithm by incorporating local searches based on \hbox{maximum flows}.

\chapter{Discussion}
\label{chap:Discussion}

\section{Conclusion}
\label{sec:Conclusion}

(Hyper)graph decomposition refers to a set of problems that focus on dividing large (hyper)graphs into smaller sub(hyper)graphs, which facilitates their analysis.
The significance of this lies in its capability to enable efficient computation on large and intricate (hyper)graphs, including chemical compounds, social networks, and computer networks.
In this dissertation, we propose multiple algorithmic contributions in the field of (hyper)graph decomposition, which utilize various techniques like buffered streaming, shared-memory parallelism, and efficient data structures to enhance the performance and quality of solutions.
We provide a thourough description for each algorithm along with experimental results that demonstrate their superiority over existing state-of-the-art algorithms.

The first algorithm we propose is a buffered streaming algorithm for graph partitioning, which loads a batch of nodes and builds a model representing the loaded subgraph and already present partition structure.
This model allows us to apply multilevel algorithms and compute high-quality solutions of massive graphs on inexpensive machines.
We develop a multilevel algorithm to partition the model that optimizes an objective function, which improves on the previous state-of-the-art by removing the dependence on the number of blocks from the running time.
Our algorithm computes better solutions than state-of-the-art using a small buffer size, and for larger numbers of blocks, it becomes faster than state-of-the-art.

The second algorithm is a shared-memory parallel streaming algorithm for the process mapping problem, which maps a streamed communication graph onto a hierarchical topology by performing recursive multi-sections on-the-fly.
This algorithm can also be used to solve the graph partitioning problem as a general tool.
Our algorithm has considerably lower running time complexity than existing state-of-the-art techniques and produces better process mappings.


The third algorithm is a streaming algorithm for hypergraph partitioning, which adapts the state-of-the-art streaming algorithm for graph partitioning by using an efficient data structure that makes the overall running time linearly dependent on the pin-count of the hypergraph and the memory consumption linearly dependent on the numbers of nets and blocks.
Our algorithm outperforms all existing (buffered) streaming algorithms and even an in-memory algorithm with respect to both weighted number of cut hyperedges and connectivity measures.

Next, we propose algorithms for the local motif clustering problem, which builds a (hyper)graph model representing the motif-distribution around the seed node on the original graph.
The first algorithm partitions the (hyper)graph model using a multi-level hypergraph or graph partitioner to minimize the motif conductance of the corresponding partition in the original graph.
The second algorithm transforms the hypergraph model into a flow model based on the max-flow quotient-cut improvement algorithm to obtain a superior solution automatically.
Our algorithms produce better communities than existing state-of-the-art techniques, while being up to multiple orders of magnitude faster.

We also propose multilevel algorithms for the process mapping problem, which include fast label propagation, more localized local search, initial partitioning, and a compressed data structure to compute processor distances without storing a distance matrix.
Our algorithms are able to exploit a given hierarchical structure of the distributed system under consideration and obtain better solutions than existing state-of-the-art techniques.

Finally, we propose algorithms to solve the signed graph clustering problem by using some of the most effective techniques from graph partitioning that minimize edge-cut.
Our multilevel algorithm includes a coarsening-uncoarsening process and efficient local search methods. 
We also introduce a distributed memetic algorithm that utilizes our multilevel algorithm and further enhances it with natural multilevel recombination and mutation operations.
Experimental results demonstrate that our memetic algorithm outperforms the state-of-the-art with respect to edge-cut, producing significantly better solutions.

Overall, our contributions demonstrate the effectiveness of various algorithmic techniques, such as buffered streaming, shared-memory parallelism, and efficient data structures, in solving a range of important problems in graph and hypergraph decomposition.
Our algorithms achieve improved performance and better solutions compared to state-of-the-art methods in various metrics, making them highly promising for practical applications.

\section{Future Work}
\label{sec:Future Work}

We have included specific future work in Chapters \ref{chap:Streaming Algorithms},~\ref{chap:Local Algorithms},~and~\ref{chap:Multilevel Algorithms}.
To enhance the performance of our algorithms, a general approach that could be beneficial is to parallelize them using shared and/or distributed memory.
This would lead to faster execution and enable them to handle even larger instances.
In addition, it would be desirable to conduct experiments with more specialized benchmark sets.
In particular, it would be worthwhile to verify the effectiveness of our streaming, local, and multilevel (hyper)graph decomposition algorithms within specific classes of (hyper)graphs, such as those that are used to model physical simulations, social networks, road networks, and other related phenomena.
For motif clustering, more extensive experiments could be conducted, including larger and more diverse motifs, which could provide new insights into the structure of the problem.
Another important objective is to investigate the impact of our algorithms on real-world applications.
For example, our process mapping and (hyper)graph partitioning algorithms can be tested as preprocessing steps for the distributed computation of sparse matrix vector multiplication.
Furthermore, extending our algorithms to dynamic (hyper)graphs, which is a current trend in the literature, could be accomplished in an easy manner, particularly for streaming algorithms.
Finally, machine learning techniques could be applied to (hyper)graph decomposition problems, given their successful implementation in literature to solve \hbox{other combinatorial problems}.

\cleardoublepage\phantomsection\addcontentsline{toc}{chapter}{\bibname}
\bibliography{bibliography}

\begin{thebibliography}{162}
\providecommand{\natexlab}[1]{#1}
\providecommand{\url}[1]{\texttt{#1}}
\expandafter\ifx\csname urlstyle\endcsname\relax
  \providecommand{\doi}[1]{doi: #1}\else
  \providecommand{\doi}{doi: \begingroup \urlstyle{rm}\Url}\fi

\bibitem[Garey et~al.(1974)Garey, Johnson, and Stockmeyer]{Garey1974}
Michael~R. Garey, David~S. Johnson, and Larry Stockmeyer.
\newblock {Some Simplified {N}{P}-Complete Problems}.
\newblock In \emph{Proc. of the 6th ACM Sym. on Theory of Computing}, (STOC),
  pages 47--63. ACM, 1974.
\newblock \doi{10.1145/800119.803884}.

\bibitem[Bui and Jones(1992)]{BuiJ92}
Thang~Nguyen Bui and Curt Jones.
\newblock {Finding Good Approximate Vertex and Edge Partitions is {N}{P}-Hard}.
\newblock \emph{Information Processing Letters}, 42\penalty0 (3):\penalty0
  153--159, 1992.
\newblock \doi{10.1016/0020-0190(92)90140-Q}.

\bibitem[Tsourakakis et~al.(2014)Tsourakakis, Gkantsidis, Radunovic, and
  Vojnovic]{tsourakakis2014fennel}
Charalampos Tsourakakis, Christos Gkantsidis, Bozidar Radunovic, and Milan
  Vojnovic.
\newblock Fennel: Streaming graph partitioning for massive scale graphs.
\newblock In \emph{Proc. of the 7th ACM international conference on Web search
  and data mining}, pages 333--342, 2014.
\newblock \doi{10.1145/2556195.2556213}.

\bibitem[Awadelkarim and Ugander(2020)]{awadelkarim2020prioritized}
Amel Awadelkarim and Johan Ugander.
\newblock Prioritized restreaming algorithms for balanced graph partitioning.
\newblock In \emph{Proc. of the 26th ACM SIGKDD Intl. Conf. on Knowledge
  Discovery \& Data Mining}, pages 1877--1887, 2020.
\newblock \doi{10.1145/3394486.3403239}.

\bibitem[Jafari et~al.(2021)Jafari, Selvitopi, and Aykanat]{jafari2021fast}
Nazanin Jafari, Oguz Selvitopi, and Cevdet Aykanat.
\newblock Fast shared-memory streaming multilevel graph partitioning.
\newblock \emph{Journal of Parallel and Distributed Computing}, 147:\penalty0
  140--151, 2021.
\newblock \doi{10.1016/j.jpdc.2020.09.004}.

\bibitem[Mayer et~al.(2018{\natexlab{a}})Mayer, Mayer, Tariq, Geppert, Laich,
  Rieger, and Rothermel]{mayer2018adwise}
Christian Mayer, Ruben Mayer, Muhammad~Adnan Tariq, Heiko Geppert, Larissa
  Laich, Lukas Rieger, and Kurt Rothermel.
\newblock Adwise: Adaptive window-based streaming edge partitioning for
  high-speed graph processing.
\newblock In \emph{2018 IEEE 38th Intl. Conf. on Distributed Computing Systems
  (ICDCS)}, pages 685--695. IEEE, 2018{\natexlab{a}}.
\newblock \doi{10.1109/ICDCS.2018.00072}.

\bibitem[Hoang et~al.(2019)Hoang, Dathathri, Gill, and Pingali]{hoang2019cusp}
Loc Hoang, Roshan Dathathri, Gurbinder Gill, and Keshav Pingali.
\newblock Cusp: A customizable streaming edge partitioner for distributed graph
  analytics.
\newblock In \emph{2019 IEEE Intl. Parallel and Distributed Processing Sym.
  (IPDPS)}, pages 439--450. IEEE, 2019.
\newblock \doi{10.1109/IPDPS.2019.00054}.

\bibitem[Alistarh et~al.(2015)Alistarh, Iglesias, and
  Vojnovic]{alistarh2015streaming}
Dan Alistarh, Jennifer Iglesias, and Milan Vojnovic.
\newblock Streaming min-max hypergraph partitioning.
\newblock In \emph{Advances in Neural Information Processing Systems}, pages
  1900--1908, 2015.
\newblock \doi{10.5555/2969442.2969452}.

\bibitem[Ta{\c s}yaran et~al.(2021)Ta{\c s}yaran, Demireller, Kaya, and U{\c
  c}ar]{tacsyaran2021streaming}
Fatih Ta{\c s}yaran, Berkay Demireller, Kamer Kaya, and Bora U{\c c}ar.
\newblock {Streaming Hypergraph Partitioning Algorithms on Limited Memory
  Environments}.
\newblock In \emph{{HPCS 2020 - Intl. Conf. on High Performance Computing \&
  Simulation}}, pages 1--8. {IEEE}, 2021.
\newblock URL \url{https://hal.archives-ouvertes.fr/hal-03182122}.

\bibitem[Faraj and Schulz(2022{\natexlab{a}})]{HeiStream}
Marcelo~Fonseca Faraj and Christian Schulz.
\newblock Buffered streaming graph partitioning.
\newblock \emph{{ACM} J. Exp. Algorithmics}, 27:\penalty0 1.10:1--1.10:26,
  2022{\natexlab{a}}.
\newblock \doi{10.1145/3546911}.
\newblock URL \url{https://doi.org/10.1145/3546911}.

\bibitem[Faraj and Schulz(2022{\natexlab{b}})]{StreamMultiSection}
Marcelo~Fonseca Faraj and Christian Schulz.
\newblock Recursive multi-section on the fly: Shared-memory streaming
  algorithms for hierarchical graph partitioning and process mapping.
\newblock In \emph{2022 IEEE Intl. Conf. on Cluster Computing (CLUSTER)}, pages
  473--483, 2022{\natexlab{b}}.
\newblock \doi{10.1109/CLUSTER51413.2022.00057}.
\newblock URL \url{https://ieeexplore.ieee.org/document/9912716}.

\bibitem[Eyubov et~al.(to appear)Eyubov, Faraj, and Schulz]{freight_paper}
Kamal Eyubov, Marcelo~Fonseca Faraj, and Christian Schulz.
\newblock {FREIGHT: Fast Streaming Hypergraph Partitioning}.
\newblock In \emph{Intl. Sym. on Experimental Algorithms (SEA)}, volume 265 of
  \emph{LNCS}. Springer, to appear.

\bibitem[Kannan et~al.(2004)Kannan, Vempala, and Vetta]{kannan2004clusterings}
Ravi Kannan, Santosh Vempala, and Adrian Vetta.
\newblock On clusterings: Good, bad and spectral.
\newblock \emph{JACM}, 51\penalty0 (3):\penalty0 497--515, 2004.
\newblock \doi{10.1145/990308.990313}.
\newblock URL \url{https://doi.org/10.1145/990308.990313}.

\bibitem[Epasto et~al.(2014)Epasto, Feldman, Lattanzi, Leonardi, and
  Mirrokni]{epasto2014reduce}
Alessandro Epasto, Jon Feldman, Silvio Lattanzi, Stefano Leonardi, and Vahab
  Mirrokni.
\newblock Reduce and aggregate: similarity ranking in multi-categorical
  bipartite graphs.
\newblock In \emph{WWW}, pages 349--360, 2014.
\newblock \doi{10.1145/2566486.2568025}.
\newblock URL \url{https://doi.org/10.1145/2566486.2568025}.

\bibitem[Jeub et~al.(2015)Jeub, Balachandran, Porter, Mucha, and
  Mahoney]{jeub2015think}
Lucas G.~S. Jeub, Prakash Balachandran, Mason~A. Porter, Peter~J. Mucha, and
  Michael~W. Mahoney.
\newblock Think locally, act locally: Detection of small, medium-sized, and
  large communities in large networks.
\newblock \emph{Phys. Rev. E}, 91:\penalty0 012821, Jan 2015.
\newblock \doi{10.1103/PhysRevE.91.012821}.
\newblock URL \url{https://link.aps.org/doi/10.1103/PhysRevE.91.012821}.

\bibitem[Voevodski et~al.(2009)Voevodski, Teng, and Xia]{voevodski2009spectral}
Konstantin Voevodski, Shang-Hua Teng, and Yu~Xia.
\newblock Spectral affinity in protein networks.
\newblock \emph{BMC systems biology}, 3\penalty0 (1):\penalty0 1--13, 2009.
\newblock \doi{10.1186/1752-0509-3-112}.
\newblock URL \url{https://doi.org/10.1186/1752-0509-3-112}.

\bibitem[Wagner and Wagner(1993)]{wagner1993between}
Dorothea Wagner and Frank Wagner.
\newblock {Between Min Cut and Graph Bisection}.
\newblock In \emph{Proc. of the 18th Intl. Sym. on Mathematical Foundations of
  Computer Science}, pages 744--750. Springer, 1993.
\newblock \doi{10.1007/3-540-57182-5_65}.

\bibitem[Andersen et~al.(2006)Andersen, Chung, and Lang]{andersen2006local}
Reid Andersen, Fan Chung, and Kevin Lang.
\newblock Local graph partitioning using pagerank vectors.
\newblock In \emph{FOCS}, pages 475--486, 2006.
\newblock \doi{10.1109/FOCS.2006.44}.
\newblock URL \url{https://doi.org/10.1109/FOCS.2006.44}.

\bibitem[Leskovec et~al.(2009)Leskovec, Lang, Dasgupta, and
  Mahoney]{leskovec2009community}
Jure Leskovec, Kevin~J. Lang, Anirban Dasgupta, and Michael~W. Mahoney.
\newblock Community structure in large networks: Natural cluster sizes and the
  absence of large well-defined clusters.
\newblock \emph{Internet Mathematics}, 6\penalty0 (1):\penalty0 29--123, 2009.
\newblock \doi{10.1080/15427951.2009.10129177}.
\newblock URL \url{https://doi.org/10.1080/15427951.2009.10129177}.

\bibitem[Chung and Simpson(2013)]{chung2013solving}
Fan Chung and Olivia Simpson.
\newblock Solving linear systems with boundary conditions using heat kernel
  pagerank.
\newblock In \emph{Intl. Workshop on Algorithms and Models for the Web-Graph},
  pages 203--219. Springer, 2013.
\newblock \doi{10.1007/978-3-319-03536-9_16}.
\newblock URL \url{https://doi.org/10.1007/978-3-319-03536-9_16}.

\bibitem[Li et~al.(2015)Li, He, Bindel, and Hopcroft]{li2015uncovering}
Yixuan Li, Kun He, David Bindel, and John~E. Hopcroft.
\newblock Uncovering the small community structure in large networks: A local
  spectral approach.
\newblock In \emph{WWW}, pages 658--668, 2015.
\newblock \doi{10.1145/2736277.2741676}.
\newblock URL \url{https://doi.org/10.1145/2736277.2741676}.

\bibitem[Fountoulakis et~al.(2023)Fountoulakis, Liu, Gleich, and
  Mahoney]{fountoulakis2020flowbased}
Kimon Fountoulakis, Meng Liu, David~F. Gleich, and Michael~W. Mahoney.
\newblock Flow-based algorithms for improving clusters: {A} unifying framework,
  software, and performance.
\newblock \emph{{SIAM} Rev.}, 65\penalty0 (1):\penalty0 59--143, 2023.
\newblock \doi{10.1137/20m1333055}.
\newblock URL \url{https://doi.org/10.1137/20m1333055}.

\bibitem[Yin et~al.(2017)Yin, Benson, Leskovec, and Gleich]{yin2017local}
Hao Yin, Austin~R. Benson, Jure Leskovec, and David~F. Gleich.
\newblock Local higher-order graph clustering.
\newblock In \emph{Proc. of the 23rd {ACM} {SIGKDD} Intl. Conf. on Knowledge
  Discovery and Data Mining}, pages 555--564. {ACM}, 2017.
\newblock \doi{10.1145/3097983.3098069}.
\newblock URL \url{https://doi.org/10.1145/3097983.3098069}.

\bibitem[Zhang et~al.(2019)Zhang, Wu, Liu, and Lv]{zhang2019local}
Yunlei Zhang, Bin Wu, Yu~Liu, and Jinna Lv.
\newblock Local community detection based on network motifs.
\newblock \emph{Tsinghua Science and Technology}, 24\penalty0 (6):\penalty0
  716--727, 2019.
\newblock \doi{10.26599/TST.2018.9010106}.
\newblock URL \url{https://doi.org/10.26599/TST.2018.9010106}.

\bibitem[Meng et~al.(2019)Meng, Cai, He, Chen, and Deng]{meng2019local}
Tao Meng, Lijun Cai, Tingqin He, Lei Chen, and Ziyun Deng.
\newblock Local higher-order community detection based on fuzzy membership
  functions.
\newblock \emph{IEEE Access}, 7:\penalty0 128510--128525, 2019.
\newblock \doi{10.1109/ACCESS.2019.2939535}.
\newblock URL \url{https://doi.org/10.1109/ACCESS.2019.2939535}.

\bibitem[Murali et~al.(2020)Murali, Potika, and Pollett]{murali2020online}
Mrudula Murali, Katerina Potika, and Chris Pollett.
\newblock Online local communities with motifs.
\newblock In \emph{2020 Second Intl. Conf. on Transdisciplinary AI (TransAI)},
  pages 59--66. IEEE Computer Society, sep 2020.
\newblock \doi{10.1109/TransAI49837.2020.00014}.

\bibitem[Brandes et~al.(2007)Brandes, Delling, Gaertler, Gorke, Hoefer,
  Nikoloski, and Wagner]{brandes2007modularity}
Ulrik Brandes, Daniel Delling, Marco Gaertler, Robert Gorke, Martin Hoefer,
  Zoran Nikoloski, and Dorothea Wagner.
\newblock On modularity clustering.
\newblock \emph{IEEE transactions on knowledge and data engineering},
  20\penalty0 (2):\penalty0 172--188, 2007.
\newblock \doi{10.1109/TKDE.2007.190689}.

\bibitem[Karataş and Şahin(2018)]{application_review}
Arzum Karataş and Serap Şahin.
\newblock Application areas of community detection: A review.
\newblock In \emph{2018 Intl. Congress on Big Data, Deep Learning and Fighting
  Cyber Terrorism (IBIGDELFT)}, pages 65--70, 2018.
\newblock \doi{10.1109/IBIGDELFT.2018.8625349}.

\bibitem[Wakabayashi(1986)]{DBLP:phd/dnb/Wakabayashi86}
Yoshiko Wakabayashi.
\newblock \emph{Aggregation of binary relations: algorithmic and polyhedral
  investigations}.
\newblock PhD thesis, University of Augsburg, Germany, 1986.
\newblock URL \url{https://d-nb.info/870902490}.

\bibitem[Lang and Rao(2004)]{mqipaper2004}
Kevin~J. Lang and Satish Rao.
\newblock A flow-based method for improving the expansion or conductance of
  graph cuts.
\newblock In \emph{Integer Programming and Combinatorial Optimization, 10th
  Intl. {IPCO} Conf.}, volume 3064 of \emph{LNCS}, pages 325--337. Springer,
  2004.
\newblock \doi{10.1007/978-3-540-25960-2\_25}.
\newblock URL \url{https://doi.org/10.1007/978-3-540-25960-2\_25}.

\bibitem[Benson et~al.(2016)Benson, Gleich, and Leskovec]{benson2016higher}
Austin~R. Benson, David~F. Gleich, and Jure Leskovec.
\newblock Higher-order organization of complex networks.
\newblock \emph{Science}, 353\penalty0 (6295):\penalty0 163--166, 2016.
\newblock \doi{10.1126/science.aad9029}.
\newblock URL
  \url{https://ui.adsabs.harvard.edu/abs/2016Sci...353..163B/abstract}.

\bibitem[Brandfass et~al.(2013)Brandfass, Alrutz, and
  Gerhold]{brandfass2013rank}
Barbara. Brandfass, Thomas Alrutz, and Thomas Gerhold.
\newblock Rank reordering for {MPI} communication optimization.
\newblock \emph{Computers \& Fluids}, 80:\penalty0 372--380, 2013.
\newblock URL \url{https://doi.org/10.1016/j.compfluid.2012.01.019}.

\bibitem[Ford and Fulkerson(1956)]{ford_fulkerson_1956}
Lester~Randolph Ford and Delbert~Ray Fulkerson.
\newblock Maximal flow through a network.
\newblock \emph{Canadian Journal of Mathematics}, 8:\penalty0 399–404, 1956.
\newblock \doi{10.4153/CJM-1956-045-5}.

\bibitem[Goldberg and Tarjan(1988)]{push_relabel88}
Andrew~V. Goldberg and Robert~Endre Tarjan.
\newblock A new approach to the maximum-flow problem.
\newblock \emph{J. {ACM}}, 35\penalty0 (4):\penalty0 921--940, 1988.
\newblock \doi{10.1145/48014.61051}.
\newblock URL \url{https://doi.org/10.1145/48014.61051}.

\bibitem[Veldt et~al.(2022)Veldt, Benson, and Kleinberg]{veldt2022hypergraph}
Nate Veldt, Austin~R. Benson, and Jon Kleinberg.
\newblock Hypergraph cuts with general splitting functions.
\newblock \emph{SIAM Review}, 64\penalty0 (3):\penalty0 650--685, 2022.
\newblock \doi{10.1137/20M1321048}.
\newblock URL \url{https://doi.org/10.1137/20M1321048}.

\bibitem[Lawler(1973)]{lawler1973cutsets}
Eugene~L. Lawler.
\newblock Cutsets and partitions of hypergraphs.
\newblock \emph{Networks}, 3\penalty0 (3):\penalty0 275--285, 1973.
\newblock \doi{10.1002/net.3230030306}.
\newblock URL \url{https://doi.org/10.1002/net.3230030306}.

\bibitem[Sanders and Schulz(2011)]{kaffpa}
Peter Sanders and Christian Schulz.
\newblock {Engineering Multilevel Graph Partitioning Algorithms}.
\newblock In \emph{Proc. of the 19th European Symp. on Algorithms}, volume 6942
  of \emph{LNCS}, pages 469--480. Springer, 2011.
\newblock \doi{10.1007/978-3-642-23719-5\_40}.

\bibitem[Meyerhenke et~al.(2014)Meyerhenke, Sanders, and
  Schulz]{pcomplexnetworksviacluster}
Henning Meyerhenke, Peter Sanders, and Christian Schulz.
\newblock Partitioning complex networks via size-constrained clustering.
\newblock In \emph{Experimental Algorithms - 13th International Symposium,
  {SEA}}, volume 8504 of \emph{LNCS}, pages 351--363. Springer, 2014.
\newblock \doi{10.1007/978-3-319-07959-2\_30}.
\newblock URL \url{https://doi.org/10.1007/978-3-319-07959-2\_30}.

\bibitem[Mezura{-}Montes(2007)]{evolutionary_book}
Efr{\'{e}}n Mezura{-}Montes.
\newblock \emph{Evolutionary Computation. {A} Unified Approach}. kenneth a. de
  jong. (2006, {MIT} press.) 256 pages.
\newblock \emph{Artif. Life}, 13\penalty0 (4):\penalty0 423--426, 2007.
\newblock \doi{10.1162/artl.2007.13.4.423}.
\newblock URL \url{https://doi.org/10.1162/artl.2007.13.4.423}.

\bibitem[Sastry et~al.(2014)Sastry, Goldberg, and Kendall]{sastry2014genetic}
Kumara Sastry, David~E. Goldberg, and Graham Kendall.
\newblock Genetic algorithms.
\newblock In \emph{Search methodologies}, pages 93--117. 2014.
\newblock URL
  \url{http://eprints.ukh.ac.id/id/eprint/271/1/2014_Book_SearchMethodologies.pdf#page=106}.
\newblock Springer.

\bibitem[Goldberg et~al.(1989)Goldberg, Korb, and Deb]{goldberg1989messy}
David~E. Goldberg, Bradley Korb, and Kalyanmoy Deb.
\newblock Messy genetic algorithms: Motivation, analysis, and first results.
\newblock \emph{Complex Syst.}, 3\penalty0 (5), 1989.
\newblock URL
  \url{http://www.complex-systems.com/abstracts/v03\_i05\_a05.html}.

\bibitem[Michalewicz(1996)]{Michalewicz:1996:GAD:229930}
Zbigniew Michalewicz.
\newblock \emph{Genetic Algorithms + Data Structures = Evolution Programs,
  Third Revised and Extended Edition}.
\newblock Springer, 1996.
\newblock ISBN 978-3-540-60676-5.
\newblock \doi{10.1007/978-3-662-03315-9}.
\newblock URL \url{https://doi.org/10.1007/978-3-662-03315-9}.

\bibitem[Leskovec and Krevl(2014)]{snapnets}
Jure Leskovec and Andrej Krevl.
\newblock {SNAP Datasets}: {Stanford} large network dataset collection.
\newblock \url{http://snap.stanford.edu/data}, June 2014.

\bibitem[Rossi and Ahmed(2015)]{nr-aaai15}
Ryan~A. Rossi and Nesreen~K. Ahmed.
\newblock The network data repository with interactive graph analytics and
  visualization.
\newblock \url{http://networkrepository.com}, 2015.

\bibitem[Bader et~al.(2014)Bader, Meyerhenke, Sanders, Schulz, Kappes, and
  Wagner]{benchmarksfornetworksanalysis}
David~A. Bader, Henning Meyerhenke, Peter Sanders, Christian Schulz, Andrea
  Kappes, and Dorothea Wagner.
\newblock Benchmarking for graph clustering and partitioning.
\newblock In \emph{Encyclopedia of Social Network Analysis and Mining}, pages
  73--82. 2014.
\newblock \doi{10.1007/978-1-4614-6170-8\_23}.
\newblock URL \url{https://doi.org/10.1007/978-1-4614-6170-8\_23}.

\bibitem[Holtgrewe et~al.(2010)Holtgrewe, Sanders, and Schulz]{kappa}
Manuel Holtgrewe, Peter Sanders, and Christian Schulz.
\newblock Engineering a scalable high quality graph partitioner.
\newblock In \emph{24th {IEEE} International Symposium on Parallel and
  Distributed Processing, {IPDPS}}, pages 1--12. {IEEE}, 2010.
\newblock \doi{10.1109/IPDPS.2010.5470485}.
\newblock URL \url{https://doi.org/10.1109/IPDPS.2010.5470485}.

\bibitem[Funke et~al.(2019)Funke, Lamm, Meyer, Penschuck, Sanders, Schulz,
  Strash, and von Looz]{funke2019communication}
Daniel Funke, Sebastian Lamm, Ulrich Meyer, Manuel Penschuck, Peter Sanders,
  Christian Schulz, Darren Strash, and Moritz von Looz.
\newblock Communication-free massively distributed graph generation.
\newblock \emph{Journal of Parallel and Distributed Computing}, 131:\penalty0
  200--217, 2019.

\bibitem[Penschuck et~al.(2022)Penschuck, Brandes, Hamann, Lamm, Meyer, Safro,
  Sanders, and Schulz]{DBLP:journals/corr/abs-2003-00736}
Manuel Penschuck, Ulrik Brandes, Michael Hamann, Sebastian Lamm, Ulrich Meyer,
  Ilya Safro, Peter Sanders, and Christian Schulz.
\newblock Recent advances in scalable network generation.
\newblock In David~A. Bader, editor, \emph{Massive Graph Analytics}. Chapman
  and Hall/CRC, 2022.

\bibitem[Soper et~al.(2004)Soper, Walshaw, and Cross]{soper2004combined}
Alan~J. Soper, Chris Walshaw, and Mark Cross.
\newblock A combined evolutionary search and multilevel optimisation approach
  to graph-partitioning.
\newblock \emph{J. Glob. Optim.}, 29\penalty0 (2):\penalty0 225--241, 2004.
\newblock \doi{10.1023/B:JOGO.0000042115.44455.f3}.
\newblock URL \url{https://doi.org/10.1023/B:JOGO.0000042115.44455.f3}.

\bibitem[Davis and Hu(2011)]{davis2011university}
Timothy~A. Davis and Yifan Hu.
\newblock The university of florida sparse matrix collection.
\newblock \emph{{ACM} Trans. Math. Softw.}, 38\penalty0 (1):\penalty0
  1:1--1:25, 2011.
\newblock \doi{10.1145/2049662.2049663}.
\newblock URL \url{https://doi.org/10.1145/2049662.2049663}.

\bibitem[Delling et~al.(2009)Delling, Sanders, Schultes, and Wagner]{DSSW09}
Daniel Delling, Peter Sanders, Dominik Schultes, and Dorothea Wagner.
\newblock Engineering route planning algorithms.
\newblock In \emph{Algorithmics of Large and Complex Networks - Design,
  Analysis, and Simulation {[DFG} priority program 1126]}, volume 5515 of
  \emph{LNCS}, pages 117--139. Springer, 2009.
\newblock \doi{10.1007/978-3-642-02094-0\_7}.
\newblock URL \url{https://doi.org/10.1007/978-3-642-02094-0\_7}.

\bibitem[Schlag et~al.(2016)Schlag, Henne, Heuer, Meyerhenke, Sanders, and
  Schulz]{schlag2016k}
Sebastian Schlag, Vitali Henne, Tobias Heuer, Henning Meyerhenke, Peter
  Sanders, and Christian Schulz.
\newblock \emph{k}-way hypergraph partitioning via \emph{n}-level recursive
  bisection.
\newblock In \emph{Proceedings of the Eighteenth Workshop on Algorithm
  Engineering and Experiments, {ALENEX}}, pages 53--67. {SIAM}, 2016.
\newblock \doi{10.1137/1.9781611974317.5}.
\newblock URL \url{https://doi.org/10.1137/1.9781611974317.5}.

\bibitem[Alpert(1998)]{alpert1998ispd98}
Charles~J. Alpert.
\newblock The {ISPD98} circuit benchmark suite.
\newblock In \emph{Proc. of the 1998 Intl. Sym. on Physical Design, {ISPD}
  1998, Monterey, CA, USA, April 6-8, 1998}, pages 80--85. {ACM}, 1998.
\newblock \doi{10.1145/274535.274546}.
\newblock URL \url{https://doi.org/10.1145/274535.274546}.

\bibitem[Belov et~al.(2014)Belov, Diepold, Heule, and Järvisalo]{Belov2014}
Anton Belov, Daniel Diepold, Marijn Heule, and Matti Järvisalo.
\newblock The sat competition 2014.
\newblock \url{http://www.satcompetition.org/2014/}, 2014.

\bibitem[Leskovec(2013)]{snap}
Jure Leskovec.
\newblock Stanford {N}etwork {A}nalysis {P}ackage ({S}{N}{A}{P}), 2013.

\bibitem[Kunegis(2013)]{konect}
J{\'{e}}r{\^{o}}me Kunegis.
\newblock {KONECT:} the koblenz network collection.
\newblock In \emph{22nd Intl. World Wide Web Conf., {WWW} '13, Rio de Janeiro,
  Brazil, May 13-17, 2013, Companion Volume}, pages 1343--1350. Intl. World
  Wide Web Conferences Steering Committee / {ACM}, 2013.
\newblock \doi{10.1145/2487788.2488173}.
\newblock URL \url{https://doi.org/10.1145/2487788.2488173}.

\bibitem[{\c{C}}ataly{\"{u}}rek and Aykanat(2011)]{ccatalyurek2011patoh}
{\"{U}}mit~V. {\c{C}}ataly{\"{u}}rek and Cevdet Aykanat.
\newblock Patoh (partitioning tool for hypergraphs).
\newblock In \emph{Encyclopedia of Parallel Computing}, pages 1479--1487.
  Springer, 2011.
\newblock \doi{10.1007/978-0-387-09766-4\_93}.
\newblock URL \url{https://doi.org/10.1007/978-0-387-09766-4\_93}.

\bibitem[Karypis and Kumar(1996)]{parmetis-conference}
George Karypis and Vipin Kumar.
\newblock Parallel multilevel k-way partitioning scheme for irregular graphs.
\newblock In \emph{Proc. of the 1996 ACM/IEEE Conf. on Supercomputing},
  Supercomputing '96. IEEE Computer Society, 1996.
\newblock \doi{10.1145/369028.369103}.

\bibitem[Karypis and Kumar(1999)]{hMetis}
George Karypis and Vipin Kumar.
\newblock Multilevel \emph{k}-way hypergraph partitioning.
\newblock In \emph{Proceedings of the 36th Conference on Design Automation},
  pages 343--348. {ACM} Press, 1999.
\newblock \doi{10.1145/309847.309954}.
\newblock URL \url{https://doi.org/10.1145/309847.309954}.

\bibitem[Pellegrini and
  Roman(1996{\natexlab{a}})]{Pellegrini96experimentalanalysis}
François Pellegrini and Jean Roman.
\newblock Experimental analysis of the dual recursive bipartitioning algorithm
  for static mapping.
\newblock Technical report, TR 1038-96, LaBRI, URA CNRS 1304, Univ. Bordeaux I,
  1996{\natexlab{a}}.

\bibitem[Mayer et~al.(2018{\natexlab{b}})Mayer, Mayer, Bhowmik, Epple, and
  Rothermel]{HYPE2018}
Christian Mayer, Ruben Mayer, Sukanya Bhowmik, Lukas Epple, and Kurt Rothermel.
\newblock {HYPE:} massive hypergraph partitioning with neighborhood expansion.
\newblock In \emph{{IEEE} Intl. Conf. on Big Data {(IEEE} BigData 2018)}, pages
  458--467. {IEEE}, 2018{\natexlab{b}}.
\newblock \doi{10.1109/BigData.2018.8621968}.
\newblock URL \url{https://doi.org/10.1109/BigData.2018.8621968}.

\bibitem[Sanders and Schulz(2013)]{kabapeE}
Peter Sanders and Christian Schulz.
\newblock {Think Locally, Act Globally: Highly Balanced Graph Partitioning}.
\newblock In \emph{12th Intl. Sym. on Experimental Algorithms (SEA)}, LNCS.
  Springer, 2013.
\newblock \doi{10.1007/978-3-642-38527-8_16}.

\bibitem[Gottesb{\"{u}}ren et~al.(2021{\natexlab{a}})Gottesb{\"{u}}ren, Heuer,
  Sanders, Schulz, and Seemaier]{gottesburen2021deep}
Lars Gottesb{\"{u}}ren, Tobias Heuer, Peter Sanders, Christian Schulz, and
  Daniel Seemaier.
\newblock Deep multilevel graph partitioning.
\newblock In \emph{29th Annual European Sym. on Algorithms, {ESA} 2021,
  September 6-8, 2021, Lisbon, Portugal (Virtual Conf.)}, volume 204 of
  \emph{LIPIcs}, pages 48:1--48:17. Schloss Dagstuhl - Leibniz-Zentrum
  f{\"{u}}r Informatik, 2021{\natexlab{a}}.
\newblock \doi{10.4230/LIPIcs.ESA.2021.48}.
\newblock URL \url{https://doi.org/10.4230/LIPIcs.ESA.2021.48}.

\bibitem[Gottesb{\"{u}}ren et~al.(2021{\natexlab{b}})Gottesb{\"{u}}ren, Heuer,
  Sanders, and Schlag]{mt-kahypar-d}
Lars Gottesb{\"{u}}ren, Tobias Heuer, Peter Sanders, and Sebastian Schlag.
\newblock {Scalable Shared-Memory Hypergraph Partitioning}.
\newblock In \emph{Proc. of the Sym. on Algorithm Engineering and Experiments
  {ALENEX}}, pages 16--30, 2021{\natexlab{b}}.
\newblock \doi{10.1137/1.9781611976472.2}.

\bibitem[Akhremtsev et~al.(2018)Akhremtsev, Sanders, and
  Schulz]{DBLP:conf/europar/Akhremtsev0018}
Yaroslav Akhremtsev, Peter Sanders, and Christian Schulz.
\newblock High-quality shared-memory graph partitioning.
\newblock In \emph{Euro-Par 2018: Parallel Processing - 24th Intl. Conf. on
  Parallel and Distributed Computing}, pages 659--671, 2018.
\newblock \doi{10.1007/978-3-319-96983-1\_47}.
\newblock URL \url{https://doi.org/10.1007/978-3-319-96983-1\_47}.

\bibitem[{\c{C}}ataly{\"{u}}rek et~al.(2023){\c{C}}ataly{\"{u}}rek, Devine,
  Faraj, Gottesb{\"{u}}ren, Heuer, Meyerhenke, Sanders, Schlag, Schulz,
  Seemaier, and Wagner]{more_recent_advances_hgp}
{\"{U}}mit~V. {\c{C}}ataly{\"{u}}rek, Karen~D. Devine, Marcelo~Fonseca Faraj,
  Lars Gottesb{\"{u}}ren, Tobias Heuer, Henning Meyerhenke, Peter Sanders,
  Sebastian Schlag, Christian Schulz, Daniel Seemaier, and Dorothea Wagner.
\newblock More recent advances in (hyper)graph partitioning.
\newblock \emph{ACM Computing Surveys}, 55:\penalty0 1--38, 2023.
\newblock \doi{doi.org/10.1145/3571808}.

\bibitem[Bulu{\c{c}} et~al.(2016)Bulu{\c{c}}, Meyerhenke, Safro, Sanders, and
  Schulz]{SPPGPOverviewPaper}
Ayd{\i}n Bulu{\c{c}}, Henning Meyerhenke, Ilya Safro, Peter Sanders, and
  Christian Schulz.
\newblock \emph{Recent Advances in Graph Partitioning}, pages 117--158.
\newblock Springer Intl. Publishing, Cham, 2016.
\newblock \doi{10.1007/978-3-319-49487-6_4}.

\bibitem[Schulz and Strash(2019)]{DBLP:reference/bdt/0003S19}
Christian Schulz and Darren Strash.
\newblock Graph partitioning: Formulations and applications to big data.
\newblock In \emph{Encyclopedia of Big Data Technologies}. Springer, 2019.
\newblock \doi{10.1007/978-3-319-63962-8\_312-2}.
\newblock URL \url{https://doi.org/10.1007/978-3-319-63962-8\_312-2}.

\bibitem[Stanton and Kliot(2012)]{stanton2012streaming}
Isabelle Stanton and Gabriel Kliot.
\newblock Streaming graph partitioning for large distributed graphs.
\newblock In \emph{Proc. of the 18th ACM SIGKDD international conference on
  Knowledge discovery and data mining}, pages 1222--1230, 2012.
\newblock \doi{https://doi.org/10.1145/2339530.2339722}.

\bibitem[Stanton(2014)]{stanton2014streaming}
Isabelle Stanton.
\newblock Streaming balanced graph partitioning algorithms for random graphs.
\newblock In \emph{Proc. of the twenty-fifth annual ACM-SIAM symposium on
  Discrete algorithms}, pages 1287--1301. SIAM, 2014.
\newblock \doi{10.1137/1.9781611973402.95}.

\bibitem[Chung et~al.(2003)Chung, Handjani, and
  Jungreis]{chung2003generalizations}
Fan Chung, Shirin Handjani, and Doug Jungreis.
\newblock Generalizations of polya's urn problem.
\newblock \emph{Annals of combinatorics}, 7\penalty0 (2):\penalty0 141--153,
  2003.
\newblock URL
  \url{https://link.springer.com/content/pdf/10.1007/s00026-003-0178-y.pdf}.

\bibitem[Zhang et~al.(2018)Zhang, Chen, and Dai]{zhang2018akin}
Wei Zhang, Yong Chen, and Dong Dai.
\newblock Akin: A streaming graph partitioning algorithm for distributed graph
  storage systems.
\newblock In \emph{2018 18th IEEE/ACM Intl. Sym. on Cluster, Cloud and Grid
  Computing (CCGRID)}, pages 183--192. IEEE, 2018.
\newblock \doi{10.1109/CCGRID.2018.00033}.

\bibitem[Hamers et~al.(1989)]{hamers1989similarity}
Lieve Hamers et~al.
\newblock Similarity measures in scientometric research: The jaccard index
  versus salton's cosine formula.
\newblock \emph{Information Processing and Management}, 25\penalty0
  (3):\penalty0 315--18, 1989.

\bibitem[Nishimura and Ugander(2013)]{nishimura2013restreaming}
Joel Nishimura and Johan Ugander.
\newblock Restreaming graph partitioning: simple versatile algorithms for
  advanced balancing.
\newblock In \emph{Proc. of the 19th ACM SIGKDD international conference on
  Knowledge discovery and data mining}, pages 1106--1114, 2013.
\newblock \doi{10.1145/2487575.2487696}.

\bibitem[Patwary et~al.(2019)Patwary, Garg, and Kang]{patwary2019window}
Md~Anwarul~Kaium Patwary, Saurabh Garg, and Byeong Kang.
\newblock Window-based streaming graph partitioning algorithm.
\newblock In \emph{Proc. of the Australasian Computer Science Week
  Multiconference}, pages 1--10, 2019.
\newblock \doi{10.1145/3290688.3290711}.

\bibitem[Benson et~al.(2015)Benson, Gleich, and Leskovec]{benson2015tensor}
Austin~R. Benson, David~F. Gleich, and Jure Leskovec.
\newblock Tensor spectral clustering for partitioning higher-order network
  structures.
\newblock In \emph{Proc. of the 2015 SIAM Intl. Conf. on Data Mining}, pages
  118--126. SIAM, 2015.
\newblock \doi{10.1137/1.9781611974010.14}.
\newblock URL \url{https://doi.org/10.1137/1.9781611974010.14}.

\bibitem[Klymko et~al.(2014)Klymko, Gleich, and Kolda]{klymko2014using}
Christine Klymko, David Gleich, and Tamara~G Kolda.
\newblock Using triangles to improve community detection in directed networks.
\newblock \emph{arXiv preprint arXiv:1404.5874}, 2014.
\newblock URL \url{https://arxiv.org/abs/1404.5874}.

\bibitem[Przulj(2007)]{prvzulj2007biological}
Natasa Przulj.
\newblock Biological network comparison using graphlet degree distribution.
\newblock \emph{Bioinformatics}, 23\penalty0 (2):\penalty0 e177--e183, 2007.
\newblock URL \url{https://doi.org/10.1093/bioinformatics/btl301}.

\bibitem[Tsourakakis et~al.(2017)Tsourakakis, Pachocki, and
  Mitzenmacher]{tsourakakis2017scalable}
Charalampos~E. Tsourakakis, Jakub Pachocki, and Michael Mitzenmacher.
\newblock Scalable motif-aware graph clustering.
\newblock In \emph{WWW}, pages 1451--1460, 2017.
\newblock \doi{10.1145/3038912.3052653}.
\newblock URL \url{https://doi.org/10.1145/3038912.3052653}.

\bibitem[Kloster and Gleich(2014)]{kloster2014heat}
Kyle Kloster and David~F. Gleich.
\newblock Heat kernel based community detection.
\newblock In \emph{ACM SIGKDD}, pages 1386--1395, 2014.
\newblock \doi{10.1145/2623330.2623706}.
\newblock URL \url{https://doi.org/10.1145/2623330.2623706}.

\bibitem[Mahoney et~al.(2012)Mahoney, Orecchia, and Vishnoi]{mahoney2012local}
Michael~W. Mahoney, Lorenzo Orecchia, and Nisheeth~K. Vishnoi.
\newblock A local spectral method for graphs: With applications to improving
  graph partitions and exploring data graphs locally.
\newblock \emph{Journal of Machine Learning Research}, 13\penalty0
  (1):\penalty0 2339--2365, 2012.
\newblock URL \url{http://jmlr.org/papers/v13/mahoney12a.html}.

\bibitem[Cui et~al.(2014)Cui, Xiao, Wang, and Wang]{cui2014local}
Wanyun Cui, Yanghua Xiao, Haixun Wang, and Wei Wang.
\newblock Local search of communities in large graphs.
\newblock In \emph{ACM SIGMOD Intl. Conf. on Management of data}, pages
  991--1002, 2014.
\newblock \doi{10.1145/2588555.2612179}.
\newblock URL \url{https://doi.org/10.1145/2588555.2612179}.

\bibitem[Sozio and Gionis(2010)]{sozio2010community}
Mauro Sozio and Aristides Gionis.
\newblock The community-search problem and how to plan a successful cocktail
  party.
\newblock In \emph{ACM SIGKDD}, pages 939--948, 2010.
\newblock \doi{10.1145/1835804.1835923}.
\newblock URL \url{https://doi.org/10.1145/1835804.1835923}.

\bibitem[Rohe and Qin(2013)]{rohe2013blessing}
Karl Rohe and Tai Qin.
\newblock The blessing of transitivity in sparse and stochastic networks.
\newblock \emph{arXiv preprint arXiv:1307.2302}, 2013.
\newblock URL \url{https://arxiv.org/abs/1307.2302}.

\bibitem[Huang et~al.(2014)Huang, Cheng, Qin, Tian, and Yu]{huang2014querying}
Xin Huang, Hong Cheng, Lu~Qin, Wentao Tian, and Jeffrey~Xu Yu.
\newblock Querying k-truss community in large and dynamic graphs.
\newblock In \emph{ACM SIGMOD}, pages 1311--1322, 2014.
\newblock \doi{10.1145/2588555.2610495}.
\newblock URL \url{https://doi.org/10.1145/2588555.2610495}.

\bibitem[Cohen(2008)]{cohen2008trusses}
Jonathan Cohen.
\newblock Trusses: Cohesive subgraphs for social network analysis.
\newblock \emph{National security agency Tech. report}, 16\penalty0 (3.1),
  2008.
\newblock URL
  \url{https://citeseerx.ist.psu.edu/viewdoc/download?doi=10.1.1.505.7006&rep=rep1&type=pdf}.

\bibitem[Zhou et~al.(2021)Zhou, Zhang, Yildirim, Alcorn, Tong, Davulcu, and
  He]{zhou2021high}
Dawei Zhou, Si~Zhang, Mehmet~Yigit Yildirim, Scott Alcorn, Hanghang Tong, Hasan
  Davulcu, and Jingrui He.
\newblock High-order structure exploration on massive graphs: A local graph
  clustering perspective.
\newblock \emph{ACM TKDD}, 15\penalty0 (2):\penalty0 1--26, 2021.
\newblock \doi{10.1145/3425637}.
\newblock URL \url{https://doi.org/10.1145/3425637}.

\bibitem[Spielman and Teng(2013)]{spielman2013local}
Daniel~A. Spielman and Shang-Hua Teng.
\newblock A local clustering algorithm for massive graphs and its application
  to nearly linear time graph partitioning.
\newblock \emph{SIAM J. Comp.}, 42\penalty0 (1):\penalty0 1--26, 2013.
\newblock \doi{10.1137/080744888}.
\newblock URL \url{https://doi.org/10.1137/080744888}.

\bibitem[Shang et~al.(2022)Shang, Zhang, Zhang, Feng, and Jiao]{shang2022local}
Ronghua Shang, Weitong Zhang, Jingwen Zhang, Jie Feng, and Licheng Jiao.
\newblock Local community detection based on higher-order structure and edge
  information.
\newblock \emph{Physica A: Statistical Mechanics and its Applications},
  587:\penalty0 126513, 2022.
\newblock \doi{10.1016/j.physa.2021.126513}.
\newblock URL \url{https://doi.org/10.1016/j.physa.2021.126513}.

\bibitem[Walshaw and Cross(2001)]{DBLP:journals/fgcs/WalshawC01}
Chris Walshaw and Mark Cross.
\newblock Multilevel mesh partitioning for heterogeneous communication
  networks.
\newblock \emph{Future Gener. Comput. Syst.}, 17\penalty0 (5):\penalty0
  601--623, 2001.
\newblock \doi{10.1016/S0167-739X(00)00107-2}.
\newblock URL \url{https://doi.org/10.1016/S0167-739X(00)00107-2}.

\bibitem[Pellegrini and
  Roman(1996{\natexlab{b}})]{DBLP:conf/hpcn/PellegriniR96}
Fran{\c{c}}ois Pellegrini and Jean Roman.
\newblock {SCOTCH:} {A} software package for static mapping by dual recursive
  bipartitioning of process and architecture graphs.
\newblock In \emph{High-Performance Computing and Networking}, volume 1067 of
  \emph{LNCS}, pages 493--498. Springer, 1996{\natexlab{b}}.
\newblock \doi{10.1007/3-540-61142-8\_588}.

\bibitem[Schulz and Tr{\"{a}}ff(2017)]{schulz2017better}
Christian Schulz and Jesper~Larsson Tr{\"{a}}ff.
\newblock Better process mapping and sparse quadratic assignment.
\newblock In \emph{16th Intl. Sym. on Experimental Algorithms}, volume~75 of
  \emph{LIPIcs}, pages 4:1--4:15, 2017.
\newblock \doi{10.4230/LIPIcs.SEA.2017.4}.

\bibitem[Heider(1972)]{heider1972computationally}
Charles~H. Heider.
\newblock A computationally simplified pair-exchange algorithm for the
  quadratic assignment problem.
\newblock Technical report, DTIC Document, 1972.

\bibitem[M{\"u}ller-Merbach(1970)]{muller2013optimale}
Heiner M{\"u}ller-Merbach.
\newblock \emph{Optimale reihenfolgen}, volume~15 of \emph{\"Okonometrie und
  Unternehmensforschung}.
\newblock Springer-Verlag, 1970.

\bibitem[Glantz et~al.(2015)Glantz, Meyerhenke, and Noe]{glantz2015algorithms}
Roland Glantz, Hening Meyerhenke, and Alexander Noe.
\newblock Algorithms for mapping parallel processes onto grid and torus
  architectures.
\newblock In \emph{2015 23rd Euromicro Intl. Conf. on Parallel, Distributed,
  and Network-Based Processing}, pages 236--243. IEEE, 2015.
\newblock \doi{10.1109/PDP.2015.21}.

\bibitem[Bichot and Siarry(2013)]{GPOverviewBook}
Charles-Edmond Bichot and Patrick Siarry, editors.
\newblock \emph{Graph Partitioning}.
\newblock Wiley, 2013.
\newblock \doi{10.1002/9781118601181}.

\bibitem[Deveci et~al.(2015)Deveci, Kaya, U{\c{c}}ar, and
  {\c{C}}ataly{\"{u}}rek]{DBLP:conf/ipps/DeveciKUC15}
Mehmet Deveci, Kamer Kaya, Bora U{\c{c}}ar, and {\"{U}}mit~V.
  {\c{C}}ataly{\"{u}}rek.
\newblock Fast and high quality topology-aware task mapping.
\newblock In \emph{2015 {IEEE} Intl. Parallel and Distributed Processing Sym.,
  {IPDPS} 2015, Hyderabad, India, May 25-29, 2015}, pages 197--206. {IEEE}
  Computer Society, 2015.
\newblock \doi{10.1109/IPDPS.2015.93}.
\newblock URL \url{https://doi.org/10.1109/IPDPS.2015.93}.

\bibitem[Pellegrini()]{Scotch}
François Pellegrini.
\newblock {Scotch Home Page}.
\newblock {\url{http://www.labri.fr/pelegrin/scotch}}.

\bibitem[Walshaw and Cross(2007)]{Walshaw07}
Chris Walshaw and Mark Cross.
\newblock {JOSTLE: Parallel Multilevel Graph-Partitioning Software -- An
  Overview}.
\newblock In \emph{{Mesh Partitioning Techniques and Domain Decomposition
  Techniques}}, pages 27--58. 2007.
\newblock ISBN 978-1-874672-29-6.

\bibitem[Deveci et~al.(2019)Deveci, Devine, Pedretti, Taylor, Rajamanickam, and
  {\c{C}}ataly{\"{u}}rek]{8666156}
Mehmet Deveci, Karen~D. Devine, Kevin Pedretti, Mark~A. Taylor, Sivasankaran
  Rajamanickam, and {\"{U}}mit~V. {\c{C}}ataly{\"{u}}rek.
\newblock Geometric mapping of tasks to processors on parallel computers with
  mesh or torus networks.
\newblock \emph{IEEE Transactions on Parallel and Distributed Systems},
  30\penalty0 (9):\penalty0 2018--2032, 2019.
\newblock \doi{10.1109/TPDS.2019.2900043}.

\bibitem[Deveci et~al.(2016)Deveci, Rajamanickam, Devine, and
  {\c{C}}ataly{\"{u}}rek]{DBLP:journals/tpds/DeveciRDC16}
Mehmet Deveci, Sivasankaran Rajamanickam, Karen~D. Devine, and {\"{U}}mit~V.
  {\c{C}}ataly{\"{u}}rek.
\newblock Multi-jagged: {A} scalable parallel spatial partitioning algorithm.
\newblock \emph{{IEEE} Trans. Parallel Distributed Syst.}, 27\penalty0
  (3):\penalty0 803--817, 2016.
\newblock \doi{10.1109/TPDS.2015.2412545}.
\newblock URL \url{https://doi.org/10.1109/TPDS.2015.2412545}.

\bibitem[Vogelstein et~al.(2015)Vogelstein, Conroy, Lyzinski, Podrazik,
  Kratzer, Harley, Fishkind, Vogelstein, and Priebe]{vogelstein2015fast}
Joshua~T. Vogelstein, John~M. Conroy, Vince Lyzinski, Louis~J. Podrazik,
  Steven~G. Kratzer, Eric~T. Harley, Donniell~E. Fishkind, R.~Jacob Vogelstein,
  and Carey~E. Priebe.
\newblock Fast approximate quadratic programming for graph matching.
\newblock \emph{PLOS one}, 10\penalty0 (4):\penalty0 e0121002, 2015.
\newblock \doi{10.1371/journal.pone.0121002}.

\bibitem[Umeyama(1988)]{umeyama1988eigendecomposition}
Shinji Umeyama.
\newblock An eigendecomposition approach to weighted graph matching problems.
\newblock \emph{IEEE transactions on pattern analysis and machine
  intelligence}, 10\penalty0 (5):\penalty0 695--703, 1988.
\newblock \doi{10.1109/34.6778}.

\bibitem[Singh et~al.(2007)Singh, Xu, and Berger]{singh2007pairwise}
Rohit Singh, Jinbo Xu, and Bonnie Berger.
\newblock Pairwise global alignment of protein interaction networks by matching
  neighborhood topology.
\newblock In \emph{Annual international conference on research in computational
  molecular biology}, pages 16--31. Springer, 2007.
\newblock \doi{10.1007/978-3-540-71681-5_2}.
\newblock URL \url{https://doi.org/10.1007/978-3-540-71681-5_2}.

\bibitem[Zaslavskiy et~al.(2008)Zaslavskiy, Bach, and Vert]{zaslavskiy2008path}
Mikhail Zaslavskiy, Francis Bach, and Jean-Philippe Vert.
\newblock A path following algorithm for the graph matching problem.
\newblock \emph{IEEE Transactions on Pattern Analysis and Machine
  Intelligence}, 31\penalty0 (12):\penalty0 2227--2242, 2008.
\newblock \doi{10.1109/TPAMI.2008.245}.

\bibitem[Burkard et~al.(1997)Burkard, Karisch, and Rendl]{burkard1997qaplib}
Rainer~E Burkard, Stefan~E Karisch, and Franz Rendl.
\newblock Qaplib--a quadratic assignment problem library.
\newblock \emph{Journal of Global optimization}, 10\penalty0 (4):\penalty0
  391--403, 1997.
\newblock \doi{10.1023/A:1008293323270}.
\newblock URL \url{https://doi.org/10.1023/A:1008293323270}.

\bibitem[Glantz et~al.(2018)Glantz, Predari, and
  Meyerhenke]{glantz2018topology}
Roland Glantz, Maria Predari, and Henning Meyerhenke.
\newblock Topology-induced enhancement of mappings.
\newblock In \emph{Proc. of the 47th Intl. Conf. on Parallel Processing}, pages
  1--10, 2018.
\newblock \doi{10.1145/3225058.3225117}.

\bibitem[Sanders and Schulz()]{kaHIPHomePage}
Peter Sanders and Christian Schulz.
\newblock {KaHIP -- Karlsruhe High Qualtity Partitioning Homepage}.
\newblock {\url{http://algo2.iti.kit.edu/documents/kahip/index.html}}.

\bibitem[Kirchbach et~al.(2020)Kirchbach, Schulz, and
  Tr{\"a}ff]{GlobalMultisection}
Konrad~Von Kirchbach, Christian Schulz, and Jesper~Larsson Tr{\"a}ff.
\newblock Better process mapping and sparse quadratic assignment.
\newblock \emph{Journal of Experimental Algorithmics (JEA)}, 25:\penalty0
  1--19, 2020.
\newblock \doi{10.1145/3409667}.

\bibitem[Predari et~al.(2021)Predari, Tzovas, Schulz, and
  Meyerhenke]{PredariProcessMapping21}
Maria Predari, Charilaos Tzovas, Christian Schulz, and Henning Meyerhenke.
\newblock An mpi-based algorithm for mapping complex networks onto hierarchical
  architectures.
\newblock In \emph{Euro-Par 2021: Parallel Processing - 27th Intl. Conf. on
  Parallel and Distributed Computing, Lisbon, Portugal, September 1-3, 2021,
  Proc.}, volume 12820 of \emph{LNCS}, pages 167--182. Springer, 2021.
\newblock \doi{10.1007/978-3-030-85665-6\_11}.
\newblock URL \url{https://doi.org/10.1007/978-3-030-85665-6\_11}.

\bibitem[Meyerhenke et~al.(2017)Meyerhenke, Sanders, and
  Schulz]{DBLP:journals/tpds/MeyerhenkeSS17}
Henning Meyerhenke, Peter Sanders, and Christian Schulz.
\newblock Parallel graph partitioning for complex networks.
\newblock \emph{{IEEE} Trans. Parallel Distrib. Syst.}, 28\penalty0
  (9):\penalty0 2625--2638, 2017.
\newblock \doi{10.1109/TPDS.2017.2671868}.
\newblock URL \url{https://doi.org/10.1109/TPDS.2017.2671868}.

\bibitem[von Kirchbach et~al.(2020)von Kirchbach, Lehr, Hunold, Schulz, and
  Tr{\"a}ff]{von2020efficient}
Konrad von Kirchbach, Markus Lehr, Sascha Hunold, Christian Schulz, and
  Jesper~Larsson Tr{\"a}ff.
\newblock Efficient process-to-node mapping algorithms for stencil
  computations.
\newblock In \emph{IEEE Intl. Conf. on Cluster Computing (CLUSTER)}, pages
  1--11. IEEE, 2020.
\newblock \doi{10.1109/CLUSTER49012.2020.00011}.

\bibitem[Schulz and Träff()]{viemaWebsite}
Christian Schulz and Jesper~Larsson Träff.
\newblock {VieM -- Vienna Process Mapping and Sparse Quadratic Assignment}.
\newblock {\url{http://viem.taa.univie.ac.at/}}.

\bibitem[Yang et~al.(2007)Yang, Cheung, and Liu]{yang2007community}
Bo~Yang, William~K. Cheung, and Jiming Liu.
\newblock Community mining from signed social networks.
\newblock \emph{{IEEE} Trans. Knowl. Data Eng.}, 19\penalty0 (10):\penalty0
  1333--1348, 2007.
\newblock \doi{10.1109/TKDE.2007.1061}.
\newblock URL \url{https://doi.org/10.1109/TKDE.2007.1061}.

\bibitem[Che et~al.(2020)Che, Yang, and Wang]{memeticsiggraphcommdetec2020}
Shiwei Che, Wu~Yang, and Wei Wang.
\newblock A memetic algorithm for community detection in signed networks.
\newblock \emph{{IEEE} Access}, 8:\penalty0 123585--123602, 2020.
\newblock \doi{10.1109/ACCESS.2020.3006108}.
\newblock URL \url{https://doi.org/10.1109/ACCESS.2020.3006108}.

\bibitem[Traag and Bruggeman(2009)]{traag2009community}
Vincent~A Traag and Jeroen Bruggeman.
\newblock Community detection in networks with positive and negative links.
\newblock \emph{Physical Review E}, 80\penalty0 (3):\penalty0 036115, 2009.
\newblock \doi{10.1103/PhysRevE.80.036115}.
\newblock URL \url{https://doi.org/10.1103/PhysRevE.80.036115}.

\bibitem[Bansal et~al.(2004)Bansal, Blum, and
  Chawla]{correlationclustering_overall}
Nikhil Bansal, Avrim Blum, and Shuchi Chawla.
\newblock Correlation clustering.
\newblock \emph{Mach. Learn.}, 56\penalty0 (1-3):\penalty0 89--113, 2004.
\newblock \doi{10.1023/B:MACH.0000033116.57574.95}.
\newblock URL \url{https://doi.org/10.1023/B:MACH.0000033116.57574.95}.

\bibitem[Demaine et~al.(2006)Demaine, Emanuel, Fiat, and
  Immorlica]{demaine2006correlation}
Erik~D. Demaine, Dotan Emanuel, Amos Fiat, and Nicole Immorlica.
\newblock Correlation clustering in general weighted graphs.
\newblock \emph{Theor. Comput. Sci.}, 361\penalty0 (2-3):\penalty0 172--187,
  2006.
\newblock \doi{10.1016/j.tcs.2006.05.008}.
\newblock URL \url{https://doi.org/10.1016/j.tcs.2006.05.008}.

\bibitem[Gr{\"{o}}tschel and Wakabayashi(1989)]{grotschel1989cutting}
Martin Gr{\"{o}}tschel and Yoshiko Wakabayashi.
\newblock A cutting plane algorithm for a clustering problem.
\newblock \emph{Math. Program.}, 45\penalty0 (1-3):\penalty0 59--96, 1989.
\newblock \doi{10.1007/BF01589097}.
\newblock URL \url{https://doi.org/10.1007/BF01589097}.

\bibitem[Dinh and Thai(2015)]{dinh2015toward}
Thang~N. Dinh and My~T. Thai.
\newblock Toward optimal community detection: From trees to general weighted
  networks.
\newblock \emph{Internet Math.}, 11\penalty0 (3):\penalty0 181--200, 2015.
\newblock \doi{10.1080/15427951.2014.950875}.
\newblock URL \url{https://doi.org/10.1080/15427951.2014.950875}.

\bibitem[Miyauchi and Sukegawa(2015)]{miyauchi2015redundant}
Atsushi Miyauchi and Noriyoshi Sukegawa.
\newblock Redundant constraints in the standard formulation for the clique
  partitioning problem.
\newblock \emph{Optim. Lett.}, 9\penalty0 (1):\penalty0 199--207, 2015.
\newblock \doi{10.1007/s11590-014-0754-6}.
\newblock URL \url{https://doi.org/10.1007/s11590-014-0754-6}.

\bibitem[Miyauchi et~al.(2018)Miyauchi, Sonobe, and
  Sukegawa]{miyauchi2018exact}
Atsushi Miyauchi, Tomohiro Sonobe, and Noriyoshi Sukegawa.
\newblock Exact clustering via integer programming and maximum satisfiability.
\newblock In \emph{Proc. of the Thirty-Second {AAAI} Conf. on Artificial
  Intelligence, (AAAI-18), the 30th innovative Applications of Artificial
  Intelligence (IAAI-18), and the 8th {AAAI} Sym. on Educational Advances in
  Artificial Intelligence (EAAI-18), New Orleans, Louisiana, USA, February 2-7,
  2018}, pages 1387--1394. {AAAI} Press, 2018.
\newblock URL
  \url{https://www.aaai.org/ocs/index.php/AAAI/AAAI18/paper/view/16176}.

\bibitem[Kunegis et~al.(2010)Kunegis, Schmidt, Lommatzsch, Lerner, Luca, and
  Albayrak]{kunegis2010spectral}
J{\'{e}}r{\^{o}}me Kunegis, Stephan Schmidt, Andreas Lommatzsch, J{\"{u}}rgen
  Lerner, Ernesto William~De Luca, and Sahin Albayrak.
\newblock Spectral analysis of signed graphs for clustering, prediction and
  visualization.
\newblock In \emph{Proc. of the {SIAM} Intl. Conf. on Data Mining, {SDM} 2010,
  April 29 - May 1, 2010, Columbus, Ohio, {USA}}, pages 559--570. {SIAM}, 2010.
\newblock \doi{10.1137/1.9781611972801.49}.
\newblock URL \url{https://doi.org/10.1137/1.9781611972801.49}.

\bibitem[Gallier(2016)]{gallier2016spectral}
Jean Gallier.
\newblock Spectral theory of unsigned and signed graphs. applications to graph
  clustering: a survey.
\newblock \emph{CoRR}, abs/1601.04692, 2016.
\newblock URL \url{http://arxiv.org/abs/1601.04692}.

\bibitem[Knyazev(2018)]{knyazev2018spectral}
Andrew Knyazev.
\newblock On spectral partitioning of signed graphs.
\newblock In \emph{Proc. of the Eighth {SIAM} Workshop on Combinatorial
  Scientific Computing, {CSC} 2018, Bergen, Norway, June 6-8, 2018}, pages
  11--22. {SIAM}, 2018.
\newblock \doi{10.1137/1.9781611975215.2}.
\newblock URL \url{https://doi.org/10.1137/1.9781611975215.2}.

\bibitem[Mercado et~al.(2019)Mercado, Tudisco, and Hein]{mercado2019spectral}
Pedro Mercado, Francesco Tudisco, and Matthias Hein.
\newblock Spectral clustering of signed graphs via matrix power means.
\newblock In \emph{Proc. of the 36th Intl. Conf. on Machine Learning, {ICML}
  2019, 9-15 June 2019, Long Beach, California, {USA}}, volume~97 of
  \emph{Proc. of Machine Learning Research}, pages 4526--4536. {PMLR}, 2019.
\newblock URL \url{http://proceedings.mlr.press/v97/mercado19a.html}.

\bibitem[He et~al.(2011)He, Liu, Liu, Jin, and Jia]{he2011ant}
Dongxiao He, Jie Liu, Dayou Liu, Di~Jin, and Zhengxue Jia.
\newblock Ant colony optimization for community detection in large-scale
  complex networks.
\newblock In \emph{Seventh Intl. Conf. on Natural Computation, {ICNC} 2011,
  Shanghai, China, 26-28 July, 2011}, pages 1151--1155. {IEEE}, 2011.
\newblock \doi{10.1109/ICNC.2011.6022234}.
\newblock URL \url{https://doi.org/10.1109/ICNC.2011.6022234}.

\bibitem[Zhu et~al.(2018)Zhu, Ma, and Liu]{zhu2018novel}
Xiaoyu Zhu, Yinghong Ma, and Zhiyuan Liu.
\newblock A novel evolutionary algorithm on communities detection in signed
  networks.
\newblock \emph{Physica A: Statistical Mechanics and its Applications},
  503:\penalty0 938--946, 2018.
\newblock \doi{10.1016/j.physa.2018.08.112}.
\newblock URL \url{https://doi.org/10.1016/j.physa.2018.08.112}.

\bibitem[Brusco and Doreian(2019)]{brusco2019partitioning}
Michael~J. Brusco and Patrick Doreian.
\newblock Partitioning signed networks using relocation heuristics, tabu
  search, and variable neighborhood search.
\newblock \emph{Soc. Networks}, 56:\penalty0 70--80, 2019.
\newblock \doi{10.1016/j.socnet.2018.08.007}.
\newblock URL \url{https://doi.org/10.1016/j.socnet.2018.08.007}.

\bibitem[Abdulrahman et~al.(2020)Abdulrahman, Abood, and
  Attea]{abdulrahman2020enhanced}
Mayasa~M Abdulrahman, Amenah~Dahim Abood, and Baraa~A Attea.
\newblock An enhanced multi-objective evolutionary algorithm with decomposition
  for signed community detection problem.
\newblock In \emph{2020 2nd Annual Intl. Conf. on Information and Sciences
  (AiCIS)}, pages 45--50. IEEE, 2020.
\newblock \doi{10.1109/AiCIS51645.2020.00017}.
\newblock URL \url{https://doi.org/10.1109/AiCIS51645.2020.00017}.

\bibitem[Chen et~al.(2022)Chen, Kang, Duan, Shan, Xiao, and
  Kang]{chen2022iterated}
Yiran Chen, Qinma Kang, Wenqiang Duan, Yunfan Shan, Ran Xiao, and Yunfan Kang.
\newblock An iterated local search algorithm for community detection in signed
  networks.
\newblock \emph{Intl. Journal of Modern Physics C}, page 2250105, 2022.
\newblock \doi{10.1142/S0129183122501054}.
\newblock URL \url{https://doi.org/10.1142/S0129183122501054}.

\bibitem[Tang et~al.(2016)Tang, Chang, Aggarwal, and Liu]{surveyTangCAL16}
Jiliang Tang, Yi~Chang, Charu~C. Aggarwal, and Huan Liu.
\newblock A survey of signed network mining in social media.
\newblock \emph{{ACM} Comput. Surv.}, 49\penalty0 (3):\penalty0 42:1--42:37,
  2016.
\newblock \doi{10.1145/2956185}.
\newblock URL \url{https://doi.org/10.1145/2956185}.

\bibitem[Tomasso et~al.(2022)Tomasso, Rusnak, and Tesic]{tomasso2021advances}
Maria Tomasso, Lucas~J. Rusnak, and Jelena Tesic.
\newblock Advances in scaling community discovery methods for signed graph
  networks.
\newblock \emph{J. Complex Networks}, 10\penalty0 (3), 2022.
\newblock \doi{10.1093/comnet/cnac013}.
\newblock URL \url{https://doi.org/10.1093/comnet/cnac013}.

\bibitem[Keuper et~al.(2015)Keuper, Levinkov, Bonneel, Lavou{\'{e}}, Brox, and
  Andres]{keuper2015efficient}
Margret Keuper, Evgeny Levinkov, Nicolas Bonneel, Guillaume Lavou{\'{e}},
  Thomas Brox, and Bjoern Andres.
\newblock Efficient decomposition of image and mesh graphs by lifted multicuts.
\newblock In \emph{2015 {IEEE} Intl. Conf. on Computer Vision, {ICCV} 2015,
  Santiago, Chile, December 7-13, 2015}, pages 1751--1759. {IEEE} Computer
  Society, 2015.
\newblock \doi{10.1109/ICCV.2015.204}.
\newblock URL \url{https://doi.org/10.1109/ICCV.2015.204}.

\bibitem[Bailoni et~al.(2022)Bailoni, Pape, H{\"{u}}tsch, Wolf, Beier, Kreshuk,
  and Hamprecht]{bailoni2019generalized}
Alberto Bailoni, Constantin Pape, Nathan H{\"{u}}tsch, Steffen Wolf, Thorsten
  Beier, Anna Kreshuk, and Fred~A. Hamprecht.
\newblock {GASP}, a generalized framework for agglomerative clustering of
  signed graphs and its application to instance segmentation.
\newblock In \emph{{IEEE/CVF} Conference on Computer Vision and Pattern
  Recognition, {CVPR} 2022, New Orleans, LA, USA, June 18-24, 2022}, pages
  11635--11645. {IEEE}, 2022.
\newblock \doi{10.1109/CVPR52688.2022.01135}.
\newblock URL \url{https://doi.org/10.1109/CVPR52688.2022.01135}.

\bibitem[Hua et~al.(2020)Hua, Yu, and Yang]{HuaRandomWalk2020}
Jialin Hua, Jian Yu, and Miin{-}Shen Yang.
\newblock Fast clustering for signed graphs based on random walk gap.
\newblock \emph{Soc. Networks}, 60:\penalty0 113--128, 2020.
\newblock \doi{10.1016/j.socnet.2018.08.008}.
\newblock URL \url{https://doi.org/10.1016/j.socnet.2018.08.008}.

\bibitem[G{\'o}mez et~al.(2009)G{\'o}mez, Jensen, and
  Arenas]{gomez2009analysis}
Sergio G{\'o}mez, Pablo Jensen, and Alex Arenas.
\newblock Analysis of community structure in networks of correlated data.
\newblock \emph{Physical Review E}, 80\penalty0 (1):\penalty0 016114, 2009.
\newblock \doi{10.1103/PhysRevE.80.016114}.
\newblock URL \url{https://doi.org/10.1103/PhysRevE.80.016114}.

\bibitem[Koshimura et~al.(2022)Koshimura, Watanabe, Sakurai, and
  Yokoo]{koshimura2022concise}
Miyuki Koshimura, Emi Watanabe, Yuko Sakurai, and Makoto Yokoo.
\newblock Concise integer linear programming formulation for clique
  partitioning problems.
\newblock \emph{Constraints An Int. J.}, 27\penalty0 (1):\penalty0 99--115,
  2022.
\newblock \doi{10.1007/s10601-022-09326-z}.
\newblock URL \url{https://doi.org/10.1007/s10601-022-09326-z}.

\bibitem[Karypis and Kumar(1998)]{karypis1998fast}
George Karypis and Vipin Kumar.
\newblock A fast and high quality multilevel scheme for partitioning irregular
  graphs.
\newblock \emph{{SIAM} J. Sci. Comput.}, 20\penalty0 (1):\penalty0 359--392,
  1998.
\newblock \doi{10.1137/S1064827595287997}.
\newblock URL \url{https://doi.org/10.1137/S1064827595287997}.

\bibitem[Raghavan et~al.(2007)Raghavan, Albert, and
  Kumara]{labelpropagationclustering}
Usha~Nandini Raghavan, R\'eka Albert, and Soundar Kumara.
\newblock Near linear time algorithm to detect community structures in
  large-scale networks.
\newblock \emph{Phys. Rev. E}, 76:\penalty0 036106, Sep 2007.
\newblock \doi{10.1103/PhysRevE.76.036106}.
\newblock URL \url{https://link.aps.org/doi/10.1103/PhysRevE.76.036106}.

\bibitem[Schlag et~al.(2022)Schlag, Heuer, Gottesb{\"u}ren, Akhremtsev, Schulz,
  and Sanders]{schlag2020high}
Sebastian Schlag, Tobias Heuer, Lars Gottesb{\"u}ren, Yaroslav Akhremtsev,
  Christian Schulz, and Peter Sanders.
\newblock High-quality hypergraph partitioning.
\newblock \emph{ACM Journal of Experimental Algorithms (JEA)}, 2022.
\newblock \doi{10.1145/3529090}.
\newblock URL \url{https://doi.org/10.1145/3529090}.

\bibitem[Chhabra et~al.()Chhabra, Faraj, and
  Schulz]{LocMotifClusHyperGraphPartition}
Adil Chhabra, Marcelo~Fonseca Faraj, and Christian Schulz.
\newblock Local motif clustering via (hyper)graph partitioning.
\newblock In \emph{2023 Proc. of the Sym. on Algorithm Engineering and
  Experiments (ALENEX)}, pages 96--109.
\newblock \doi{10.1137/1.9781611977561.ch9}.
\newblock URL \url{https://epubs.siam.org/doi/abs/10.1137/1.9781611977561.ch9}.

\bibitem[Chhabra et~al.(2022)Chhabra, Faraj, and
  Schulz]{LocMotifClusHyperGraphPartitionExtAbs}
Adil Chhabra, Marcelo~Fonseca Faraj, and Christian Schulz.
\newblock Local motif clustering via (hyper)graph partitioning (extended
  abstract).
\newblock In \emph{Proc. of the Fifteenth Intl. Sym. on Combinatorial Search,
  {SOCS} 2022, Vienna, Austria, July 21-23, 2022}, pages 261--263. {AAAI}
  Press, 2022.
\newblock \doi{10.1609/socs.v15i1.21779}.
\newblock URL \url{https://ojs.aaai.org/index.php/SOCS/article/view/21779}.

\bibitem[Chhabra et~al.(2023)Chhabra, Faraj, and Schulz]{LocMotifClusMaxFlows}
Adil Chhabra, Marcelo~Fonseca Faraj, and Christian Schulz.
\newblock Faster local motif clustering via maximum flows.
\newblock \emph{CoRR}, abs/2301.07145, 2023.
\newblock \doi{10.48550/arXiv.2301.07145}.
\newblock URL \url{https://doi.org/10.48550/arXiv.2301.07145}.

\bibitem[Read and Corneil(1977)]{read1977graph}
Ronald~C Read and Derek~G Corneil.
\newblock The graph isomorphism disease.
\newblock \emph{Journal of graph theory}, 1\penalty0 (4):\penalty0 339--363,
  1977.
\newblock \doi{10.1002/jgt.3190010410}.
\newblock URL \url{https://doi.org/10.1002/jgt.3190010410}.

\bibitem[Ortmann and Brandes(2014)]{ortmann2014triangle}
Mark Ortmann and Ulrik Brandes.
\newblock Triangle listing algorithms: Back from the diversion.
\newblock In \emph{ALENEX}, pages 1--8. SIAM, 2014.
\newblock \doi{10.1137/1.9781611973198.1}.
\newblock URL \url{https://doi.org/10.1137/1.9781611973198.1}.

\bibitem[Holme and Kim(2002)]{holme2002growing}
Petter Holme and Beom~Jun Kim.
\newblock Growing scale-free networks with tunable clustering.
\newblock \emph{Physical review E}, 65\penalty0 (2):\penalty0 026107, 2002.
\newblock \doi{10.1103/PhysRevE.65.026107}.
\newblock URL \url{https://doi.org/10.1103/PhysRevE.65.026107}.

\bibitem[Batagelj and Zaver{\v{s}}nik(2007)]{batagelj2007short}
Vladimir Batagelj and Matja{\v{z}} Zaver{\v{s}}nik.
\newblock Short cycle connectivity.
\newblock \emph{Discrete Mathematics}, 307\penalty0 (3-5):\penalty0 310--318,
  2007.
\newblock \doi{10.1016/j.disc.2005.09.051}.
\newblock URL \url{https://doi.org/10.1016/j.disc.2005.09.051}.

\bibitem[Prat-P{\'e}rez et~al.(2012)Prat-P{\'e}rez, Dominguez-Sal, Brunat, and
  Larriba-Pey]{prat2012shaping}
Arnau Prat-P{\'e}rez, David Dominguez-Sal, Josep~M Brunat, and Josep-Lluis
  Larriba-Pey.
\newblock Shaping communities out of triangles.
\newblock In \emph{Proc. of the 21st ACM Intl. Conf. on Information and
  knowledge management}, pages 1677--1681, 2012.
\newblock \doi{10.1145/2396761.2398496}.
\newblock URL \url{https://doi.org/10.1145/2396761.2398496}.

\bibitem[Kimmig et~al.(2017)Kimmig, Meyerhenke, and Strash]{kimmig2017shared}
Raphael Kimmig, Henning Meyerhenke, and Darren Strash.
\newblock Shared memory parallel subgraph enumeration.
\newblock In \emph{IPDPSW}, pages 519--529. IEEE, 2017.
\newblock \doi{10.1109/IPDPSW.2017.133}.
\newblock URL \url{https://doi.org/10.1109/IPDPSW.2017.133}.

\bibitem[Chiba and Nishizeki(1985)]{chiba1985arboricity}
Norishige Chiba and Takao Nishizeki.
\newblock Arboricity and subgraph listing algorithms.
\newblock \emph{SIAM J. Comp.}, 14\penalty0 (1):\penalty0 210--223, 1985.
\newblock \doi{10.1137/0214017}.
\newblock URL \url{https://doi.org/10.1137/0214017}.

\bibitem[Gottesb{\"u}ren et~al.(2021)Gottesb{\"u}ren, Heuer, Sanders, and
  Schlag]{gottesburen2021scalable}
Lars Gottesb{\"u}ren, Tobias Heuer, Peter Sanders, and Sebastian Schlag.
\newblock Scalable shared-memory hypergraph partitioning.
\newblock In \emph{2021 Proc. of the Workshop on Algorithm Engineering and
  Experiments (ALENEX)}, pages 16--30. SIAM, 2021.
\newblock \doi{10.1137/1.9781611976472.2}.
\newblock URL \url{https://doi.org/10.1137/1.9781611976472.2}.

\bibitem[Gottesb{\"{u}}ren et~al.(2021)Gottesb{\"{u}}ren, Heuer, Sanders,
  Schulz, and Seemaier]{DBLP:conf/esa/GottesburenH00S21}
Lars Gottesb{\"{u}}ren, Tobias Heuer, Peter Sanders, Christian Schulz, and
  Daniel Seemaier.
\newblock Deep multilevel graph partitioning.
\newblock In \emph{29th Annual European Symp. on Algorithms, {ESA} 2021, Sep.
  6-8, 2021, Lisbon, Portugal}, volume 204 of \emph{LIPIcs}, pages 48:1--48:17.
  Schloss Dagstuhl - Leibniz-Zentrum f{\"{u}}r Informatik, 2021.
\newblock \doi{10.4230/LIPIcs.ESA.2021.48}.
\newblock URL \url{https://doi.org/10.4230/LIPIcs.ESA.2021.48}.

\bibitem[Faraj et~al.(2020)Faraj, van~der Grinten, Meyerhenke, Tr{\"{a}}ff, and
  Schulz]{HighQualityHierarchicalPM20}
Marcelo~Fonseca Faraj, Alexander van~der Grinten, Henning Meyerhenke,
  Jesper~Larsson Tr{\"{a}}ff, and Christian Schulz.
\newblock High-quality hierarchical process mapping.
\newblock In \emph{18th Intl. Sym. on Experimental Algorithms, {SEA}}, volume
  160 of \emph{LIPIcs}, pages 4:1--4:15. Schloss Dagstuhl - Leibniz-Zentrum
  f{\"{u}}r Informatik, 2020.
\newblock \doi{10.4230/LIPIcs.SEA.2020.4}.
\newblock URL \url{https://doi.org/10.4230/LIPIcs.SEA.2020.4}.

\bibitem[Hausberger et~al.(to appear)Hausberger, Faraj, and
  Schulz]{DistrMemMLevelSigGraphClustering_short}
Felix Hausberger, Marcelo~Fonseca Faraj, and Christian Schulz.
\newblock A distributed multilevel memetic algorithm for signed graph
  clustering [short paper].
\newblock In \emph{Genetic and Evolutionary Computation Conf. Companion,
  {(GECCO)}}, to appear.
\newblock \doi{10.1145/3583133.3590537}.

\bibitem[Hausberger et~al.(2022)Hausberger, Faraj, and
  Schulz]{DistrMemMLevelSigGraphClustering}
Felix Hausberger, Marcelo~Fonseca Faraj, and Christian Schulz.
\newblock A distributed multilevel memetic algorithm for signed graph
  clustering.
\newblock \emph{CoRR}, abs/2208.13618, 2022.
\newblock \doi{10.48550/arXiv.2208.13618}.
\newblock URL \url{https://doi.org/10.48550/arXiv.2208.13618}.

\bibitem[Fiduccia and Mattheyses(1982)]{fiduccia1982lth}
Charles~M. Fiduccia and Robert~M. Mattheyses.
\newblock A linear-time heuristic for improving network partitions.
\newblock In \emph{Proceedings of the 19th Design Automation Conference,
  {DAC}}, pages 175--181. {ACM/IEEE}, 1982.
\newblock \doi{10.1145/800263.809204}.
\newblock URL \url{https://doi.org/10.1145/800263.809204}.

\bibitem[Walshaw and Cross(2000)]{walshaw2000mpm}
Chris Walshaw and Mark Cross.
\newblock Mesh partitioning: {A} multilevel balancing and refinement algorithm.
\newblock \emph{{SIAM} J. Sci. Comput.}, 22\penalty0 (1):\penalty0 63--80,
  2000.
\newblock \doi{10.1137/S1064827598337373}.
\newblock URL \url{https://doi.org/10.1137/S1064827598337373}.

\bibitem[Walshaw(2004)]{walshaw2004multilevel}
Chris Walshaw.
\newblock Multilevel refinement for combinatorial optimisation problems.
\newblock \emph{Ann. Oper. Res.}, 131\penalty0 (1-4):\penalty0 325--372, 2004.
\newblock \doi{10.1023/B:ANOR.0000039525.80601.15}.
\newblock URL \url{https://doi.org/10.1023/B:ANOR.0000039525.80601.15}.

\bibitem[Sanders and Schulz(2012)]{sanders2012distributed}
Peter Sanders and Christian Schulz.
\newblock Distributed evolutionary graph partitioning.
\newblock In \emph{Proc. of the 14th Meeting on Algorithm Engineering {\&}
  Experiments, {ALENEX} 2012, The Westin Miyako, Kyoto, Japan, January 16,
  2012}, pages 16--29. {SIAM} / Omnipress, 2012.
\newblock \doi{10.1137/1.9781611972924.2}.
\newblock URL \url{https://doi.org/10.1137/1.9781611972924.2}.

\bibitem[Miller and Goldberg(1995)]{tournament_selection}
Brad~L. Miller and David~E. Goldberg.
\newblock Genetic algorithms, tournament selection, and the effects of noise.
\newblock \emph{Complex Syst.}, 9\penalty0 (3), 1995.
\newblock URL
  \url{http://www.complex-systems.com/abstracts/v09\_i03\_a02.html}.

\bibitem[Doerr and Fouz(2011)]{randomgossip}
Benjamin Doerr and Mahmoud Fouz.
\newblock Asymptotically optimal randomized rumor spreading.
\newblock In \emph{Automata, Languages and Programming - 38th Intl. Colloquium,
  {ICALP}}, volume 6756 of \emph{LNCS}, pages 502--513. Springer, 2011.
\newblock \doi{10.1007/978-3-642-22012-8\_40}.
\newblock URL \url{https://doi.org/10.1007/978-3-642-22012-8\_40}.

\end{thebibliography}
\bibliographystyle{unsrtnat}

\appendix
\newcounter{rowcount}
\renewcommand{\therowcount}{\arabic{rowcount}}

\newenvironment{tabenum}
{\setcounter{rowcount}{0}
\begin{tabular}{p{0.15\linewidth}p{0.75\linewidth}}}
{\end{tabular}}

\newcommand{\eduitem}[5]{\multirow{1}{*}{#1} & \texttt{#2} \\
                                             & #3, #4 \\
                                             & #5 \\
                                             &    \\}

\newcommand{\workitem}[4]{\multirow{1}{*}{#1} & \texttt{#2} \\
                                             & #3, #4 \\
                                             &    \\}

\begin{coverpage}{Curriculum Vitae}

\section*{Education}       
\begin{tabenum}
\eduitem{2011-2017}{Bachelor, Computer Engineering}{Federal Center of Technological Education of Minas Gerais}{Brazil}{Thesis: A Local Search Genetic Algorithm Approach to Optimize Roadside Unit Placement in Vehicular Networks based on Gamma Depployment Metric}
\workitem{2014-2015}{Nondegree International Visiting Program}{Colorado State University}{United States}
\eduitem{2017-2019}{Master, Computer Science}{Federal University of Minas Gerais}{Brazil}{Thesis: Gamma Deployment Problem in Grids: Complexity and a new Integer Linear Programming Formulation}
\end{tabenum}

\vfill\pagebreak

\section*{Work Experience}       
\begin{tabenum}
\workitem{2013-2014}{Intern, Technological Initiation}{Federal Center of Technological Education of Minas Gerais}{Brazil}
\workitem{2017}{Intern, Software Engineer}{LYNX Process}{Brazil}
\workitem{2019-2020}{Academic Employee}{University of Vienna}{Austria}
\workitem{2020-}{Academic Employee}{Heidelberg University}{Germany}
\end{tabenum}

\section*{Honors}       
\begin{tabenum}
\workitem{2010}{Silver Medal in Mathematics}{Brazilian Olympics of Mathematics of Public High School}{Brazil}
\workitem{2014}{Scholarship at Colorado State University}{CAPES Foundation}{Brazil}
\workitem{2017}{Honorable Mention for Best Final Bachelor GPA}{Federal Center of Technological Education of Minas Gerais}{Brazil}
\end{tabenum}

\end{coverpage}

\begin{coverpage}{List of Publications}

\section*{Journal Papers}       
\begin{enumerate}

\item[\refstepcounter{enumi}{[\theenumi]}]
Marcelo Fonseca Faraj, Sebasti{\'{a}}n Urrutia, and Jo{\~{a}}o F. M. Sarubbi.
\newblock {Gamma Deployment Problem in Grids: Hardness and New Integer Linear Programming Formulation}.
\newblock {\em International Transactions in Operational Research}, Volume 27, Article No 6, pages 2740--2759, 2020.

\item[\refstepcounter{enumi}{[\theenumi]}]
Marcelo Fonseca Faraj and Christian Schulz.
\newblock {Buffered Streaming Graph Partitioning}.
\newblock {\em ACM Journal of Experimental Algorithmics}, Volume 27, pages 1.10:1--1.10:26, 2022.

\item[\refstepcounter{enumi}{[\theenumi]}]
{\"{U}}mit V. {\c{C}}ataly{\"{u}}rek, Karen D. Devine, Marcelo Fonseca Faraj, Lars Gottesb{\"{u}}ren, Tobias Heuer, Henning Meyerhenke, Peter Sanders, Sebastian Schlag, Christian Schulz, Daniel Seemaier, and Dorothea Wagner
\newblock {More Recent Advances in (Hyper)Graph Partitioning}.
\newblock {\em ACM Computing Surveys}, Volume 55, Issue 12, Article No 253, pages 1--38, 2022.

\end{enumerate}

\section*{Conference Full Papers}       
\begin{enumerate}

\item[\refstepcounter{enumi}{[\theenumi]}]
Marcelo Fonseca Faraj, Jo{\~{a}}o F. M. Sarubbi, Cristiano M. Silva, Marcelo F. Porto, and Nilson T. R. Nunes
\newblock {Estudo de Caso: o Problema do Transporte Escolar Rural em Minas Gerais}.
\newblock In {\em Simp{\'{o}}sio Brasileiro de Pesquisa Operacional (SBPO)}, pages 1889--1900, 2014.

\item[\refstepcounter{enumi}{[\theenumi]}]
Marcelo Fonseca Faraj, Jo{\~{a}}o F. M. Sarubbi, Cristiano M. Silva, Marcelo F. Porto, and Nilson T. R. Nunes
\newblock {A Real Geographical Application for the School Bus Routing Problem}.
\newblock In {\em IEEE Conference on Intelligent Transportation Systems (ITSC)}, pages 2762--2767, 2014.

\item[\refstepcounter{enumi}{[\theenumi]}]
Marcelo Fonseca Faraj, Sebasti{\'{a}}n Urrutia, and Jo{\~{a}}o F. M. Sarubbi.
\newblock {Problema da Deposi{\c{c}}{\~{a}}o Gamma: Prova de NP-Completude e um Novo Modelo de Programa{\c{c}}{\~{a}}o Linear Inteira}.
\newblock In {\em Simp{\'{o}}sio Brasileiro de Pesquisa Operacional (SBPO)}, 2018.

\item[\refstepcounter{enumi}{[\theenumi]}]
Marcelo Fonseca Faraj, Jo{\~{a}}o F. M. Sarubbi, Cristiano M. Silva, and Fl{\'{a}}vio V. C. Martins.
\newblock {A Memetic Algorithm Approach to Deploy {RSU} s Based on the Gamma Deployment Metric}.
\newblock In {\em IEEE Congress on Evolutionary Computation (CEC)}, pages 1--8, 2018.

\item[\refstepcounter{enumi}{[\theenumi]}]
Marcelo Fonseca Faraj, Alexander van der Grinten, Henning Meyerhenke, Jesper L. Träff, and Christian Schulz.
\newblock {High-Quality Hierarchical Process Mapping}.
\newblock In {\em Proceedings of the 18th Symposium on Experimental Algorithms (SEA)}, volume 160 of LIPIcs, pages 4:1--4:15, 2020.

\item[\refstepcounter{enumi}{[\theenumi]}]
Marcelo Fonseca Faraj and Christian Schulz.
\newblock {Recursive Multi-Section on the Fly: Shared-Memory Streaming Algorithms for Hierarchical Graph Partitioning and Process Mapping}.
\newblock In {\em IEEE International Conference on Cluster Computing (CLUSTER)}, volume 9411, pages 473--483, 2022.

\item[\refstepcounter{enumi}{[\theenumi]}]
Adil Chhabra, Marcelo Fonseca Faraj, and Christian Schulz.
\newblock {Local Motif Clustering via (Hyper)Graph Partitioning}.
\newblock In {\em Proceedings of the Symposium on Algorithm Engineering and Experiments (ALENEX)}, pages 96--109. SIAM, 2023.

\item[\refstepcounter{enumi}{[\theenumi]}]
Kamal Eyubov, Marcelo Fonseca Faraj, and Christian Schulz.
\newblock {FREIGHT: Fast Streaming Hypergraph Partitioning}.
\newblock In {\em Proceedings of the 21st Symposium on Experimental Algorithms (SEA)}, to appear, 2023.

\end{enumerate}

\section*{Conference Short Papers and Poster Papers}       
\begin{enumerate}

\item[\refstepcounter{enumi}{[\theenumi]}]
Marcelo Fonseca Faraj, Jo{\~{a}}o F. M. Sarubbi, Cristiano M. Silva, and Fl{\'{a}}vio V. C. Martins.
\newblock {A Hybrid Genetic Algorithm for Deploying RSUs in VANETs based on Inter-Contact Time (Extended Abstract)}.
\newblock In {\em Genetic and Evolutionary Computation Conference Companion (GECCO ’17 Companion)}, pages 193--194, 2017.

\item[\refstepcounter{enumi}{[\theenumi]}]
Marcelo Fonseca Faraj, Sebasti{\'{a}}n Urrutia, and Jo{\~{a}}o F. M. Sarubbi.
\newblock {O Problema da Deposi{\c{c}}{\~{a}}o Gamma {\'{e}} NP-Completo (Extended Abstract)}.
\newblock In {\em Encontro de Teoria da Computa{\c{c}}{\~{a}}o (ETC)}, 2018.

\item[\refstepcounter{enumi}{[\theenumi]}]
Adil Chhabra, Marcelo Fonseca Faraj, and Christian Schulz.
\newblock {Local Motif Clustering via (Hyper)Graph Partitioning (Extended Abstract)}.
\newblock In {\em Proceedings of the 15th International Symposium on Combinatorial Search (SoCS)}, pages 261--263, AAAI Press,~2022.

\item[\refstepcounter{enumi}{[\theenumi]}]
Felix Hausberger, Marcelo Fonseca Faraj, and Christian Schulz.
\newblock {A Distributed Multilevel Memetic Algorithm for Signed Graph Clustering (Short Paper)}.
\newblock In {\em Genetic and Evolutionary Computation Conference Companion (GECCO ’23 Companion)}, to appear, 2023.

\end{enumerate}

\section*{Technical Reports}       
\begin{enumerate}

\item[\refstepcounter{enumi}{[\theenumi]}]
Marcelo Fonseca Faraj, Alexander van der Grinten, Henning Meyerhenke, Jesper L. Träff, and Christian Schulz.
\newblock {High-Quality Hierarchical Process Mapping}.
\newblock Technical Report, University of Vienna, Humboldt Universtät zu Berlin, and Technical University of Vienna, 2020. (arXiv:2001.07134v2)

\item[\refstepcounter{enumi}{[\theenumi]}]
Marcelo Fonseca Faraj and Christian Schulz.
\newblock {Buffered Streaming Graph Partitioning}.
\newblock Technical Report, Heidelberg University, 2021. (arXiv:2102.09384)

\item[\refstepcounter{enumi}{[\theenumi]}]
Marcelo Fonseca Faraj and Christian Schulz.
\newblock {Recursive Multi-Section on the Fly: Shared-Memory Streaming Algorithms for Hierarchical Graph Partitioning and Process Mapping}.
\newblock Technical Report, Heidelberg University, 2021. (arXiv:2202.00394)

\item[\refstepcounter{enumi}{[\theenumi]}]
{\"{U}}mit V. {\c{C}}ataly{\"{u}}rek, Karen D. Devine, Marcelo Fonseca Faraj, Lars Gottesb{\"{u}}ren, Tobias Heuer, Henning Meyerhenke, Peter Sanders, Sebastian Schlag, Christian Schulz, Daniel Seemaier, and Dorothea Wagner
\newblock {More Recent Advances in (Hyper)Graph Partitioning}.
\newblock Technical Report, 2022. (arXiv:2205.13202)

\item[\refstepcounter{enumi}{[\theenumi]}]
Adil Chhabra, Marcelo Fonseca Faraj, and Christian Schulz.
\newblock {Local Motif Clustering via (Hyper)Graph Partitioning}.
\newblock Technical Report, Heidelberg University, 2022. (arXiv:2205.06176)

\item[\refstepcounter{enumi}{[\theenumi]}]
Felix Hausberger, Marcelo Fonseca Faraj, and Christian Schulz.
\newblock {A Distributed Multilevel Memetic Algorithm for Signed Graph Clustering}.
\newblock Technical Report, Heidelberg University, 2022. (arXiv:2208.13618)

\item[\refstepcounter{enumi}{[\theenumi]}]
Kamal Eyubov, Marcelo Fonseca Faraj, and Christian Schulz.
\newblock {FREIGHT: Fast Streaming Hypergraph Partitioning}.
\newblock Technical Report, Heidelberg University, 2023. (arXiv:2302.06259)

\item[\refstepcounter{enumi}{[\theenumi]}]
Adil Chhabra, Marcelo Fonseca Faraj, and Christian Schulz.
\newblock {Faster Local Motif Clustering via Maximum Flows}.
\newblock Technical Report, Heidelberg University, 2023. (arXiv:2301.07145)

\end{enumerate}

\end{coverpage}

\end{document}